\documentclass[10pt,usenames,dvipsnames]{article}
\usepackage{enumerate}
\usepackage{amsmath,amssymb}
\usepackage{natbib}
\usepackage{caption}
\usepackage{enumitem}
\usepackage{comment}
\usepackage[usenames]{color}
\usepackage{booktabs} 
\usepackage{bm}
\usepackage{ying}
\usepackage{multirow}
\usepackage{rotating}
\usepackage{xspace}
\usepackage{fullpage}
\usepackage{authblk}
\usepackage{wrapfig}
\usepackage{mathtools}
\mathtoolsset{showonlyrefs}
\usepackage{geometry}
\usepackage{parskip}
\usepackage{float}
\usepackage{subcaption}
\usepackage{tikz,tabularx}
\usetikzlibrary{patterns}
\usetikzlibrary{arrows}
\usetikzlibrary{tikzmark, positioning, fit, shapes.misc}

\usetikzlibrary{decorations.pathreplacing, calc}
\tikzset{brace/.style={decorate, decoration={brace}},
  brace mirrored/.style={decorate, decoration={brace,mirror}},
}

\newcolumntype{g}{>{\columncolor{red}}c}

\newcommand{\mname}{\textsc{CPL}\xspace}

\usepackage{graphicx,amssymb}
\usepackage[table]{xcolor}
\definecolor{lightgray}{gray}{0.9}

\usepackage[colorlinks,
            linkcolor=red,
            anchorcolor=blue,
            citecolor=blue
            ]{hyperref}
\usepackage{algorithm}
\usepackage{algorithmic}

\def \iid {\stackrel{\text{i.i.d.}}{\sim}}

\def \calib {\textrm{calib}}
\def \train {\textrm{train}}

\def \test {\textrm{test}}
\def \obs {\textrm{obs}}

\def \welfare {{\textnormal{welfare}}}
\def \Welfare {{\textnormal{Welfare}}}
\def \fna {\gamma}
\def \power {{\textnormal{power}}}
\providecommand{\keywords}[1]
{{ 
  \fontsize{9}{12}\selectfont
  \textbf{\textit{Keywords:}} #1
}}
\renewcommand{\widehat}{\hat}

\theoremstyle{plain}

\allowdisplaybreaks

\usepackage{hyperref}
\def\@#1\@{\begin{align}#1\end{align}}
\def\$#1\${\begin{align*}#1\end{align*}}

\usepackage{color-edits}
\addauthor[Ying]{ying}{magenta}

\title{Conformal Policy Learning with \\ Distribution-Free Safety Guarantees}
\author[1]{Ying Jin\thanks{The reproduction code is in the GitHub repository \url{https://github.com/ying531/conformal-policy-learning}. Email: \url{yjinstat@wharton.upenn.edu}}} 
\author[2]{Naoki Egami}
\affil[1]{Department of Statistics and Data Science, University of Pennsylvania} 
\affil[2]{Department of Political Science \& Statistics and Data Science Center, \hspace{0.5in} Massachusetts Institute of Technology}
\date{}

\begin{document}

\maketitle

\begin{abstract}
Policy learning aims to determine who should be treated based on individual characteristics. In high-stakes settings such as medicine and public policy where safety is a central concern, improving the average outcomes alone may not be sufficient: decision makers may also seek to protect individuals from harm, in line with the Hippocratic principle of ``do no harm.'' 

In this paper, we propose \textit{conformal policy learning} (CPL), a policy learning procedure with a new distribution-free safety guarantee that controls the probability of assigning treatment to an individual who would be harmed relative to control. CPL views each treatment decision as testing a hypothesis of counterfactual harm and assigns treatment by thresholding conformal p-values. These p-values use observable proxies and selective calibration to address the challenge that the potential outcomes under comparison are never simultaneously observed. For randomized experiments, under standard exchangeability conditions, CPL provides finite-sample safety guarantee at a user-specified level, without imposing any outcome modeling assumptions. Moreover, when the outcome model is consistently estimated, CPL achieves asymptotically optimal welfare subject to the safety constraint. 
In observational studies, CPL with learn-then-balance weights achieves doubly robust safety guarantees. We evaluate CPL through extensive simulations and apply it to an empirical study of AI-powered interventions designed to reduce conspiracy beliefs.
  \end{abstract}
\keywords{AI safety, Causal inference, Conformal prediction, Policy learning}
% \fontsize{10}{12}\selectfont

% !TEX root = main.tex

\section{Introduction}

%% Policy learning is everywhere. 
Policy learning, also known as the treatment choice problem, aims to learn a rule that automatically assigns future treatment options based on individual characteristics~\citep{manski2004statistical,hirano2009asymptotics,kitagawa2018should,athey2021policy}.
% deciding who should receive treatment and who should receive control. 
It has been the foundation for data-driven decision-making in various domains spanning precision medicine~\citep{murphy2003optimal,qian2011performance,zhao2012estimating}, online advertising and recommendation systems~\citep{li2010contextual,dudik2011doubly}, political campaigns~\citep{imai2011estimation}, and criminal justice~\citep{kleinberg2018human}, among others. 

Policy learning methods often aim to maximize the average welfare (expectation of the realized outcome) within a policy class~\citep{manski2004statistical,kitagawa2018should,athey2021policy}. 
While welfare maximization is widely useful, it can be insufficient in high-stakes domains such as medicine and public policy where individual safety is of concern. In such applications, decision makers may seek not only on-average improvement of the outcomes, but also controlling the number of individuals harmed by the intervention~\citep{gadbury2004individual,kallus2022s, richens2022counterfactual, ben2025safe}, i.e., ``do no harm.''  
As an example, suppose a new intervention benefits $60\%$ of a group while harming the remaining $40\%$ by the same magnitude.  
A decision-maker who maximizes average welfare would decide to treat the group; this would harm 40\% of the population, which can be unacceptable if individual safety is of primary concern.

%% The usual goal is the welfare maximization. 

%% However, it is often not enough to maximize the welfare on average. 

In this paper,  we study policy learning with a distribution-free safety guarantee. Formally, assume access to observed data $\{(X_i, T_i, Y_i)\}_{i=1}^n$ where $Y_i\in \{0,1\}$ is a binary outcome, $T_i\in\{0,1\}$ is the binary treatment, and $X_i\in \cX$ is the observed features for each unit $i$. Under the potential outcome framework with SUTVA (formalized in Section~\ref{subsec:setup}), the outcome is $Y_i=Y_i(T_i)$, where $(Y_i(1),Y_i(0))$ are the potential outcomes under treatment and control, respectively. 
For a new test point with observed features $X_{n+1}$ and unknown potential outcomes $(Y_{n+1}(0),Y_{n+1}(1))$, our goal is to learn a policy $\hat\pi: \mathcal{X} \rightarrow \{0, 1\}$ that maps features to treatment assignments with the following safety guarantee: for a pre-specified level $\alpha\in (0,1)$,  
\begin{equation}\label{eq:def_safety_intro}
    \PP\bigl(Y_{n+1}(\hat\pi(X_{n+1})) < Y_{n+1}(0) \bigr) \leq \alpha.
\end{equation} 
This safety guarantee~\eqref{eq:def_safety_intro} means the probability of the realized outcome being worse than the ``status-quo'' outcome under control is no greater than $\alpha$, thereby limiting the risk of harming the new individual. When this policy is implemented on $m$ new individuals,~\eqref{eq:def_safety_intro} implies that the expected number of units harmed by the treatment is no greater than $m \alpha$.  

Such guarantees are important for two related reasons. First, negative treatment effects are substantively costly in many settings. Second, treatment rules are increasingly embedded in (semi-)automated decision systems: once a learned rule is deployed, it may assign interventions repeatedly and at scale. In such settings, trust in the system requires explicit control of the probability that it harms the individuals it chooses to treat. 
We highlight three applications that motivate such explicit harm-rate control.

\vspace{-0.5em}
\paragraph{Example 1 (AI-powered Interventions):} Generative AI is emerging as a new class of intervention in the social sciences, with applications designed to change attitudes and behaviors through scalable, personalized interactions \citep[e.g.,][]{costello2024durably, bai2025llm}. At the same time, recent empirical studies highlight an important risk: while such AI interventions may benefit many individuals and tasks, they may also harm others. For example, a randomized experiment in a global consulting firm \citep{dell2023navigating} found that access to AI can harm the productivity of high-skilled consultants when working on challenging intellectual tasks. Similarly, \cite{bastani2025generative} found that generative AI without guardrails can harm the learning of high school math students. As interventions are powered by black-box AI, controlling the harm is fundamental for safety, public trust, and efficient deployment of AI-powered treatment.

\vspace{-0.5em}
\paragraph{Example 2 (Precision Medicine):} Precision medicine---prevention and treatment strategies that account for individual variability---is central in modern medicine \citep{kosorok2019precision}. The problem of controlling individual risk while maximizing benefit has long been recognized. %. Safety concerns for medications often arise, since 
It is especially relevant when efficacious medications may also lead to a higher risk for certain individuals, such as opioid treatment of chronic pain~\citep{laber2018identifying} and type-2 diabetes with insulin therapies~\citep{wang2018learning}.  % estimate the policy learning algorithm for treating type 2 diabetes patients with insulin therapies. 

\vspace{-0.5em}
\paragraph{Example 3 (Criminal Justice):} In the US criminal justice system, how to safely use risk scores to evaluate an individual's likelihood of reoffending and identify their criminogenic needs is a major topic of interest \citep{skeem2020impact}. For example, \cite{ben2025safe} analyze a field experiment to estimate the causal effect of algorithmic recommendations on judges' decisions at a criminal first appearance hearing. Here, a treatment rule with the safety guarantee can  prevent arrestees from committing a new crime or failing to appear in court, while avoiding unnecessarily harsh decisions.

\vspace{0.5em}
 
We aim to achieve the safety guarantee in a model-agnostic fashion---meaning that it holds without strong modeling assumptions on the data distribution or the learning algorithms---and tightly in finite samples, so this framework is widely applicable to various high-stakes settings. 

% \vspace{0.25em}
\subsection{Overview of contributions}

We develop \textit{conformal policy learning} (CPL) to achieve the safety guarantee~\eqref{eq:def_safety_intro}. 
Distinct from standard policy learning methods that maximizes empirical welfare within a  policy class, our starting point is to view each treatment decision as testing a counterfactual harm event $Y_{n+1}(1) < Y_{n+1}(0)$. A p-value \(p_{n+1}\) satisfying
\[
  \mathbb P\bigl(Y_{n+1}(1)<Y_{n+1}(0),\,p_{n+1}\leq \alpha\bigr)\leq \alpha
\]
yields the desired safety guarantee by   assigning treatment only when \(p_{n+1}\leq\alpha\). 
This connects safe policy learning to conformal inference~\citep{vovk2005algorithmic, lei2021conformal, jin2023selection}. 

The key challenge here---which necessitates novel constructions of powerful conformal p-values---is that the ``harm'' event involves both potential outcomes and is therefore unobserved even for the labeled data. 
CPL addresses this difficulty through two techniques. First, it uses observable proxy labels to construct valid conformal p-values.  Second, to reduce conservativeness, it selects the labeled observations used in p-value calibration according to which treatment arm provides a sharper proxy for harm. We call this strategy ``selective calibration.''
The resulting p-value compares the test conformity score with the selected calibration scores computed using the proxy labels. The CPL workflow is in Figure~\ref{fig:intro}.

\begin{figure}
    \centering
    \includegraphics[width=\linewidth]{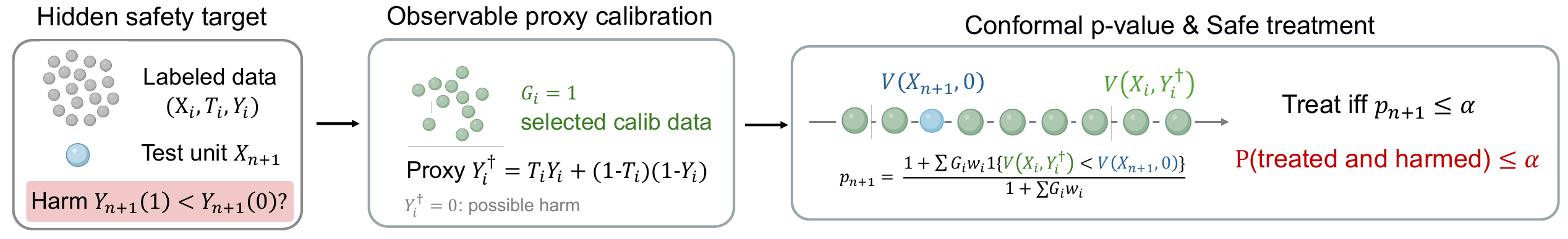}
    \caption{\textbf{Workflow of Conformal Policy Learning (CPL).} Given labeled data $\{(X_i,T_i,Y_i)\}_{i=1}^n$, the goal is to assign safe treatment for a new test unit $X_{n+1}$ with harm rate target $\alpha\in(0,1)$. CPL constructs proxy label $Y_i^\dagger$ and selects a subset of data for calibration. It then computes a p-value by comparing a test conformity score to the calibration scores. The treatment is determined by whether the conformal p-value is below $\alpha$, which controls the harm rate below $\alpha$.}
    \label{fig:intro}
\end{figure}

% Theoretical Properties 
For randomized experiments, CPL provides finite-sample, distribution-free safety guarantees. Under the exchangeability conditions ensured by random assignment, we show that CPL controls the harm rate exactly below $\alpha$ in finite samples, without making any assumption about the outcome models used in the conformal p-values and the selective calibration rule. 
We first establish this result for balanced randomized experiments and then extend it to stratified experiments using weighted conformal inference~\citep{TibshiraniBCR19}. 

We next study the sharpness and optimality of CPL. Because individual-level harm, \(Y(1)<Y(0)\), is not identifiable from observed data, we formulate the optimality under partial identification~\citep{kallus2022s,li2023trustworthy}. We characterize the optimal policy that maximizes the power (probability of treatment) and average welfare subject to the safety constraint for every joint potential-outcome distribution compatible with the observed data. 
Then, we show that as long as the score function used in the p-values and selective calibration rule converge to oracle ones, CPL attains these global optima asymptotically. 

We further extend CPL to observational studies. In this setting, the treatment assignment mechanism is unknown, and the calibration weights must be estimated.  
We develop a learn-then-balance procedure that estimates these weights and enforces balance on functions tailored to the thresholding decisions and the partially identified harm target. Under suitable regularity conditions, the resulting policy has a doubly robust asymptotic safety guarantee: the excess harm rate vanishes when either the propensity-score model or the outcome models, but not necessarily both, are consistently estimated. 
Moreover, if both components converge at standard slow nonparametric rates, the excess harm rate is of a parametric order. 
With consistent outcome models, CPL with observational data attains the same welfare and power optima as in the randomized case.  
Our excess harm rate bound does not pay the price of policy class complexity common in policy learning.  

Finally, we evaluate CPL through simulations and an empirical study. In simulations across randomized experiments, stratified experiments, and observational studies, CPL shows robust harm-rate control and high power and welfare. In the empirical study,  CPL assigns an AI-powered intervention designed to reduce conspiracy beliefs while tightly controlling the harm rate. 

The rest of the paper is organized as follows. 
Section~\ref{sec:setup} introduces the problem setup and connects the safety guarantee to conformal hypothesis testing. 
Section~\ref{sec:rct} develops CPL for balanced randomized experiments and studies its finite-sample validity and asymptotic optimality; the framework is then extended to stratified experiments in 
Section~\ref{sec:stratified} and observational studies in Section~\ref{sec:obs}. 
Section~\ref{sec:sim} presents simulation studies, and Section~\ref{sec:real} applies CPL to an empirical study of AI-powered interventions. 
We close the paper with a discussion on extensions and future directions in Section~\ref{sec:ext}.

\subsection{Related Work}
%% Policy Learning in general 

This article lies in the intersection of causal inference, policy learning, and conformal inference. We summarize several important lines of related work below. 

% \noindent\textbf{Policy learning with safety considerations.}
Our work is connected to the established literature on policy learning \citep[e.g.,][]{murphy2003optimal, manski2004statistical, hirano2009asymptotics, zhao2012estimating, kitagawa2018should, athey2021policy, jin2025policy}, which aims to select, among a given class of policies, the one that maximizes an objective (such as average welfare) while optionally respecting a constraint (such as harm rate). 
Within this literature, this work is closely related to a small but emerging literature on policy learning with safety considerations (where the exact meaning of safety varies). One line of work develops safe policy learning algorithms in settings that necessitate extrapolating beyond the observed labeled data, and the safety refers to not performing worse than a ``status quo'' policy~\citep{zhang2022safe, ben2025safe, jia2025bayesian, wu2025safe}. Our safety notion of individual harm is conceptually related since it measures the harm relative to the status quo of no treatment, but distinct enough to yield completely different techniques.
In addition, \cite{ben2024policy} studies policy learning when the objective involves counterfactuals, providing doubly robust algorithms and regret bounds for the learned policy; the dual form of a special case in their framework (with an unknown Lagrange parameter) coincides with the welfare maximization problem subject to harm rate control; this connects with our setting and the method of~\cite{li2023trustworthy}. 
As standard in policy learning, these methods select a policy that maximizes an empirical objective within a policy class, whose performance is often measured by the regret (the gap between the true objective of the learned policy from the optimal), which is typically bounded by a statistical error term that scales with the complexity of the policy class (such as the VC-dimension). In contrast, CPL leverages conformal inference to achieve finite-sample, distribution-free safety guarantee (without a high-probability excess error bound term) when propensity scores are known; moreover, when propensity scores are unknown and estimated so inexact harm control is inevitable, our excess harm rate bound does not pay the price of the policy class complexity.
 
Our safety notion follows from a literature in causal inference and policy learning that bounds, estimates, and controls the same harm rate notion as~\eqref{eq:safety_guarantee}. A line of work studies its bounds under various  assumptions such as monotonicity~\citep{huang2012assessing} and certain conditional independence conditions~\citep{shen2013treatment,yin2018assessing}. Relatedly, due to the non-identifiability, several work focuses on establishing bounds on the conditional harm rate  for binary outcomes, including~\cite{gadbury2004individual} without covariates, \cite{zhang2013assessing} with covariates, \cite{kallus2022s} on the sharp identification bounds, and~\cite{wu2024quantifying} using a sensitivity model for the correlation between the potential outcomes. In addition, a recent independent work of~\cite{scauda2026counterfactual} studies the population-level optimal welfare subject to harm rate control. Since we aim to control the harm rate without strong modeling assumptions, this work is implicitly tied to the Frech\'et--Hoeffding bound characterized in~\cite{zhang2013assessing,kallus2022s}. 
We show CPL attains the optimal welfare subject to safety guarantee among the worst-case distributions in these work.

Our method builds on the conformal p-values proposed in~\cite{jin2023selection} for i.i.d.~data and~\cite{jin2023model} for covariate shift settings, which were extended to model selection~\citep{bai2024optimized} and online settings~\citep{xu2024online}. In that literature, the p-values quantify the confidence in a large, \emph{ordinary} outcome (i.e., no potential outcomes) exceeding a \emph{known} threshold. 
While we borrow the high-level intuitions, the technical route in constructing such p-values for our problem---which is the key contribution here---is sharply distinct since the two outcomes under comparison are never simultaneously observed. The resulting optimality and robustness properties are likewise quite different. 

The conformal inference approach also connects our work with a line of work on conformal inference for individual treatment effects (ITE) $Y_{n+1}(1)-Y_{n+1}(0)$~\citep{lei2021conformal,jin2023sensitivity,yin2024conformal}. These work typically focuses on constructing a prediction set for the counterfactual outcome for a unit who has already received treatment or control, where inference for the ITE of a new test point with two unknown outcomes appears particularly challenging. The latter case (with binary outcomes) is the setting we address, and we study the ``decision'' problem rather than prediction set construction. This involves a thresholding decision rule which necessitates distinct calibration and theoretical analysis techniques.

% !TEX root = main.tex

\section{Problem Setup and Conceptual Framework}
\label{sec:setup}

\subsection{Problem Setup}
\label{subsec:setup}

We assume access to a set of labeled data $\{(X_i,T_i,Y_i)\}_{i=1}^n$, where $X_i\in \cX$ is the feature, $T_i\in\{0,1\}$ is the treatment assignment, and $Y_i\in \{0,1\}$ is the binary outcome. We define the potential outcomes $Y_i(t)$ for $t \in \{0, 1\}$ and assume the triplets $\{(X_i,Y_i(1),Y_i(0))\}_{i=1}^n$ are independent and identically distributed (i.i.d.) from an unknown super-population $\PP_{X,Y(1),Y(0)}$. 
Assuming the Stable Unit Treatment Value Assumption (SUTVA) \citep{rubin1980randomization}, the observed outcome is given by $Y_i=Y_i(T_i)$. The joint super-population $\PP_{X,Y(1),Y(0)}$  
is unidentifiable from data because researchers observe only one potential outcome for each unit~\citep{holland1986statistics}. 
Finally, the distribution of the labeled data $\{(X_i,T_i,Y_i)\}_{i=1}^n$ is induced by the unknown super-population and the treatment assignment mechanism and denoted as $\PP_{X,Y,T}$.

Throughout, we make the unconfoundedness assumption for the treatment assignment mechanism, which is standard in the causal inference literature~\citep{imbens2015causal}. 
\begin{assumption}[Unconfoundedness]\label{assump:uncon}
    The treatment assignments $\{T_i\}_{i=1}^n$ are mutually independent and obey $T_i\indep (Y_i(0),Y_i(1))\given X_i$, with  $\PP(T_i=1\given X_i=x) \coloneq e(x)$, and we call $e(x)\in (0,1)$ the propensity score. 
\end{assumption}
Our framework covers both randomized experiments (Sections~\ref{sec:rct} and~\ref{sec:stratified}) and observational studies (Section~\ref{sec:obs}). In randomized experiments, Assumption~\ref{assump:uncon} is satisfied by design and $e(x)$ is known. In observational studies, Assumption~\ref{assump:uncon} should be evaluated based on domain knowledge; even though it holds, the propensity score $e(x)$ is unknown and needs to be estimated.

Policy learning uses the labeled data to determine the treatment for a new individual (which we call the test unit/point) with observed feature $X_{n+1}$ and unobserved potential outcomes $Y_{n+1}(1)$ and $Y_{n+1}(0)$. 
Following the literature~\citep[e.g.,][]{manski2004statistical, hirano2009asymptotics, zhao2012estimating, kitagawa2018should, athey2021policy}, we assume it is from the same super-population independently of the labeled data.

\begin{assumption}[IID]\label{assump:iid}
    The test point is drawn from $(X_{n+1},Y_{n+1}(1),Y_{n+1}(0))\sim \PP_{X,Y(1),Y(0)}$  independently of the labeled data. 
\end{assumption}
% Researchers only observe $X_{n+1}$ and do not observe either $Y_{n+1}(1)$ or $Y_{n+1}(0)$. 
Our framework can be naturally extended to relax the i.i.d.~assumption and  allow for covariate shift between the labeled data $i \in \{1, \ldots, n\}$ and the test point; see Section~\ref{sec:ext} for a discussion.

\subsection{Harm Rate}
\label{subsec:harm_rate_discuss}

Recall that our goal is to develop a policy learning algorithm producing a rule $\hat\pi: \cX \rightarrow \{0,1\}$ that maps features $X$ to treatment assignment $\hat\pi(X)$ with the \textit{counterfactual} safety guarantee:
\@\label{eq:safety_guarantee}
\PP\big(Y_{n+1}(\hat\pi(X_{n+1})) < Y_{n+1}(0)\big) \leq \alpha 
\@
for a pre-specified confidence level $\alpha\in (0,1)$. 
Here, the probability is over the labeled data (based on which $\hat\pi$ is built) and the test point. We call the left-handed side probability in~\eqref{eq:safety_guarantee} the ``harm rate'' following~\cite{zhang2013assessing}; the same quantity is also referred to as the ``fraction of negatively affected'' in the literature~\citep{kallus2022s,li2023trustworthy,wu2024quantifying}. 
Throughout, we focus on binary outcomes; this setting is most extensively studied~\citep{zhang2013assessing,kallus2022s,wu2024quantifying}.

The safety guarantees can be rewritten as
\@ 
\PP\big(Y_{n+1}(\hat\pi(X_{n+1})) < Y_{n+1}(0)\big) \leq \alpha ~~ \Longleftrightarrow ~~ \PP\big(Y_{n+1}(1) < Y_{n+1}(0), ~\hat\pi(X_{n+1}) = 1 \big) \leq \alpha. \label{eq:eqivalence}
\@
Thus, harm occurs when the individual treatment effect on the test unit is negative and one decides to treat the unit. Thus, a trivial way to achieve this is to treat the test unit with probability $\alpha$; however, such a policy is clearly suboptimal due to limited welfare and power (whose meaning will be made precise later). Instead, we aim for a procedure that achieves this with reasonable, and sometimes optimal, welfare and power. 
Meanwhile, the above result shows that to control the harm rate, we should detect and avoid treating units that can be harmed by the treatment. This is the intuition for CPL.

\subsection{Connecting Safety Guarantees to Hypothesis Testing}

Our technical route differs from standard policy learning approaches, so it is helpful to begin with the conceptual framework: connect the safety guarantees with hypothesis testing. % We shall connect to their methodology later on.

Equation~\eqref{eq:eqivalence} suggest that to make safe treatments, one needs to identify units that are likely \emph{not to} be harmed by the treatment, i.e., $Y_{n+1}(1)\geq Y_{n+1}(0)$. 
We view this as testing a random null hypothesis $H_0\colon Y_{n+1}(1) < Y_{n+1}(0)$.  
Suppose we can construct a p-value $p_{n+1}$ such that  
\@\label{eq:p-value}
\PP\big(Y_{n+1}(1) < Y_{n+1}(0), ~ p_{n+1} \leq \alpha \big) \leq \alpha. 
\@
This is a non-conventional definition of p-values because the truth of $H_0$ is itself random, but this notion is sufficient for the desired safety guarantee:  
setting $\hat\pi(X_{n+1}) = \ind\{p_{n+1} \leq \alpha\}$, the validity~\eqref{eq:p-value} directly implies the safety guarantee~\eqref{eq:safety_guarantee} via the equivalence relationship in~\eqref{eq:eqivalence}. Intuitively,  
a small p-value informs strong evidence against the null, i.e., the test unit is unlikely to be harmed and thus can safely receive the treatment.

The remaining of the paper focuses on addressing two questions. First, how can we construct p-values that satisfy~\eqref{eq:p-value} in finite sample without making strong modeling assumptions? We address this question by expanding the recent literature of conformal p-values \citep{vovk2005algorithmic, bates2021testing, jin2023selection} to causal inference contexts~\citep{imbens2015causal,lei2021conformal}, while respecting the partial-identification nature of
the individual-level causal effects~\citep[e.g.,][]{heckman1997making}. Second, is thresholding p-values optimal in any sense? This is important as it is not clear a priori why our approach may be desired, even if it might achieve the safety guarantee. We will show that with specifically designed p-values, our method achieves optimal power and welfare among all safe policies at the population level.

% !TEX root = main.tex

\section{Conformal Policy Learning with Balanced Randomized Experiments}
\label{sec:rct}
To fix ideas, in this section, we introduce our framework in the simple setting of balanced randomized experiments with $e(x)=1/2$; this will be gradually generalized in Sections~\ref{sec:stratified} and~\ref{sec:obs}.  
In Section~\ref{subsec:pre}, we start with a single-arm construction that provides distribution-free safety guarantees. In Section~\ref{subsec:cpl_rct}, we introduce observable proxies and selective calibration that sharpens the procedure while preserving the same guarantee. In Section~\ref{subsec:rct_optimality}, we characterize the welfare-optimal policy under partial identification, followed by an explanation on the sharpness of selective calibration in Section~\ref{subsec:discuss_sharp}.

Throughout this section, all learned functions are fitted on a training sample independent of the labeled observations used for calibration and the test unit. In practice, this can be achieved by sample splitting~\citep{lei2018distribution}. For notational simplicity, we use \(\{(X_i,T_i,Y_i)\}_{i=1}^n\) to denote the labeled observations available for calibration; selective calibration introduced in Section~\ref{subsec:cpl_rct} may retain only a subset of these observations. Conditional on the training sample, the learned functions are treated as fixed.

\subsection{Warm-up: Direct Application of Existing Conformal p-values}
\label{subsec:pre}

We begin with a simple adaptation of the existing idea to show how conformal p-values can serve as the foundation for safe policy learning. Given that we focus on binary outcomes, we can rewrite~\eqref{eq:p-value} as finding a p-value $p_{n+1}$ obeying
\@\label{eq:safety_equiv_bin}
\PP\big( Y_{n+1}(1) = 0, Y_{n+1}(0) = 1,~ p_{n+1} \leq \alpha  \big) \leq \alpha.
\@
That is, we would like to find p-values that quantify the confidence in a small treated outcome \emph{and} a large control outcome. 
The conformal selection (CS) framework~\citep{jin2023selection} provides a natural starting point. Given labeled data $\{(X_i,Y_i)\}_{i=1}^n$ for an ordinary outcome $Y\in \RR$ (i.e., no potential-outcome framing) and a known threshold $c\in \RR$, CS proposes conformal p-values with a similar null property $\PP(Y_{n+1}\leq c, p_{n+1}\leq \alpha)\leq \alpha$. 
Setting aside the issue that we now have two potential outcomes to deal with, one simple idea is to 
% Because the two potential outcomes are never observed simultaneously, 
% one simple idea is to 
develop a conservative p-value $p^{(1)}_{n+1}$ that satisfy
\@\label{eq:safety_equiv_bin_1}
\PP\big(Y_{n+1}(1) = 0, ~ p^{(1)}_{n+1} \leq \alpha  \big) \leq \alpha,
\@
which directly implies~\eqref{eq:safety_equiv_bin}. For treated outcome, it means we will use the treated units for calibrating p-values. 
The conformal p-value from CS takes the form  
\@\label{eq:cs_pval}
p_{n+1}^{(1)} = \frac{1+\sum_{i=1}^n T_i  \cdot \ind\{V(X_i,Y_i)\leq V(X_{n+1},0)\}}{1 + \sum_{i=1}^n T_i},
\@ 
where $V$ is any score function $V\colon \cX\times\cY\to \RR$ obeying $V(x,y)\leq V(x,y')$ whenever $y\leq y'$ for any $x\in \cX$. 

\paragraph{Example (clipped score function).}
One example is the clipped score function~\citep{jin2023selection}, $V(x,y) = M\ind\{y>0\} + (1 - \widehat{\mu}_1(x))$, where $\widehat{\mu}_1(x)$ is an estimator for the conditional expectation function $\E[Y_i \mid T_i = 1, X_i = x]$ and $M>2\sup_x|\hat\mu_1(x)|$ is a sufficiently large constant so that~\eqref{eq:cs_pval} reduces to 
\@\label{eq:cs_pval_clip}
p_{n+1}^{(1)} = \frac{1+\sum_{i=1}^n T_i  \cdot \ind\{Y_i= 0\} \cdot \ind\{ \widehat{\mu}_1(X_i)\geq  \widehat{\mu}_1(X_{n+1})\}}{1 + \sum_{i=1}^n T_i}.
\@ 
Intuitively, this p-value focuses on potentially unsafe cases (i.e., $Y_i(1) = 0$) among treated units and examines whether the test unit is extreme with respect to $\widehat{\mu}_1(x)$. When this p-value is small, the test unit is unlikely to come from the distribution of potentially unsafe units, and thus is likely to be safe. \qed  \\

To see why the conformal p-value~\eqref{eq:cs_pval} satisfies~\eqref{eq:safety_equiv_bin_1} for any score function, 
we outline the theoretical arguments from~\cite{jin2023selection}. 
Let $n_1 = |\cI_1|$ be the number of treated units, where $\cI_1 = \{i\in [n]\colon T_i=1\}$.  
Consider the ``oracle'' p-value  
\@\label{eq:cs_pval_oracle}
p_{n+1}^{\ast (1)} = \frac{1+\sum_{i=1}^n T_i  \cdot \ind\{V(X_i,Y_i(1))\leq V(X_{n+1},Y_{n+1}(1))\}}{1 + n_1} 
% =  \frac{1+\sum_{i\in \cI_1}   \ind\{V(X_i,Y_i(1))\leq V(X_{n+1},Y_{n+1}(1))\}}{n_1+1},
\@ 
which is not computable since $Y_{n+1}(1)$ is unobserved. The only difference between~\eqref{eq:cs_pval} and~\eqref{eq:cs_pval_oracle} is that $Y_{n+1}(1)$ in the oracle p-value is replaced by $0$, and we have $Y_i(1)=Y_i$ for $i\in \cI_1$ in~\eqref{eq:cs_pval_oracle}. First, conditional on $\{T_i\}_{i=1}^n$, in this randomized experiment, the data $\{(X_i,Y_i(1)\}_{i\in \cI_1} \cup\{X_{n+1},Y_{n+1}(1)\}$ are exchangeable, which implies $\PP(p_{n+1}^{*(1)}\leq \alpha)\leq \alpha$~\citep{vovk2005algorithmic}. 
Second, on the event that $Y_{n+1}(1)=0$, the two p-values coincide. 
The two facts imply $\PP(p_{n+1}^{(1)}\leq \alpha,~Y_{n+1}(1)=0) = \PP(p_{n+1}^{*(1)}\leq \alpha,~Y_{n+1}(1)=0)\leq \PP(p_{n+1}^{*(1)}\leq \alpha) \leq \alpha$.

Of course, one may also leverage the control samples and consider the null hypotheses $H_0^{(0)}\colon Y_{n+1}(0)=1$. We can similarly construct the p-value (using a monotone function $V$ as before)
\@\label{eq:cs_pval_0}
p_{n+1}^{(0)} = \frac{1+\sum_{i=1}^n \ind\{T_i = 0\} \cdot \ind\{V(X_i,1-Y_i)\leq V(X_{n+1},0)\}}{1 + \sum_{i=1}^n \ind\{T_i = 0\}}.
\@ 
Following exactly the same rationales as above, this p-value is valid in the same sense.

\subsection{Proposed Method: Conformal Policy Learning}
\label{subsec:cpl_rct}

While both p-values in the last section are feasible and valid, they are conservative since $H_0^{(1)}: Y_{n+1}(1) = 0$ and $H_0^{(0)}: Y_{n+1}(0) = 1$ are both strict implications of $H_0: Y_{n+1}(1) = 0, Y_{n+1}(0) = 1$. 
In this section, we propose the general CPL procedure that improves and  
subsumes the previous two options as special cases.  

Following the conformal inference literature, we call the subset of the labeled data used to compute the p-values the ``\textit{calibration data}''. Then, both p-values in Section~\ref{subsec:pre} only use either the treatment or the control group as the calibration data, which can be substantially generalized.   

First, consider any inclusion function $\hat{g} \colon \cX\times\{0,1\}\to [0,1]$ whose training process is independent of the labeled data and the test point.  
Conditional on data, we  draw independent inclusion indicators $G_i\sim \textrm{Bernoulli}(\hat{g}(X_i,T_i))$ for each $i\in [n]$, which defines the calibration set
\@\label{eq:def_Dcal}
\cD_{\calib} = \{(X_i,T_i,Y_i)\}_{i\in \cI_\calib},\quad \text{where} \quad \cI_{\calib} = \{i\in [n]\colon G_i=1\}.
\@
The inclusion into $\cD_\calib$ can depend both on $T_i$ and $X_i$, and is allowed to be random, although later we shall see a binary $\hat{g}$ suffices for optimality. 
Since not all labeled data are used in calibration, we call this technique ``selective calibration''.

Second, to allow involving data from both arms in calibration, we combine information from the two treatment arms. Define the computable proxy outcome $Y_i^\dagger = T_iY_i+(1-T_i)(1-Y_i)$, so $Y_i^\dagger=0$ whenever $Y_i=0$ for a treated sample or $Y_i=1$ for a control sample. 
Generalizing the previous ideas, $Y_i^\dagger=0$ captures potentially unsafe units (i.e., units for which $H_0$ might happen).

With the two techniques in hand, we construct the p-value 
\@\label{eq:pval_general}
p_{n+1} = \frac{1+ \sum_{i=1}^n G_i \times \ind\{V(X_i,Y_i^\dagger)\leq V(X_{n+1},0)\}}{1+\sum_{i=1}^n G_i}.
\@  
This covers the single-arm p-values in Section~\ref{subsec:pre}: it recovers $p_{n+1}^{(1)}$ (resp.~$p_{n+1}^{(0)}$)
when $G_i = T_i$ (resp.~$G_i = 1 - T_i$). 
Another simple case is to take $G_i = 1$ which includes all the labeled data as the calibration set.  

With this p-value, our conformal policy learning simply produces a thresholding rule $\hat\pi(X_{n+1}) \coloneqq \ind\{p_{n+1} \leq \alpha\}$. Algorithm~\ref{algo:rct} summarizes the procedure. 

% \yingcomment{do we want formal algorithms for later cases?} \nec{How much formal are you thinking about?}

\begin{algorithm}
    \small
    \captionsetup{font=small}
    \caption{Conformal Policy Learning (Randomized Experiments)}
    \label{algo:rct}
\begin{algorithmic}[1]
    \REQUIRE{Labeled data $\{(X_i, T_i,Y_i)\}_{i=1}^n$, test data $X_{n+1}$, target level $\alpha \in (0,1)$, score $V(\cdot, \cdot)$, inclusion function $\hat{g}(\cdot,\cdot)$.} \\[0.5ex]
    \STATE Draw independent inclusion indicators $G_i \sim \text{Bern}(\hat{g}(X_i,T_i))$ for $i=1,\dots,n$.
    \STATE Compute $Y_i^\dagger = T_iY_i+(1-T_i)(1-Y_i)$ for $i=1,\dots,n$.
    \STATE Compute p-value $p_{n+1}$ via~\eqref{eq:pval_general}.
    \STATE Compute  $T_{n+1}=\ind\{p_{n+1}\leq \alpha\}$.  
    \ENSURE{Safe treatment $T_{n+1}$.}
\end{algorithmic}
\end{algorithm}

% \subsection{Finite-sample safety guarantee}

Theorem~\ref{thm:validity_rct} establishes the distribution-free safety guarantee for any score and inclusion functions whose inclusion probability in two groups sums up to a constant. The proof is in Appendix~\ref{app:subsec_validity_rct}. 

\begin{theorem}\label{thm:validity_rct}
    Assume Assumption~\ref{assump:uncon} with $e(x) \equiv 0.5$ and Assumption~\ref{assump:iid}.
    Then, for any score function $V \colon \cX\times\cY \to \RR$ that is non-decreasing in the second argument and $\hat{g}\colon \cX\times\{0,1\}\to [0,1]$ whose training process is independent of $\{(X_i,T_i,Y_i)\}_{i=1}^n\cup \{X_{n+1}\}$, and obey $\hat{g}(x,1)+\hat{g}(x,0)=a$ for a constant $a>0$ for any $x\in \cX$, it holds that $\PP(Y_{n+1}(1) = 0, Y_{n+1}(0) = 1, p_{n+1} \leq \alpha) \leq \alpha$ for any $\alpha\in (0,1)$. That is, the finite-sample safety guarantee~\eqref{eq:safety_guarantee} holds for  $\hat\pi(X_{n+1}) =\ind\{p_{n+1}\leq \alpha\}$. 
\end{theorem}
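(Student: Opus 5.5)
The plan is to mimic the oracle-p-value argument given after \eqref{eq:cs_pval_oracle}: introduce for the test unit a fictitious treatment indicator and inclusion indicator, define an oracle p-value that is exchangeable with the calibration scores, and check that it coincides with $p_{n+1}$ on the null event. Draw, independently of everything else, $T_{n+1}\sim\textrm{Bern}(1/2)$ and $G_{n+1}\sim\textrm{Bern}(\hat g(X_{n+1},T_{n+1}))$. Set $Y_{n+1}=Y_{n+1}(T_{n+1})$ and $Y^\dagger_{n+1}=T_{n+1}Y_{n+1}+(1-T_{n+1})(1-Y_{n+1})$. On the event $H_0=\{Y_{n+1}(1)=0,Y_{n+1}(0)=1\}$ we get $Y^\dagger_{n+1}=0$ whatever $T_{n+1}$ is. Condition throughout on the training sample, so $V$ and $\hat g$ are fixed.

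The first step is exchangeability. Conditional on training, the tuples $Z_i=(X_i,T_i,Y_i(0),Y_i(1),G_i)$ for $i=1,\dots,n+1$ are i.i.d. This holds because $X_i$ and the potential outcomes are i.i.d. by Assumption~\ref{assump:iid}, $T_i\sim\textrm{Bern}(1/2)$ independently by Assumption~\ref{assump:uncon} with $e\equiv 0.5$, and $G_i$ is drawn from the same kernel $\hat g(X_i,T_i)$. Now condition on $G_{n+1}=1$. The key observation, and the place where $\hat g(x,1)+\hat g(x,0)=a$ enters, is that
\[
\PP(G_{n+1}=1\mid X_{n+1},Y_{n+1}(0),Y_{n+1}(1))=\tfrac12\big(\hat g(X_{n+1},1)+\hat g(X_{n+1},0)\big)=a/2,
\]
which is a constant. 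Hence $G_{n+1}$ is independent of $(X_{n+1},Y_{n+1}(0),Y_{n+1}(1))$, and in particular $\PP(G_{n+1}=1,H_0)=(a/2)\PP(H_0)$.

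Next define the oracle p-value
\[
p^*_{n+1}=\frac{1+\sum_{i=1}^n G_i\,\ind\{V(X_i,Y_i^\dagger)\le V(X_{n+1},Y^\dagger_{n+1})\}}{1+\sum_{i=1}^n G_i}.
\]
By i.i.d.-ness, conditional on $G_{n+1}=1$, the scores $\{V(X_i,Y_i^\dagger):G_i=1,\ i\le n+1\}$ are exchangeable given the index set $\{i:G_i=1\}$. The standard conformal argument then gives $\PP(p^*_{n+1}\le\alpha\mid G_{n+1}=1)\le\alpha$. On $H_0$ we have $Y^\dagger_{n+1}=0$, so $p^*_{n+1}=p_{n+1}$. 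The real $p_{n+1}$ does not involve $T_{n+1}$ or $G_{n+1}$, and $H_0$ and $p_{n+1}$ are independent of $G_{n+1}$ by the key observation above. Therefore
\[
\PP(H_0,p_{n+1}\le\alpha)=\PP(H_0,p_{n+1}\le\alpha\mid G_{n+1}=1)=\PP(H_0,p^*_{n+1}\le\alpha\mid G_{n+1}=1)\le\alpha.
\]
Applying the equivalence \eqref{eq:eqivalence} then gives the safety guarantee.

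The main obstacle is the decoupling step. We must check that conditioning on $G_{n+1}=1$ does not distort the joint law of $(X_{n+1},Y_{n+1}(0),Y_{n+1}(1))$ together with the calibration data, so that exchangeability and the identity $p_{n+1}=p^*_{n+1}$ hold at the same time. Exchangeability needs $G_{n+1}$ drawn through $T_{n+1}$ exactly as the $G_i$ were. The decoupling needs its marginal inclusion probability to be free of $x$, and the constant-sum condition is exactly what reconciles these two requirements. Ties and the case of an empty calibration set are handled by the usual $+1$ conventions.
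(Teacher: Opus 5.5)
Your proof is correct, but it takes a genuinely different route from the paper's.

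The paper leaves the test point untouched and works with the unobservable harm label $Y_i^*=\max\{Y_i(1),1-Y_i(0)\}$ for all $n+1$ units. The constant-sum condition gives $\PP(G_i=1\mid X_i,Y_i^*)=a/2$, hence $(X_i,Y_i^*)\mid G_i=1\stackrel{d}{=}(X_{n+1},Y_{n+1}^*)$. Then $Y_i^\dagger\le Y_i^*$ together with monotonicity of $V$ yields only the inequality $p_{n+1}\ge p_{n+1}^*$ on the null.

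You instead give the test unit a fictitious $(T_{n+1},G_{n+1})$, so exchangeability of the selected proxy scores follows directly from the i.i.d.\ structure. You spend the constant-sum condition on decoupling: $G_{n+1}$ is independent of $(X_{n+1},Y_{n+1}(0),Y_{n+1}(1))$, and therefore of $(H_0,p_{n+1})$. Because $H_0$ forces the fictitious proxy to be exactly $0$ under either arm, your oracle p-value and $p_{n+1}$ coincide on $H_0$ rather than merely being ordered.

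A concrete payoff is that your argument never uses monotonicity of $V$. It therefore proves the guarantee for an arbitrary score. For binary outcomes, that hypothesis is an artifact of comparing $Y_i^\dagger$ with $Y_i^*$ in the paper's oracle. Your route also avoids $Y^*$ altogether.

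What the paper's framing buys is that the calibration-versus-test comparison is a pure covariate shift on $(X,Y^*)$, since $Y^*\mid X$ is unaffected by treatment and selection. This makes the extension to general $e(x)$ and $\hat g$ in Theorem~\ref{thm:valid_weighted} a short density-ratio computation. It also matches the partial-identification viewpoint, in which $Y^\dagger\le Y^*$ is the source of conservativeness (cf.\ Remark~\ref{rmk:conservative}).

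In your framework, once $q(x)=e(x)\hat g(x,1)+(1-e(x))\hat g(x,0)$ varies with $x$, conditioning on $G_{n+1}=1$ tilts the test covariate law by $q$. You would then need an extra change-of-measure (weighted exchangeability) step to undo the tilt. That step is feasible and still needs no monotonicity, but it is less direct.
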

Theorem~\ref{thm:validity_rct} states that, in principle, users can use any score function $V(\cdot, \cdot)$ and any inclusion function $\hat{g}(\cdot,\cdot)$, and CPL (Algorithm~\ref{algo:rct}) guarantees a controlled harm rate in finite samples. We emphasize that  the two simplifying assumptions, namely $e(x) = 1/2$ and $\widehat{g}(x,1) + \widehat{g}(x,0) = a$, are used only to streamline presentation, and will be eliminated in Section~\ref{sec:stratified}. 

This theorem  covers the special cases we discussed in Section~\ref{subsec:pre}. The one based on the treated (resp.~control) samples correspond to $\hat{g}(x,t) = t$  (resp.~$\hat{g}(x,t) = 1-t$),  both obeying the constant-sum condition with $a = 1$. As a third example, the conformal p-value  using all the labeled data corresponds to $\hat{g}(x,t) \equiv 1$ and thus it also satisfies the constant-sum condition with $a = 2$. 
Finally, while~\eqref{eq:pval_general} represents a restrictive class of policies, we shall see in the next subsection that specific choices of   $V$ and $\widehat{g}$ lead  to asymptotically optimal safe policy learning algorithm among all safe policies (i.e., beyond the policy class specified by CPL). 

The detailed proof of Theorem~\ref{thm:validity_rct}  is in Appendix~\ref{app:subsec_validity_rct}, and we provide a sketch here to clarify the intuitions. We again consider the ``oracle'' p-value 
\@\label{eq:cpl_pval_oracle}
p_{n+1}^{\ast} = \frac{1+\sum_{i=1}^n G_i \times \ind\{V(X_i,Y_i^\ast) \leq V(X_{n+1},Y_{n+1}^{\ast}\}}{1+\sum_{i=1}^n G_i},
\@ 
where $Y_{i}^\ast = \max\{Y_{i}(1),1-Y_{i}(0)\}$ so that $Y_{i}^\ast = 0$ if and only if $Y_{i}(1) = 0$ and $Y_{i}(0) = 1$, i.e., unit $i$ is harmed. This is again not computable because $Y_{i}^\ast$ is \emph{never} observed even for the labeled data. 
The first key fact is that this oracle is valid as long as $\hat{g}$ satisfies the constant-sum condition. Indeed, we can show that in balanced randomized experiments, the data $\{(X_i, Y_{i}^\ast)\}_{G_i = 1}$  and $(X_{n+1}, Y_{n+1}^\ast)$ are exchangeable conditional on $\{G_i\}_{i=1}^n$, which implies $\PP(p_{n+1}^\ast\leq \alpha\given \{G_i\}_{i=1}^n)\leq \alpha$. 
Second, the observed p-value~\eqref{eq:pval_general} replaces   $Y_{n+1}^\ast$ in~\eqref{eq:cpl_pval_oracle}   with the ``null'' value  $0$ and replaces the calibration label $Y_i^\ast$ with $Y_i^\dagger$, thereby upper bounding the oracle one on the ``unsafe'' null event. Specifically, the proxy outcome is conservative in the sense that $Y_i^\ast\geq Y_i^\dagger$; thus, whenever $Y_{n+1}^*=0$,   the monotonicity of $V$ implies $p_{n+1}\geq p_{n+1}^\ast$. 
This gives $\PP(Y_{n+1}^\ast =0, p_{n+1}\leq \alpha\given \{G_i\}_{i=1}^n) \leq \PP(Y_{n+1}^\ast =0, p^\ast_{n+1}\leq \alpha\given \{G_i\}_{i=1}^n) \leq \PP(p^\ast_{n+1}\leq \alpha\given \{G_i\}_{i=1}^n) \leq \alpha$.

\begin{remark}[Sharpness of conformal p-values]\label{rmk:conservative}
Our conformal p-value is subject to two sources of conservativeness compared with $p_{n+1}^*$: (i) the ``oracle'' labels $Y_i^*$ are replaced by the proxy labels $Y_i^\dagger$, and (ii) the test label $Y_{n+1}^*$ is replaced by the threshold $0$. As in conformal selection~\citep{jin2023selection}, the issue (ii) will be addressed by a tailored ``clipped'' score function. 
On the other hand, (i) is rooted in the partial identification nature of the harm rate, and specific choices of $\hat{g}$ makes our p-value sharp; we shall discuss this in more details once the optimality results in the next section are ready. 
\end{remark}

\subsection{Optimality of Conformal Policy Learning}
\label{subsec:rct_optimality}

Having established the finite-sample safety guarantees for CPL, we proceed to study the optimality.
We first analyze the optimal policy among all safe policies (i.e., including policies beyond our framework)  under partial identification. We then show that CPL with specific choices of the score and inclusion functions asymptotically achieves  global optimality.

To formalize the discussion, we define some additional notations. 
The unknown, true joint distribution over $(X,Y(1),Y(0))$ is denoted as $\PP_{X,Y(1),Y(0)}$, which induces the observable distribution  $(X_i,Y_i,T_i)\sim \PP_{X,Y,T}$. 
Also, we  denote the collection of distributions $P$ over $(X,Y(1),Y(0))$ that are compatible with the observable distribution $\PP_{X,Y,T}$ as 
\$
\cP = \big\{P_{X,Y(1),Y(0)}\colon P_{X,Y(T),T}=\PP_{X,Y(T),T} ~\text{for}~ P(T=1\given X,Y(1),Y(0))= e(X)\big\},
\$
that is, the induced distribution of $(X,Y,T)$ under $P$ coincides with $\PP_{X,Y,T}$ on the observables.

For any distribution $P$ and any treatment assignment rule $\pi\colon \cX\to \{0,1\}$, we write the harm rate as
\$
\text{Err}(\pi;P) := P( Y_{n+1}(\pi(X_{n+1}))< Y_{n+1}(0)) = P(Y_{n+1}(1)=0, Y_{n+1}(0)=1, \pi(X_{n+1})=1).
\$
By the tower property, letting $\EE_P$ denote the expectation under $P$, we know 
\$
\text{Err}(\pi;P) = \EE_P\big[  \pi(X_{n+1}) \cdot  f_P(X_{n+1})\big] \leq \alpha,\quad \text{where}\quad f_P(x) := P(Y_{n+1}(1) < Y_{n+1}(0)\given X_{n+1}=x).
\$
Here we call $f(x)$ the \emph{harm rate function}. 
Similar to the scalar marginal harm rate $\text{Err}(\pi,P)$, the function $f_P(\cdot)$ is not identifiable from data because $Y(1)$ and $Y(0)$ are never simultaneously observed. 
\cite{kallus2022s} derived the sharp partial-identification upper bound for $f_P(x)$,  given by 
\@\label{eq:def_bd_gamma}
\gamma(x):= \min\{1-\mu_1(x),\mu_0(x)\},
\@
where $\mu_t(x) = \EE[Y(t)\given X=x]$ for $t\in \{0,1\}$ is the conditional mean function of each potential outcome. Namely, for any super-population $P_{X,Y(1),Y(0)}$ whose observed distribution $P_{X,Y,T}$ (induced by the same treatment assignment mechanism) is equal to $\PP_{X,Y,T}$, its harm rate function obeys $f_P(x)\leq \gamma(x)$, and there exists one such distribution whose harm rate function coincides with $\gamma(x)$.   

The most common objective of policy learning is to maximize the welfare. Following the  policy learning literature~\citep[e.g.,][]{murphy2003optimal, manski2004statistical, hirano2009asymptotics, zhao2012estimating, kitagawa2018should, athey2021policy}, we define the welfare of a policy $\pi\colon \cX\to \{0,1\}$ as 
\$
\text{Welfare}(\pi; \PP) \coloneqq \EE\big[Y_{n+1}(\pi(X_{n+1}))\big],
\$
where the larger value of outcome $Y$ corresponds to the larger welfare. 
The welfare-optimal treatment rule subject to the safety guarantee can be defined as
\@\label{eq:def_opt_welfare_phi}
\pi_\welfare^* = \argmax_{\pi\colon \cX\to \{0,1\}} \quad & \Welfare  (\pi; \PP)  \\ 
\text{subject to}\quad & \max_{P\in \cP} ~\text{Err}(\pi;P) \leq \alpha. \notag
\@
Here, we optimize the welfare subject to the worst-case safety guarantee. The constraint, $\max_{P\in \cP} ~\text{Err}(\pi;P) \leq \alpha$, ensures that, regardless of the underlying data-generating process, the harm rate is upper bounded by $\alpha.$

We define the oracle welfare-optimal score function 
\@\label{eq:s_welfare_optimal}
s_{\welfare}(x)=- (\mu_1(x)-\mu_0(x))/\fna(x),
\@
where we use the convention $0/0=0$, $a/0=+\infty$ if $a>0$ and $a/0=-\infty$ if $a<0$ throughout. 
The following theorem formally gives the optimal solution $\pi^*_{\text{welfare}}$ under the worst-case safety constraint. It is an implication of the more general Theorem~\ref{thm:welfare_optimal_general} in Appendix~\ref{app:subsec_general_opt}.  

\begin{theorem}
\label{thm:welfare_optimal} 
    Assume $s_{\mathrm{welfare}}(X)$ has no point mass. Then an optimal solution to \eqref{eq:def_opt_welfare_phi} is 
    \[ 
    \pi^*_{\mathrm{welfare}}(x) = \mathbf{1}\{s_{\mathrm{welfare}}(x)\leq r^*\}, \quad
    \text{where}\quad r^* := \sup\left\{ \tilde r\leq 0: \mathbb{E}\!\left[ \gamma(X) \mathbf{1}\{s_{\mathrm{welfare}}(X)\leq \tilde r\} \right] \leq \alpha \right\}. 
    \] 
    Moreover, writing $\tau(x)=\mu_1(x)-\mu_0(x)$, if $\mathbb{E}[\gamma(X)\mathbf{1}\{\tau(X)>0\}]\leq\alpha$, then $r^*=0$ and $\pi^*_{\mathrm{welfare}}(x)=\mathbf{1}\{\tau(x)>0\}$ almost surely. Otherwise, $r^*<0$ and $\mathbb{E}[\gamma(X)\pi^*_{\mathrm{welfare}}(X)]=\alpha$.
\end{theorem}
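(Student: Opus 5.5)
The plan is to reduce \eqref{eq:def_opt_welfare_phi} to a fractional-knapsack (Neyman--Pearson type) linear program and solve it by an exchange argument. First I would use the sharp Kallus bound \eqref{eq:def_bd_gamma}. For every $P\in\cP$ we have $f_P\le\gamma$ pointwise, and some $P^\ast\in\cP$ attains $f_{P^\ast}=\gamma$. Since $\text{Err}(\pi;P)=\EE[\pi(X)f_P(X)]$ is monotone in $f_P$ for $\pi\ge0$, the same $P^\ast$ is worst case for every $\pi$. Hence $\max_{P\in\cP}\text{Err}(\pi;P)=\EE[\gamma(X)\pi(X)]$, where the $X$-marginal is identified. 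Because $Y_{n+1}(\pi(X))=Y(0)+\pi(X)(Y(1)-Y(0))$, the welfare equals $\EE[\mu_0(X)]+\EE[\pi(X)\tau(X)]$. The problem therefore becomes
\begin{align}
\max_{\pi:\cX\to\{0,1\}}\ \EE[\pi(X)\tau(X)]\quad\text{s.t.}\quad \EE[\pi(X)\gamma(X)]\le\alpha .
\end{align}
I would relax this to $\pi:\cX\to[0,1]$ and show the relaxed optimum is attained by the deterministic rule in the statement.

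Next I would dispose of the easy regions. Treating $x$ with $\tau(x)\le0$ never increases the objective and never decreases the cost, so an optimal $\pi$ may be taken to vanish there. Regions with $\gamma(x)=0$ and $\tau(x)>0$ (where $s_{\welfare}=-\infty$) are free, so they should be treated. The convention $a/0=\pm\infty$ makes these regions consistent with the threshold rule $\ind\{s_{\welfare}\le r\}$ for $r\le0$. Note that $\tau>0$ corresponds to $s_{\welfare}<0$. The set $\tau=0$ is null under the no-point-mass assumption, or else irrelevant to the objective.

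Then I would split into the two cases in the statement. If $\EE[\gamma\ind\{\tau>0\}]\le\alpha$, the unconstrained maximizer $\ind\{\tau>0\}$ is feasible, hence optimal. Since $\ind\{s_{\welfare}\le0\}=\ind\{\tau>0\}$ a.s. (using no mass at $0$), this gives $r^*=0$.

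Otherwise, the function $h(r)=\EE[\gamma\ind\{s_{\welfare}\le r\}]$ is right-continuous and nondecreasing, and it is continuous because $s_{\welfare}(X)$ has no atoms. We also have $h(0)>\alpha$. I need $h(-\infty)\le\alpha$, which I would check at this step. It means the free mass $\{\gamma=0\}$ costs nothing, so $h(r)\to\EE[\gamma\ind\{s=-\infty\}]=0$. So $r^*<0$ and $h(r^*)=\alpha$.

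The main step, and the one I expect to be the only real obstacle, is optimality of the threshold rule in this binding case. I would use a Lagrangian or exchange argument with multiplier $\lambda=-r^*>0$. Since $\tau(x)=-s_{\welfare}(x)\gamma(x)$ wherever $\gamma>0$, for any feasible $\pi\in[0,1]$ we have
\begin{align}
\EE[(\pi^*-\pi)\tau] = \EE[(\pi^*-\pi)\gamma(-s_{\welfare}-\lambda)] + \lambda\,\EE[(\pi^*-\pi)\gamma].
\end{align}
The first term is nonnegative pointwise: $\pi^*-\pi\ge0$ exactly where $s_{\welfare}\le r^*$, that is where $-s_{\welfare}-\lambda\ge0$, and the reverse holds elsewhere. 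The second term equals $\lambda(\alpha-\EE[\pi\gamma])\ge0$.

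The delicate points are the infinite-score regions and the set $\gamma=0$, where the identity $\tau=-s\gamma$ must be replaced by a direct argument:
\begin{itemize}
\item $\gamma=0$ with $\tau>0$: here $\pi^*=1$, so $\pi^*-\pi\ge0$ and the term $(\pi^*-\pi)\tau$ is nonnegative.
\item $\gamma=0$ with $\tau\le0$: these points contribute nonnegatively because $\pi^*=0$ there.
\end{itemize}
Finally, $\EE[\gamma\pi^*]=h(r^*)=\alpha$ follows from continuity of $h$, which completes the statement.
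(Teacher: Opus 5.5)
Your proposal is correct and follows essentially the same route as the paper. Both proofs reduce the worst-case constraint to $\EE[\gamma(X)\pi(X)]\le\alpha$ via the sharp coupling, write welfare as $\EE[\mu_0(X)]+\EE[\tau(X)\pi(X)]$, and prove optimality of the threshold rule over randomized policies by the pointwise Lagrangian inequality with multiplier $-r^*$ plus complementary slackness. The differences are cosmetic. The paper proves the randomized version (Theorem~\ref{thm:welfare_optimal_general}, with a strict-inequality cutoff curve and tie-breaking $\eta^*$) and then drops the randomization using no point mass. You instead work directly with the continuous curve $h(r)=\EE[\gamma(X)\ind\{s_{\welfare}(X)\le r\}]$ and spell out the $\gamma=0$ regions explicitly.
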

Theorem~\ref{thm:welfare_optimal} shows that the optimal treatment rule for the welfare maximization is based on a cutoff on $s_{\text{welfare}}(X_{n+1})$ while staying in the region $\tau(X_{n+1})\geq 0$. The intuition is as follows. 
We can show that the problem~\eqref{eq:def_opt_welfare_phi} that defines $\pi_{\text{welfare}}^*$   can be rewritten as 
\@\label{eq:def_opt_welfare_phi_2}
\maximize_{\pi \colon \cX\to \{0,1\}} \quad & \E[\pi(X_{n+1})\tau(X_{n+1})] + \text{constant} \\ %\E[Y_{n+1}(0)] \\ 
\text{subject to}\quad & \E[\pi(X_{n+1}) \fna(X_{n+1})] \leq \alpha, \notag
\@
where $\tau(x) = \mu_1(x) - \mu_0(x)$ captures the conditional average treatment effect (CATE).   

The optimal solution thus seeks the feature regions that yield the largest welfare gain $\tau(X)$ relative to each unit of ``cost'' $\gamma(X)$. 

The next question is whether, and under what conditions, the conformal policy learning method can achieve this optimal welfare.  
Following Theorem~\ref{thm:welfare_optimal}, we 
define the conformal p-value
\begin{equation}
\label{eq:pval_welfare_optimal}
p^{\text{welfare}}_{n+1} = \frac{1+ \sum_{i=1}^n G_i \times \ind\{Y_i^\dagger = 0\} \times \ind\{\hat{s}(X_i) \leq \hat{s}(X_{n+1})\}}{1+\sum_{i=1}^n G_i}, 
\end{equation}
for the specific choices 
\@
& G_i = T_i \one\{1 - \widehat{\mu}_1(X_i) \leq \widehat{\mu}_0(X_{i})\} + (1-T_i) \one\{1 - \widehat{\mu}_1(X_i) > \widehat{\mu}_0(X_{i})\}, \label{eq:def_G_welfare_optimal}\\ 
&\text{and}\qquad \hat{s}(x) = - (\widehat{\mu}_1(x) - \widehat{\mu}_0(x))/\widehat{\fna}(x).\label{eq:def_s_welfare_optimal}
\@
This corresponds to taking a clipped score  $V(x,y) = M\ind\{y>0\} + \hat{s}(x)$ with a sufficiently large constant $M > 0$ in Algorithm~\ref{algo:rct}. 
Note that $\hat{g}(x,1) + \hat{g}(x,0) = 1$ and therefore it also satisfies the condition for the inclusion function specified in Theorem~\ref{thm:validity_rct}. The intuition about the expression of the inclusion function $G_i$ is given in the next subsection, where we discuss the sharpness of our results.

Consider the conformal policy learning algorithm with estimated welfare cutoff (if necessary) 
\$
\hat\pi_{\mathrm{welfare}}(X_{n+1}) := \ind\{p^{\mathrm{welfare}}_{n+1}\leq\alpha\} \ind\{\widehat\mu_1(X_{n+1}) > \widehat\mu_0(X_{n+1})\}.
\$ 
Theorem~\ref{thm:asymp_opt_welfare} shows that when conditional expectation functions of potential outcomes are correctly specified, CPL achieves the optimal welfare asymptotically. Its proof is included in Appendix~\ref{app:subsec_opt_welfare_ours}. 

\begin{theorem}\label{thm:asymp_opt_welfare}
    Suppose Assumption~\ref{assump:uncon} holds with $e(x) = 0.5$ and Assumption~\ref{assump:iid} holds. Suppose $\|\hat{\mu}_t(X) -\mu_t(X)\|_{L_2(\PP_{X})}=o_P(1)$ as $n\to \infty$ for $t \in \{0,1\}.$  
Additionally, assume $s_{\welfare}(X)$ has no point mass. 
Then, $\EE[Y(\hat\pi_{\welfare}(X_{n+1}))]\to \Welfare(\pi_\welfare^*;\PP)$ as $n\to \infty$ where $\pi_{\welfare}^*$ is the optimal solution in Theorem~\ref{thm:welfare_optimal}.
\end{theorem}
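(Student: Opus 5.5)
The plan is to show that, conditional on the training sample, the CPL rule converges in $L_1(\PP_X)$ to the oracle rule $\pi^*_\welfare$. Welfare then converges because
\[
\EE[Y(\hat\pi_\welfare(X_{n+1}))]-\Welfare(\pi^*_\welfare;\PP)=\EE\big[(\hat\pi_\welfare(X_{n+1})-\pi^*_\welfare(X_{n+1}))\tau(X_{n+1})\big],
\]
and $|\tau|\le 1$, so it suffices to show $\PP(\hat\pi_\welfare(X_{n+1})\ne\pi^*_\welfare(X_{n+1}))\to 0$.

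\textbf{Step 1: reduce the p-value to a population threshold.} The p-value \eqref{eq:pval_welfare_optimal} satisfies $p^{\welfare}_{n+1}\le\alpha$ iff $\hat F_n(\hat s(X_{n+1}))\le \alpha$ up to $O(1/n)$ terms, where
\[
\hat F_n(r)=\frac{1}{1+\sum_i G_i}\sum_i G_i\ind\{Y_i^\dagger=0\}\ind\{\hat s(X_i)\le r\}.
\]
Conditional on the training fold, the summands are i.i.d.\ bounded and monotone in $r$, so a DKW/Glivenko--Cantelli argument gives $\sup_r|\hat F_n(r)-\hat F(r)|=o_P(1)$. Here $\hat F(r)=\EE[\hat g(X,T)\ind\{Y^\dagger=0\}\ind\{\hat s(X)\le r\}]/\EE[\hat g(X,T)]$. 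Because $\hat g(x,1)+\hat g(x,0)=1$ and $e\equiv 1/2$, we have $\EE[\hat g(X,T)]=1/2$. Moreover, $\EE[\hat g(X,T)\ind\{Y^\dagger=0\}\mid X]=\tfrac12\hat\gamma^{\mathrm{pop}}(X)$, where
\[
\hat\gamma^{\mathrm{pop}}(x)=(1-\mu_1(x))\ind\{1-\hat\mu_1\le\hat\mu_0\}+\mu_0(x)\ind\{1-\hat\mu_1>\hat\mu_0\}.
\]
Hence $\hat F(r)=\EE[\hat\gamma^{\mathrm{pop}}(X)\ind\{\hat s(X)\le r\}]$. This is the key identity: selective calibration picks, in population, the arm realizing the Fréchet bound $\gamma$.

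\textbf{Step 2: consistency of the population ingredients.} $L_2$-consistency of $\hat\mu_t$ gives $\EE|\hat\gamma^{\mathrm{pop}}(X)-\gamma(X)|\to 0$ in probability. On the set where the two arguments of the min are separated, the indicators agree with high probability. Where they are close, both choices are near $\gamma$, so the difference is bounded by $|1-\mu_1-\mu_0|$, and that quantity is small on the set where the indicators disagree.

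Next, $\hat s\to s_\welfare$ in probability. This holds on $\{\gamma>0\}$ by the continuous mapping theorem. On $\{\gamma=0\}$, $s_\welfare=\pm\infty$ or $0$, and I would handle it by noting that such points contribute zero cost $\gamma$. There the oracle treats iff $\tau>0$, which agrees with the extra factor $\ind\{\hat\mu_1>\hat\mu_0\}$ in $\hat\pi_\welfare$ asymptotically. Points with $\tau=0$ do not affect welfare.

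\textbf{Step 3: threshold convergence.} Define $F(r)=\EE[\gamma(X)\ind\{s_\welfare(X)\le r\}]$. Steps 1–2 give $\hat F\to F$ pointwise in probability. Since $s_\welfare(X)$ has no atoms, $F$ is continuous, and pointwise convergence of monotone functions to a continuous limit upgrades to uniform convergence. The data-driven rule treats when $\hat s(X_{n+1})\le \hat r_n$, where $\hat r_n$ is essentially the largest $r$ with $\hat F_n(r)\le\alpha$, further intersected with $\hat\tau>0$, which mirrors $\tilde r\le 0$ in Theorem~\ref{thm:welfare_optimal}.

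\textbf{Main obstacle.} This step is where I expect the difficulty. If $F$ is strictly increasing near $r^*$, then $\hat r_n\to r^*$. If $F$ is flat near $r^*$, the threshold need not converge. However, a flat region carries no $\gamma$-mass, and its $s$-mass carries $\tau$ of a fixed sign; I would argue that the welfare difference from any threshold in the flat interval vanishes or is nonnegative. That would require a careful case analysis using the characterization in Theorem~\ref{thm:welfare_optimal}. Finally, combining $\hat s(X_{n+1})\to s_\welfare(X_{n+1})$ with the no-point-mass condition gives $\PP(\ind\{\hat s\le \hat r_n\}\ne\ind\{s_\welfare\le r^*\})\to 0$. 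Bounded convergence then yields the welfare convergence.
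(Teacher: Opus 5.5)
Your overall architecture matches the paper's. You use the selective-calibration identity $\EE[\hat g(X,T)\ind\{Y^\dagger=0\}\mid X]=\tfrac12\hat\gamma^{\mathrm{pop}}(X)$, prove uniform convergence of the empirical curve to $F(r)=\EE[\gamma(X)\ind\{s_{\mathrm{welfare}}(X)\le r\}]$, and then pass from curves to treatment indicators. The gap is exactly at the step you flag as the main obstacle, and the fix you sketch cannot work.

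In the binding case, $r^*<0$ is the right endpoint of the level set $\{F=\alpha\}=[r_-,r^*]$, so uniform convergence alone only tells you that $\hat r_n$ may settle anywhere in that interval. If it settles strictly below $r^*$, your rule treats fewer units than the oracle. On $\{\hat r_n<s_{\mathrm{welfare}}(X)\le r^*\}$ you have $s_{\mathrm{welfare}}<0$, i.e.\ $\tau>0$. The welfare difference is then $-\EE[\tau(X)\ind\{\hat r_n<s_{\mathrm{welfare}}(X)\le r^*\}]\le 0$. So the fixed sign of $\tau$ gives an inequality in the unhelpful direction, and the difference vanishes only if that set is $\PP_X$-null. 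No case analysis on the sign of $\tau$ can supply this.

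The missing fact is that the no-point-mass hypothesis already forces $\PP\{\gamma(X)=0\}=0$. Under the ratio conventions, $\gamma(x)=0$ puts $s_{\mathrm{welfare}}(x)\in\{-\infty,0,+\infty\}$, so positive mass there would be an atom. This also makes your separate handling of $\{\gamma=0\}$ in Step 2 unnecessary. Consequently, $0=F(b)-F(a)=\EE[\gamma(X)\ind\{a<s_{\mathrm{welfare}}(X)\le b\}]$ forces $\PP\{a<s_{\mathrm{welfare}}(X)\le b\}=0$. A flat stretch of $F$ therefore carries no $\PP_X$-mass at all, not merely no $\gamma$-mass, and its endpoints are null by the no-atom condition.

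With this, your Step 3 closes. In the binding case, with probability tending to one $\hat r_n\in[r_--\epsilon,r^*+\epsilon]$, and $\PP\{r_--2\epsilon<s_{\mathrm{welfare}}(X)\le r^*+2\epsilon\}\to0$ as $\epsilon\downarrow0$. The nonbinding case is analogous once you use the factor $\ind\{\hat\mu_1>\hat\mu_0\}$ and $\PP\{\tau(X)=0\}=0$.

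The paper uses the same fact but never tracks a threshold. It shows $p^{\mathrm{welfare}}_{n+1}-F(s_{\mathrm{welfare}}(X_{n+1}))=o_P(1)$ via continuity of $F$. It then proves $\PP\{F(s_{\mathrm{welfare}}(X))=\alpha,\ \tau(X)>0\}=0$ by the argument above. From this it concludes $\hat\pi_{\mathrm{welfare}}(X_{n+1})-\ind\{F(s_{\mathrm{welfare}}(X_{n+1}))\le\alpha\}\ind\{\tau(X_{n+1})>0\}\to0$ in probability. Finally it identifies this limit with $\pi^*_{\mathrm{welfare}}$ almost surely in the binding ($r^*<0$, $F(r^*)=\alpha$) and nonbinding ($r^*=0$) cases.
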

 
The consistency condition on the estimated conditional expectation functions could be achieved by flexible machine-learning-based methods.
Moreover, even if the consistency condition fails, the conformal policy learning algorithm based on $p^{\text{welfare}}_{n+1}$ still satisfies the finite-sample safety guarantee as per Algorithm~\ref{algo:rct}.  

In Appendix~\ref{app:subsec_power_opt}, we present analogous optimality results when researchers are interested in maximizing the ``power'', the probability of treating the test unit. The only difference is that we should now use a power-oriented score $\hat{s}(x) = \widehat{\fna}(x)$.

\subsection{Optimality, Sharpness, and Selective Calibration}
\label{subsec:discuss_sharp}

The optimality result suggests that, in the non-trivial setting where the constraint is binding,  conformal policy learning can \emph{exhaust} the harm rate budget under partial identification, matching the globally optimal rule whose worst-case harm rate is exactly $\alpha$. 
We now continue on Remark~\ref{rmk:conservative} to explain why this sharpness can be achieved. 
The use of clipped score essentially eliminated the conservativeness of using $0$ instead of $Y_i^*$. We now focus on the conservativeness arising from replacing the ``oracle'' outcome $Y_i^* = \max\{Y_i(1),1-Y_i(0)\}$ by the proxy label $Y_i^\dagger = Y_iT_i + (1-Y_i)(1-T_i)\leq Y_i^*$, and discuss how this is eliminated by selective calibration.

The use of $Y_i^\dagger$ is deeply rooted in the partial identification nature of the problem: even for the labeled data, the harm $Y_i^*$ is not observed. 
As such, the sharpness hinges on whether, on the population level, calibration with $Y_i^\dagger$ matches the worst-case harm-rate bound that relies on $\gamma(X_i)$ in~\eqref{eq:def_bd_gamma}.

CPL achieves so through the selective calibration mechanism. 
Assuming, for intuitions, that the conditional mean functions are perfect, the optimal conformal p-value uses the inclusion indicator $G_i = T_i \ind\{1 -  {\mu}_1(X_i) \leq  {\mu}_0(X_{i})\} + (1-T_i) \ind\{1 -  {\mu}_1(X_i) >  {\mu}_0(X_{i})\}$. Namely, a treated unit $X_i$ enters the calibration set if and only if $1 - \mu_1(X_i)\leq \mu_0(X_i)$, in which case 
\$
\gamma(X_i) = 1-\mu_1(X_i) = \EE[ \ind\{Y_i^\dagger=0\}\given X_i,T_i=1]
\$
since $Y_i^\dagger=Y_i(1)$ for $T_i=1$. That is, the proxy label $Y_i^\dagger$ provides ``sharp'' calibration for the worst-case harm rate $\gamma(X_i)$. A similar observation applies to control units.  

The above discussion shows that for sharp harm rate control, one should use a labeled unit for calibration if and only if the treatment status matches whether the outcome model is ``active'' in the worst-case harm rate function $\gamma(x)$, i.e., whether a treated unit obeys $\gamma(X_i)=1-\mu_1(X_i)$ or a control unit obeys $\gamma(X_i)=\mu_0(X_i)$. 
In the simulations, we shall evaluate the ``informativeness'' of labeled data and the power of CPL. 

Finally, we re-emphasize the model-free nature of conformal policy learning. Even though the arm informativeness---through the inclusion function $\hat{g}$---is estimated and therefore certainly imperfect, our method provides conservative harm rate control without any modeling assumptions.

% !TEX root = main.tex

\section{Conformal Policy Learning with Stratified Experiments}
\label{sec:stratified}
In this section, we use weighted conformal inference to generalize our previous results to allow for both arbitrary, known propensity score $e(x)$ and arbitrary inclusion function $\hat{g}$.  
It also serves as the foundation for CPL in observational studies, which we cover in the next section.

The selective calibration process remains the same as Section~\ref{subsec:cpl_rct}. Consider any pre-trained monotone score function $V\colon \cX\times\{0,1\}\to \RR$ and inclusion function $\widehat{g}\colon \cX\times\{0,1\}\to [0,1]$. 
The calibration set is the same as~\eqref{eq:def_Dcal} where   $G_i\sim \textnormal{Bern}(\hat{g}(X_i,T_i))$ are independently drawn inclusion indicators. 

When computing conformal p-values for a new test point $X_{n+1}\sim \PP_X$, conditional on $G_i=1$, there is a covariate shift between  $\cD_{\calib}$ and $X_{n+1}$ that arises from both the treatment assignment and the exogenous random inclusion into the calibration set. To address this shift, we define the function $w\colon \cX\to \RR^+$ by 
\@\label{eq:def_weight}
w(x) = %\frac{
\bigl(e(x)\hat{g}(x,1) + (1-e(x))\hat{g}(x,0)\bigr)^{-1}.
% }{\EE[ \bigl(e(X)\hat{g}(X,1) + (1-e(X))\hat{g}(X,0)\bigr)^{-1}]},
\@
Because $e(x) \coloneqq \Pr(T_i = 1 \mid X_i=x)$ and $\hat{g}(x,t) \coloneqq \Pr(G_i = 1 \mid X_i = x, T_i = t)$, 
one can show that $w(x) = \Pr(G_i = 1 \mid X_i = x)^{-1},$ the inverse probability of being included for calibration given $X_i = x$, regardless of the choice of $e(x)$ and $\hat{g}$.  
Given the weights, we compute
\@\label{eq:def_pval_weight}
p^{\text{str}}_{n+1} = \frac{w(X_{n+1}) +\sum_{i = 1}^n G_i w(X_i) \ind\{V(X_i,Y_i^\dagger)\leq V(X_{n+1},0)\}}{w(X_{n+1}) + \sum_{i = 1}^n G_i w(X_i)}.
\@ 
The p-value~\eqref{eq:pval_general} in Section~\ref{sec:rct} is a special  case of~\eqref{eq:def_pval_weight}, where the simplifying assumptions $e(x)=0.5$ and $\hat{g}(x,1) + \hat{g}(x,0)=a$ for a constant $a>0$ implied constant weights and  no reweighting is needed.

Theorem~\ref{thm:valid_weighted} states that CPL with the above weighted conformal p-value % by thresholding the p-value, $\hat\pi^{\text{str}}(X_{n+1}) \coloneqq \ind\{p^{\text{str}}_{n+1} \leq \alpha\}$, satisfies 
achieves the distribution-free safety guarantee, whose proof is in Appendix~\ref{app:subsec_weighted_validity}. 

\begin{theorem}\label{thm:valid_weighted}
    Suppose Assumption~\ref{assump:uncon} and Assumption~\ref{assump:iid} hold, and $e(x)$ is known. 
    Then, for any score function $V \colon \cX\times\cY \to \RR$ that is non-decreasing in the second argument and for any inclusion function $\hat{g}\colon \cX\times\{0,1\}\to [0,1]$ obeying $e(X)\hat{g}(X,1)+(1-e(X))\hat{g}(X,0)>0$, $P_X$-a.s., it holds that $\PP(Y_{n+1}(1) = 0, Y_{n+1}(0) = 1, p^{\text{str}}_{n+1} \leq \alpha) \leq \alpha$ for any $\alpha\in (0,1)$. That is,~\eqref{eq:safety_guarantee} holds for $\hat\pi^{\text{str}}(X_{n+1}) = \ind\{p^{\text{str}}_{n+1}\leq \alpha\}$. 
\end{theorem}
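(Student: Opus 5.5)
The plan mirrors the oracle argument sketched after Theorem~\ref{thm:validity_rct}, with weighted exchangeability in place of plain exchangeability.

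\textbf{Oracle p-value.} Set $Y_i^\ast=\max\{Y_i(1),1-Y_i(0)\}$ for $i\in[n+1]$ and define
\[
p^{\ast}_{n+1}=\frac{w(X_{n+1})+\sum_{i=1}^n G_i w(X_i)\ind\{V(X_i,Y_i^\ast)\leq V(X_{n+1},Y_{n+1}^\ast)\}}{w(X_{n+1})+\sum_{i=1}^n G_i w(X_i)}.
\]

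\textbf{Step 1 (dominance on the null event).} On $\{Y_{n+1}^\ast=0\}$ we have $V(X_{n+1},Y_{n+1}^\ast)=V(X_{n+1},0)$. For every $i$ with $G_i=1$, $Y_i^\dagger\leq Y_i^\ast$: if $T_i=1$ then $Y_i^\dagger=Y_i(1)$, and if $T_i=0$ then $Y_i^\dagger=1-Y_i(0)$. Monotonicity of $V$ gives $\ind\{V(X_i,Y_i^\ast)\leq V(X_{n+1},0)\}\leq \ind\{V(X_i,Y_i^\dagger)\leq V(X_{n+1},0)\}$. Hence $p^{\rm str}_{n+1}\geq p^\ast_{n+1}$ on this event, and so
\[
\PP(Y_{n+1}^\ast=0,\ p^{\rm str}_{n+1}\leq\alpha)\leq \PP(p^\ast_{n+1}\leq\alpha).
\]

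\textbf{Step 2 (validity of $p^\ast_{n+1}$).} This reduces to weighted conformal inference under covariate shift \citep{TibshiraniBCR19}. Condition on the training sample, so $V,\hat g$ are fixed. Let $Z_i=(X_i,Y_i^\ast)$. The triples $(X_i,T_i,G_i,Y_i(1),Y_i(0))$ are i.i.d. By Assumption~\ref{assump:uncon} and the construction of $G_i$, we have $G_i\indep (Y_i(1),Y_i(0))\mid X_i$ and $\PP(G_i=1\mid X_i)=w(X_i)^{-1}$. Therefore, conditional on $G_i=1$, the calibration pairs $Z_i$ are i.i.d. with law $\tilde P$ satisfying
\[
d\tilde P/d P_Z(x,y)=w(x)^{-1}/\EE[w(X)^{-1}],
\]
while $Z_{n+1}\sim P_Z$. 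So $dP_Z/d\tilde P\propto w(x)$, which is finite by the positivity assumption. Condition additionally on $\{G_i\}_{i=1}^n$ and let $\cI=\{i:G_i=1\}$. The points $\{Z_i\}_{i\in\cI}\cup\{Z_{n+1}\}$ are then independent, with the test point under a likelihood-ratio-$w$ shift. Exactly as in \citet{TibshiraniBCR19} (and \citet{jin2023model}), conditional on the multiset $\mathcal{E}$ of these values, the test point equals $Z_j$ with probability $w(X_j)/\sum_{k\in\cI\cup\{n+1\}}w(X_k)$.

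Write $s_k=V(Z_k)$. Then $p^\ast_{n+1}=F(s_{n+1})$, where $F(s)=\sum_k w_k\ind\{s_k\leq s\}/\sum_k w_k$ is a weighted CDF. Since $F(s_J)$ is the weighted CDF evaluated at a point $J$ drawn from the same weights, the standard argument (with ties handled because $\ind\{\le\}$ makes the CDF stochastically larger than uniform) gives $\PP(F(s_J)\leq\alpha\mid\mathcal{E})\leq\alpha$. Integrating out yields $\PP(p^\ast_{n+1}\leq\alpha)\leq\alpha$, and combining with Step 1 proves the theorem.

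\textbf{Main obstacle.} The delicate part is Step 2's conditional-exchangeability computation. I would write the joint density of $(Z_i)_{i\in\cI}$ and $Z_{n+1}$ given $\{G_i\}$ as $\prod_{i\in\cI}w(x_i)^{-1}p(z_i)\cdot p(z_{n+1})$, up to constants. Multiplying and dividing by $\prod_{k}w(x_k)^{-1}$ over all points shows the law is a symmetric function times $w(x_{n+1})$. This gives the stated weights without needing the normalization $\EE[w(X)^{-1}]$. The key facts to verify carefully are that $Y^\ast$ is a function of potential outcomes alone, and that selection via $G_i$ depends only on $(X_i,T_i,\text{indep. noise})$; this is precisely where unconfoundedness enters.
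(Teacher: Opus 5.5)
Your proposal is correct and follows essentially the same route as the paper's proof. You bound $p^{\rm str}_{n+1}$ from below by the weighted oracle p-value on $\{Y^\ast_{n+1}=0\}$, using $Y_i^\dagger\le Y_i^\ast$ and the monotonicity of $V$; you then use unconfoundedness to show that the selected calibration points and the test point differ by a covariate shift with likelihood ratio proportional to $w(x)$, and conclude via weighted exchangeability as in \citet{TibshiraniBCR19}. The only difference is that you write out the weighted-exchangeability step (the symmetric-density factorization and the weighted-CDF argument handling ties), which the paper simply cites.
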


As in Section~\ref{sec:rct}, given consistent outcome modeling, CPL achieves optimality with
\begin{equation}
\label{eq:pval_power_optimal_weights}
p^{\text{str-opt}}_{n+1} = \frac{w(X_{n+1}) + \sum_{i=1}^n w(X_i)    G_i   \ind\{Y_i^\dagger = 0\} \ind\{\hat{s}(X_i) \leq \hat{s}(X_{n+1})\}}{w(X_{n+1}) +\sum_{i=1}^n w(X_i)  G_i}, 
\end{equation}
where $G_i = T_i \one\{1 - \widehat{\mu}_1(X_i) \leq \widehat{\mu}_0(X_{i})\} + (1-T_i) \one\{1 - \widehat{\mu}_1(X_i) > \widehat{\mu}_0(X_{i})\}$. The score functions remains the same as the preceding case: $\hat{s}(x) = - (\widehat{\mu}_1(x) - \widehat{\mu}_0(x))/\widehat{\fna}(x)$ for welfare maximization. This is the weighted version of equation~\eqref{eq:pval_welfare_optimal}. 
The proof of Theorem~\ref{thm:opt_stratified_experiments} is in Appendix~\ref{app:subsec_opt_stratified}.

\begin{theorem}\label{thm:opt_stratified_experiments}
    Suppose Assumption~\ref{assump:uncon} and Assumption~\ref{assump:iid} hold, and $e(x)$ is known. Suppose $\|\hat{\mu}_t(X) -\mu_t(X)\|_{L_2(\PP_{X})}\stackrel{P}{\to}0$ as $n\to \infty$ for $t \in \{0,1\}$ and $s_{\welfare}(X)$ has no point mass. 
    Let 
    \$
    \hat\pi_{\text{str-opt}}(X_{n+1}) \coloneqq \ind\{p_{n+1}^{\text{str-opt}} \leq \alpha\} \ind\{ \hat\mu_1(X_{n+1})>\hat\mu_0(X_{n+1})\}
    \$ 
    where we take the same $G_i$ as~\eqref{eq:def_G_welfare_optimal} and $\hat{s}(\cdot)$ as~\eqref{eq:def_s_welfare_optimal}.  
    Then $\EE[Y(\hat\pi_{\text{str-opt}}(X_{n+1}))]\to \Welfare(\pi_\welfare^*;\PP)$ as $n\to \infty$, where $\pi_\welfare^*$ is the optimal solution in Theorem~\ref{thm:welfare_optimal}. 
\end{theorem}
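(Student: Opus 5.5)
The plan mirrors the proof of Theorem~\ref{thm:asymp_opt_welfare}, with the weighted p-value in place of the unweighted one. The key identity is
\[
\EE[w(X_i)G_i\ind\{Y_i^\dagger=0\}h(X_i)] = \EE[\gamma_{\hat g}(X)h(X)]
\]
for any bounded $h$. Here
\[
\gamma_{\hat g}(x) := \frac{e(x)\hat g(x,1)(1-\mu_1(x)) + (1-e(x))\hat g(x,0)\mu_0(x)}{e(x)\hat g(x,1)+(1-e(x))\hat g(x,0)}.
\]
For the choice of $G_i$ in~\eqref{eq:def_G_welfare_optimal}, $\hat g(x,\cdot)$ is an indicator of one arm. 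So $\gamma_{\hat g}(x)$ equals $1-\mu_1(x)$ when $1-\hat\mu_1(x)\le\hat\mu_0(x)$, and $\mu_0(x)$ otherwise. It therefore differs from $\gamma(x)$ only on the set where the estimated ordering of $1-\hat\mu_1$ and $\hat\mu_0$ disagrees with the true one. By $L_2$ consistency, $\EE|\gamma_{\hat g}(X)-\gamma(X)|\to0$ in probability; I would show this with the bound $|\gamma_{\hat g}-\gamma|\le |1-\mu_1-\mu_0|\,\ind\{\text{misordering}\}\le |\hat\mu_1-\mu_1|+|\hat\mu_0-\mu_0|$. The weights are the inverse inclusion probabilities, so the weighted empirical measure over the calibration set estimates $\EE[\cdot]$ under $\PP_X$.

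\textbf{Step 1 (p-value to threshold).} Conditional on the training fold, define
\[
\hat F(r)=\frac{1}{n}\sum_i w(X_i)G_i\ind\{Y_i^\dagger=0\}\ind\{\hat s(X_i)\le r\}
\quad\text{and}\quad
\hat W=\frac{1}{n}\sum_i w(X_i)G_i .
\]
Then $p^{\text{str-opt}}_{n+1}=(w(X_{n+1})/n+\hat F(\hat s(X_{n+1})))/(w(X_{n+1})/n+\hat W)$. The class of indicators $\ind\{\hat s\le r\}$ is a VC class indexed by $r$. Assuming the weights are bounded (positivity: $e$ bounded away from $0,1$), a Glivenko--Cantelli argument gives $\sup_r|\hat F(r)-F(r)|\to0$ and $\hat W\to1$, where $F(r)=\EE[\gamma_{\hat g}(X)\ind\{\hat s(X)\le r\}]$. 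Hence $\ind\{p^{\text{str-opt}}_{n+1}\le\alpha\}$ agrees with $\ind\{\hat s(X_{n+1})\le \hat r\}$ up to vanishing boundary events, where $\hat r$ is the cutoff at which $F$ crosses $\alpha$.

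\textbf{Step 2 (convergence to the oracle).} Replace $\gamma_{\hat g}$ by $\gamma$, and replace $\hat s$ by $s_\welfare$ in probability. Consistency of $\hat\mu_t$ gives $\hat s(X)\to s_\welfare(X)$ in probability, which needs some care where $\gamma(X)=0$ and the convention $a/0=\pm\infty$ applies. Since $s_\welfare(X)$ has no point mass, $r\mapsto\EE[\gamma(X)\ind\{s_\welfare(X)\le r\}]$ is continuous. This gives convergence of the crossing point and
\[
\PP\big(\ind\{\hat s(X_{n+1})\le\hat r\}\neq\ind\{s_\welfare(X_{n+1})\le r_\alpha\}\big)\to0,
\]
where $r_\alpha$ is the unconstrained crossing level. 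The extra factor $\ind\{\hat\mu_1>\hat\mu_0\}$ converges to $\ind\{\tau>0\}=\ind\{s_\welfare<0\}$ except on a vanishing set. So the product converges to $\ind\{s_\welfare\le \min(r_\alpha,0)\}=\pi^*_\welfare$, matching the two cases in Theorem~\ref{thm:welfare_optimal}. Finally, $\EE[Y(\pi(X))]=\EE[\mu_0(X)]+\EE[\pi(X)\tau(X)]$ with $|\tau|\le1$, so convergence in probability of the decisions yields welfare convergence by dominated convergence.

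\textbf{Main obstacle.} I expect the hardest step to be Step 2's handling of $\hat s$ near $\gamma(x)=0$ and near the boundary $\tau=0$. There the ratio is discontinuous, and misclassification of the arm affects both $\gamma_{\hat g}$ and the score. I would first try to reduce everything to the event $\{|\hat s-s_\welfare|>\delta\}$ plus a $\delta$-neighborhood of the cutoff. The no-point-mass assumption makes the neighborhood's probability small. For the degenerate region, note that where $\gamma(x)\approx0$ the constraint contribution is negligible and $\tau$ has a definite sign, so errors there cost little welfare.
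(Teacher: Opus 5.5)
Your overall architecture is the paper's. It has the same pieces: the identity giving $\gamma_{\hat g}$ (the paper's $\gamma_n$), the bound $0\le\gamma_{\hat g}-\gamma\le|\hat\mu_1-\mu_1|+|\hat\mu_0-\mu_0|$, a uniform law for the weighted curve, replacement by $H(t)=\mathbb{E}[\gamma(X)\mathbf{1}\{s_{\mathrm{welfare}}(X)\le t\}]$, and the two-case identification via $\min(r_\alpha,0)=r^*$. Two steps, however, do not go through as written.

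\textbf{Bounded weights.} You assume $e$ is bounded away from $0$ and $1$, but the theorem only has $e(x)\in(0,1)$ from Assumption~\ref{assump:uncon}. So $w_n=1/\rho_n$, with $\rho_n(x)\in\{e(x),1-e(x)\}$, may be unbounded. That leaves your bounded-envelope Glivenko--Cantelli step, and $\hat W\to1$, unjustified. The paper instead uses a uniform-integrability bound that does not depend on $n$. For $Z=Gw_n(X)$,
\[
\mathbb{E}[(Z-M)_+\mid X]=(1-M\rho_n(X))_+\le(1-Me(X))_++(1-M\{1-e(X)\})_+,
\]
whose expectation tends to $0$ as $M\to\infty$. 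One then truncates at $M$, applies the uniform law to the bounded part, and lets $M\to\infty$. Separately, $w_n(X_{n+1})/n\to0$ because $w_n(X_{n+1})<\infty$ a.s.

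\textbf{Crossing point.} Continuity of $H$ does not imply that the crossing point converges. $H$ can be constant at level $\alpha$ on an interval $[a,b]$, and then the crossing point of the training-conditional or empirical curve need not converge. Your ``vanishing boundary events'' in Step 1 have the same problem, since they are events where $F(\hat s(X_{n+1}))$ lies within $o(1)$ of $\alpha$.

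The missing ingredient is that the no-point-mass hypothesis, read over the extended values, forces $\mathbb{P}\{\gamma(X)=0\}=0$. This holds because $\gamma(x)=0$ puts $s_{\mathrm{welfare}}(x)\in\{-\infty,0,+\infty\}$. Consequently $0=H(b)-H(a)=\mathbb{E}[\gamma(X)\mathbf{1}\{a<s_{\mathrm{welfare}}(X)\le b\}]$ shows that $s_{\mathrm{welfare}}(X)$ puts no mass on any flat piece. Hence $\mathbb{P}\{H(s_{\mathrm{welfare}}(X))=\alpha,\ \tau(X)>0\}=0$. The paper compares $p^{\text{str-opt}}_{n+1}$ directly with $H(s_{\mathrm{welfare}}(X_{n+1}))$ and uses this fact, so it never needs the cutoff itself to converge.

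The same fact dissolves what you call the main obstacle. With $\gamma>0$ a.s., the map $(\tau,\gamma)\mapsto-\tau/\gamma$ is continuous at almost every point, so $\hat s(X)\to s_{\mathrm{welfare}}(X)$ in probability by the continuous mapping theorem. Your fallback heuristic would not have worked anyway. At $\gamma(x)=0$ one can have $\mu_0(x)=0$ with $\tau(x)$ large, so wrongly withholding treatment there costs $\tau(x)$, not ``little welfare''.
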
 
Analogously, CPL with the same selection indicators $\{G_i\}$ as~\eqref{eq:def_G_welfare_optimal} and the estimated power-optimal score yields asymptotically optimal power; this result is deferred to Theorem~\ref{thm:power_opt_str} in Appendix~\ref{app:subsec_power_opt}.

% !TEX root = main.tex

\section{Conformal Policy Learning with Observational Data}
\label{sec:obs}
In this section, we further generalize CPL to observational studies. With observational data, the main challenge is that the propensity score, hence the weight function~\eqref{eq:def_weight}, is unknown and needs to be estimated. While a natural idea is to estimate the weights and plug them into the conformal p-value, it remains unclear how the safety guarantee changes with the estimation quality. 
We present a learn-then-balance procedure to obtain the estimated weights so the resulting policy learning algorithm enjoys a double robustness property.

\subsection{Conformal Policy Learning with Learn-then-Balance Weights}

We construct p-values of a similar form as~\eqref{eq:def_pval_weight} and~\eqref{eq:pval_power_optimal_weights}:
\@\label{eq:def_pval_weight_est}
p_{n+1}^{\text{obs}} = \frac{\hat{w}_{n+1} +\sum_{i=1}^n \hat{w}_i   G_i  \ind\{Y_i^\dagger = 0\}  \ind\{\hat{s}(X_i)\leq \hat{s}(X_{n+1})\}}{\hat{w}_{n+1} + \sum_{i=1}^n \hat{w}_i G_i},
\@
where the inclusion indicators are $G_i \sim \text{Bern}(\hat{g}(X_i,T_i))$ for a function $\hat{g}\colon \cX\times\{0,1\}\to [0,1]$ whose training process is independent of the labeled and unlabeled data. 
The only difference of~\eqref{eq:def_pval_weight_est} from~\eqref{eq:pval_power_optimal_weights} is that the weights are now estimated. Notably, here we work with a form analogous to~\eqref{eq:pval_power_optimal_weights} since our balancing weights are more easily stated in terms of the score function $\hat{s}(\cdot)$. It corresponds to~\eqref{eq:def_pval_weight} with the conformity score $V(x,y) = M\ind\{y>0\}+\hat{s}(x)$ for a sufficiently large constant $M>2\sup_x|\hat{s}(x)|$. 

Denoting $\cI_{\calib}=\{i\in[n]\colon G_i=1\}$, the estimated weights $\{\hat{w}_i\}_{i\in\cI_{\calib}\cup\{n+1\}}$ in~\eqref{eq:def_pval_weight_est} are obtained by a learn-then-balance approach.  
We begin with two learned functions, one preliminary weight function $\tilde{w}\colon \cX\to \RR^+$, and one estimated harm-rate function $\hat\fna\colon \cX\to [0,1]$. 
A natural choice is to define $\tilde{w}(x) = 1/[\hat{e}(x)\hat{g}(x,1)+(1-\hat{e}(x)\hat{g}(x,0)]$ where $\hat{e}(x)$ is an estimated propensity score function, and $\hat\fna(x) = \min\{1-\hat\mu_1(x),\hat\mu_0(x)\}$ for estimated outcome models $\{\hat\mu_t(\cdot)\}_{t\in\{0,1\}}$. 
For simplicity, we require the propensity score and outcome models to be trained independently of the calibration and test data (in practice, one can use sample splitting to fit the models~\citep{chernozhukov2018double} and the properties are analogously studied in~\cite{jin2025cross}). 
Then, define the balancing feature vector 
\$
\hat\phi(x) = \big(\hat{\fna}(x)\ind\{ \hat{s}(x)\leq \hat{t}\}, \tilde{w}(x)\big)\in \RR^2,\quad \text{where}\quad \hat{t} = \sup \bigg\{t\in \RR\colon \frac{1}{n}\sum_{i=1}^n \hat{\fna}(X_i) \ind\{\hat{s}(X_i)\leq t\} \leq \alpha \bigg\}.
\$
Finally, we let $\{\hat{w}_i\}_{i\in \cI_{\calib}}$ be the optimal solution to the following optimization program with $\delta_n=O(n^{-1/2})$:
    \@\label{eq:def_sbw_main}
    \argmin_{ w\succeq 0} \Bigg\{ \|w\|^2 : \, \bigg| \frac{1}{|\mathcal{I}_{\calib}|} \sum_{i\in \mathcal{I}_{\calib}} w_i \hat\phi_k(X_i) - \frac{1}{n} \sum_{i=1}^n \hat\phi_k(X_i) \bigg|\leq \delta_n, k \in \{1,2\}; \frac{1}{|\mathcal{I}_{\calib}|}\sum_{i\in \mathcal{I}_{\calib}} w_i = 1 \Bigg\}.~~~~
    \@ 
We shall see that the post-hoc processing to ensure the finite-sample approximate balancing~\eqref{eq:def_sbw_main} is essential for us to obtain the doubly robust safety guarantees even when the propensity model is misspecified. 
The first condition in~\eqref{eq:def_sbw_main} follows the balancing weights~\citep{hainmueller2012entropy,zubizarreta2015stable}, by enforcing a finite-sample balancing condition inspired by the desired population-level property: for the correct weights $w(\cdot)$, it should hold that $\frac{1}{|\cI_{\calib}|}\sum_{i\in \cI_{\calib}} w(X_i) f(X_i)\approx \frac{1}{n}\sum_{i=1}^n f(X_i)$ for any fixed function $f\colon \cX\to \mathbb R$. Here, we balance specifically-designed functions to obtain favorable statistical properties under general conditions on the learned functions.

\subsection{Doubly Robust Safety Guarantees}

We now formalize the safety guarantees of CPL~\eqref{eq:def_pval_weight_est} with weights $\{\hat{w}_i\}_{i\in \cI_{\calib}}$ from~\eqref{eq:def_sbw_main}. 
These results will require mild regularity conditions for the learn-then-balance
optimization, which we defer to Assumption~\ref{assump:bal_reg}. These conditions
require overlap and boundedness, local stability of the population
balancing solution, and regularity of the population cutoff. 

Under these
conditions, and given that either the preliminary weight function or the outcome models are consistent, we obtain asymptotic safety guarantee for CPL. The proof of Theorem~\ref{thm:dr_obs} is included in Appendix~\ref{app:subsec_dr_obs}, which  follows from our general theory with estimated weights in Appendix~\ref{app:subsec_dr_obs_general}.

\begin{theorem}[Model double robustness]\label{thm:dr_obs}
Suppose Assumption~\ref{assump:bal_reg} holds, and $\delta_n = O(n^{-1/2})$. Then,  
    $$\limsup_{n\to \infty}\PP(Y_{n+1}(\hat{\pi}_{\textnormal{obs}}(X_{n+1}))<Y_{n+1}(0))\leq \alpha$$ under either of the following conditions:
    \begin{enumerate}[label=(\roman*).]
        \item The preliminary weight function is consistent: $\|\tilde{w}-w\|_{L_2(\PP_X)}=o_P(1)$ for the true weight  $w(\cdot)$ in~\eqref{eq:def_weight}.
        \item The outcome models are consistent: $\|\hat\fna-\fna\|_{L_2(\PP_X)}=o_P(1)$.
    \end{enumerate}
\end{theorem}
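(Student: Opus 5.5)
We plan to reduce the harm rate to a population-level quantity and bound its excess over $\alpha$ by a product-type error. Write $\hat\pi_{\textnormal{obs}}(x)=\ind\{p^{\textnormal{obs}}_{n+1}\leq\alpha\}$, with the extra CATE-sign indicator handled the same way if present. The first step is to show that this thresholding rule is asymptotically a deterministic cutoff rule on the score. Since $p^{\textnormal{obs}}_{n+1}$ is monotone in $\hat s(X_{n+1})$, we have $\hat\pi_{\textnormal{obs}}(X_{n+1})=\ind\{\hat s(X_{n+1})\leq \hat T\}$, where
\[
\hat T=\sup\Big\{t\colon \hat w_{n+1}+\textstyle\sum_i \hat w_iG_i\ind\{Y_i^\dagger=0\}\ind\{\hat s(X_i)\leq t\}\leq \alpha\big(\hat w_{n+1}+\sum_i\hat w_iG_i\big)\Big\}.
\]
The $\hat w_{n+1}$ term is $O(1/n)$ by the boundedness in Assumption~\ref{assump:bal_reg}. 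Using the normalization $\frac{1}{|\cI_\calib|}\sum \hat w_i=1$, the cutoff is therefore characterized by
\[
\hat F(t):=\frac{1}{|\cI_\calib|}\sum_{i\in\cI_\calib}\hat w_i\ind\{Y_i^\dagger=0\}\ind\{\hat s(X_i)\leq t\}\approx\alpha .
\]
Conditioning on the training fold, so that $\hat s,\hat\fna,\tilde w,\hat g$ are fixed, the harm rate is at most $\EE[\fna(X)\ind\{\hat s(X)\leq \hat T\}]$ by the Kallus bound $f_P\leq\fna$.

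The core step is a doubly robust decomposition of $\hat F$. First replace $\ind\{Y_i^\dagger=0\}$ by its conditional mean given $(X_i,T_i,G_i=1)$. We need the key sharpness fact: for $G_i$ chosen as in \eqref{eq:def_G_welfare_optimal} (or in general), $\EE[w(X)G\ind\{Y^\dagger=0\}\ind\{\hat s\leq t\}]\cdot/\PP(G=1)\geq \EE[\fna(X)\ind\{\hat s(X)\leq t\}]$. This holds because $\EE[\ind\{Y^\dagger=0\}\mid X,T]$ equals $1-\mu_1$ or $\mu_0$, each of which is at least $\fna$. Empirical-process and LLN arguments, uniform over $t$ since the class of indicators is VC, then give
\[
\hat F(t)\gtrsim \frac{1}{|\cI_\calib|}\sum_{\cI_\calib}\hat w_i\fna(X_i)\ind\{\hat s(X_i)\leq t\}+o_P(1).
\]
Writing $\fna=\hat\fna+(\fna-\hat\fna)$, the $\hat\fna$ part evaluated at $t=\hat t$ is exactly the first balancing moment. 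By \eqref{eq:def_sbw_main} it matches $\frac1n\sum_i\hat\fna(X_i)\ind\{\hat s(X_i)\leq\hat t\}$ up to $\delta_n$, and the remaining error term is $\frac{1}{|\cI_\calib|}\sum \hat w_i(\fna-\hat\fna)\ind\{\cdot\}-\frac1n\sum(\fna-\hat\fna)\ind\{\cdot\}$.

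The two cases are handled as follows. In case (ii), this error vanishes by Cauchy--Schwarz, using $\|\hat\fna-\fna\|_{L_2}=o_P(1)$ and $\|\hat w\|^2/|\cI_\calib|=O_P(1)$. The latter follows from minimum-norm feasibility: the true weights $w(X_i)$, suitably normalized, are feasible with high probability given $\delta_n\asymp n^{-1/2}$. In case (i), we instead show $\hat w_i\approx w(X_i)$ in empirical $L_2$. The argument uses the dual form of \eqref{eq:def_sbw_main}, which yields $\hat w_i=(\lambda_0+\lambda^\top\hat\phi(X_i))_+$. Because $\tilde w$ is a balancing feature, the local stability of the population program in Assumption~\ref{assump:bal_reg} forces the population dual solution to be $w\approx$ linear in $\tilde w$ when $\tilde w\to w$. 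Then the weighted average of any bounded function, here $(\fna-\hat\fna)\ind$, converges to its population mean, and the error vanishes.

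In both cases we obtain $\EE[\fna\ind\{\hat s\leq \hat T\}]\leq\alpha+o_P(1)$. This requires a final step relating $\hat T$ and $\hat t$: monotonicity of $\hat F$ together with the regularity of the population cutoff (no mass at the cutoff) lets us pass from evaluation at $\hat t$ to evaluation at $\hat T$. Bounded convergence then gives the $\limsup$ of the unconditional probability. We expect the main obstacle to be case (i), namely establishing consistency of the balancing weights from the dual characterization with an estimated, data-dependent feature $\hat\phi$ that involves $\hat t$. The plan there is to localize $\hat t$ near its population limit and use stability of the strongly convex program in $\lambda$.
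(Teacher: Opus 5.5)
Your reduction of the harm rate to $\EE[\gamma(X)\ind\{\hat s(X)\le \hat T\}]$ is sound. So is your replacement of $\ind\{Y_i^\dagger=0\}$ by its conditional mean $m(X_i,T_i)\ge\gamma(X_i)$. The weights depend only on $(X_i,T_i,G_i)_{i\le n}$ and the training fold, so this is a legitimate shortcut around the paper's auxiliary label $Y_i^{**}$. You should, however, use a calibration-only cutoff, which is justified because $p^{\textnormal{obs}}_{n+1}\ge \hat F_{n,0}(\hat s(X_{n+1}))$; otherwise your threshold depends on $X_{n+1}$ through $\hat w_{n+1}$.

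The genuine gap is the final step from $\hat t$ to $\hat T$ in case (ii). There the balancing constraint only tells you that $\tilde F_\gamma(t):=|\cI_\calib|^{-1}\sum_{i\in\cI_\calib}\hat w_i\gamma(X_i)\ind\{\hat s(X_i)\le t\}$ matches $\EE[\gamma(X)\ind\{\hat s(X)\le t\}]$ at the single point $t=\hat t$. If $\hat T>\hat t$, all you know is that $\tilde F_\gamma$ is nearly flat on $(\hat t,\hat T]$. To conclude that the population harm curve is also nearly flat there, you need the weights uniformly bounded below, plus overlap $q\ge c_0$.

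Monotonicity and ``no mass at the cutoff'' do not supply this. Nothing in your argument excludes weights that are (nearly) zero on calibration points with $\hat s$ just above $\hat t$; then $\hat T$ overshoots and the harm can exceed $\alpha$. The local-slope condition in Assumption~\ref{assump:bal_reg}(iii) concerns the unweighted curve, and transferring it to the weighted curve again requires $\hat w_i\ge c>0$.

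The paper obtains this from the interiority condition $\inf_{t,x}\omega_t(x)\ge c_1$ in Assumption~\ref{assump:bal_reg}(ii), together with Lemma~\ref{lem:bal_weight_approx}. That lemma shows, in \emph{both} cases, that $\hat w_i=\hat\omega_{\hat t}(X_i)$ tracks the population balancing function $\omega_{t^\circ}$. This yields $F_n^{\hat\gamma}(t_2)-F_n^{\hat\gamma}(t_1)\ge c\{H_n^{\hat\gamma}(t_2)-H_n^{\hat\gamma}(t_1)\}$, which pins the data-driven cutoff to $t^\circ$. You analyze the structure of the weights only in case (i), so case (ii) is incomplete as written.

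Relatedly, your justification of $|\cI_\calib|^{-1}\sum_i\hat w_i^2=O_P(1)$ via feasibility of the true weights fails. Their balance discrepancy is generically of exact order $n^{-1/2}$. They therefore violate the $\delta_n$-constraint with probability bounded away from zero when $\delta_n=Cn^{-1/2}$, and with probability tending to one when $\delta_n=o(n^{-1/2})$, which $\delta_n=O(n^{-1/2})$ permits.

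The repair is the paper's explicit characterization. The minimum-norm solution is $\hat w_i=\psi_{\hat t}(X_i)^\top\hat M_n(\hat t)^{-1}\hat b_n(\hat t)$, linear in $(1,\hat\phi)$, and the constraint $w\succeq 0$ is inactive thanks to interiority. This one fact gives boundedness of the weights, the lower bound needed above, and the bound $\hat w_{n+1}/\sum_i\hat w_i=O_P(1/n)$ that you also use. With it in hand, case (i) needs no detour through $\hat t$ at all. Uniform convergence of $\tilde F_\gamma(t)$ to $\EE[\gamma(X)\ind\{\hat s(X)\le t\}]$ can be evaluated directly at the data-driven cutoff, as in the paper's Case 1.
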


\begin{remark}
    We remark that the techniques and conditions here differ from existing model double-robustness results in conformal prediction (e.g.~\cite{lei2021conformal}) where correct outcome model alone is enough to ensure validity; in our setting, because of the thresholding nature of the CPL policy, explicit finite-sample balance turns out to be an important element of our theoretical analysis for double robustness.
\end{remark}

Taking a step further, we show that the learn-then-balance weights lead to a product error rate structure, which further yields the $O_P(n^{-1/2})$ convergence of the harm rate of CPL given slow, nonparametric convergence rates of the estimated models. The proof of Theorem~\ref{thm:dr_obs_rate} is in Appendix~\ref{app:subsec_dr_obs_rate}. 

\begin{theorem}[Rate double robustness]\label{thm:dr_obs_rate} 
Suppose Assumption~\ref{assump:bal_reg} holds and $\delta_n = O(n^{-1/2})$. If 
$\|\tilde w-w\|_{L_\infty(\PP_X)}
=
O_P(n^{-1/4})$ and 
$\|\hat\gamma-\gamma\|_{L_1(\PP_X)}
=
O_P(n^{-1/4})$, 
then $\{\PP(Y_{n+1}(\hat{\pi}_{\text{obs}}(X_{n+1}))<Y_{n+1}(0)\given \cA_n) - \alpha\}_{+} = O_P(n^{-1/2})$, where $\cA_n$ is the $\sigma$-field for the randomness in calibration data, training, and selective inclusion.  
\end{theorem}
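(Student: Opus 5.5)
The plan is to reduce the conditional harm rate to a deterministic functional of the estimated weights and the estimated cutoff, and then split its excess over $\alpha$ into a balancing error, which is $O(\delta_n)=O(n^{-1/2})$ by construction, a sampling error of parametric order, and a cross term in which propensity and outcome errors multiply.

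\emph{Step 1: deterministic reduction.} Conditional on $\cA_n$, the weights $\hat w_i$, the cutoff $\hat t$, the score $\hat s$ and the indicators $G_i$ are fixed. First I will show that the CPL decision coincides, up to an $O_P(n^{-1/2})$-probability discrepancy, with the threshold rule $\ind\{\hat s(X_{n+1})\le \hat T\}$. Here $\hat T$ is the largest $t$ such that the weighted calibration mass
\[
\frac{1}{|\cI_\calib|}\sum_{i\in\cI_\calib}\hat w_i\ind\{Y_i^\dagger=0\}\ind\{\hat s(X_i)\le t\}
\]
is at most $\alpha$ times the normalized total weight. The test weight $\hat w_{n+1}$ is bounded under the overlap and boundedness conditions of Assumption~\ref{assump:bal_reg}, so it contributes only $O(1/n)$. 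Because $\hat w$ is normalized, the denominator equals $1+O(1/n)$ after scaling.

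\emph{Step 2: the harm bound.} As in the proof of Theorem~\ref{thm:validity_rct}, the harm rate is bounded by the worst-case quantity
\[
\EE\bigl[\fna(X)\ind\{\hat s(X)\le \hat T\}\,\big|\,\cA_n\bigr].
\]

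\emph{Step 3: connecting $\hat T$ to $\hat t$.} I will show $|\hat T-\hat t|$ is small in the sense of the harm functional. By selective calibration, $\EE[G\ind\{Y^\dagger=0\}\mid X]=w(X)^{-1}\bar\gamma(X)$, where $\bar\gamma$ equals $\gamma$ when the selection rule agrees with the true active arm and is at most $\gamma$ in general. Hence the weighted calibration sum estimates
\[
\EE\bigl[\hat w(X)w(X)^{-1}\bar\gamma(X)\ind\{\hat s\le t\}\mid G=1\bigr]\cdot\PP(G=1).
\]
I will write $\hat w=w+(\hat w-w)$ and $\bar\gamma=\hat\gamma+(\bar\gamma-\hat\gamma)$. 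The term with $\hat\gamma$ and $\hat w$ is exactly the balanced feature $\hat\phi_1$ at $t=\hat t$. By the balancing constraint it equals $\frac1n\sum_i\hat\gamma(X_i)\ind\{\hat s(X_i)\le\hat t\}\pm\delta_n\le\alpha+\delta_n$. The remaining cross term is
\[
\EE\bigl[(\hat w-w)w^{-1}(\bar\gamma-\hat\gamma)\ind\{\cdot\}\bigr]\le \|\tilde w-w\|_\infty\|\hat\gamma-\gamma\|_{L_1}=O_P(n^{-1/2}).
\]
This needs $\hat w$ close to $\tilde w$ in sup norm, which I will get from the local-stability part of Assumption~\ref{assump:bal_reg}: the balancing program is a perturbation of the population program, whose solution is a linear function of $\hat\phi$ that includes $\tilde w$.

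\emph{Step 4: assembling the bound.} Empirical-process fluctuations, both sums to expectations and the threshold class $\{\ind\{\hat s\le t\}\}$ in one dimension, contribute $O_P(n^{-1/2})$ uniformly in $t$ by DKW-type bounds. Combining this with Steps 2 and 3, the true harm functional at $\hat T$ is at most $\alpha+O_P(n^{-1/2})$. The regularity of the population cutoff, namely a bounded density of $\hat s(X)$ near the threshold, converts $O_P(n^{-1/2})$ differences in $t$ into $O_P(n^{-1/2})$ differences in mass.

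The main obstacle is Step 3. Controlling $\hat w-w$ in sup norm via the stability of the quadratic program is delicate, because the learned weights are only close to the population balancing solution and need not equal $\tilde w$. I would first try a KKT/dual argument: the solution is $\hat w_i=(\lambda_0+\lambda^\top\hat\phi(X_i))_+$, and then $\|\hat\lambda-\lambda^*\|=O_P(n^{-1/2})$. The population dual solution reproduces $w$ whenever $w$ lies in the span of $(1,\hat\phi)$ up to $O(\|\tilde w-w\|)$. In that case $\|\hat w-w\|_\infty\lesssim\|\tilde w-w\|_\infty+O_P(n^{-1/2})$.
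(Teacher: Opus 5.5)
There is a genuine gap, in two places. First, the sign in your selective-calibration identity is reversed. $\EE[G\ind\{Y^\dagger=0\}\mid X]/q(X)$ is a $q$-weighted average of $1-\mu_1(X)$ and $\mu_0(X)$, so $\bar\gamma\geq\gamma$; this is what makes $Y^\dagger$ conservative. More seriously, the theorem assumes nothing about $\hat g$ beyond overlap, so $\bar\gamma-\gamma$ is in general of order one. Two consequences follow:
\begin{itemize}
\item your cross term cannot be bounded by $\|\tilde w-w\|_\infty\|\hat\gamma-\gamma\|_{L_1}$;
\item your cutoff $\hat T$ tends to the $\alpha$-crossing of $t\mapsto\EE[\bar\gamma(X)\ind\{\hat s(X)\leq t\}]$, which generally stays a fixed distance below $\hat t$, so any argument routed through $\hat T\approx\hat t$ fails.
\end{itemize}
The paper's device is an auxiliary label $Y_i^{**}$ with $Y_i^\dagger\leq Y_i^{**}\leq Y_i^*$ and $\PP(Y_i^{**}=0\mid X_i,T_i)=\gamma(X_i)$. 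Its calibration-only cutoff $\hat\tau_0^*$ satisfies $\{p_{n+1}^{\obs}\leq\alpha\}\subseteq\{\hat s(X_{n+1})\leq\hat\tau_0^*\}$. It is also calibrated exactly against $\gamma$ whatever $\hat g$ is. Hence the harm is at most $\EE[\gamma(X_{n+1})\ind\{\hat s(X_{n+1})\leq\hat\tau_0^*\}\mid\cA_n]$.

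Second, even when $\bar\gamma=\gamma$, Step 3 does not yield the product rate, for three reasons.
\begin{itemize}
\item Your expansion $\hat ww^{-1}\bar\gamma=\hat ww^{-1}\hat\gamma+(\bar\gamma-\hat\gamma)+(\hat w-w)w^{-1}(\bar\gamma-\hat\gamma)$ drops the middle, first-order term.
\item An upper bound on the calibration mass at $\hat t$ only gives $\hat T\gtrsim\hat t$, which is the wrong direction for safety.
\item The claim $|\hat T-\hat t|=O_P(n^{-1/2})$ is false in general. The plug-in rule $\ind\{\hat s\leq\hat t\}$ has worst-case harm $\alpha+\EE[(\gamma-\hat\gamma)\ind\{\hat s\leq\hat t\}]+O_P(n^{-1/2})$, which can exceed $\alpha$ by order $n^{-1/4}$. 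It is precisely the conformal recalibration that moves the cutoff by order $n^{-1/4}$ to remove this.
\end{itemize}
The product has to be formed at the calibration cutoff itself. Set $\hat\omega=\mathcal W_n(\hat\gamma,\tilde w)$ and $w^\circ=p_Gw$. The harm at $\hat\tau_0^*$ minus the $\gamma$-calibrated weighted curve at $\hat\tau_0^*$ equals $\EE[(1-\hat\omega/w^\circ)\gamma\ind\{\hat s\leq\hat\tau_0^*\}]+O_P(n^{-1/2})$. The balancing constraint lets you subtract $\EE[(1-\hat\omega/w^\circ)\hat\gamma\ind\{\hat s\leq\hat t\}]=O_P(n^{-1/2})$. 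This leaves $\|\hat\omega/w^\circ-1\|_\infty\bigl(\|\hat\gamma-\gamma\|_{L_1}+C|\hat\tau_0^*-\hat t|\bigr)$. You then still need $|\hat\tau_0^*-\hat t|=O_P(n^{-1/4})$. The paper obtains this from the local-slope condition together with an $O_P(n^{-1/4})$ uniform bound between the calibration curve and $t\mapsto\EE[\hat\gamma(X)\ind\{\hat s(X)\leq t\}\mid\cT_n]$. Your sup-norm control of the weights via their linear (dual) form is fine and matches Lemma~\ref{lem:bal_weight_approx}, up to the normalization $w^\circ=p_Gw$.
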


\subsection{Optimal Conformal Policy Learning with Observational Studies}
Since the population optimization problem only depends on the super-population of the potential outcomes and observed features, the population-level optimal solution remains the same as  Theorem~\ref{thm:welfare_optimal}.  
The following theorem shows that under mild conditions, our method with the optimally chosen score and inclusion function achieves the optimal welfare with observational data. Its proof is in Appendix~\ref{app:subsec_optimality_obs}.  
The results for power optimality are deferred to Appendix~\ref{app:subsec_power_opt}.

\begin{theorem}[Welfare optimality with observational data]
\label{thm:obs_welfare_opt}
Let $\hat\pi_{\obs}(X_{n+1})=\ind\{p^{\obs}_{n+1}\leq\alpha\}\ind\{\hat{s}(X_{n+1})<0\}$ where we take the same $G_i$ as~\eqref{eq:def_G_welfare_optimal} and   $\hat{s}(\cdot)$ as~\eqref{eq:def_s_welfare_optimal}.
Suppose Assumption~\ref{assump:bal_reg} holds, and $\delta_n = O(n^{-1/2})$. 
Furthermore, assume $\|\hat g-g^*\|_{L_2(\PP_{X,T})}+\|\hat\fna - \fna\|_{L_2(\PP_{X})}+\|\hat s-s_{\welfare}\|_{L_2(\PP_X)}=o_P(1)$ for $g^*(x,t)=t\ind\{1-\mu_1(x)\leq\mu_0(x)\}+(1-t)\ind\{1-\mu_1(x)>\mu_0(x)\}$, and 
$s_{\welfare}(X)$ defined in~\eqref{eq:s_welfare_optimal} has no point mass.  
Then
$ 
\EE[Y\{\hat\pi_{\obs}(X_{n+1})\}]
\to
\Welfare(\pi_{\welfare}^*;\PP)$ as $n\to\infty$, where $\pi_{\welfare}^*$ is the optimal solution in Theorem~\ref{thm:welfare_optimal}.
\end{theorem}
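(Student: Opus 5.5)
We reduce the claim to convergence of the learned policy to $\pi^*_\welfare$ in $L_1(\PP_X)$. Since $Y(\pi(X))=Y(0)+\pi(X)(Y(1)-Y(0))$, unconfoundedness gives $\EE[Y(\hat\pi_{\obs}(X_{n+1}))]=\EE[\mu_0(X)]+\EE[\hat\pi_{\obs}(X_{n+1})\tau(X_{n+1})]$, where $\tau=\mu_1-\mu_0$. Because $|\tau|\le 1$, it suffices to show $\PP(\hat\pi_{\obs}(X_{n+1})\neq\pi^*_\welfare(X_{n+1}))\to 0$. Dominated convergence then lets us pass to the unconditional expectation.

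\emph{Step 1: the p-value becomes a weighted empirical CDF.} With the clipped score, we write $p^{\obs}_{n+1}=\hat F(\hat s(X_{n+1}))+o_P(1)$. Here
\[
\hat F(r)=\frac{\sum_{i\in\cI_\calib}\hat w_i\ind\{Y_i^\dagger=0\}\ind\{\hat s(X_i)\le r\}}{\sum_{i\in\cI_\calib}\hat w_i}.
\]
The test weight $\hat w_{n+1}$ is $O_P(1)$ by the boundedness part of Assumption~\ref{assump:bal_reg}, so dividing it by $|\cI_\calib|\asymp n$ makes its contribution negligible. The target is uniform convergence $\sup_{r}|\hat F(r)-F^*(r)|\to 0$ in probability, with
\[
F^*(r)=\EE[\gamma(X)\ind\{s_\welfare(X)\le r\}]\,/\,\EE[w(X)g^*(X,T)]=\EE[\gamma(X)\ind\{s_\welfare(X)\le r\}],
\]
since the population balancing normalizes $\EE[w G]=1$. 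We take the conditional expectation of $\ind\{Y_i^\dagger=0\}$ given $(X_i,T_i,G_i=1)$. When $g^*$ selects the treated unit, this equals $1-\mu_1(X_i)$; when it selects the control unit, it equals $\mu_0(X_i)$. In both cases the value is $\gamma(X_i)$. This is exactly the sharpness mechanism of Section~\ref{subsec:discuss_sharp}.

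The error from $\hat g\ne g^*$ is controlled by $\|\hat g-g^*\|_{L_2}=o_P(1)$ and bounded weights. The error from $\hat w_i$ versus the true inverse inclusion weight is handled by the general estimated-weight theory already used for Theorem~\ref{thm:dr_obs}. In particular, the balancing constraint on $\hat\gamma\ind\{\hat s\le\hat t\}$, together with $\|\hat\gamma-\gamma\|_{L_2}=o_P(1)$, forces the weighted calibration mass below the relevant cutoff to match the population value. For the remaining $r$, we use that stability analysis combined with a Glivenko–Cantelli argument for monotone indicator classes, conditional on the training fold. Replacing $\hat s$ by $s_\welfare$ inside the indicators costs $o_P(1)$ because $s_\welfare(X)$ has no point mass and $\hat s\to s_\welfare$ in probability.

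\emph{Step 2: the decision rule converges to $\pi^*_\welfare$.} The decision is $\ind\{\hat F(\hat s(X_{n+1}))\le\alpha\}\ind\{\hat s(X_{n+1})<0\}$. The map $F^*$ is continuous, again because there is no point mass. Let $r_\alpha=\sup\{r: F^*(r)\le\alpha\}$. Outside an arbitrarily small neighbourhood of the level set, $\{F^*(s_\welfare(X))\le\alpha\}$ coincides with $\{s_\welfare(X)\le r_\alpha\}$. Combined with $\{s_\welfare<0\}$, this gives the cutoff $\min(r_\alpha,0)$, which is the $r^*$ of Theorem~\ref{thm:welfare_optimal}.

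Flat stretches of $F^*$ need care. Where $\gamma=0$ on a region, $s_\welfare=\pm\infty$; but then $F^*$ is flat, and treating or not treating in that region does not change the set $\{s_\welfare\le r\}$ up to null sets, so the welfare is unaffected. Therefore, on the event where $|s_\welfare(X_{n+1})-r^*|>\eta$ and $|\hat s-s_\welfare|<\eta/2$, the decisions agree with high probability. Letting $\eta\to 0$ gives $\PP(\hat\pi_{\obs}\ne\pi^*_\welfare)\to 0$.

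The main obstacle is Step 1. Specifically, we must show that the learn-then-balance weights yield a uniformly consistent weighted CDF even though they are computed from data, and even though balance is enforced on only two features. I would first establish $\max_i|\hat w_i-w^\dagger(X_i)|=o_P(1)$ for a limiting weight $w^\dagger$ that is the population projection of $w$ given by Assumption~\ref{assump:bal_reg}. I would then argue that, under consistency of $\hat\gamma$ and $\hat g$, the label-weighted CDF computed with $w^\dagger$ still equals $F^*$. If that route fails, I would fall back on proving convergence only at the cutoff $\hat t\to r^*$, since only the threshold crossing matters for the decision.
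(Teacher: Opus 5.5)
\textbf{There is a genuine gap: Step~1 targets a limit that the hypotheses do not deliver.}

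The theorem assumes nothing about the preliminary weight $\tilde w$. By Lemma~\ref{lem:bal_weight_approx}, the learned weights therefore track the population balancing rule $\hat\omega=\mathcal W_n(\hat\gamma,\tilde w)$, not the inverse inclusion weight $w$. This $\hat\omega$ is a linear combination of $1$, $\hat\gamma\ind\{\hat s\le t^\circ\}$ and $\tilde w$, where $t^\circ=\sup\{t:H_n(t)\le\alpha\}$ and $H_n(r)=\EE[\hat\gamma(X)\ind\{\hat s(X)\le r\}\mid\cT_n]$. The weighted calibration curve consequently converges uniformly to
\[
F_n^{\hat\gamma}(r)=\frac{\EE[\hat\omega(X)\hat g(X,T)\hat\gamma(X)\ind\{\hat s(X)\le r\}\mid\cT_n]}{\EE[\hat\omega(X)\hat g(X,T)\mid\cT_n]} .
\]
This curve equals $H_n(r)$, and hence asymptotically your $F^*$, only at $r=t^\circ$, the single point where balance is imposed. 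Agreement for all $r$ would essentially require $\hat\omega\propto w$ on $\{\gamma>0\}$, which nothing in the hypotheses forces.

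So the claimed $\sup_r|\hat F(r)-F^*(r)|\to0$ fails in general. Your plan to show that the curve built from the limiting weight ``still equals $F^*$'' cannot succeed. Citing the theory behind Theorem~\ref{thm:dr_obs} does not rescue it, because that result is a one-sided harm bound, not a two-sided curve limit. As a side point, you only get a mean-square approximation $\hat w_i\approx\hat\omega(X_i)$, not a sup-norm one, because of the jump in the indicator feature; the mean-square version is enough.

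\textbf{Your fallback is the right direction but misses the key ingredient.} Working only at the cutoff is what the paper does. However, agreement at one point does not control the decision. To show that $\ind\{p^{\obs}_{n+1}\le\alpha\}$ equals $\ind\{\hat s(X_{n+1})\le t^\circ\}$ outside an $\varepsilon$-band, you need the weighted curve to sit a fixed margin below $\alpha$ at $t^\circ-\varepsilon$ and above $\alpha$ at $t^\circ+\varepsilon$. That requires a positive local slope of $F_n^{\hat\gamma}$ at $t^\circ$, which you never establish. It comes from three facts:
\begin{enumerate}[label=(\roman*)]
\item $\EE[\hat g(X,T)\mid X]=q(X)\ge c_0$;
\item the interiority bound $\hat\omega\ge c_1$ in Assumption~\ref{assump:bal_reg}(ii), together with bounded denominators;
\item the local slope $c_2$ of $H_n$ in Assumption~\ref{assump:bal_reg}(iii).
\end{enumerate}
By (i) and (ii), increments of $F_n^{\hat\gamma}$ dominate a constant multiple of increments of $H_n$, and (iii) makes those increments grow at rate $c_2$.

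Combine the slope with the exact identity $F_n^{\hat\gamma}(t^\circ)=H_n(t^\circ)=\alpha$, which is the reason the balancing feature is evaluated at the cutoff. Your label computation $\EE[g^*(X,T)\ind\{Y^\dagger=0\}\mid X]=\EE[g^*(X,T)\gamma(X)\mid X]$ is correct and gives
\[
\sup_r|\hat F-F_n^{\hat\gamma}|\le o_P(1)+C\|\hat g-g^*\|_{L_1}+C\|\hat\gamma-\gamma\|_{L_1}.
\]
Separately, $t^\circ\to\sup\{t:H_0(t)\le\alpha\}$ follows from $\sup_t|H_n-H_0|=o_P(1)$ and the same slope condition. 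With these pieces in place, your Step~2 identification with $\pi^*_{\welfare}$ goes through.

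Your discussion of flat regions where $\gamma=0$ is unnecessary. The no-point-mass assumption on $s_{\welfare}(X)$ already forces $\PP\{\gamma(X)=0\}=0$.
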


As in randomized experiments, optimality requires two  convergence conditions: $\hat g\to g^*$, which makes the proxy-label calibration sharp, and $\hat s\to s_{\mathrm{welfare}}$, which provides the optimal treatment ranking. When $\hat g$, $\hat s$, and $\hat\gamma$ are constructed from plug-in outcome model estimates, these
conditions follow from consistent outcome models under the usual regularity conditions. In this outcome-correct case, the balancing condition aligns the weighted calibration with the oracle harm-welfare curve, so the weight estimator need not be consistent. However, inheriting the model consistency of Theorem~\ref{thm:dr_obs}, one can show that the conclusion also continues to hold under consistent
weights, $\bar w\propto w$, without requiring
$\hat\gamma\to\gamma$, provided that $\hat g\to g^*$ and
$\hat s\to s_{\mathrm{opt}}$ still hold. Yet, we do not pursue it here since the latter two convergence conditions usually require consistency of the estimated outcome models which imply the consistency of the harm rate estimation.

% \newpage
% !TEX root = main.tex

\section{Simulation Studies}
\label{sec:sim} 

\subsection{Balanced Randomized Experiments}
\label{subsec:simu_rct}

We begin by evaluating the methods in balanced randomized experiments (Section~\ref{sec:rct}).  
The data in the randomized experiment settings are generated using the general framework below. 

\vspace{-0.75em}
\paragraph{Data generating processes.}  
Throughout, we generate the features $X\sim P_X$ for some distribution $P_X$ and  i.i.d.~treatment indicators $T\sim \text{Bern}(1/2)$, independent of everything else.  The potential outcomes follow $\PP(Y(1)=1\given X=x)=\mu_1(x)$ and $\PP(Y(0)=1\given X=x)=\mu_0(x)$ for some functions $\mu_1(\cdot)$ and $\mu_0(\cdot)$. The potential outcomes are coupled negatively, meaning that 
$\PP(Y(1)=0,Y(0)=1\given X=x) = \min\{1-\mu_1(x), \mu_0(x)\}$. 
Note that any coupling would lead to the same observed data distribution $P_{X,Y,T}$ (and thus the same results for any method), but the negative coupling leads to the worst-case harm rate.  
The specific choice of $P_X$ and $\mu_1(\cdot)$, $\mu_0(\cdot)$ is introduced in each set of simulations below. 
We first design four diverse settings to compare our methods and baselines, followed by two additional settings to investigate the performance of our methods, including (i)  the power of selective calibration by varying the informativeness of arms, and (ii) decisions by heterogeneous subgroups.

\subsubsection{Harm rate control in diverse settings} 
\label{subsubsec:simu_rct_main}

The first set of simulations evaluates the harm rate control across diverse settings and modeling choices. We set $X\in \RR^{20}$ with $P_X= \cN(0, I_d)$. The conditional mean functions are $\mu_t(x) = \exp(\eta_t(x))/(1+\exp(\eta_t(x))$, $t\in \{0,1\}$ for some functions $\eta_t(x)$. We design four settings  (see Appendix~\ref{app:subsec_simu_rct_details} for detailed DGPs):
\begin{itemize}
    \item Setting 1: approximately linear, where $\eta_t(x)=\beta_t^\top x$ for some $\beta_t\in \RR^{20}$ and fully-observed  $X\in \RR^{20}$;
    \item Setting 2: approximately linear DGP but the observed covariates are $X_{1:7}\in \RR^{7}$;
    \item Setting 3: nonlinear  $\eta_t(x)$ involves the first 5 entries of $x$, with fully observed $X\in \RR^{20}$;
    \item Setting 4: nonlinear  $\eta_t(x)$ involves the first 8 entries of $x$, but the observed covariates are $X_{1:5}\in \RR^{5}$.
\end{itemize}
In settings 2 and 4, missing some covariates typically reduces the accuracy of learned outcome models (finite-sample harm-rate control still holds as per Theorem~\ref{thm:validity_rct}). We vary the total labeled sample size $n_{\text{total}}\in\{500,1000,2000\}$, fixing the number of test data $m=1000$, and $\alpha=0.1$. We use $\cD_{\text{label}}$ to denote the labeled data and $\cD_{\test}$ to denote the test data.

\vspace{-0.75em}
\paragraph{Methods and evaluation.}

We compare the following  three baselines and two CPL methods:
\begin{enumerate}[label=(\arabic*)]\setlength\itemsep{-0.1em}
    \item \texttt{Li et al.}, \cite{li2023trustworthy} with doubly-robust estimator and constrained optimization among depth-2 decision trees via \texttt{econml} python library. The policy is learned on $\cD_{\text{label}}$ and evaluated on $\cD_{\test}$. This method requires the distributional assumption about the potential outcomes. In particular, it assumes that the potential outcomes are non-negatively correlated, which is violated in this setup.
    \item \texttt{Policy-Tree}, which maximizes welfare $\EE[Y(\pi(X_{n+1}))]$ among policies represented by depth-2 decision trees with \texttt{econml}. The policy is learned on $\cD_{\text{label}}$ and evaluated on $\cD_{\test}$. While this method is not designed for safety control, we include it as the baseline due to its popularity as a standard practice. 
    \item \texttt{Threshold}, which uses $\cD_{\text{label}}$ to obtain an estimator $\hat{\fna}(x)=\min\{1-\hat\mu_1(x),\hat\mu_0(x)\}$ and treat  $\{j\in[m]\colon \hat{\fna}(X_{n+j})\leq \tilde\gamma\}$ where $\tilde\gamma = \max\{\gamma\colon \sum_{j=1}^m \hat\fna(X_{n+j})\ind\{\hat{\fna}(X_{n+j})\leq \gamma \} \leq 0.1\cdot m\}$. This heuristic calibration is valid only when the estimator $\hat{\gamma}$ is sufficiently accurate.   

    \item \texttt{CPL-sel-welfare}, the CPL algorithm that maximizes the welfare, i.e., Algorithm~\ref{algo:rct} with $g(x,t)=t \ind\{1-\hat\mu_1(x)\leq \hat\mu_0(x)\} +(1-t)\ind\{1-\hat\mu_1(x)>\hat\mu_0(x)\}$ and a clipped score function $V(x,y) = M\ind\{y>0\} - (\hat{\mu}_1(x) - \hat{\mu}_0(x))/\hat{\fna}(x)$ with a sufficiently large constant $M > 0$. This method always satisfies the safety guarantee and asymptotically achieves the optimal welfare if $\mu_t(x)$ is consistently estimated.
    \item \texttt{CPL-sel-power}, the CPL algorithm that maximizes the power, i.e., Algorithm~\ref{algo:rct} with $g(x,t)=t \ind\{1-\hat\mu_1(x)\leq \hat\mu_0(x)\} +(1-t)\ind\{1-\hat\mu_1(x)>\hat\mu_0(x)\}$ and a clipped score function $V(x,y) = M\ind\{y>0\} + \hat{\fna}(x)$ with a sufficiently large constant $M > 0$. This method always satisfies the safety guarantee and asymptotically achieves the optimal power if $\mu_t(x)$ is consistently estimated (Appendix~\ref{app:subsec_power_opt}).
\end{enumerate}
\vspace{0.25em}

\begin{figure}[!t]
    \centering
    \includegraphics[width=\linewidth]{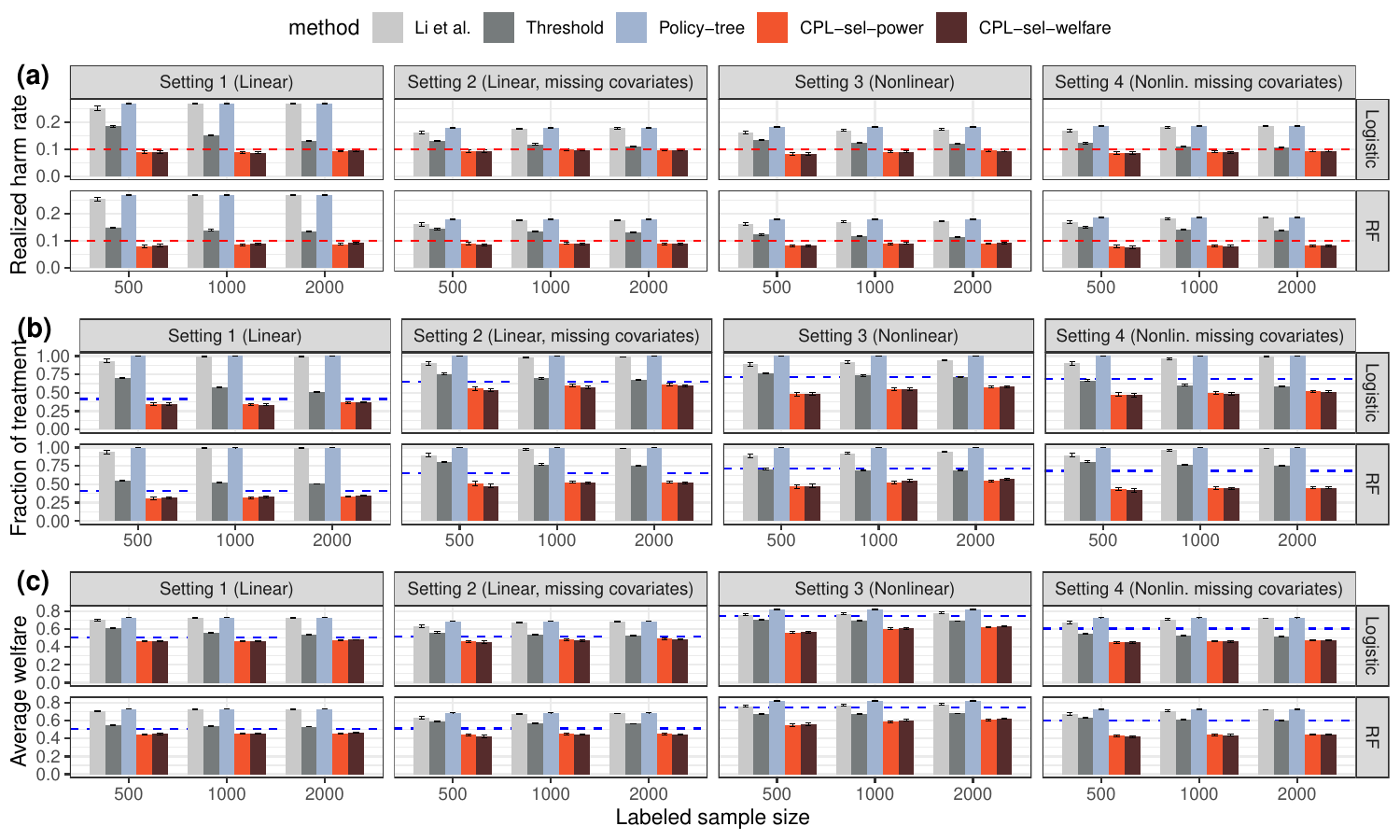}
    \caption{(a) Empirical harm rate, (b) power (probability of treatment in the test units), (c) average welfare of various methods at level $\alpha=0.1$ in the randomized experiment studies, averaged over $200$ independent simulation runs. Each row represents a different prediction model (logistic regression, random forest) for $(\hat\mu_1,\hat\mu_0)$, and each column represents a different data-generating process. The $x$-axis is the total labeled sample size $n$. The blue dashed lines in (b) and (c) show the optimal power and welfare of CPL based on oracle correct models $(\mu_1,\mu_0)$ without any estimation error.}
    \label{fig:rct_fna}
\end{figure}

For the two CPL variants, the labeled data $\cD_{\text{label}}$ is  randomly split into the training ($75\%$) and calibration ($25\%$) folds $\cD_\train$ and $\cD_\calib$. The training fold is used to fit two models $\hat\mu_1(x)$ and $\hat\mu_0(x)$ for $\mu_1(x)$ and $\mu_0(x)$, respectively. 
The function $\hat{\fna}(x)=\min\{1-\hat\mu_1(x),\hat\mu_0(x)\}$ then estimates the upper bound on the harm rate function.
We adopt two model classes for training the outcome models: logistic regression, and random forest, which are used in both \texttt{Threshold} and \mname methods. 
\texttt{Li et al.}~and \texttt{Policy-Tree} use random forests for nuisance component estimation to be consistent with the tree-based policy class. 
See Appendix~\ref{app:subsec_simu_rct_details} for additional details on method implementation.

Given the learned treatment $T_{n+j} \coloneqq \hat\pi(X_{n+j}) \in \{0,1\}$ for $j\in [m]$ produced by the methods, we evaluate three metrics: the harm rate by $\frac{1}{m}\sum_{j=1}^m T_{n+j}\ind\{Y_{n+j}(1)<Y_{n+j}(0)\}$, the welfare by $\frac{1}{m}\sum_{j=1}^m Y_{n+j}(T_{n+j})$, and the fraction of treatment by $\frac{1}{m}\sum_{j=1}^m T_{n+j}$. 
All metrics are averaged over $200$ independent  runs.

\vspace{-0.75em} 
\paragraph{Simulation results.} Figure~\ref{fig:rct_fna} presents the results for the five methods across various settings. First, the three baselines lead to a drastic violation of the target harm rate. The \texttt{Threshold} method relies on accurate outcome models to ensure consistency of $\hat\fna(\cdot)$, which is difficult to satisfy with limited labeled data. \texttt{Policy-tree}, which focuses on welfare maximization can incur large harm rates. Finally, \texttt{Li et al.}~achieves harm rate only when the two potential outcomes are non-negatively correlated given the features; as a result, it leads to exceedingly high harm rate due to (i) worst-case negative coupling and (ii) inconsistency due to missing covariates in settings 2 and 4. 

On the other hand, the two variants of \mname control the harm rate tightly at the target level, showing both the validity and sharpness of CPL. The two variants based on different score functions demonstrate negligible difference: in these settings, the rank of instances based on the two optimal scores does not drastically change the decisions by \mname. In practice, this means that researchers can approximately maximize both power and welfare together via CPL without worrying about the tradeoff between the two objectives. 
Finally, compared with the oracle power and welfare (blue dashed lines), the CPL methods achieve close-to-optimal performance, and the gap shows the impact of estimation error in $\hat\mu_t$ functions. Such gap seems to be moderate even for misspecified models (Logistic regression in settings 3 and 4).
 
\subsubsection{Effectiveness of selective calibration} 
We now use another set of experiments to dive deeper into the selective calibration mechanism. 
Following the discussion in Section~\ref{subsec:discuss_sharp}, the harm rate control by using the conservative proxy $Y_i^\dagger = T_iY_i +(1-T_i)(1-Y_i)$ is tight if a selected sample satisfies (i) $T_i=1$ and $\fna(X_i)=1-\mu_1(X_i)$, 
or (ii) $T_i=0$ and $\fna(X_i)=\mu_0(X_i)$.  
Our selective calibration method in Section~\ref{subsec:cpl_rct} aims to address this by adaptively selecting the informative arm. 
In this part, we vary the magnitudes of $\mu_1(x)$ and $\mu_0(x)$ to how this strategy contributes to the statistical efficiency.
We vary the proportion of samples obeying $\fna(X) = 1-\mu_1(X)$ (treated arm is informative) and those obeying $\fna(X)=\mu_0(X)$ (control arm is informative); see Appendix~\ref{app:subsec_rct_aux} for the detailed data-generating process. 
In addition to the two CPL variants evaluated in Section~\ref{subsubsec:simu_rct_main}, we evaluate two more procedures:
\begin{enumerate}[label=(\arabic*)]\setlength\itemsep{-0.1em} 
    \item[(6)] \texttt{CPL-treat}, our method in Section~\ref{subsec:pre} with treated samples in $\cD_\calib$ and $V(x,y)=My-\hat\mu_1(x)$ with a sufficiently large constant $M > 0$. 
    \item[(7)] \texttt{CPL-control}, our method in Section~\ref{subsec:pre} with control samples in $\cD_\calib$ and $V(x,y) = My +\hat\mu_0(x)$ with a sufficiently large constant $M > 0$. 
\end{enumerate}
\vspace{0.25em}
The procedures are evaluated in terms of harm rate, power (fraction of treatment), and welfare, with the metrics averaged over $200$ independent runs.

\begin{figure} 
    \centering
    \includegraphics[width=\linewidth]{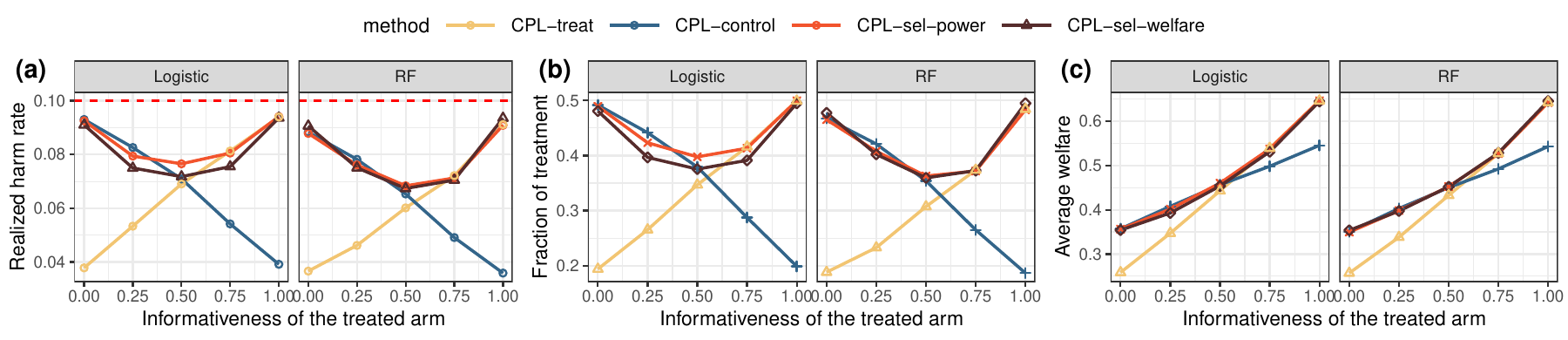}
    \caption{(a) Empirical harm rate, (b) power (probability of treatment) and (c) average welfare of the seven procedures in the arm-informativeness experiments at level $\alpha=0.1$ over $200$ independent runs. Within each panel, each column represents a prediction model (logistic regression, random forest) for $(\hat\mu_1,\hat\mu_0).$}
    \label{fig:rct_aux_arm}
\end{figure}

The results are summarized in Figure~\ref{fig:rct_aux_arm}, where the $x$-axis is the fraction of treated samples being informative (i.e., $\fna(X)=1-\mu_1(X)$). 
As before, all variants of \mname~control the harm rate below $\alpha=0.1$. 
\texttt{CPL-treat} and  \texttt{CPL-control}  demonstrate clear power tradeoffs: \texttt{CPL-treat} is more powerful when the treated arm is more informative ($x$-axis above $0.5$), and the opposite happens otherwise. 
Importantly, \texttt{CPL-sel-power} and \texttt{CPL-sel-welfare} are often comparable to the more powerful single-arm variants, showing the effectiveness of selective calibration. This justifies our recommendations for the asymptotically optimal variants (Section~\ref{subsec:rct_optimality}), since in practice it is often unknown  which arm might be more powerful.

\subsubsection{Performance under subgroup heterogeneity} 
% running. story: what happens when treatment benefits a subgroup and harms a subgroup
To further inspect the behavior of \mname~under treatment effect heterogeneity, we design a setting with three subgroups driven by the first two features in $X\in \RR^{20}$ (Figure~\ref{fig:rct_aux_subgroup} panel (a)). There is strong cross-group heterogeneity, but the units in the same group are largely similar. Studying the decisions in each group offers a zoom-in observation of ``who gets treated'' with different scoring functions $s(\cdot)$ in \mname.

The first group consists of half of the population, whose worst-case harm rate $\fna(X)$ is relatively large while the conditional average treatment effect $\tau(X)=\mu_1(X)-\mu_0(X)$ is also large. Groups 2 and 3 consist of $1/4$ of the population each, whose worst-case harm rate  $\fna(X)$ and conditional average treatment effect $\tau(X)$ are both small; the main difference is that the treated samples are more ``informative'' in Group 2 (i.e., $1-\mu_1(X)<\mu_0(X)$), while the control samples are more informative in Group 3. We follow the same procedures as before to evaluate the four variants of \mname.

\begin{figure}%[htbp]
    \centering
    \includegraphics[width=\linewidth]{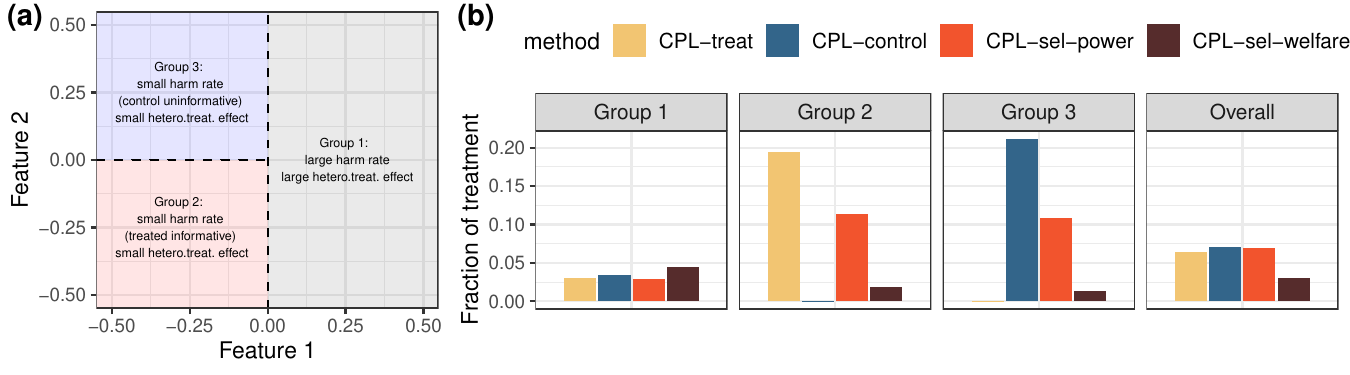}
    \caption{(a) Subgroup setup, (b)  Per-group fraction of $T_{n+1}=1$ for variants of \mname in the subgroup DGP at level $\alpha=0.01$. Panels (c-d) show random forests (RF) as the outcome model only.}
    \label{fig:rct_aux_subgroup}
\end{figure}

The results averaged over $200$ independent simulation runs are summarized in Figure~\ref{fig:rct_aux_subgroup}.
In panel (b), we show the fraction of $T_{n+j}$ within each group based on random forests predictions. Different score functions lead to different treatment prioritization patterns. \texttt{CPL-treat} concentrates the safe treatment budgets on group 2 (small $\fna(X)$ with treatment arm being informative) since it ranks instances based on $\hat\mu_1(x)$, while \texttt{CPL-control} concentrates the budgets on group 3. In contrast, \texttt{CPL-sel-power} distributes the budgets relatively uniformly across groups 2 and 3 (since their harm rates are comparably small) by using the score function $\hat\fna(x)$. \texttt{CPL-sel-welfare}, which ranks instances by balancing harm rate and treatment effect size, puts most budgets on Group 1 (large treatment effects) but assigns fewer treatments in general.

\subsection{Stratified Experiments and Observational Studies}
\label{subsec:simu_obs}

In this part, we proceed to evaluate \mname in stratified experiments and observational studies. The stratified experiments induce a \emph{known} covariate shift between the calibration and test data, while in observational studies this covariate shift needs to be estimated. 

\vspace{-0.75em}
\paragraph{Simulation settings.} We first sample the triplets $\{(X_i,Y_i(1),Y_i(0)\}_{i=1}^{n+m}$ using data generating processes to be specified later. 
In the labeled data, the treatment assignments are sampled by $T_i \mid X_i \sim \mathrm{Bernoulli}(e(X_i))$ independently with propensity score function $e(x)$ to be specified later. The observed outcome is $Y_i=Y_i(T_i)$. 

\vspace{-0.5em}

\paragraph{Methods and evaluation.} 
Fixing the confidence level at $\alpha=0.1$, we compare the following procedures:
\begin{enumerate}[label=(\arabic*)]\setlength\itemsep{-0.1em}
    \item \texttt{Li et al.}, \cite{li2023trustworthy} with a doubly-robust estimator and constrained optimization among depth-2 decision trees via \texttt{econml} python library, where the outcome models and propensity scores are estimated under two-fold cross-fitting. The policy is learned on $\cD_{\text{label}}$ and evaluated on $\cD_{\test}$. This method requires that the potential outcomes are non-negatively correlated, which is violated here.
    \item \texttt{Policy-Tree}, which maximizes welfare $\EE[Y(\pi(X_{n+1}))]$ among depth-2 decision trees with \texttt{econml}. The policy is learned with $\cD_{\text{label}}$ and evaluated on $\cD_{\test}$ with similar cross-fitting estimation of outcome and propensity score models. While this method is not designed for safety control, we include it as the baseline because of its popularity as a standard practice.
    \item \texttt{Threshold}, the same as in Section~\ref{subsec:simu_rct} which calibrates the threshold using cumulative estimated $\hat\fna(X)$ on the test data. This method is valid only when the estimator $\hat{\fna}$ is sufficiently accurate.  
    \item \texttt{CPL-sel-welfare}, the method in Section~\ref{sec:obs} using the same $g(x,t)$ and $s(x)$ functions as in the randomized experiment case. The weights are estimated using  learn-then-balance with features $\hat\phi(x) = (\hat{\fna}(x)\ind\{ {s}(x)\leq \hat{t}\}, \hat{w}(x))$, where $\hat{\fna}$ and $\hat{w}$ are estimated using logistic regression or random forests.
    \item \texttt{CPL-sel-power}, the method in Section~\ref{sec:obs} using the same $g(x,t)$ and $s(x)$ functions as in the randomized experiment case.  and the weights are estimated in the same way as \texttt{CPL-sel-welfare}.
    \item \texttt{Plugin-oracle-sel-welfare}, the method in Section~\ref{sec:stratified} using the correct weights and the same $g(x,t)$ and $s(x)$ functions as \texttt{CPL-sel-welfare}.
    \item \texttt{Plugin-oracle-sel-power}, the method in Section~\ref{sec:stratified} using the correct weights and the same $g(x,t)$ and $s(x)$ functions as \texttt{CPL-sel-power}.
\end{enumerate}
\vspace{0.25em}

Here, the last two plugin-oracle methods  essentially evaluate \mname in the stratified experiment setting where the propensity score $e(x)$, and hence the covariate shift weights, are known. Our theory implies their finite-sample harm rate control due to weighted exchangeability, no matter the accuracy of the scores and selective calibration functions. 
On the other hand, \texttt{CPL-sel-power} and \texttt{CPL-sel-welfare}  evaluate the robustness of CPL in observational studies with estimated weights.

\subsubsection{Harm rate control with known or estimated weights}

We first evaluate the harm rate control using 
the same data-generating process as in Section~\ref{subsec:simu_rct}, together with a linear or nonlinear propensity score. %nonlinear propensity score function $e(x) = \mathrm{logit}^{-1}\left\{
% 0.25\left[
% \sin(3\pi x_1) + \cos(2\pi x_8) + 0.1 x_3^2 - 0.3 x_4
% \right]
% \right\}$ truncated to lie in $[0.05, 0.95]$. 
The empirical harm rate, fraction of treatment, and average welfare, averaged over $200$ independent simulation runs, are summarized in Figure~\ref{fig:obs_fna}. 
First, the three baselines drastically violate the harm rate control due to similar reasons as in Section~\ref{sec:rct}, now with additional estimation challenges in observational settings. 
The two plug-in oracles, \texttt{Plugin-oracle-sel-power} and \texttt{Plugin-oracle-sel-welfare}, confirm the finite-sample harm-rate control of Theorem~\ref{thm:valid_weighted}. 
Finally, the two methods with estimated weights, \texttt{CPL-sel-power} and \texttt{CPL-sel-welfare}, also demonstrate robust harm-rate control below the target level. 
Although the four CPL methods rely on distinct scoring functions, they achieve similar performance in terms of fraction of treatment and average welfare, showing that the two objectives can align. 

\begin{figure}%[htbp]
    \centering
    \includegraphics[width=\linewidth]{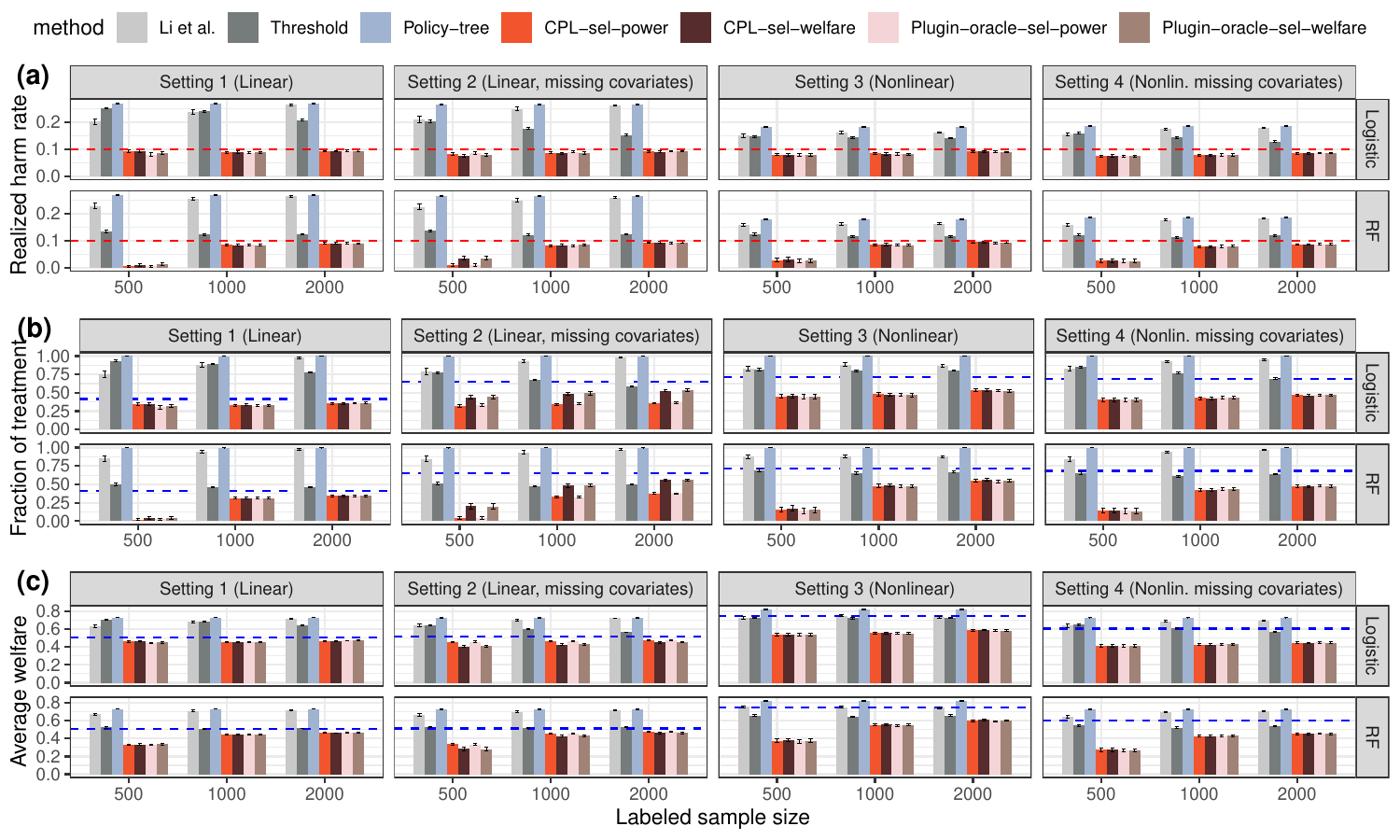}
    \caption{(a) Empirical harm rate, (b) power (probability of treatment in the test units), (c) average welfare at level $\alpha=0.1$ in observational studies, averaged over $N=200$ runs. Each row is a prediction model (logistic regression, random forest) for $(\hat\mu_1,\hat\mu_0)$, and each column is a data-generating process. The $x$-axis is the total labeled sample size $n$. The \texttt{Plugin} methods use the true weights.}
    \label{fig:obs_fna}
\end{figure}

\subsubsection{Double robustness}
 
We now additionally examine the double robustness property established in Section~\ref{sec:obs}. 
We design a data-generating process where the outcome models and propensity score model are all logistic in some nonlinear transformation of the raw features $X\in \RR^{8}$; see Appendix~\ref{app:subsec_simu_obs} for details. 
We sample observational data $\{(X_i,T_i,Y_i)\}_{i=1}^n$ as the labeled dataset, where  a random subset of 75\% is used as the training fold for training the models $\mu_1(\cdot)$, $\mu_0(\cdot)$, and $e(\cdot)$, while the remaining is used as the calibration data in \mname. 
We vary a parameter in the propensity score model to control the strength of confounding (the $x$-axis of Figure~\ref{fig:obs_aux}). 

The methods evaluated include \texttt{CPL-sel-power/welfare} based on estimated propensity scores, and two known-propensity baselines \texttt{Plugin-oracle-sel-power/welfare} used as oracle comparison only. The four procedures are implemented in the same way as in the preceding parts. 
Within each procedure, we vary the model classes of the outcome and propensity score models to demonstrate the double robustness property. 
A correct logistic model runs logistic regression over the nonlinearly-transformed features, 
while a misspecified logistic model runs logistic regression over the raw features. 
By Theorem~\ref{thm:dr_obs}, we expect \texttt{CPL-sel-power/welfare} to control the harm rate when either of them is correctly specified.  
Finally, we also consider that both outcome models and propensity scores are trained via random forests, which typically have slower-than-parametric convergence rates yet are less prone to model misspecification; we expect it to control the harm rate, especially when the labeled sample size is sufficiently large. 

The empirical harm rate averaged over $200$ independent simulation runs at nominal level $\alpha=0.05$ is summarized in Figure~\ref{fig:obs_aux}. In the first three columns, when either the outcome models or the propensity score model is well-specified, we observe tight harm-rate control by the two CPL variants, which is also close to the plugin-oracle ones. This validates the double robustness theory. 
When the outcome models are misspecified, we observe a larger slack in the harm rate than the other two configurations when sample size is small. 
On the other hand, when both models are misspecified (the fourth column), the two CPL variants can violate the harm rate control, yet the violation is moderate. 
We note that we design the settings deliberately to fail the both-misspecified procedures. In many other settings, misspecification can create a slack in the conservative calibration through the proxy outcome $Y_i^\dagger$, which often compensates the misspecification in the weights and keeps the harm rate below the budget even though both models are wrong. 
Finally, the nonparametric models in the fifth column yield tight harm rate control, showing the robustness to model misspecification and the quick convergence of the harm rate in \mname. 

\begin{figure}
    \centering
    \includegraphics[width=0.8\linewidth]{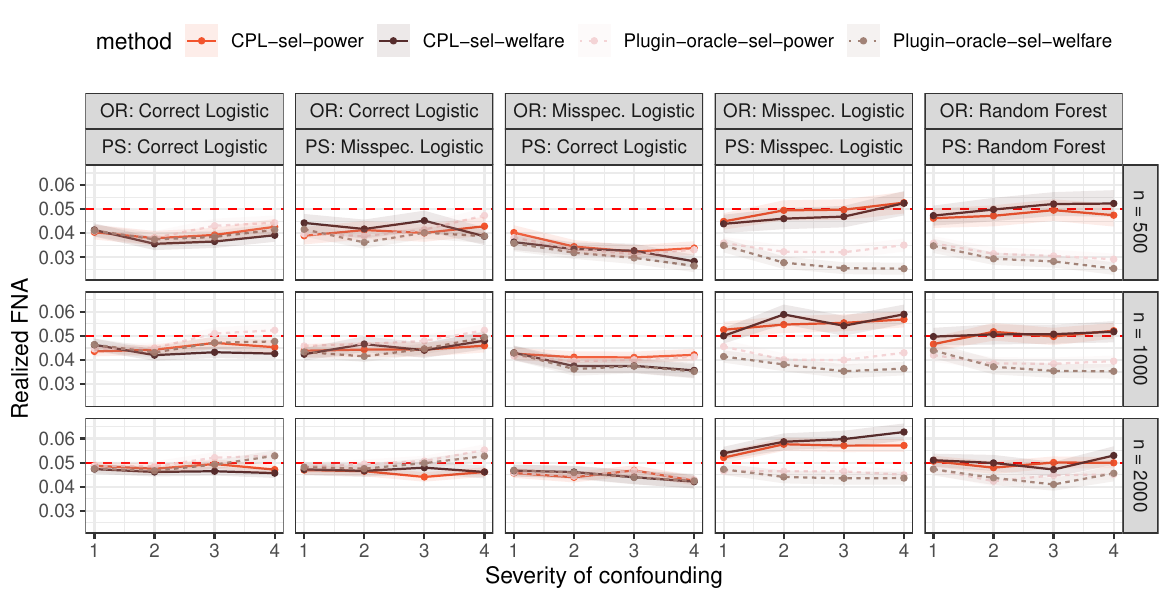}
    \caption{Empirical harm rate in the double robustness experiments; results are averaged over $N=200$ independent runs. Each column corresponds to one specification of outcome and propensity score models. Each row corresponds to a total labeled sample size. The $x$-axis is the strength of confounding.}
    \label{fig:obs_aux}
\end{figure}

% !TEX root = main.tex
\section{Empirical Application to AI-Powered Interventions}
\label{sec:real}

In this section, we apply \mname to a real-world dataset in the social sciences, where the AI model, ChatGPT, is used as an intervention to persuade participants out of some conspiracy beliefs, in which events are understood as being caused by secret, malevolent plots involving powerful conspirators \citep{costello2024durably}. The original study found that brief conversations with AI could reduce conspiracy beliefs by 20 percent on average, and the effect was durable for at least 2 months. Given the societal concern over widespread conspiracy theories, an increasing number of researchers and policymakers are evaluating similar AI interventions as a scalable solution. If policymakers scale up such AI interventions, it is of significant importance to consider safety, as the treatment effects of AI interventions are likely to be highly heterogeneous for several reasons. First, ``people believe a wide range of
conspiracies, and the specific evidence brought to bear in support of even a particular conspiracy
theory may differ substantially from believer to believer'' \citep[page 1,][]{costello2024durably}. Second, as AI chatbots treat people with natural texts as the intervention, the content of the treatment itself is heterogeneous and unpredictable. Here, to safely scale up these types of AI-powered interventions, we use CPL to decide who should be treated by AI by controlling the harm rate with the safety guarantee. 

In this study, before the experiment, the participants stated a conspiracy theory they believed in and reported a numerical score quantifying their belief in it. They are then randomly assigned to treated and control groups, where treated participants engage in a live conversation with a GPT model that is instructed to talk them out of the conspiracy, while the participants in the control condition engage with a neutral conversation with the GPT model. After the experiment, they again report a numerical score quantifying their belief in the same conspiracy theory. 

We take all participants in the raw dataset as the analysis population. We binarize the outcome $Y$ to indicate whether the post-experiment belief score is below $50$, a cutoff the authors originally used to define their analysis population. The participants are randomly split into training (40\%), calibration (40\%), and test (20\%) folds. There are 416 participants originally treated out of 667 participants in the test fold. The fraction of $Y=1$ in the treated group is $0.274$, while the fraction of $Y=1$ in the control group is $0.100$. We consider a stringent harm rate of $\alpha=0.025$. 

We build features based on participants' demographic covariates (education, age, gender, race, religion),   political and psychological covariates, AI-related covariates (familiarity and trust in generative AI),  baseline belief state variables and textual embedding for the stated conspiracy. 
The training fold is used to fit the outcome models and conditional treatment effect models, which are used in a similar way as in the simulations to build welfare-maximizing score functions (except that we truncate on extremely small estimated harm rate for stability). 
See Appendix~\ref{app:econml-welfare} for the detailed implementation.

\vspace{-0.5em}
\paragraph{Empirical welfare and power.} 
We report power (the fraction of treated units in the test data) and (estimated) empirical welfare of the welfare-maximizing and power-maximizing variants of \mname. We also compare the results against three baselines as references: the first is to treat everyone in the test data (All treat), the second is to treat no one in the test data (All control), and the third is to treat $2.5$\% of test units, which trivially satisfies the safety constraint. 

Because we observe the realized outcome in the test data, we can estimate the average welfare and harm rate of the policy as follows. Let $T_{n+j}^{\text{real}}\in \{0,1\}$ be the actual treatment assigned by the experiment, and $T_{n+j} = \hat\pi(X_{n+j})$ be the decision produced by \mname. 
We are interested in the average welfare $\text{Welfare}(\hat\pi):=\EE[Y_{n+j}(\hat\pi(X_{n+j}))]
= \EE[Y_{n+j}(1)\cdot\hat\pi(X_{n+j}) +Y_{n+j}(0)\cdot(1-\hat\pi(X_{n+j}))]$, for which an unbiased estimate is 
\@\label{eq:real_welfare_est}
\widehat{\text{Welfare}} = \hat\EE_{\test}\big[Y_{n+j}\cdot \hat\pi(X_{n+j})\given T_{n+j}^\text{real}=1\big]
+ 
\hat\EE_{\test}\big[Y_{n+j}\cdot (1-\hat\pi(X_{n+j}))\given T_{n+j}^\text{real}=0\big],
\@
where $\hat\EE_{\test}$ denotes the empirical average in the test fold.  
We also estimate the harm rate 
$
\PP(Y_{n+j}(1)<Y_{n+j}(0),\hat\pi(X_{n+j})=1)
$ 
by the following conservative estimator.
\@\label{eq:est_fna}
\widehat{\text{Harm}} =& \hat\EE_{\test}[(1-Y_{n+j})\ind\{1-\hat\mu_1(X_{n+j})\leq \hat\mu_0(X_{n+j})\}\hat\pi(X_{n+j})\given T_{n+j}^{\text{real}}=1\big] \notag \\ 
&\qquad + \hat\EE_{\test}[ Y_{n+j} \ind\{1-\hat\mu_1(X_{n+j}) > \hat\mu_0(X_{n+j})\}\hat\pi(X_{n+j})\given T_{n+j}^{\text{real}}=0\big] .
\@
Note that $\widehat{\text{Harm}}$ is unbiased and asymptotically normal for the population quantity $ \EE\big[ \{ (1- \mu_1(X_{n+j}))\ind\{1-\hat\mu_1(X_{n+j})\leq \hat\mu_0(X_{n+j})\} + \mu_0(X_{n+j})\ind\{1-\hat\mu_1(X_{n+j})> \hat\mu_0(X_{n+j})\} \} \cdot  \hat\pi(X_{n+j}) \big]$, which upper bounds the harm rate. This is a valid conservative estimator of the harm rate even if $\hat\mu_t(x)$ is misspecified, and this is a consistent estimator for the sharp upper bound of the harm rate when $\hat\mu_t(x)$ is consistently estimated. 

\paragraph{Results.} The main results are summarized in Table~\ref{tab:real_welfare}. Several points are worth noting. First, treating everyone (All treat) violates the safety constraint as its harm rate is $3.92$\% and exceeds $2.5$\%. In contrast, our proposed CPL methods achieve the tight control of harm rates at 2.5\% and 2.1\%, respectively. Second, while treating no one (All control) and treating only $2.5$\% of test units (Trivially-safe), of course, satisfy the safety constraint, they have extremely low power and welfare. In contrast, our proposed methods can treat more than 90\% of test units and achieve the high average welfare. While maintaining the safety constraint, our method simultaneously achieves high power and welfare. Finally, it is important to note that the difference between our welfare-maximizing and power-maximizing variants are minimal in practice. It is true that, consistent with our theory, our power-maximizing variant has a slightly higher fraction of treated test units, and our welfare-maximizing variant has a slightly higher average welfare. However, overall, their actual policy decision on who gets treated is similar, which implies that researchers can use either variant in practice and expect to approximately optimize both power and welfare in various applications.

\begin{table}
    \centering
    \begin{tabular}{c|c|c|c|c|c}
     \midrule
        & All treat & All control & Trivially-safe & \mname-welfare & \mname-power \\
        \hline
        Est.~harm rate & 0.0392 (0.0111) & 0 & 0.001 (0.0002) & 0.0247 (0.0094) & 0.0207 (0.0086) \\
        \hline
        Num.~treatment & 667 & 0 & 16 & 605 & 615 \\        
        \hline
        Est.~welfare & 0.274 (0.0218) & 0.0996 (0.0189) & 0.104 (0.0184) & 0.254 (0.0246) & 0.250 (0.0231) \\         
        \midrule
    \end{tabular}
    \caption{Estimated harm rate, number of treatment, and estimated welfare (standard deviation) on the test fold using different methods. ``All treat'' treats all test units;  ``All control'' treats no test units;  ``Trivially-safe'' randomly treats $\alpha$-fraction of test units. The welfare/harm rate estimates are based on~\eqref{eq:real_welfare_est} and~\eqref{eq:est_fna}.}
    \label{tab:real_welfare}
\end{table}

\vspace{-0.5em}
\paragraph{Safe treatments by \mname.} 
We now take a closer look at the decisions produced by the welfare-maximizing varinat of \mname. 
Figure~\ref{fig:real_rf_sel_scatterplot} visualizes the treatment decisions, where panel (a) plots the test points based on the predicted harm rate $\hat{\fna}(X)$ and predicted treatment effect $\hat\tau(X)$, while panel (b) plots the test points based on the predicted outcomes $\hat\mu_0(X)$ and $\hat\mu_1(X)$. 
\mname treats the test units with the largest values of $\hat\tau(X)/\hat{\fna}(X)$. The decision boundary is plotted in both panels, and the region not treated is indicated in light grey. While the actual harm $\ind\{Y_{n+j}(1)<Y_{n+j}(0)\}$ is not observed, we conservatively estimate it by checking the label $Y_{n+j}^\dagger = T_{n+j}^\text{real}Y_{n+j}+(1-T_{n+j}^\text{real})(1-Y_{n+j})$, where $T_{n+j}^\text{real}$ is the actual treatment received by the $j$-th test unit, among those with $\hat{g}(X_{n+j},T_{n+j}^\text{real})=1$, i.e., either $\hat\mu_1(X_{n+j})\leq \hat\mu_0(X_{n+j})$ and $T_{n+j}^\text{real}=1$ or $\hat\mu_1(X_{n+j}) > \hat\mu_0(X_{n+j})$ and $T_{n+j}^\text{real}=0$. The colored dots are those with $\hat{g}(X_{n+j},T_{n+j}^\text{real})=1$, among which the blue dots are those with $Y_{n+j}^\dagger=1$ and the red dots are those with $Y_{n+j}^\dagger=0$. 
We observe that very few red dots in the treated region can possibly be harmed. 

\begin{figure}[htbp]
    \centering
    \includegraphics[width=0.8\linewidth]{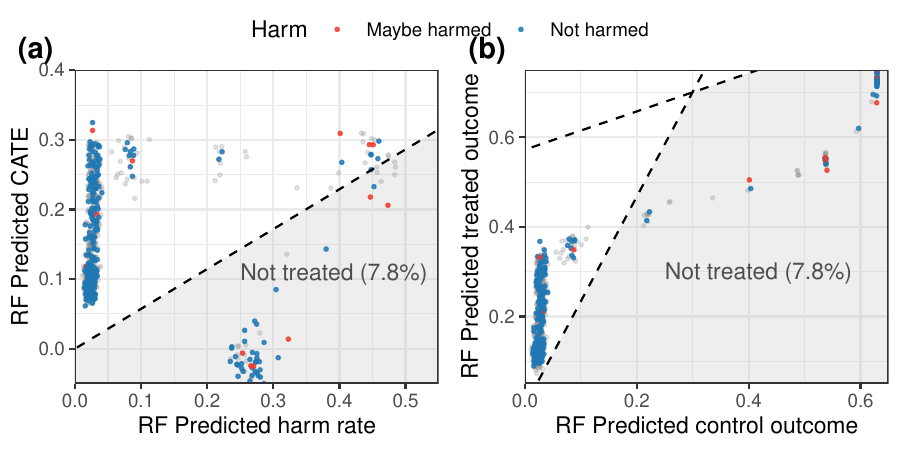}
    \caption{Visualization of treatment decisions by \mname using the welfare-maximizing variant, where outcome models are estimated by causal forests. Colored dots are those with $\hat{g}(X,T)=1$, which provide a conservative bound for harm; red dots are those who can possibly be harmed, with $Y^\dagger=0$.}
    \label{fig:real_rf_sel_scatterplot}
\end{figure}

We further examine how the treatment decision by \mname varies with participants. Figure~\ref{fig:real_rf_sel_by_pre_belief_wf} plots the moving average of predicted treated and control outcomes, as well as the fraction of safe treatments, for every value of pre-experiment belief scores (smoothed by a moving window of size $10$). 
In general, the two outcome curves indicate that a stronger pre-treatment belief makes it less likely to be persuaded out of the conspiracy in both conditions, but the effect of the AI intervention seems strong for participants with a strong pre-treatment belief. 
The welfare-maximizing variant of \mname prioritizes a treatment-effect-versus-harm-rate tradeoff. It mainly treats units with firm pre-treatment belief for whom the treatment is likely to make a huge difference (the gap between the two blue curves) while the estimated harm rate is relatively low.

\begin{figure}[!t]
    \centering
    \includegraphics[width=0.7\linewidth]{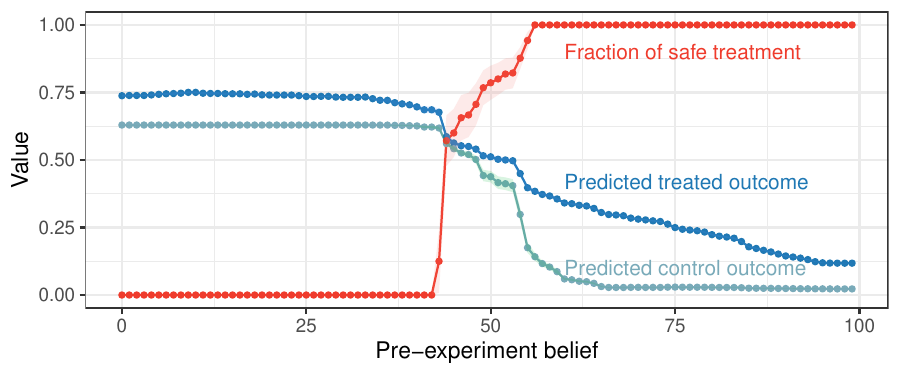}
    \caption{Fraction of safe treatment (red), average predicted treated (dark blue) and control (light blue) outcomes among test participants within a moving window of self-reported pre-experiment conspiracy belief.}
    \label{fig:real_rf_sel_by_pre_belief_wf}
\end{figure}

Figure~\ref{fig:real_rf_sel_by_fam_trust} similarly reports this information among participants with a specific GenAI familiarity score (panel a) and GenAI trust score (panel b). In this case, however, the outcomes and treatment decisions do not change significantly based on these features, indicating that AI intervention may be similarly safe for users with different familiarity with or trust in GenAI. 

\begin{figure}[!t]
    \centering
    \includegraphics[width=0.8\linewidth]{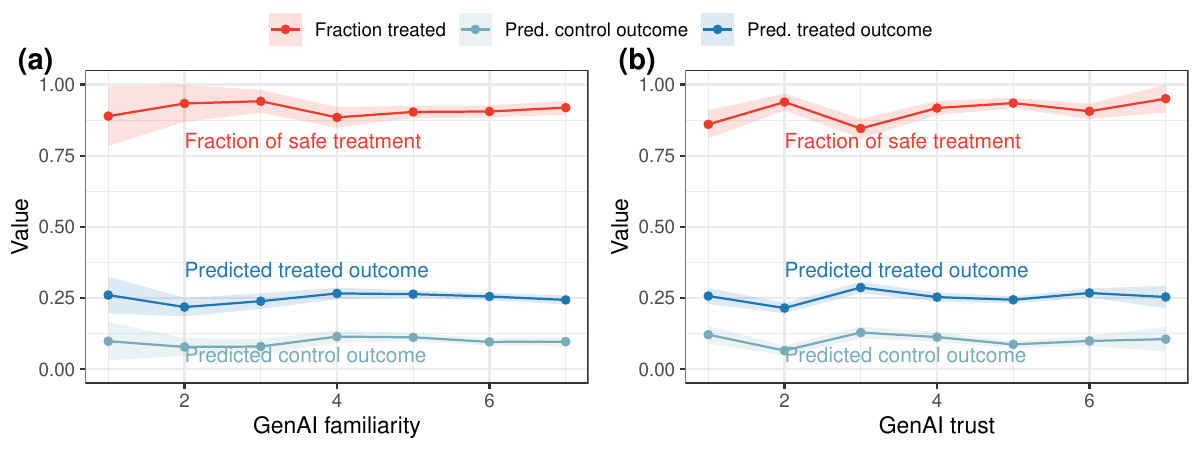}
    \caption{Fraction of treatment by the welfare-maximizing variant (red), average predicted treated (dark blue) and control (light blue) outcomes among test units stratified by self-reported GenAI-related variables.}
    \label{fig:real_rf_sel_by_fam_trust}
\end{figure}

\section{Discussion}
\label{sec:ext}
In this article, we developed the CPL approach that allows for individualized treatment assignment with a safety guarantee. We prove that when learning a treatment assignment rule from randomized experiments, CPL provides the safety guarantee in a finite sample without imposing any modeling assumption about potential outcomes. Additionally, when the outcome regression model is consistently estimated as assumed in many existing methods, CPL also asymptotically achieves the optimal welfare and power with appropriate choices of score $V$ and inclusion $g$ functions. We then extended our results to observational studies and derived novel asymptotic doubly robust safety guarantees for CPL.  

We now briefly discuss several natural extensions of our proposed conformal policy learning. The first concerns external validity settings where the test units may come from a different superpopulation than the labeled data and Assumption~\ref{assump:iid} is violated \citep[e.g.,][]{egami2023elements,jin2025beyond}. The most common and natural strategy is to relax the i.i.d. assumption to the covariate shift assumption, i.e., the distributions of the labeled data and the test data differ only in the covariate distribution. Under this setting, we can generalize CPL by simply multiplying the current conformal weights by additional weights $w_{Q/P}(x):=\ud \QQ_X/\ud \PP_X(x)$ that account for the covariate shift density ratio where $\PP$ denotes the distribution of the labeled data and $\QQ$ represents the distribution of the test data. Assuming both $e(x)$ and $w_{Q/P}(x)$ are known, CPL then proceeds in exactly the same way as in Section~\ref{sec:stratified}. When either of them is unknown, one may use similar strategies as in Section~\ref{sec:obs} to estimate and plug in these quantities.

The second natural extension concerns the control of harm rate among subgroups. In many problems where fairness and equity are stressed, it is desirable to maintain the harm rate control within each subgroup, such as those defined by demographic features~\citep{romano2020malice}. Following the setup in the main framework, our goal is to find the treatment assignment rule $\hat\pi(X_{n+1})\in \{0,1\}$ such that, for a group indicator $\cG(\cdot)$,  
$\PP( Y_{n+1}(\pi(X_{n+1})) <Y_{n+1}(0)\given \cG(X_{n+1})=1)$. 
Given a new sample with $\cG(X_{n+1})=1$, this can be achieved by taking the calibration data from the same subgroup, i.e.,  $\{(X_i,T_i,Y_i)\colon \cG(X_i)=1\}$. 
These data are induced by the full data in the subgroup $\{(X_i,Y_i(1),Y_i(0))\colon \cG(X_i)=1\}$ which is exchangeable with the new test sample conditional on the group indicator. The same \mname procedures can then be performed within the subgroup for randomized experiments and observational studies.

\section*{Acknowledgments}

The authors thank Eli Ben-Michael and Molly Offer-Westort for helpful discussions at the Online Causal Inference Seminar and the Political Methodology summer meeting, respectively. We also thank seminar participants at Yale Economics, Harvard Applied Stats Workshop, University of Tokyo Applied Stats Seminar, and the Political Methodology summer meeting. Y.J.~is partially supported by NSF DMS-2610282.

\newpage 
\bibliographystyle{apalike}
\bibliography{reference}

@article{wu2025safe,
  title={Safe individualized treatment rules with controllable harm rates},
  author={Wu, Peng and Jiang, Qing and Luo, Shanshan and Geng, Zhi},
  journal={arXiv preprint arXiv:2505.05308},
  year={2025}
}

@article{richens2022counterfactual,
  title={Counterfactual harm},
  author={Richens, Jonathan and Beard, Rory and Thompson, Daniel H},
  journal={Advances in Neural Information Processing Systems},
  volume={35},
  pages={36350--36365},
  year={2022}
}

@article{skeem2020impact,
  title={Impact of risk assessment on judges’ fairness in sentencing relatively poor defendants.},
  author={Skeem, Jennifer and Scurich, Nicholas and Monahan, John},
  journal={Law and human behavior},
  volume={44},
  number={1},
  pages={51},
  year={2020},
  publisher={Educational Publishing Foundation}
}

@article{kosorok2019precision,
  title={Precision medicine},
  author={Kosorok, Michael R and Laber, Eric B},
  journal={Annual review of statistics and its application},
  volume={6},
  number={1},
  pages={263--286},
  year={2019},
  publisher={Annual Reviews}
}

@article{laber2018identifying,
  title={{Identifying optimal dosage regimes under safety constraints: An application to long term opioid treatment of chronic pain}},
  author={Laber, Eric B and Wu, Fan and Munera, Catherine and Lipkovich, Ilya and Colucci, Salvatore and Ripa, Steve},
  journal={Statistics in medicine},
  volume={37},
  number={9},
  pages={1407--1418},
  year={2018},
  publisher={Wiley Online Library}
}

@article{wang2018learning,
  title={{Learning optimal personalized treatment rules in consideration of benefit and risk: with an application to treating type 2 diabetes patients with insulin therapies}},
  author={Wang, Yuanjia and Fu, Haoda and Zeng, Donglin},
  journal={Journal of the American Statistical Association},
  volume={113},
  number={521},
  pages={1--13},
  year={2018},
  publisher={Taylor \& Francis}
}

@article{jin2025beyond,
  title={{Beyond reweighting: On the predictive role of covariate shift in effect generalization}},
  author={Jin, Ying and Egami, Naoki and Rothenh{\"a}usler, Dominik},
  journal={Proceedings of the National Academy of Sciences},
  volume={122},
  number={45},
  pages={e2427181122},
  year={2025},
  publisher={National Academy of Sciences}
}

@article{ben2025safe,
  title={Safe policy learning through extrapolation: Application to pre-trial risk assessment},
  author={Ben-Michael, Eli and Greiner, D James and Imai, Kosuke and Jiang, Zhichao},
  journal={Journal of the American Statistical Association},
  volume={120},
  number={551},
  pages={1386--1399},
  year={2025},
  publisher={Taylor \& Francis}
}

@article{zhang2022safe,
  title={Safe policy learning under regression discontinuity designs with multiple cutoffs},
  author={Zhang, Yi and Ben-Michael, Eli and Imai, Kosuke},
  journal={arXiv preprint arXiv:2208.13323},
  year={2022}
}

@article{jia2025bayesian,
  title={Bayesian safe policy learning with chance constrained optimization: Application to military security assessment during the vietnam war},
  author={Jia, Zeyang and Ben-Michael, Eli and Imai, Kosuke},
  journal={Journal of the Royal Statistical Society Series A: Statistics in Society},
  pages={qnaf122},
  year={2025},
  publisher={Oxford University Press UK}
}

@article{costello2024durably,
  title={{Durably Reducing Conspiracy Beliefs through Dialogues with AI}},
  author={Costello, Thomas H and Pennycook, Gordon and Rand, David G},
  journal={Science},
  volume={385},
  number={6714},
  pages={eadq1814},
  year={2024},
  publisher={American Association for the Advancement of Science}
}

@article{jin2025policy,
  title={{Policy Learning “without” Overlap: Pessimism and Generalized Empirical Bernstein’s Inequality}},
  author={Jin, Ying and Ren, Zhimei and Yang, Zhuoran and Wang, Zhaoran},
  journal={The Annals of Statistics},
  volume={53},
  number={4},
  pages={1483--1512},
  year={2025},
  publisher={Institute of Mathematical Statistics}
}

@article{zhao2012estimating,
  title={{Estimating Individualized Treatment Rules using Outcome Weighted Learning}},
  author={Zhao, Yingqi and Zeng, Donglin and Rush, A John and Kosorok, Michael R},
  journal={Journal of the American Statistical Association},
  volume={107},
  number={499},
  pages={1106--1118},
  year={2012},
  publisher={Taylor \& Francis}
}

@article{murphy2003optimal,
  title={{Optimal Dynamic Treatment Regimes}},
  author={Murphy, Susan A},
  journal={Journal of the Royal Statistical Society Series B: Statistical Methodology},
  volume={65},
  number={2},
  pages={331--355},
  year={2003},
  publisher={Oxford University Press}
}

@article{hirano2009asymptotics,
  title={{Asymptotics for Statistical Treatment Rules}},
  author={Hirano, Keisuke and Porter, Jack R},
  journal={Econometrica},
  volume={77},
  number={5},
  pages={1683--1701},
  year={2009},
  publisher={Wiley Online Library}
}

@article{manski2004statistical,
  title={{Statistical Treatment Rules for Heterogeneous Populations}},
  author={Manski, Charles F},
  journal={Econometrica},
  volume={72},
  number={4},
  pages={1221--1246},
  year={2004},
  publisher={Wiley Online Library}
}

@article{heckman1997making,
  title={{Making the Most Out of Programme Evaluations and Social Experiments: Accounting for Heterogeneity in Programme Impacts}},
  author={Heckman, James J and Smith, Jeffrey and Clements, Nancy},
  journal={The Review of Economic Studies},
  volume={64},
  number={4},
  pages={487--535},
  year={1997},
  publisher={Wiley-Blackwell}
}

@article{rubin1980randomization,
  title={{Randomization Analysis of Experimental Data: The Fisher Randomization Test Comment}},
  author={Rubin, Donald B},
  journal={Journal of the American statistical association},
  volume={75},
  number={371},
  pages={591--593},
  year={1980},
  publisher={JSTOR}
}

@article{egami2023elements,
  title={{Elements of external validity: Framework, design, and analysis}},
  author={Egami, Naoki and Hartman, Erin},
  journal={American Political Science Review},
  volume={117},
  number={3},
  pages={1070--1088},
  year={2023},
  publisher={Cambridge University Press}
}

@article{bai2025llm,
  title={{LLM-generated Messages can Persuade Humans on Policy Issues}},
  author={Bai, Hui and Voelkel, Jan G and Muldowney, Shane and Eichstaedt, Johannes C and Willer, Robb},
  journal={Nature Communications},
  volume={16},
  number={1},
  pages={6037},
  year={2025},
  publisher={Nature Publishing Group UK London}
}

@article{bastani2025generative,
  title={{Generative AI without Guardrails can Harm Learning: Evidence from high School Mathematics}},
  author={Bastani, Hamsa and Bastani, Osbert and Sungu, Alp and Ge, Haosen and Kabakc{\i}, {\"O}zge and Mariman, Rei},
  journal={Proceedings of the National Academy of Sciences},
  volume={122},
  number={26},
  pages={e2422633122},
  year={2025},
  publisher={National Academy of Sciences}
}

@article{dell2023navigating,
  title={{Navigating the Jagged Technological Frontier: Field Experimental Evidence of the Effects of AI on Knowledge Worker Productivity and Quality}},
  author={Dell'Acqua, Fabrizio and McFowland III, Edward and Mollick, Ethan R and Lifshitz-Assaf, Hila and Kellogg, Katherine and Rajendran, Saran and Krayer, Lisa and Candelon, Fran{\c{c}}ois and Lakhani, Karim R},
  journal={Organization Science},  
  year={2026}
}

@article{holland1986statistics,
  title={{Statistics and Causal Inference}},
  author={Holland, Paul W.},
  journal={Journal of the American Statistical Association},
  volume={81},
  number={396},
  pages={945--960},
  year={1986},
  publisher={Taylor \& Francis},
  doi={10.1080/01621459.1986.10478354}
}

@book{imbens2015causal, 
  place={Cambridge}, 
  title={Causal Inference for Statistics, Social, and Biomedical Sciences: An Introduction}, 
  DOI={10.1017/CBO9781139025751}, 
  publisher={Cambridge University Press}, 
  author={Imbens, Guido W. and Rubin, Donald B.}, 
  year={2015}
}

@inproceedings{TibshiraniBCR19,
  author    = {Ryan J. Tibshirani and
               Rina Foygel Barber and
               Emmanuel J. Cand{\`{e}}s and
               Aaditya Ramdas},
  title     = {{Conformal Prediction Under Covariate Shift}},
  booktitle = {Advances in Neural Information Processing Systems 32},
  pages     = {2526--2536},
  year      = {2019},
  url       = {https://proceedings.neurips.cc/paper/2019/hash/8fb21ee7a2207526da55a679f0332de2-Abstract.html},
}

@article{bates2021testing,
  title={Testing for outliers with conformal p-values},
  author={Bates, Stephen and Cand{\`e}s, Emmanuel and Lei, Lihua and Romano, Yaniv and Sesia, Matteo},
  journal={arXiv preprint arXiv:2104.08279},
  year={2021}
}

@book{vovk2005algorithmic,
  title={Algorithmic learning in a random world},
  author={Vovk, Vladimir and Gammerman, Alexander and Shafer, Glenn},
  year={2005},
  publisher={Springer Science \& Business Media}
}

@article{lei2018distribution,
  title={Distribution-free predictive inference for regression},
  author={Lei, Jing and G’Sell, Max and Rinaldo, Alessandro and Tibshirani, Ryan J and Wasserman, Larry},
  journal={Journal of the American Statistical Association},
  volume={113},
  number={523},
  pages={1094--1111},
  year={2018},
  publisher={Taylor \& Francis}
}

@inproceedings{xu2024online,
  title={Online multiple testing with e-values},
  author={Xu, Ziyu and Ramdas, Aaditya},
  booktitle={International Conference on Artificial Intelligence and Statistics},
  pages={3997--4005},
  year={2024},
  organization={PMLR}
}

@article{jin2023selection,
  title={Selection by prediction with conformal p-values},
  author={Jin, Ying and Cand{\`e}s, Emmanuel J},
  journal={Journal of Machine Learning Research},
  volume={24},
  number={244},
  pages={1--41},
  year={2023}
}

@article{jin2023model,
  title={Model-free selective inference under covariate shift via weighted conformal p-values},
  author={Jin, Ying and Cand{\`e}s, Emmanuel J},
  journal={arXiv preprint arXiv:2307.09291},
  year={2023}
}

@article{bai2024optimized,
  title={Optimized Conformal Selection: Powerful Selective Inference After Conformity Score Optimization},
  author={Bai, Tian and Jin, Ying},
  journal={arXiv preprint arXiv:2411.17983},
  year={2024}
}

@article{lei2021conformal,
  title={Conformal inference of counterfactuals and individual treatment effects},
  author={Lei, Lihua and Cand{\`e}s, Emmanuel J},
  journal={Journal of the Royal Statistical Society Series B: Statistical Methodology},
  volume={83},
  number={5},
  pages={911--938},
  year={2021},
  publisher={Oxford University Press}
}

@article{jin2023sensitivity,
  title={Sensitivity analysis of individual treatment effects: A robust conformal inference approach},
  author={Jin, Ying and Ren, Zhimei and Cand{\`e}s, Emmanuel J},
  journal={Proceedings of the National Academy of Sciences},
  volume={120},
  number={6},
  pages={e2214889120},
  year={2023},
  publisher={National Academy of Sciences}
}

@inproceedings{li2023trustworthy,
  title={Trustworthy policy learning under the counterfactual no-harm criterion},
  author={Li, Haoxuan and Zheng, Chunyuan and Cao, Yixiao and Geng, Zhi and Liu, Yue and Wu, Peng},
  booktitle={International Conference on Machine Learning},
  pages={20575--20598},
  year={2023},
  organization={PMLR}
}

@article{kallus2022s,
  title={What's the harm? sharp bounds on the fraction negatively affected by treatment},
  author={Kallus, Nathan},
  journal={Advances in Neural Information Processing Systems},
  volume={35},
  pages={15996--16009},
  year={2022}
}

@article{kitagawa2018should,
  title={Who should be treated? empirical welfare maximization methods for treatment choice},
  author={Kitagawa, Toru and Tetenov, Aleksey},
  journal={Econometrica},
  volume={86},
  number={2},
  pages={591--616},
  year={2018},
  publisher={Wiley Online Library}
}

@article{athey2021policy,
  title={Policy learning with observational data},
  author={Athey, Susan and Wager, Stefan},
  journal={Econometrica},
  volume={89},
  number={1},
  pages={133--161},
  year={2021},
  publisher={Wiley Online Library}
}

@article{zubizarreta2015stable,
  title={Stable weights that balance covariates for estimation with incomplete outcome data},
  author={Zubizarreta, Jos{\'e} R},
  journal={Journal of the American Statistical Association},
  volume={110},
  number={511},
  pages={910--922},
  year={2015},
  publisher={Taylor \& Francis}
}

@article{jin2025cross,
  title={Cross-Balancing for Data-Informed Design and Efficient Analysis of Observational Studies},
  author={Jin, Ying and Zubizarreta, Jos{\'e}},
  journal={arXiv preprint arXiv:2511.15896},
  year={2025}
}

@misc{chernozhukov2018double,
  title={Double/debiased machine learning for treatment and structural parameters},
  author={Chernozhukov, Victor and Chetverikov, Denis and Demirer, Mert and Duflo, Esther and Hansen, Christian and Newey, Whitney and Robins, James},
  year={2018},
  publisher={Oxford University Press Oxford, UK}
}

@article{hainmueller2012entropy,
  title={Entropy balancing for causal effects: A multivariate reweighting method to produce balanced samples in observational studies},
  author={Hainmueller, Jens},
  journal={Political analysis},
  volume={20},
  number={1},
  pages={25--46},
  year={2012},
  publisher={Cambridge University Press}
}

@article{romano2020malice,
  title={With malice toward none: Assessing uncertainty via equalized coverage},
  author={Romano, Yaniv and Barber, Rina Foygel and Sabatti, Chiara and Cand{\`e}s, Emmanuel},
  journal={Harvard Data Science Review},
  volume={2},
  number={2},
  pages={4},
  year={2020},
  publisher={MIT Press-Journals}
}

@article{wu2024quantifying,
  title={Quantifying individual risk for binary outcome},
  author={Wu, Peng and Ding, Peng and Geng, Zhi and Liu, Yue},
  journal={arXiv preprint arXiv:2402.10537},
  year={2024}
}

@article{yin2018assessing,
  title={Assessing the treatment effect heterogeneity with a latent variable},
  author={Yin, Yunjian and Liu, Lan and Geng, Zhi},
  journal={Statistica Sinica},
  pages={115--135},
  year={2018},
  publisher={JSTOR}
}

@article{huang2012assessing,
  title={Assessing treatment-selection markers using a potential outcomes framework},
  author={Huang, Ying and Gilbert, Peter B and Janes, Holly},
  journal={Biometrics},
  volume={68},
  number={3},
  pages={687--696},
  year={2012},
  publisher={Oxford University Press}
}

@article{shen2013treatment,
  title={Treatment benefit and treatment harm rate to characterize heterogeneity in treatment effect},
  author={Shen, Changyu and Jeong, Jaesik and Li, Xiaochun and Chen, Peng-Sheng and Buxton, Alfred},
  journal={Biometrics},
  volume={69},
  number={3},
  pages={724--731},
  year={2013},
  publisher={Oxford University Press}
}

@article{zhang2013assessing,
  title={Assessing the heterogeneity of treatment effects via potential outcomes of individual patients},
  author={Zhang, Zhiwei and Wang, Chenguang and Nie, Lei and Soon, Guoxing},
  journal={Journal of the Royal Statistical Society Series C: Applied Statistics},
  volume={62},
  number={5},
  pages={687--704},
  year={2013},
  publisher={Oxford University Press}
}

@article{gadbury2004individual,
  title={Individual treatment effects in randomized trials with binary outcomes},
  author={Gadbury, Gary L and Iyer, Hari K and Albert, Jeffrey M},
  journal={Journal of Statistical Planning and Inference},
  volume={121},
  number={2},
  pages={163--174},
  year={2004},
  publisher={Elsevier}
}

@article{ben2024policy,
  title={Policy learning with asymmetric counterfactual utilities},
  author={Ben-Michael, Eli and Imai, Kosuke and Jiang, Zhichao},
  journal={Journal of the American Statistical Association},
  volume={119},
  number={548},
  pages={3045--3058},
  year={2024},
  publisher={Taylor \& Francis}
}

@article{yin2024conformal,
  title={Conformal sensitivity analysis for individual treatment effects},
  author={Yin, Mingzhang and Shi, Claudia and Wang, Yixin and Blei, David M},
  journal={Journal of the American Statistical Association},
  volume={119},
  number={545},
  pages={122--135},
  year={2024},
  publisher={Taylor \& Francis}
}

@article{qian2011performance,
  title={Performance guarantees for individualized treatment rules},
  author={Qian, Min and Murphy, Susan A},
  journal={Annals of statistics},
  volume={39},
  number={2},
  pages={1180},
  year={2011}
}

@inproceedings{li2010contextual,
  title={A contextual-bandit approach to personalized news article recommendation},
  author={Li, Lihong and Chu, Wei and Langford, John and Schapire, Robert E},
  booktitle={Proceedings of the 19th international conference on World wide web},
  pages={661--670},
  year={2010}
}

@article{dudik2011doubly,
  title={Doubly robust policy evaluation and learning},
  author={Dud{\'\i}k, Miroslav and Langford, John and Li, Lihong},
  journal={arXiv preprint arXiv:1103.4601},
  year={2011}
}

@article{imai2011estimation,
  title={Estimation of heterogeneous treatment effects from randomized experiments, with application to the optimal planning of the get-out-the-vote campaign},
  author={Imai, Kosuke and Strauss, Aaron},
  journal={Political Analysis},
  volume={19},
  number={1},
  pages={1--19},
  year={2011},
  publisher={Cambridge University Press}
}

@article{kleinberg2018human,
  title={Human decisions and machine predictions},
  author={Kleinberg, Jon and Lakkaraju, Himabindu and Leskovec, Jure and Ludwig, Jens and Mullainathan, Sendhil},
  journal={The quarterly journal of economics},
  volume={133},
  number={1},
  pages={237--293},
  year={2018},
  publisher={Oxford University Press}
}

@article{scauda2026counterfactual,
  title={Counterfactual Optimization of Policy Interventions: Lexical Ordering and Leapfrogging},
  author={Scauda, Martina and Freidling, Tobias and Zhao, Qingyuan},
  journal={arXiv preprint arXiv:2608.20505},
  year={2026}
}

% \newpage 
\appendix 
% !TEX root = main.tex

\section{Deferred discussion}

\subsection{Power Maximization} 
\label{app:subsec_power_opt}
We first define a natural notion of power, the probability of getting treated: 
\$
\text{Power}(\pi; P) := P(\pi(X_{n+1})=1).
\$ 
The optimal treatment rule under this power notion is defined as 
\@\label{eq:def_opt_power_phi}
\pi_{\text{power}}^* = \argmax_{\pi\colon \cX\to \{0,1\}} \quad & \text{Power}  (\pi; \PP_{X,Y(1),Y(0)})  \\ 
\text{subject to}\quad & \max_{P\in \cP} ~\text{Err}(\pi;P) \leq \alpha. \notag
\@
We can rewrite this optimization problem as follows. 
\@\label{eq:def_opt_power_phi_2}
\pi_{\text{power}}^* = \argmax_{\pi \colon \cX\to \{0,1\}} \quad & \E[\pi(X_{n+1})]  \\ 
\text{subject to}\quad & \E[\pi(X_{n+1}) \fna(X_{n+1})] \leq \alpha. \notag
\@
where $\fna(x) = \min\{1-\mu_1(x), \mu_0(x)\},$ which is the sharp upper bound for $\PP(Y_{n+1}(1) = 0, Y_{n+1}(0) = 1 \mid X_{n+1} = x).$ Intuitively, given the linear relaxation, the optimal treatment rule is a thresholding rule based on $\fna(X_{n+1})$, which acts as the ``cost'' in this optimization problem. 

Theorem~\ref{thm:global_opt}  formally establishes that the optimal solution $\pi^*_{\text{power}}$ under the worst-case harm constraint is based on a cutoff on $\fna(x)$. It is implied by a more general result in Theorem~\ref{thm:global_opt_general} in Appendix~\ref{app:subsec_general_opt} with proof in Appendix~\ref{app:subsec_proof_global_opt_general}; we thus omit the proof of Theorem~\ref{thm:global_opt} here.

\begin{theorem}[Power-optimal $\pi^*_{\text{power}}$] \label{thm:global_opt}    
    Assume $\fna(X)$ has no point mass. Then, the optimal solution~\eqref{eq:def_opt_power_phi} is $\pi^*_{\text{power}}(x) = \ind\{\fna(x)\leq \fna^*\}$ for the cutoff $\fna^* = \max\{\tilde{\fna} \in [0,1]\colon \EE[\fna(X)\ind\{\fna(X)\leq \tilde{\fna})\}]\leq \alpha\}$. 
\end{theorem}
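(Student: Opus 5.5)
First, reduce \eqref{eq:def_opt_power_phi} to the linear program \eqref{eq:def_opt_power_phi_2}. For any deterministic $\pi$ and any $P\in\cP$, the tower property gives $\text{Err}(\pi;P)=\EE[\pi(X)f_P(X)]$. The marginal of $X$ is identified, so it is the same under every $P\in\cP$. By the sharp bound of \cite{kallus2022s}, $f_P\le\fna$ pointwise for every $P\in\cP$. Moreover, the coupling that is negative conditional on $X$ (Fréchet--Hoeffding) is a single $P^\dagger\in\cP$ with $f_{P^\dagger}\equiv\fna$. Hence $\max_{P\in\cP}\text{Err}(\pi;P)=\EE[\pi(X)\fna(X)]$. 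Since $\text{Power}(\pi;P)=\PP_X(\pi(X)=1)$ is also identified, the problem becomes: maximize $\EE[\pi(X)]$ subject to $\EE[\pi(X)\fna(X)]\le\alpha$.

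Next, verify that the proposed cutoff is well defined and feasible. The map $h(t)=\EE[\fna(X)\ind\{\fna(X)\le t\}]$ is non-decreasing and right-continuous. Because $\fna(X)$ has no atoms, it is also continuous, with $h(0)=0$. So the set $\{t\in[0,1]: h(t)\le\alpha\}$ is a closed interval containing $0$, and $\fna^*$ is its maximum with $h(\fna^*)\le\alpha$. This shows $\pi^*_{\text{power}}$ is feasible. Moreover, either $\fna^*=1$, in which case $\pi^*_{\text{power}}\equiv1$ a.s. and is trivially optimal, or $\fna^*<1$ and $h(\fna^*)=\alpha$ by continuity.

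The main step is optimality in the binding case, via a Neyman--Pearson-type exchange argument. Take any feasible $\pi$, possibly randomized with values in $[0,1]$, and write $\pi^*=\pi^*_{\text{power}}$. The key observation is that
\[
(\pi^*(x)-\pi(x))(\fna^*-\fna(x))\ge0 \quad \text{for all } x,
\]
since $\pi^*=1$ exactly where $\fna\le\fna^*$. Taking expectations gives
\[
\fna^*\,\EE[\pi^*-\pi]\ \ge\ \EE[(\pi^*-\pi)\fna]\ =\ \alpha-\EE[\pi\fna]\ \ge\ 0.
\]
If $\fna^*>0$, dividing through gives $\EE[\pi^*]\ge\EE[\pi]$. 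The degenerate case $\fna^*=0$ needs separate handling. There the binding equality $h(0)=\alpha>0$ is impossible, since $h(0)=0$. So either $\fna^*=1$ or $\fna^*>0$, and the edge case is excluded.

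I expect the delicate points to be bookkeeping rather than conceptual. One is justifying that the worst case over $\cP$ is attained by a single coupling simultaneously for all $\pi$; this needs measurability of the pointwise negative coupling, which follows from the explicit construction. The other is using the no-point-mass assumption to obtain exact budget exhaustion. Uniqueness up to null sets would follow from strictness of the inequality off $\{\fna=\fna^*\}$, but the statement only requires that $\pi^*$ be an optimal solution.
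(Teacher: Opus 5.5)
Your proof is correct and follows the paper's argument. Both reduce the problem to the linear program using the sharp Fr\'echet--Hoeffding bound and then run a Neyman--Pearson exchange; your pointwise inequality $(\pi^*-\pi)(\gamma^*-\gamma)\geq 0$ is a compact version of the paper's three-region split, and you are more careful in two places: you explicitly rule out $\gamma^*=0$ in the binding case (the paper only notes $\gamma^*\geq 0$ before dividing), and you work directly with the continuous threshold rather than deriving it from the randomized general form in Theorem~\ref{thm:global_opt_general}.
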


To achieve the optimal power, we define the power-optimal conformal p-value (for   stratified experiments, covering the complete randomized experiments in Section~\ref{sec:rct} as a special case)
\@
p^{\text{str-power}}_{n+1} = \frac{w(X_{n+1})+ \sum_{i=1}^n w(X_i) G_i   \ind\{Y_i^\dagger = 0\}  \ind\{\widehat{\fna}(X_i)\leq \widehat{\fna}(X_{n+1})\}}{w(X_{n+1})+\sum_{i=1}^n G_iw(X_i)}, \label{eq:pval_power_optimal}
\@
where $G_i = T_i \one\{1 - \widehat{\mu}_1(X_i) \leq \widehat{\mu}_0(X_{i})\} + (1-T_i) \one\{1 - \widehat{\mu}_1(X_i) > \widehat{\mu}_0(X_{i})\}.$ 
The choice of the calibration inclusion indicators $G_i$ is the same as the welfare-optimal procedure. 
This coincides with Algorithm~\ref{algo:rct} with the clipped score $V(x,y) = M\ind\{y>0\} +  \widehat{\fna}(x)$ using  a sufficiently large constant $M > 0.$

Theorem~\ref{thm:power_opt_str} establishes the optimality of PCL with the power-oriented conformity score. Its proof is in Appendix~\ref{app:subsec_opt_power_ours}. 

\begin{theorem}\label{thm:power_opt_str}
    
Suppose Assumption~\ref{assump:uncon} and Assumption~\ref{assump:iid} hold, and $e(x)$ is known. Suppose $\|\hat{\mu}_t(X) -\mu_t(X)\|_{L_2(\PP_{X})}\stackrel{P}{\to}0$ as $n\to \infty$ for $t \in \{0,1\}$, and  $\fna(X)$ has no point mass. 
    Let $\hat\pi_{\text{str-power}}(X_{n+1}) \coloneqq \ind\{p_{n+1}^{\text{str-power}} \leq \alpha\}$ for the p-value in~\eqref{eq:pval_power_optimal}. Then $\EE[Y(\hat\pi_{\text{str-power}}(X_{n+1}))]\to \text{Power}(\pi_\text{power}^*;\PP)$ as $n\to \infty$, where $\pi_{\text{power}}^*$ is the power-optimal solution in Theorem~\ref{thm:global_opt}.
\end{theorem}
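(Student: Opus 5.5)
Read the conclusion as $\PP(\hat\pi_{\text{str-power}}(X_{n+1})=1)\to\text{Power}(\pi^*_{\text{power}};\PP)$, which is the power notion defined above. The plan is to rewrite the decision rule as a threshold on $\hat\fna(X_{n+1})$ against an empirical cutoff, and to show that this cutoff converges to the oracle cutoff $\fna^*$ of Theorem~\ref{thm:global_opt}. Throughout, work conditionally on the training sample, so that $\hat\mu_t$, $\hat\fna$, $\hat g$ and the weights are fixed functions.

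\emph{Step 1: reduce to an empirical CDF.} Define
\[
\hat F(t)=\frac{\sum_{i} w(X_i)G_i\ind\{Y_i^\dagger=0\}\ind\{\hat\fna(X_i)\le t\}}{\sum_i w(X_i)G_i}.
\]
Then $p^{\text{str-power}}_{n+1}$ differs from $\hat F(\hat\fna(X_{n+1}))$ only through the test weight $w(X_{n+1})$. Assuming overlap, so that $w$ is bounded, this perturbation is $O(1/n)$ uniformly. Hence $\hat\pi=1$ essentially iff $\hat F(\hat\fna(X_{n+1}))\le\alpha$, up to an $O(1/n)$ slack in $\alpha$.

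\emph{Step 2: identify the population limit.} Since $w(x)=\PP(G=1\mid X=x)^{-1}$, we have $\EE[w(X)G\mid X]=1$. Using unconfoundedness and $Y^\dagger=Y(1)$ on $T=1$, $Y^\dagger=1-Y(0)$ on $T=0$, the conditional expectation of the numerator is
\[
\EE[w(X)G\ind\{Y^\dagger=0\}\mid X]=h_n(X):=w(X)\bigl[e(X)\hat g(X,1)(1-\mu_1(X))+(1-e(X))\hat g(X,0)\mu_0(X)\bigr].
\]
The weights $w(X_i)G_i\ind\{\cdot\}$ are bounded and the indicators $\ind\{\hat\fna\le t\}$ form a VC class over $t$. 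A Glivenko--Cantelli argument for the numerator and a law of large numbers for the denominator therefore give $\sup_t|\hat F(t)-F_n(t)|\to0$ in probability, where $F_n(t)=\EE[h_n(X)\ind\{\hat\fna(X)\le t\}]$.

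\emph{Step 3: show $F_n$ converges to the oracle curve.} The target is $F(t)=\EE[\fna(X)\ind\{\fna(X)\le t\}]$.

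First, $h_n\to\fna$ in $L_1$. Off the set $\{1-\mu_1=\mu_0\}$, the indicator defining $\hat g$ agrees with $g^*$ with probability tending to one by $L_2$-consistency of $\hat\mu_t$. On $\{\hat g=g^*\}$ the expression collapses exactly to $\fna$: for instance, with $g^*(x,\cdot)=(1,0)$ we get $w=1/e$ and $h=1-\mu_1=\fna$. On the boundary set, both candidate values $1-\mu_1$ and $\mu_0$ equal $\fna$, so the misclassification there is harmless. Since everything is bounded, convergence in probability upgrades to $L_1$.

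Second, $\ind\{\hat\fna\le t\}\to\ind\{\fna\le t\}$ in $L_1$ uniformly in $t$. This uses $\hat\fna\to\fna$ in $L_2$ together with the continuity of the law of $\fna(X)$ (no point mass), via the standard bound $\PP(|\fna(X)-t|\le\epsilon)+\PP(|\hat\fna-\fna|>\epsilon)$.

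Combining, $\sup_t|\hat F(t)-F(t)|\to0$ in probability.

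\emph{Step 4: pass the cutoff to the limit (main obstacle).} Here we must convert uniform convergence of $\hat F$ into convergence of the decision $\ind\{\hat F(\hat\fna(X_{n+1}))\le\alpha\}$ to $\ind\{\fna(X_{n+1})\le\fna^*\}$. Because the test point is independent of the calibration data, $\hat F(\hat\fna(X_{n+1}))-F(\fna(X_{n+1}))\to0$ in probability. This follows from Step 3 together with $|F(\hat\fna)-F(\fna)|\le\PP$-mass near $\fna$, which vanishes by continuity of $F$.

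Then argue by cases.
\begin{itemize}
\item If $\fna(X_{n+1})<\fna^*-\epsilon$, monotonicity gives $F(\fna(X_{n+1}))\le\alpha$. A margin below $\alpha$ is available unless $F$ is flat just below $\fna^*$; but flat stretches of $F$ carry no $\fna(X)$-mass except possibly at $\fna=0$, which contributes no harm. So the unit is treated with probability tending to one.
\item If $\fna(X_{n+1})>\fna^*+\epsilon$, then $F(\fna(X_{n+1}))>\alpha$ strictly. This holds by the definition of $\fna^*$ as a maximum, except in the degenerate case $\fna^*=1$, where everyone is treated and the claim is immediate.
\end{itemize}
Letting $\epsilon\downarrow0$ and using that $\fna(X)$ has no atom at $\fna^*$, dominated convergence yields $\PP(\hat\pi(X_{n+1})=1)\to\PP(\fna(X)\le\fna^*)$, which is the optimal power. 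The delicate points are handling the flat regions of $F$ and the $O(1/n)$ test-weight perturbation near $\alpha$; I would treat both via the $\epsilon$-margin argument above.
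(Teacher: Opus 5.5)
Your proposal follows essentially the same route as the paper's proof. Both use the weighted identities $\EE[Gw(X)\mid X]=1$ and $\EE[Gw(X)\ind\{Y^\dagger=0\}\mid X]=\overline\fna_n(X)$ (your $h_n$), and both establish $L_1$ convergence of $\overline\fna_n$ to $\fna$. For that step the paper uses the deterministic bound $0\le\overline\fna_n-\fna\le 2\sum_{a\in\{0,1\}}|\hat\mu_a-\mu_a|$ in place of your misclassification argument. Both then show uniform convergence of the weighted conformal curve to $H(t)=\EE[\fna(X)\ind\{\fna(X)\le t\}]$. Both finish with a threshold argument resting on the fact that level sets of $H$ carry no $\fna(X)$-mass.

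The one thing to repair is that you assume strict overlap (bounded $w$), which the theorem does not. The paper instead uses the bound $\EE[(Gw(X)-M)_+\mid\cT_n]\le\EE[(1-Me(X))_+]+\EE[(1-M\{1-e(X)\})_+]$. This vanishes as $M\to\infty$ uniformly in $n$, so truncation plus the bounded-case uniform law of large numbers suffices. The test-weight perturbation is then controlled by $w(X_{n+1})/n=o_P(1)$, rather than by a uniform $O(1/n)$ bound.
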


The theorem below establishes asymptotic power optimality under suitable conditions, in parallel to Theorem~\ref{thm:obs_welfare_opt}. The proof is in Appendix~\ref{app:subsec_obs_power_opt}. 

\begin{theorem}[Power optimality with observational data]
\label{thm:obs_power_opt}
Let $\hat\pi_{\obs}(X_{n+1})=\ind\{p^{\obs}_{n+1}\leq\alpha\}$,
where we take the same $G_i$ as~\eqref{eq:def_G_welfare_optimal} and
the score $\hat{s}(x) = \hat\fna(x) =\min\{1-\hat\mu_1(x),\hat\mu_0(x)\}$.  
Suppose Assumption~\ref{assump:bal_reg} holds, and $\delta_n = O(n^{-1/2})$. 
Furthermore, assume $\|\hat g-g^*\|_{L_2(\PP_{X,T})}+\|\hat\fna - \fna\|_{L_2(\PP_{X})}+\|\hat s-\fna\|_{L_2(\PP_X)}=o_P(1)$ for $g^*(x,t)=t\ind\{1-\mu_1(x)\leq\mu_0(x)\}+(1-t)\ind\{1-\mu_1(x)>\mu_0(x)\}$, and $\fna(X)$ has no point mass.
Then
$\EE[\hat\pi_{\obs}(X_{n+1})]
\to
\textnormal{Power}(\pi_{\power}^*;\PP)$, 
where $\pi_{\power}^*$ is the power-optimal solution in
Theorem~\ref{thm:global_opt}.
\end{theorem}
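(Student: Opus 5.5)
Since $\pi^*_{\power}$ and the welfare‑optimal rule of Theorem~\ref{thm:obs_welfare_opt} share the same calibration structure, I would reuse the welfare‑optimality analysis in Appendix~\ref{app:subsec_optimality_obs} with the score replaced by $\hat s=\hat\fna$. The target is to show that $\hat\pi_{\obs}(X_{n+1})$ converges in $L_1(\PP_X)$, in probability over the calibration and training data, to $\ind\{\fna(X_{n+1})\le\fna^*\}$. Bounded convergence then gives $\EE[\hat\pi_{\obs}(X_{n+1})]\to\EE[\ind\{\fna(X)\le\fna^*\}]$, and Theorem~\ref{thm:global_opt} identifies this limit as $\textnormal{Power}(\pi^*_{\power};\PP)$.

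The first step is to write the decision as a threshold rule. The p-value $p^{\obs}_{n+1}$ is non-decreasing in $\hat s(X_{n+1})$, so $\{p^{\obs}_{n+1}\le\alpha\}=\{\hat s(X_{n+1})\le\hat T\}$ up to ties. Here $\hat T$ is the largest $t$ satisfying
\[
\frac{\hat w_{n+1}+\sum_i \hat w_i G_i\ind\{Y_i^\dagger=0\}\ind\{\hat s(X_i)\le t\}}{\hat w_{n+1}+\sum_i\hat w_iG_i}\le\alpha .
\]
I would bound $\hat w_{n+1}$ using the overlap and boundedness conditions in Assumption~\ref{assump:bal_reg}, so the test weight contributes only $O_P(1/n)$. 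This leaves an empirical weighted harm curve $\hat F(t)=|\cI_\calib|^{-1}\sum_{i\in\cI_\calib}\hat w_i\ind\{Y_i^\dagger=0\}\ind\{\hat s(X_i)\le t\}$, divided by the normalization $|\cI_\calib|^{-1}\sum\hat w_i=1$, which the balancing program fixes.

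The second step, which I expect to be the main obstacle, is to show that $\hat F(t)\to F(t):=\EE[\fna(X)\ind\{\fna(X)\le t\}]\cdot c$ uniformly near $\fna^*$, where $c$ is the constant relating the calibration mean to the full-sample mean. Three facts combine here. (a) Given $X_i$ and $G_i=1$, $\EE[\ind\{Y_i^\dagger=0\}\given X_i,G_i=1]$ equals the arm-specific quantity selected by $g^*$, and under $\hat g\to g^*$ this equals $\fna(X_i)$ up to an $L_2$-small error. (b) Since $\hat\fna\to\fna$, the first balancing feature $\hat\fna(x)\ind\{\hat s(x)\le\hat t\}$ forces $\hat w$ to reproduce $n^{-1}\sum\hat\fna(X_i)\ind\{\hat s(X_i)\le\hat t\}\approx\alpha$ at the data-driven cutoff $\hat t$, with error $\delta_n=O(n^{-1/2})$. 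This holds even if $\tilde w$ is inconsistent, because it is the outcome-correct case. (c) The weights are bounded and stable by Assumption~\ref{assump:bal_reg}, so an empirical-process argument over the monotone class $\{\ind\{\hat s\le t\}\}$, which is VC, controls fluctuations of $\hat F$ around its conditional mean. For thresholds $t$ near $\hat t$ other than $\hat t$ itself, I would use monotonicity of both curves together with the sandwich $\hat F(\hat t\pm\epsilon)$ to transfer balance at $\hat t$ into closeness of the crossing point.

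The third step follows. The no-point-mass assumption on $\fna(X)$ makes $F$ continuous and strictly increasing wherever it crosses level $\alpha$. Hence $\hat T\to\fna^*$ in probability, and $\hat t\to\fna^*$ as well, since $\hat t$ solves the same equation with the empirical $\hat\fna$. Finally,
\[
\PP\big(\ind\{\hat s(X)\le\hat T\}\neq\ind\{\fna(X)\le\fna^*\}\big)\le \PP(|\fna(X)-\fna^*|\le\epsilon)+\PP(|\hat s(X)-\fna(X)|>\epsilon/2)+\PP(|\hat T-\fna^*|>\epsilon/2).
\]
Each term vanishes: the first by letting $\epsilon\to0$ with no point mass, the second by the $L_2$ consistency of $\hat s$, and the third by the previous convergence. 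Because $\hat\pi_{\obs}$ here has no extra $\ind\{\hat s<0\}$ factor, there is no sign-cutoff case to handle, which makes this argument slightly simpler than the welfare case.
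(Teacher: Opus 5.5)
Your architecture (threshold representation of $\{p^{\obs}_{n+1}\le\alpha\}$, uniform convergence of the weighted calibration curve, convergence of its crossing point, then convergence of indicators and identification via Theorem~\ref{thm:global_opt}) matches the paper's. However, Step~2 has a genuine gap, and it sits exactly in the regime the theorem is built for. Nothing is assumed about $\tilde w$, so the balancing weights may be misspecified. In that case $\hat F$ does \emph{not} converge to $c\,\mathbb{E}[\gamma(X)\mathbf{1}\{\gamma(X)\le t\}]$ uniformly near $\gamma^*$. Replace $\hat w_i$ by the training-measurable population balancing weight $\hat\omega=\mathcal W_n(\hat\gamma,\tilde w)$ (Lemmas~\ref{lem:bal_weight_approx} and~\ref{lem:w_to_wx}), and replace the conditional mean of $\mathbf{1}\{Y^\dagger=0\}$ by $\hat\gamma$ (your fact (a) plus $\hat\gamma\to\gamma$). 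The limit is then the reweighted curve
\[
F_n(t)=\frac{\mathbb{E}[\hat\omega(X)q(X)\hat\gamma(X)\mathbf{1}\{\hat s(X)\le t\}\mid\mathcal T_n]}{\mathbb{E}[\hat\omega(X)q(X)\mid\mathcal T_n]}.
\]
The balance constraints tie this curve to $H_n(t)=\mathbb{E}[\hat\gamma(X)\mathbf{1}\{\hat s(X)\le t\}\mid\mathcal T_n]$ only at the single point $t_n^\circ$, where both equal $\alpha$. Elsewhere $\hat\omega q$ is not constant, so the two curves differ. There is also no free constant $c$: any $c\neq 1$ would move the crossing level off $\alpha$.

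Your fallback (monotonicity plus the sandwich $\hat F(\hat t\pm\epsilon)$) does not repair this. From $\hat F(\hat t)\approx\alpha$ and monotonicity you only get $\hat F(\hat t-\epsilon)\le\alpha+o_P(1)\le\hat F(\hat t+\epsilon)$. Nothing excludes a flat piece of the reweighted curve at level $\alpha$ near $t_n^\circ$, for example where the weights are tiny or zero, which $w\succeq 0$ allows. In that case $\hat T$ need not approach $t_n^\circ$. The missing ingredient is a slope bound for the \emph{weighted} curve:
\[
F_n(t_2)-F_n(t_1)\ \ge\ \kappa\,\{H_n(t_2)-H_n(t_1)\},\qquad t_1<t_2\ \text{near}\ t_n^\circ .
\]
This follows from the overlap $q\ge c_0$, the interiority bound $\inf_{t,x}\omega_t(x)\ge c_1$, and the boundedness of $\hat\omega$ (Assumption~\ref{assump:bal_reg}(i)--(ii)), combined with the local slope of $H_n$ in Assumption~\ref{assump:bal_reg}(iii). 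Only this gives $F_n(t_n^\circ-u)\le\alpha-\kappa u$ and $F_n(t_n^\circ+u)\ge\alpha+\kappa u$. These in turn give the implications $\hat s(X_{n+1})\le t_n^\circ-\epsilon\Rightarrow p^{\obs}_{n+1}\le\alpha$ and $\hat s(X_{n+1})\ge t_n^\circ+\epsilon\Rightarrow p^{\obs}_{n+1}>\alpha$.

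Relatedly, in Step~3 the no-point-mass condition makes $H_0(t)=\mathbb{E}[\gamma(X)\mathbf{1}\{\gamma(X)\le t\}]$ continuous, but not strictly increasing at level $\alpha$: gaps in the support of $\gamma(X)$ create flat pieces. The convergence $t_n^\circ\to\gamma^*$ should instead come from Assumption~\ref{assump:bal_reg}(iii) together with $\sup_t|H_n(t)-H_0(t)|=o_P(1)$ (Lemma~\ref{lem:wecdf_hat}).

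Finally, $\hat w_i$ depends on the whole sample through $\hat t$ and the balancing program. Your VC argument in (c) must therefore be run after the substitution $\hat w_i\to\hat\omega(X_i)$, not on $\hat w_i$ directly.
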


\subsection{General form of optimality}
\label{app:subsec_general_opt}

Theorem~\ref{thm:global_opt_general} is a general form of Theorem~\ref{thm:global_opt}, whose proof is in Appendix~\ref{app:subsec_proof_global_opt_general}. 

\begin{theorem}[Power-optimal $\phi^*$, general form] \label{thm:global_opt_general}
    Let $\fna(x) = \min\{\PP(Y(1)=0\given X=x), \PP(Y(0)=1\given X=x)\}$. 
    If $\mathbb{E}[\fna(X)]\leq\alpha$, then an optimal solution to~\eqref{eq:def_opt_power_phi} is $\phi^*_{\mathrm{power}}(x)\equiv1$. Otherwise, define  $\gamma^* := \inf\left\{ c\in[0,1]: \mathbb{E}\!\left[ \gamma(X)\mathbf{1}\{\gamma(X)\leq c\} \right] \geq\alpha \right\}$, and let $\eta^*\in[0,1]$ be chosen such that   $\mathbb{E}\!\left[ \gamma(X)\mathbf{1}\{\gamma(X)<\gamma^*\} \right] + \eta^* \mathbb{E}\!\left[ \gamma(X)\mathbf{1}\{\gamma(X)=\gamma^*\} \right] = \alpha$. Then an optimal solution to~\eqref{eq:def_opt_power_phi} is $\phi^*_{\mathrm{power}}(x) = \mathbf{1}\{\gamma(x)<\gamma^*\} + \eta^*\mathbf{1}\{\gamma(x)=\gamma^*\}$. In this case, the safety constraint is binding: $\mathbb{E}[\gamma(X)\phi^*_{\mathrm{power}}(X)]=\alpha$. 
    % Then, the optimal solution to~\eqref{eq:def_opt_power_phi} takes the form 
    % \$
    % \phi^*(x) = \begin{cases}
    %     1, &~ \fna(x)<\fna^*, \\ 
    %     \eta, & ~ \fna(x)=\fna^*, \\ 
    %     0, &~ \fna(x)>\fna^*,
    % \end{cases}
    % \$
    % so that $\phi^*(\cdot)$ becomes a randomized decision rule which gives the treatment with probability $\eta\in [0,1]$ when $\fna(X)=\fna^*$, and $\eta\in [0,1]$ is chosen such that $\EE[\fna(X)\phi^*(X)]=\alpha$. 
\end{theorem}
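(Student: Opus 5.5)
The plan is to reduce the worst-case constraint to a linear constraint and then solve a fractional-knapsack linear program by a Neyman--Pearson type exchange argument. Since the statement allows a randomized rule $\phi\colon\cX\to[0,1]$, I interpret \eqref{eq:def_opt_power_phi} over such $\phi$, with $\mathrm{Power}(\phi)=\EE[\phi(X)]$ and $\mathrm{Err}(\phi;P)=\EE[\phi(X)f_P(X)]$.

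\emph{Step 1: reduce the constraint.} By the sharp bound of \cite{kallus2022s}, every $P\in\cP$ has $f_P(x)\le\gamma(x)$. Moreover, the negatively coupled distribution $P^\star$, with $P^\star(Y(1)=0,Y(0)=1\given X=x)=\min\{1-\mu_1(x),\mu_0(x)\}$, lies in $\cP$ and attains $f_{P^\star}=\gamma$ pointwise. Because $\phi\ge0$, this gives
\[
\max_{P\in\cP}\mathrm{Err}(\phi;P)=\EE[\phi(X)\gamma(X)].
\]
Hence \eqref{eq:def_opt_power_phi} is equivalent to \eqref{eq:def_opt_power_phi_2}: maximize $\EE[\phi(X)]$ subject to $\EE[\phi(X)\gamma(X)]\le\alpha$ and $0\le\phi\le1$.

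\emph{Step 2: the trivial case and feasibility of $\phi^*_{\mathrm{power}}$.} If $\EE[\gamma(X)]\le\alpha$, then $\phi\equiv1$ is feasible and attains the maximal possible power $1$. Otherwise, let $F(c)=\EE[\gamma(X)\ind\{\gamma(X)\le c\}]$. This function is nondecreasing and right-continuous, with $F(1)=\EE[\gamma(X)]>\alpha$.
\begin{itemize}
\item The infimum $\gamma^*$ therefore exists, and right-continuity gives $F(\gamma^*)\ge\alpha$.
\item Since $F(c)<\alpha$ for all $c<\gamma^*$, we get $\EE[\gamma\ind\{\gamma<\gamma^*\}]=\lim_{c\uparrow\gamma^*}F(c)\le\alpha$.
\item So $\alpha$ lies between $\EE[\gamma\ind\{\gamma<\gamma^*\}]$ and $F(\gamma^*)$, and $\eta^*\in[0,1]$ exists. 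If the atom term $\EE[\gamma\ind\{\gamma=\gamma^*\}]$ is zero, any $\eta^*$ works.
\end{itemize}
By construction, $\EE[\gamma\phi^*_{\mathrm{power}}]=\alpha$, so $\phi^*_{\mathrm{power}}$ is feasible and the constraint binds.

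One edge case needs care: $\gamma^*=0$. Then $\EE[\gamma\ind\{\gamma<\gamma^*\}]=0$, and the atom at $0$ carries zero cost. So the equation defining $\eta^*$ forces $\alpha=0$, which is excluded. Hence $\gamma^*>0$ whenever $\alpha>0$.

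\emph{Step 3: optimality by a Lagrangian exchange inequality.} Let $\phi$ be any feasible rule, with $\gamma^*>0$ from Step 2. I would show the pointwise inequality
\[
(\phi^*_{\mathrm{power}}(x)-\phi(x))\bigl(\gamma^*-\gamma(x)\bigr)\ge0.
\]
It holds case by case:
\begin{itemize}
\item If $\gamma(x)<\gamma^*$, then $\phi^*_{\mathrm{power}}(x)=1\ge\phi(x)$.
\item If $\gamma(x)>\gamma^*$, then $\phi^*_{\mathrm{power}}(x)=0\le\phi(x)$.
\item If $\gamma(x)=\gamma^*$, the second factor vanishes.
\end{itemize}
Taking expectations gives
\[
\gamma^*\,\EE[\phi^*_{\mathrm{power}}-\phi]\ge\EE[\gamma\phi^*_{\mathrm{power}}]-\EE[\gamma\phi]=\alpha-\EE[\gamma\phi]\ge0.
\]
Dividing by $\gamma^*>0$ yields $\EE[\phi^*_{\mathrm{power}}]\ge\EE[\phi]$, which proves optimality.

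The main obstacle is Step 1: verifying that the worst case over $\cP$ is attained simultaneously for all $x$ by one member of $\cP$. This requires checking that the negative coupling reproduces the observed law $\PP_{X,Y,T}$ under the given assignment mechanism $e(X)$ and unconfoundedness. It holds because the coupling only alters the joint law of $(Y(1),Y(0))$ given $X$ while preserving its marginals, and I would cite \cite{kallus2022s} for the bound $f_P\le\gamma$.
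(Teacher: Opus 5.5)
Your proof is correct and follows the paper's argument. Both proofs reduce the worst-case constraint to $\mathbb{E}[\gamma(X)\phi(X)]\le\alpha$ via the sharp bound of \cite{kallus2022s}, then compare any feasible rule with $\phi^*_{\mathrm{power}}$ through the exchange inequality $(\phi^*_{\mathrm{power}}-\phi)(\gamma^*-\gamma)\ge 0$, which is exactly the paper's three-region decomposition.

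You also check that $\eta^*$ exists and that $\gamma^*>0$ when $\alpha>0$. These are useful additions, because the paper's final division step only invokes $\gamma^*\geq 0$, which by itself would not suffice.
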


Theorem~\ref{thm:welfare_optimal_general} is a general form of Theorem~\ref{thm:welfare_optimal}, whose proof is in Appendix~\ref{app:subsec_proof_welfare_opt}. 

\begin{theorem} 
\label{thm:welfare_optimal_general}
Consider the randomized extension of problem~\eqref{eq:def_opt_welfare_phi}, where a policy is a measurable function $\phi:\mathcal X\to[0,1]$, and $\phi(x)$ denotes the probability of treatment conditional on $X=x$. For $r\leq 0$, define $H(r):=\mathbb{E}[\fna(X)  \ind\{s_{\mathrm{welfare}}(X)<r\}]$. If $H(0)\leq\alpha$, let $r^*=0$ and $\eta^*=0$. If $H(0)>\alpha$, let $r^* := \sup\left\{ r<0: H(r)\leq\alpha \right\}$, and choose $\eta^*\in[0,1]$ such that $H(r^*) + \eta^* \mathbb{E}\!\left[ \fna(X)  \ind\{s_{\mathrm{welfare}}(X)=r^*\} \right] = \alpha$. 
Then an optimal solution to the randomized extension of~\eqref{eq:def_opt_welfare_phi} is 
\[ 
\phi^*_{\mathrm{welfare}}(x) =  \ind\{s_{\mathrm{welfare}}(x)<r^*\} + \eta^*  \ind\{s_{\mathrm{welfare}}(x)=r^*\}. 
\] 
Moreover, $\mathbb{E}[\gamma(X)\phi^*_{\mathrm{welfare}}(X)]\leq\alpha$ and $r^* \cdot\left\{ \mathbb{E}[\gamma(X)\phi^*_{\mathrm{welfare}}(X)] -\alpha \right\} =0$. In particular, if $\mathbb{E}[\gamma(X) \ind\{\tau(X)>0\}]\leq\alpha$, then $r^*=0$ and $\phi^*_{\mathrm{welfare}}(x)= \ind\{\tau(x)>0\}$. Otherwise, $r^*<0$ and the safety constraint is exactly attained.
\end{theorem}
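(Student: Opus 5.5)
The plan is to prove Theorem~\ref{thm:welfare_optimal_general} by reducing the randomized problem to a linear program with one constraint and then running a Lagrangian (Neyman--Pearson-type) argument.

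\emph{Step 1: reduce to an identified program.} For any $\phi\colon\cX\to[0,1]$, under any $P\in\cP$ we have $\text{Err}(\phi;P)=\EE[\phi(X)f_P(X)]$, because $P_X=\PP_X$. Kallus's sharp bound gives $f_P\le\gamma$ pointwise, and there is a $P\in\cP$ attaining $f_P=\gamma$, namely the negatively coupled (Fr\'echet--Hoeffding) joint law of $(Y(1),Y(0))$ given $X$. Hence
\[
\max_{P\in\cP}\text{Err}(\phi;P)=\EE[\gamma(X)\phi(X)].
\]
We also have $\Welfare(\phi)=\EE[Y(0)]+\EE[\phi(X)\tau(X)]$. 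So the problem is exactly~\eqref{eq:def_opt_welfare_phi_2}: maximize $\EE[\phi\tau]$ subject to $\EE[\phi\gamma]\le\alpha$ and $0\le\phi\le1$.

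\emph{Step 2: verify the thresholding rule with a multiplier.} Set $\lambda^*=-r^*\ge0$. The candidate $\phi^*$ treats exactly those $x$ with $s_{\welfare}(x)<r^*$, i.e.\ $\tau(x)>\lambda^*\gamma(x)$, under the conventions $a/0=\pm\infty$. It randomizes on the tie set $\tau=\lambda^*\gamma$.

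The first thing to check is that $\phi^*$ maximizes $\EE[\phi(\tau-\lambda^*\gamma)]$ pointwise. This needs care at $\gamma(x)=0$. There $s=-\infty$ if $\tau>0$, so $x$ is treated, which is correct. If $\tau<0$ then $s=+\infty$, so $x$ is untreated, which is also correct. If $\tau=0$ then $s=0$ by the $0/0$ convention, and such points contribute nothing.

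With pointwise optimality in hand, any feasible $\phi$ satisfies
\[
\EE[\phi\tau]\le \EE[\phi^*(\tau-\lambda^*\gamma)]+\lambda^*\EE[\phi\gamma]\le \EE[\phi^*\tau]+\lambda^*\big(\alpha-\EE[\phi^*\gamma]\big).
\]
So optimality follows from feasibility of $\phi^*$ together with complementary slackness $\lambda^*(\EE[\gamma\phi^*]-\alpha)=0$.

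\emph{Step 3: feasibility and slackness.} The case split is as follows.

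\begin{itemize}
\item If $H(0)\le\alpha$, then $r^*=0$, $\eta^*=0$, and $\phi^*=\ind\{s<0\}=\ind\{\tau>0\}$ (including the $\gamma=0$ points). It is feasible and $\lambda^*=0$, so slackness is trivial.
\item If $H(0)>\alpha$, note that $H$ is nondecreasing and left-continuous in $r$, since it is built from strict inequalities. With $r^*=\sup\{r<0: H(r)\le\alpha\}$, left-continuity gives $H(r^*)\le\alpha$. For $r\downarrow r^*$ we get $H(r)>\alpha$, and the right limit equals $H(r^*)+\EE[\gamma\ind\{s=r^*\}]\ge\alpha$. This yields $\eta^*\in[0,1]$ with equality, so the constraint binds.
\end{itemize}

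In the second case we also need $r^*<0$ strictly, and $r^*>-\infty$. The latter holds because $H(-\infty)$ only involves points with $\gamma=0$, so $H(-\infty)=0\le\alpha$, which makes the set nonempty. For the former, $H(0^-)$ may differ from $H(0)$ only through mass at $s=0$. I would handle this by noting that if $r^*=0$ then $\alpha\ge H(0)$, contradicting $H(0)>\alpha$. The "in particular" statement follows because $H(0)=\EE[\gamma\ind\{s<0\}]=\EE[\gamma\ind\{\tau>0\}]$.

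Theorem~\ref{thm:welfare_optimal} is then the no-atom specialization: $\eta^*$ becomes irrelevant and $\phi^*$ is deterministic.

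The main obstacle I expect is the bookkeeping at $\gamma=0$ and at $s=0$. This means handling the extended-real conventions in the pointwise optimality check, and making sure the sup defining $r^*$ behaves correctly: $H$ must be left-continuous, and the boundary case $r^*=0$ versus $H(0)>\alpha$ has to be ruled out. The Lagrangian step itself is routine.
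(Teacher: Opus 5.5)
Your proposal is correct and follows essentially the same route as the paper: reduce the worst-case constraint to $\mathbb{E}[\gamma(X)\phi(X)]$ via the sharp coupling, write welfare as a constant plus $\mathbb{E}[\tau(X)\phi(X)]$, use monotonicity and left-continuity of $H$ (with $H(r)\to0$ as $r\to-\infty$) to get $r^*\in(-\infty,0)$ and $\eta^*$ in the binding case, and close with the pointwise inequality $\{\tau(x)+r^*\gamma(x)\}\{\psi(x)-\phi^*_{\mathrm{welfare}}(x)\}\leq 0$ plus complementary slackness. One small correction: since $H$ is defined with a strict inequality it is left-continuous, so $H(0^-)=H(0)$ (mass at $s_{\mathrm{welfare}}=0$ only affects the right limit), and this is exactly why your argument that $r^*=0$ would force $H(0)\leq\alpha$ is valid.
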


\subsection{A general theory for doubly robust guarantee with estimated weights}
\label{app:subsec_general_dr}

In this section, we present the general theory for CPL with estimated weights. 
We consider any estimated weights
$\{\hat{w}_i\}_{i=1}^{n+1}$, which may come from a pre-trained weighted function or whose estimation may depend on the data. 
We recall the definition of conformal p-values:
\@\label{eq:def_pval_weight_est_app}
p_{n+1}^{\text{obs}} = \frac{\hat{w}_{n+1} +\sum_{i=1}^n \hat{w}_i   G_i  \ind\{Y_i^\dagger = 0\}  \ind\{\hat{s}(X_i)\leq \hat{s}(X_{n+1})\}}{\hat{w}_{n+1} + \sum_{i=1}^n \hat{w}_i G_i},
\@
where the inclusion indicators are $G_i \sim \text{Bern}(\hat{g}(X_i,T_i))$ for a function $\hat{g}\colon \cX\times\{0,1\}\to [0,1]$ whose training process is independent of the labeled and unlabeled data. 

% \nec{@Ying, I did some cleanup of notations. Please edit if you prefer otherwise. I use $w(X_i)$ to denote the true weight function to keep the parallel notation in the stratified experiment case. I then use $\hat{w}(X_i)$ to denote our estimated weight function. I use $\bar{w}(X_i)$ as the limit of $\hat{w}(X_i)$ without assuming that it coincides with the true weight function. In particular, I do not use $\hat{w}_i$. We can use simplified notations $\widehat{w}_i$ in the proof in the appendix, but I want to use more expressive notation here in the main paper. }

We introduce several high-level conditions that the weight function and other nuisance functions need to satisfy in order to achieve the doubly robust safety guarantee. Importantly, we will later show that our learn-then-balance weights satisfy these conditions under mild model convergence conditions.

First, we require the weights to satisfy the following \emph{approximate balancing condition}. This will involve an estimator $\hat{\fna}(\cdot)$ of the worst-case harm rate function $\fna(x) = \min\{1-\mu_1(x),\mu_0(x)\}$ estimated with data independent of the calibration and test data, such as by plugging in independently estimated outcome models.

\begin{assumption}[Approximate balance]\label{assump:obs_bal}
Let $r_n>0$ be a deterministic sequence obeying $r_n=o(1)$. 
The weights obey  $\hat{w}_i\geq 0$, 
$\frac{1}{|\cI_{\calib}|} \sum_{i\in \cI_{\calib}} (\hat{w}_i - \hat{\omega}(X_i))^2 = O_P(r_n^2)$ for some function $\hat{\omega}(\cdot)\colon \cX\to \RR$ trained independently of the calibration data and test point, and $\frac{\hat{w}_{n+1}\vee \max_{i\in\cI_{\calib}}\hat w_i}{\sum_{i\in \cI_{\calib}}\hat{w}_i} = O_P(1/n)$. In addition, for $\hat{t} = \sup \{t\in \RR\colon \frac{1}{n}\sum_{i=1}^n \hat{\fna}(X_i) \ind\{\hat{s}(X_i)\leq \hat{t}\} \leq \alpha \}$, it holds that 
    \@\label{eq:approx_balance}
&\big|  \textstyle{\frac{1}{|\cI_{\calib}|} \sum_{i\in \cI_{\calib}}} \hat{w}_i \hat{\fna}(X_i)\ind\{\hat{s}(X_i)\leq \hat{t} \} - \textstyle{\frac{1}{n}\sum_{i=1}^n} \hat{\fna}(X_i) \ind\{\hat{s}(X_i)\leq \hat{t}\} \big| =  O_P(r_n),\quad \\ 
&\text{and}\qquad \textstyle{\frac{1}{|\cI_{\calib}|} \sum_{i\in \cI_{\calib}}}\hat{w}_i = 1+ O_P(r_n) .\notag 
\@ 
\end{assumption}

In Assumption~\ref{assump:obs_bal}, the first condition requires the estimated weights $\{\hat{w}_i\}$ to converge to any independently trained weight function with parametric rates. This is easily satisfied if one set $\hat{w}_i=\hat{w}(X_i)$; we shall see that our balancing program also satisfies this general condition.
The key balancing condition requires the estimated weights to approximately balance the capped score functions at the critical cutoff $\hat{t}$, %This reflects the requirements on the true weights $w(X_i)$ in~\eqref{eq:def_pval_weight}: 
% 
% \begin{assumption}[Approximate balance]\label{assump:obs_bal}
% The weights obey  
% $\frac{1}{|\cI_{\calib}|} \sum_{i\in \cI_{\calib}} (\hat{w}(X_i) - \bar{w}(X_i))^2 = O_P(1/n)$ for some function $\bar{w}(\cdot)\colon \cX\to \RR$ trained independently of the calibration data and test point, and $\frac{\hat{w}(X_{n+1})}{\sum_{i\in \cI_{\calib}}\hat{w}(X_i)} = o_P(1/\sqrt{n})$. In addition, for $\hat{t} = \sup \{t\in \RR\colon \frac{1}{n}\sum_{i=1}^n \hat{\fna}(X_i) \ind\{\hat{s}(X_i)\leq  {t}\} \leq \alpha \}$, it holds that 
%     \@\label{eq:approx_balance}
% & \textstyle{\frac{1}{|\cI_{\calib}|} \sum_{i\in \cI_{\calib}}} \hat{w}(X_i) \hat{\fna}(X)\ind\{\hat{s}(X)\leq \hat{t} \} = \textstyle{\frac{1}{n}\sum_{i=1}^n} \hat{\fna}(X_i) \ind\{\hat{s}(X_i)\leq \hat{t}\} + o_P(1/\sqrt{n}),\quad \\ 
% &\text{and}\qquad \textstyle{\frac{1}{|\cI_{\calib}|} \sum_{i\in \cI_{\calib}}}\hat{w}(X_i) = 1+ o_P(1/\sqrt{n}) .\notag 
% \@ 
% \end{assumption}
 % Intuitively, the ``correct'' weights $w(X_i)$ (up to proper normalization) 
% it should ensure~\eqref{eq:approx_balance} (and in fact the balance of any function of $X$ between two groups) due to the Markov's inequality. To protect against model mis-specification in weight estimation, the above finite-sample balance condition enforces the score-relevant function must be balanced in finite samples, 
following ideas in the balancing weights literature~\citep{hainmueller2012entropy,zubizarreta2015stable,jin2025cross} but with specific choice of the balancing features to yield favorable statistical properties. %We shall see in Proposition~\ref{prop:bal_conv} that this condition can be achieved by simply processing any pre-trained weight and score functions. 

% The second condition is a mild model convergence assumption that the estimated functions converge to any unknown, deterministic functions that are not necessarily the true counterparts. 

% \begin{assumption}[Convergence]\label{assump:obs_conv}
%    For the function $\hat{\omega}(\cdot)$ in Assumption~\ref{assump:obs_bal}, there exists some fixed function $\bar{w}\colon \cX\to \RR$ such that $\|\hat{\omega}(\cdot)-\bar{w}(\cdot)\|_{L_2(\PP_X)}=o_P(1)$.  
%     In addition, $\|\hat{\fna}(\cdot)- \bar{\fna}(\cdot)\|_{L_2(\PP_X)}=o_P(1)$  
%     %and $\|\hat{g} (\cdot,\cdot)-\bar{g}(\cdot,\cdot)\|_{L_2(\PP_{X,T})}=o_P(1)$ 
%     for some fixed function $\bar{\fna}\colon \cX\to \RR$. % and  $\bar{g}\colon \cX\times\{0,1\}\to [0,1]$. 
% \end{assumption}

The following theorem shows that, under Assumptions~\ref{assump:obs_bal}, the conformal policy learning $\hat{\pi}_{\text{obs}}(X_{n+1}) \coloneqq \ind\{p_{n+1}^{\text{obs}} \leq \alpha\}$ achieves the safety guarantees asymptotically if either the outcome conditional expectation function $\hat{\mu}_t(x)$ or the weight function $\hat{w}(x)$, but not necessarily both, is consistently estimated. The proof of Theorem~\ref{thm:dr_obs_general} is in Appendix~\ref{app:subsec_dr_obs_general}.  
Recall that $\cT_n$ is the $\sigma$-algebra of the training process.

\begin{theorem}[Model double robustness]\label{thm:dr_obs_general}
Suppose %$\EE[\bar{w}(X)\hat{g}(X,T)]>0$, and 
Assumption~\ref{assump:obs_bal} holds for some $r_n = o(1)$. Define $p_G:=\PP(G=1\mid\cT_n)$ and
$w^\circ(x):=p_Gw(x)$. Then, 
% Assume the following regularity conditions hold. (1)
%     $t\mapsto \bar{F}(t):=\frac{  \EE[\bar{w}(X) \hat{g}(X,T)\bar\gamma(X)\ind\{\hat{s}(X)\leq t\}]}{\EE[\bar{w}(X)\hat{g}(X,T)]}$ is strictly increasing at $\tau^*:=\sup\{t\in \RR\colon \bar{F}(t)\leq \alpha\}$. (2) $\EE[\bar{w}(X)\hat{g}(X,T)]>0$. (2) $t\mapsto \tilde{F}(t)=\EE[\hat{\fna}(X)\ind\{\hat{s}(X)\leq t\}]$ also has a positive local slope at $\hat{t}^* := \sup\big\{t\in \RR\colon \tilde{F}(t) \leq \alpha\big\}$. (3) $\sup_x |1/w(x)|<\infty$,  $\sup_x|\hat{\fna}(x)|\leq M_1$ for some constant $M_1>0$. (4) $\hat{s}(X)$ has no point mass, and the density of $\hat{s}(X)$ is uniformly upper bounded by a constant $M_2>0$.
    %, there exists some constant $c>0$ and $\delta>0$ such that $\tilde{F}(\hat{t}^*-u)\leq \alpha-cu$ and $\tilde{F}(\hat{t}^*+u)\geq \alpha + cu$. 
    % Then, under Assumptions~\ref{assump:obs_bal} and~\ref{assump:obs_conv}, we have 
    $$\limsup_{n\to \infty}\PP(Y_{n+1}(\hat{\pi}_{\textnormal{obs}}(X_{n+1}))<Y_{n+1}(0))\leq \alpha$$ under either of the following conditions:
    \begin{enumerate}[label=(\roman*).]
        \item The weight $\hat\omega(\cdot)$ obeys  $\|\hat\omega - w^\circ\|_{L_2(\PP_X)}=o_P(1)$, and there
exist constants $c,C >0$ such that $p_G\geq c$ and
$\|w^\circ\|_{L_2(\PP_X)}\leq C$ with probability tending to one.%  $\bar{w}(\cdot) \propto w(\cdot)$ for the true weight function $w(\cdot)$ in~\eqref{eq:def_weight}.
        \item The outcome models obey $\|\hat\fna - \fna\|_{L_2(\PP_X)}=o_P(1)$, and there
exist constants $c,C,\eta>0$ such that, with probability
tending to one,
$q(x)\geq c$, $c\leq\hat\omega(x)\leq C$, $x\in\cX$, and 
the function
$H_n(t):=
\EE[\hat\gamma(X)\ind\{\hat s(X)\leq t\}\mid\cT_n]$ 
satisfies
$H_n(t_n^\circ-u)\leq\alpha-cu$ and 
$H_n(t_n^\circ+u)\geq\alpha+cu$ for  $0<u\leq\eta$, 
where $t_n^\circ:=\sup\{t\in\RR:H_n(t)\leq\alpha\}$, and
$\PP\{a<\hat s(X)\leq b\mid\cT_n\}\leq C(b-a)$
for every $a<b$.
        % the following regularity conditions hold: $t\mapsto \bar{F}(t):=\frac{  \EE[\bar{w}(X) \hat{g}(X,T)\bar\gamma(X)\ind\{\hat{s}(X)\leq t\}]}{\EE[\bar{w}(X)\hat{g}(X,T)]}$ is strictly increasing at $\tau^*:=\sup\{t\in \RR\colon \bar{F}(t)\leq \alpha\}$, and $\hat{s}(X)$ has no point mass. 
    \end{enumerate}
\end{theorem}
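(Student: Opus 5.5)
The plan is to reduce the p-value rule to a data-driven threshold on $\hat s$ and then compare that threshold with a population threshold whose worst-case harm is at most $\alpha$. Because $p^{\text{obs}}_{n+1}$ is nondecreasing in $\hat s(X_{n+1})$, we have $\hat\pi_{\text{obs}}(X_{n+1})=\ind\{\hat s(X_{n+1})\leq \hat T\}$ up to ties. Here
\[
\hat T=\sup\{t:\hat F(t)\leq\alpha\},\qquad \hat F(t)=\frac{\sum_{i\in\cI_\calib}\hat w_i\ind\{Y_i^\dagger=0\}\ind\{\hat s(X_i)\leq t\}}{\sum_{i\in\cI_\calib}\hat w_i},
\]
and the $\hat w_{n+1}$ terms contribute only $O_P(1/n)$ by Assumption~\ref{assump:obs_bal}. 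Let $\cA_n$ be the calibration/training $\sigma$-field. The test point is independent of $\cA_n$ and $f_P\leq\fna$, so
\[
\PP(\text{harm}\mid\cA_n)\leq \EE[\fna(X)\ind\{\hat s(X)\leq \hat T\}\mid\cA_n]=:R(\hat T).
\]
Since $R\leq 1$, dominated convergence reduces the theorem to showing $R(\hat T)\leq\alpha+o_P(1)$.

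Two facts are shared by both cases. First, we replace $\hat w_i$ by $\hat\omega(X_i)$ in $\hat F$. By Cauchy--Schwarz, the error is $O_P(r_n)$ uniformly in $t$, using $\frac{1}{|\cI_\calib|}\sum\hat w_i=1+O_P(r_n)$. Second, conditional on $\cT_n$, the calibration units are i.i.d. given $G_i=1$. The class $\{\ind\{\hat s\leq t\}\}_t$ is VC, so a uniform LLN gives
\[
\sup_t\Bigl|\frac{1}{|\cI_\calib|}\sum_{\cI_\calib}\hat\omega(X_i)\ind\{Y_i^\dagger=0\}\ind\{\hat s(X_i)\leq t\}-\EE[\hat\omega(X)\ind\{Y^\dagger=0\}\ind\{\hat s\leq t\}\mid G=1,\cT_n]\Bigr|=o_P(1).
\]
The central identity is
\[
\EE[G\ind\{Y^\dagger=0\}\mid X]=e\hat g(X,1)(1-\mu_1)+(1-e)\hat g(X,0)\mu_0\geq \fna(X)/w(X).
\]
Equivalently, $\EE[\ind\{Y^\dagger=0\}\mid X,G=1]\geq\fna(X)$.

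Case (i): consistent weights. Replace $\hat\omega$ by $w^\circ=p_Gw$, paying $\|\hat\omega-w^\circ\|_{L_2}=o_P(1)$ with bounded indicators. Since $p_G\geq c$, conditioning on $G=1$ produces exactly the factor $1/p_G$, and the identity gives, uniformly in $t$,
\[
\hat F(t)\geq \EE[w(X)\,\EE[G\ind\{Y^\dagger=0\}\mid X]\,\ind\{\hat s(X)\leq t\}]-o_P(1)\geq R(t)-o_P(1).
\]
Evaluating at (just below) $\hat T$, where $\hat F\leq\alpha$, yields $R(\hat T)\leq\alpha+o_P(1)$. Ties and the left limit are handled by the no-atom/Lipschitz behavior of $\hat s(X)$.

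Case (ii): consistent $\hat\fna$, possibly wrong weights. Here the balance constraint does the work. The key is to show $\hat T\leq t_n^\circ+u$ for any fixed $u>0$ with probability tending to one. By the uniform LLN and the identity applied with $\EE[\cdot\mid G=1]$, for $t$ near $t^\circ_n$ we get
\[
\hat F(t)\geq \frac{1}{|\cI_\calib|}\sum_{\cI_\calib}\hat\omega(X_i)\fna(X_i)\ind\{\hat s(X_i)\leq t\}-o_P(1).
\]
Next swap $\fna$ for $\hat\fna$ at cost $C\|\hat\fna-\fna\|_{L_2}$, using $\hat\omega\leq C$ and $q\geq c$ to change measure to $\PP_X$. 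Then move from $\hat t$ to $t$ using the density bound on $\hat s$. The balance condition~\eqref{eq:approx_balance} transfers the weighted calibration average to $\frac1n\sum\hat\fna\ind\{\hat s\leq\hat t\}$, which is within $O_P(n^{-1/2})$ of $H_n(\hat t)$. The slope condition on $H_n$ gives $\hat t\to t_n^\circ$, so $\hat F(t_n^\circ+u)\geq H_n(t_n^\circ+u)-o_P(1)\geq\alpha+cu-o_P(1)>\alpha$. Hence $\hat T\leq t_n^\circ+u$. Finally,
\[
R(\hat T)\leq H_n(t_n^\circ+u)+\|\hat\fna-\fna\|_{L_1}\leq \alpha+Cu+o_P(1),
\]
the last step using the upper slope of $H_n$ and the density bound on $\hat s$. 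Letting $u\downarrow0$ completes the case.

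The main obstacle is case (ii). The balancing equation holds only at the single data-dependent cutoff $\hat t$, whereas $\hat T$ is defined through $\hat F$ at other thresholds. Bridging the two requires uniform-in-$t$ control of the weighted calibration sums, which I would obtain by conditioning on $\cT_n$ and using a VC/bracketing argument. It also requires a careful use of the two-sided slope condition so that an $o_P(1)$ perturbation of $\hat F$ moves $\hat T$ by only $o(1)$.
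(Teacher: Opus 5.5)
Your case (i) argument is sound, and your overall route is leaner than the paper's. Because only an upper bound on the harm is needed, the pointwise inequality $\mathbb{E}[\mathbf{1}\{Y^\dagger=0\}\mid X,G=1]\geq\gamma(X)$ replaces the paper's randomized label $Y^{**}$, which the paper builds to have conditional mean exactly $\gamma$. In case (ii), a one-sided crossing $\hat T\leq t_n^\circ+u$ suffices where the paper proves $\hat\tau_0^*-t_n^\circ=o_P(1)$. One small repair in case (i): atomlessness of $\hat s(X)$ is not among its hypotheses, but you do not need it. The inequality $p^{\mathrm{obs}}_{n+1}\geq\hat F(\hat s(X_{n+1}))$ holds deterministically, and your uniform-in-$t$ bound $\hat F(t)\geq R(t)-o_P(1)$ passes to left limits. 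So you can bound the harm directly on the half-line $\{t:\hat F(t)\leq\alpha\}$.

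The genuine gap is the case (ii) step $\hat F(t_n^\circ+u)\geq H_n(t_n^\circ+u)-o_P(1)$, which is false when the weights are misspecified.

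\emph{Why it fails.} Uniform laws of large numbers only show that your lower bound for $\hat F$ tracks the weighted curve $F_n(t):=\mathbb{E}[q\hat\omega\hat\gamma\,\mathbf{1}\{\hat s\leq t\}\mid\mathcal{T}_n]/\mathbb{E}[q\hat\omega\mid\mathcal{T}_n]$. The balance constraint ties $F_n$ to $H_n$ only at the single cutoff $\hat t\approx t_n^\circ$. At $t_n^\circ+u$ with $u$ fixed, the increment over $(t_n^\circ,t_n^\circ+u]$ is reweighted by $q\hat\omega/\mathbb{E}[q\hat\omega]$ instead of $1$. That is an $O(u)$ discrepancy, not $o_P(1)$. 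The density bound is an \emph{upper} Lipschitz estimate, so transporting the balance identity from $\hat t$ to $t_n^\circ+u$ only yields $\hat F(t_n^\circ+u)\geq H_n(t_n^\circ+u)-Cu-o_P(1)\geq\alpha+(c-C)u-o_P(1)$. Since necessarily $c\leq C$, this never gives a strict crossing.

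\emph{Concrete example.} Take $\hat s(X)\sim\mathrm{Unif}(0,1)$, $\hat\gamma=\gamma\equiv1/2$, $q\equiv1/2$ and $\alpha=0.1$, so $t_n^\circ=0.2$. Let $\hat\omega$ equal $1$, $0.5$, $1.5$ on $[0,0.2]$, $(0.2,0.6]$, $(0.6,1]$. Both balance constraints hold, yet for $0<u\leq0.4$ you get $F_n(t_n^\circ+u)=\alpha+u/4<H_n(t_n^\circ+u)=\alpha+u/2$.

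\emph{The fix.} What is missing is a \emph{lower} bound on the weighted increment. Since $q\hat\omega\geq c^2$ and $\mathbb{E}[q\hat\omega]\leq C$, you have $F_n(t_2)-F_n(t_1)\geq(c^2/C)\{H_n(t_2)-H_n(t_1)\}$. Combine this with $F_n(\hat t)=\alpha+o_P(1)$, $\hat t-t_n^\circ=o_P(1)$ and the lower slope of $H_n$. This gives $\hat F(t_n^\circ+u)\geq\alpha+c'u-o_P(1)>\alpha$ with probability tending to one. This step is where the hypothesis $\hat\omega\geq c$ is needed; your proposal never invokes it. It is also how the paper transfers the slope from $H_n$ to the weighted curve. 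With this step supplied, your one-sided argument closes.
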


Notably, the doubly robust safety guarantees do not require the score function $\hat{s}(\cdot)$ to converge to any true function, which inherits the model-free nature of conformal prediction. The asymptotic safety guarantee hinges on the convergence of $\hat{w}(\cdot)$ and/or $\hat{\fna}(\cdot)$. In particular, in the case where the weights do not converge to $w(\cdot)$, consistent outcome models and the balancing condition still ensure harm rate control.

We further show that CPL with balanced weights achieves rate double robustness: if both the outcome models and the weight functions are consistently estimated at slow, nonparametric rates $o_P(n^{-1/4})$, the excess harm rate above $\alpha$ is of the parametric order $O(1/\sqrt{n}).$ The proof of Theorem~\ref{thm:dr_obs_rate_general} is in Appendix~\ref{app:subsec_obs_dr_rate_general}. Recall that 
$p_G:=\PP(G=1\mid\cT_n)$ and 
$w^\circ(x):=p_Gw(x)$. 

% \begin{assumption}[Slow convergence to correct models]\label{assump:obs_conv_rate}
%     For the function $\hat{w}(\cdot)$ in Assumption~\ref{assump:obs_bal}, it holds that $\|\hat{w}(\cdot) - w(\cdot)\|_{L_\infty(\PP)}=O_P(n^{-1/4})$. In addition,  $\|\hat{\fna}(\cdot)-\fna(\cdot)\|_{L_1(\PP)}=O_P(n^{-1/4})$.
% \end{assumption}

\begin{assumption}[Slow convergence]
\label{assump:obs_slow}
Let $\hat\omega(\cdot)$ be the reference function in
Assumption~\ref{assump:obs_bal}, and 
Assume  
$\|\hat\omega-w^\circ\|_{L_\infty(\PP_X)}
=
O_P(n^{-1/4})$ and 
$\|\hat\gamma-\gamma\|_{L_1(\PP_X)}
=
O_P(n^{-1/4})$. 
\end{assumption}

% Finally, we introduce the technical definition of local slope to formally state the rate double robustness results. 
% \begin{definition}[Local slope]
%     A function $H\colon \RR\to \RR$ has local slope $c\in \RR$ at point $t$ if there exists some constant $\delta>0$ such that $H(t-u)\leq H(t)-cu$ and $H(t+u)\geq H(t)+cu$ for any $u\in (0,\delta)$. 
% \end{definition}

\begin{theorem}[Rate double robustness]
\label{thm:dr_obs_rate_general}
Suppose Assumption~\ref{assump:obs_bal} holds with
$r_n=O(n^{-1/2})$ and
Assumption~\ref{assump:obs_slow} holds. Suppose there exist
deterministic constants $c,C,\eta>0$ such that, with probability
tending to one, the following regularity conditions hold:
\begin{enumerate}
\item[(i)]
$q(x)\geq c$ and $0\leq\hat\gamma(x)\leq1$ for every $x\in\cX$;

\item[(ii)]
the function 
$H_n(t):=\EE[
\hat\gamma(X)\ind\{\hat s(X)\leq t\}
\given\cT_n]$ 
satisfies 
$H_n(t_n^\circ-u)\leq\alpha-cu$ and 
$H_n(t_n^\circ+u)\geq\alpha+cu$ for all 
$0<u\leq\eta$, 
where
$t_n^\circ:=\sup\{t\in\RR:H_n(t)\leq\alpha\}$;

\item[(iii)]
for every $a<b$, 
$\PP\{a<\hat s(X)\leq b\mid\cT_n\}
\leq C(b-a)$. 
\end{enumerate}
Then
\[
\left[
\PP\left\{
Y_{n+1}\bigl(\hat\pi_{\obs}(X_{n+1})\bigr)
<
Y_{n+1}(0)
\,\middle|\,
\cA_n
\right\}
-\alpha
\right]_+
=
O_P(n^{-1/2}).
\]
\end{theorem}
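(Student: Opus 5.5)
Conditional on $\cA_n$, the decision $\hat\pi_{\obs}(X_{n+1})$ depends on the test point only through $\hat s(X_{n+1})$ and $\hat w_{n+1}$. Up to an $O_P(1/n)$ perturbation (the term $\hat w_{n+1}$, which is negligible by Assumption~\ref{assump:obs_bal}), $p^{\obs}_{n+1}\le\alpha$ is equivalent to $\hat s(X_{n+1})\le \hat T$. The threshold $\hat T$ is the largest $t$ with
\[
\hat F(t):=\frac{\sum_{i\in\cI_\calib}\hat w_i\ind\{Y_i^\dagger=0\}\ind\{\hat s(X_i)\le t\}}{\sum_{i\in\cI_\calib}\hat w_i}\le\alpha .
\]
By the tower property and $f_\PP\le\gamma$, the conditional harm rate is at most $\EE[\gamma(X)\ind\{\hat s(X)\le\hat T\}\mid\cA_n]$. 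I will write this as $H_n(\hat T)+\EE[(\gamma-\hat\gamma)(X)\ind\{\hat s(X)\le\hat T\}\mid\cA_n]$. The second term is $O_P(n^{-1/4})$ in $L_1$ by Assumption~\ref{assump:obs_slow}. A crude bound is not enough here, so I will keep the signed difference and pair it with a matching term from the calibration side, as described next. The goal is then to show $H_n(\hat T)\le\alpha+O_P(n^{-1/2})$ modulo terms that cancel into a product of errors.

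\textbf{Step 1: decompose $\hat F$.} Replace $\hat w_i$ by $\hat\omega(X_i)$; by Assumption~\ref{assump:obs_bal} with $r_n=n^{-1/2}$ and Cauchy--Schwarz this costs $O_P(n^{-1/2})$, and the normalizer is $1+O_P(n^{-1/2})$. Conditional on $\cT_n$, the indicators $G_i\ind\{Y_i^\dagger=0\}$ given $X_i$ have mean $q(X_i)$ in the sense of the appendix. Their mean is at least $\gamma(X_i)$ times the inclusion probability, because $Y_i^\dagger\le Y_i^*$ means the proxy label never understates harm. I will use this conservativeness to bound $\hat F$ from below by an empirical process in $\gamma$ plus a martingale-type noise term. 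For the noise, a uniform-in-$t$ concentration argument over the monotone class $\{\ind\{\hat s\le t\}\}$ (a VC class, with DKW-type bounds) gives $O_P(n^{-1/2})$ uniformly. This yields, uniformly in $t$,
\[
\hat F(t)\ \ge\ \EE\big[w^\circ(X)p\,\gamma\ldots\big]\text{-type population term}+O_P(n^{-1/2}),
\]
which simplifies to $\EE[\gamma(X)\ind\{\hat s(X)\le t\}\mid\cT_n]+R(t)+O_P(n^{-1/2})$. Here $R(t)$ collects the cross term $\EE[(\hat\omega/w^\circ-1)(X)\,(\gamma-\hat\gamma)(X)\ind\{\hat s\le t\}]$ and the balancing term. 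The balancing constraint \eqref{eq:approx_balance} at $\hat t$ lets me replace $\hat\omega$-weighted averages of $\hat\gamma\ind\{\hat s\le\hat t\}$ by unweighted ones at cost $O_P(n^{-1/2})$. What remains is a product $\|\hat\omega-w^\circ\|_\infty\|\hat\gamma-\gamma\|_{L_1}=O_P(n^{-1/2})$.

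\textbf{Step 2: localize $\hat T$.} This is the main obstacle. The balance in \eqref{eq:approx_balance} holds only at the data-dependent cutoff $\hat t$, whereas the decision uses $\hat T$. I will first show that $\hat t=t_n^\circ+O_P(n^{-1/2})$: the empirical $H_n$ concentrates uniformly, and condition (ii) gives a slope $c$ that inverts the concentration. I will then show that $\hat T\le\hat t+O_P(n^{-1/2})$ by contradiction. If $\hat T>\hat t+Kn^{-1/2}$, then $\hat F(\hat T)\ge\hat F(\hat t+Kn^{-1/2})$, and by Step~1 this is at least $H_n(\hat t)+cKn^{-1/2}-O_P(n^{-1/2})>\alpha$ for large $K$, which contradicts the definition of $\hat T$. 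The density bound (iii) controls how the $O_P(n^{-1/2})$ perturbation from $\hat w_{n+1}$ and the increment between $\hat t$ and $\hat T$ translate into probability mass of $\{\hat s(X)\le t\}$.

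\textbf{Step 3: conclude.} Combining the steps gives
\[
\EE[\gamma(X)\ind\{\hat s\le\hat T\}\mid\cA_n]\le \EE[\gamma\ind\{\hat s\le\hat t\}]+O_P(n^{-1/2}).
\]
The right-hand side equals $\alpha+[\text{balanced product term}]+O_P(n^{-1/2})$, by the definition of $\hat t$, the DKW bound for the empirical $\hat\gamma$-curve, and the cancellation of the first-order $\gamma-\hat\gamma$ terms between the test and calibration sides through balance. Taking positive parts yields the $O_P(n^{-1/2})$ bound.
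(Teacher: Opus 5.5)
Your opening reduction is sound. The harm is at most $\EE[\gamma(X)\ind\{\hat s(X)\le\hat T\}\mid\cA_n]$, and lower-bounding $\hat F$ by an $\hat\omega$-weighted $\gamma$-curve is a legitimate, simpler substitute for the paper's coupled label $Y^{**}$. Two small corrections there: the justification is $\PP(Y^\dagger=0\mid X,T)=T\{1-\mu_1(X)\}+(1-T)\mu_0(X)\ge\gamma(X)$, since $Y^\dagger\le Y^*$ alone only gives the weaker lower bound $\PP(Y^*=0\mid X)\le\gamma(X)$; and the conditional mean of $G\ind\{Y^\dagger=0\}$ is not $q(X)$.

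The gap is in Steps 1--3. Balance controls the weighting bias $\EE[\{\hat\omega^\dagger(X)-w(X)\}\hat g(X,T)\hat\gamma(X)\ind\{\hat s(X)\le t\}\mid\cT_n]$, with $\hat\omega^\dagger=\hat\omega/p_G$, only at $t=\hat t$. Elsewhere it is only bounded by $O_P(n^{-1/2})+O_P(n^{-1/4})\,|t-\hat t|$, so your lower bound is not uniform in $t$ as claimed. Once you rewrite it in terms of $H_n$, it also carries the first-order term $\EE[(\gamma-\hat\gamma)(X)\ind\{\hat s(X)\le t\}\mid\cT_n]=O_P(n^{-1/4})$.

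Step~2 drops both terms, and its conclusion $\hat T\le\hat t+O_P(n^{-1/2})$ is false in general. Take $\hat\mu_1=\mu_1-\epsilon_n$ and $\hat\mu_0=\mu_0+\epsilon_n$ with $\epsilon_n\asymp n^{-1/4}$. Then $\hat g=g^*$ and $\hat\gamma=\gamma+\epsilon_n$. With exact weights ($\hat\omega=w^\circ$), $\hat F$ tracks $t\mapsto\EE[\gamma(X)\ind\{\hat s(X)\le t\}]$, whose $\alpha$-crossing lies at distance of order $\epsilon_n$ to the right of $\hat t$ by (ii)--(iii).

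Step~3 fails even when $\hat T\le\hat t$. The quantity $\EE[\gamma(X)\ind\{\hat s(X)\le\hat t\}\mid\cA_n]=\alpha+\EE[(\gamma-\hat\gamma)(X)\ind\{\hat s(X)\le\hat t\}\mid\cT_n]+O_P(n^{-1/2})$ is purely test-side, so no calibration term remains to cancel the first-order outcome error. With $\hat\gamma=\gamma-\epsilon_n$ it equals $\alpha$ plus a term of order $n^{-1/4}$. Passing from $\hat T$ to $\hat t$ discards exactly what yields the rate: the calibration data see $\gamma$, not $\hat\gamma$.

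The missing idea is to compare the test-side harm and the calibration-side weighted harm at the \emph{same} data-driven cutoff, and to use $\hat t$ only through the balance identity.

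\begin{itemize}
\item Let $\tau^*$ be the cutoff of the $\gamma$-based lower curve (the paper's $\hat\tau_0^*$). It dominates $\hat T$, so the harm is at most $\EE[\gamma(X)\ind\{\hat s(X)\le\tau^*\}\mid\cA_n]$. Meanwhile the weighted $\gamma$-average at $\tau^*$ is at most $\alpha+O_P(n^{-1/2})$.
\item Their difference is a pure weighting bias. Adding the $O_P(n^{-1/2})$ balance residual at $\hat t$ turns it, up to $O_P(n^{-1/2})$, into $\EE[\{w(X)-\hat\omega^\dagger(X)\}\hat r(X,T)]$ with $\hat r=\hat g\,\{\gamma\ind\{\hat s\le\tau^*\}-\hat\gamma\ind\{\hat s\le\hat t\}\}$.
\item This is bounded by $\|\hat\omega^\dagger-w\|_{L_\infty(\PP_X)}\{\|\hat\gamma-\gamma\|_{L_1(\PP_X)}+C|\tau^*-\hat t|\}$.
\end{itemize}

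You then need only the crude localization $|\tau^*-\hat t|=O_P(n^{-1/4})$. This holds because the lower curve is uniformly $O_P(n^{-1/4})$-close to $H_n$, $H_n$ has slope $c$ at $t_n^\circ$, and $\hat t=t_n^\circ+O_P(n^{-1/2})$. Working at $\tau^*$ rather than $\hat T$ also spares you from localizing $\hat T$ itself, which with a conservative proxy ($\hat g$ far from $g^*$) can sit far below $\hat t$.
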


\subsection{Technical conditions for the balancing algorithm}

In this section, we provide the technical conditions and proofs that our learn-then-balance weights satisfy the needed conditions in the preceding parts, which together lead to the doubly robust safety guarantees in the main text. 
We first define some preparatory notations. 
Let $\cT_n$ denote the $\sigma$-field generated by the training
process, so that $\hat e$, $\hat g$, $\hat\gamma$, $\hat s$, and
$\tilde w$ are fixed conditional on $\cT_n$. 
Let $G$ be a generic inclusion indicator satisfying
$G\mid X,T,\cT_n\sim\operatorname{Bernoulli}\{\hat g(X,T)\}$. Define
$q(x):=e(x)\hat g(x,1)+\{1-e(x)\}\hat g(x,0)$,
$p_G:=\EE[q(X)\mid\cT_n]$, and $m_n:=|\cI_{\calib}|$.
For any $\cT_n$-measurable functions
$r\colon\cX\to[0,1]$ and $u\colon\cX\to\RR^+$, define
\[
h_t^r(x):=r(x)\ind\{\hat s(x)\leq t\},\qquad
\psi_t^{r,u}(x):=(1,h_t^r(x),u(x))^\top,
\]
and
\[
t^\circ(r)
:=
\sup\left\{
t\in\RR:
\EE[h_t^r(X)\mid\cT_n]\leq\alpha
\right\}.
\]
Also, let
\[
M_{r,u}(t)
:=
\EE[\psi_t^{r,u}(X)\psi_t^{r,u}(X)^\top
\mid G=1,\cT_n],
\qquad
b_{r,u}(t)
:=
\EE[\psi_t^{r,u}(X)\mid\cT_n].
\]
Whenever $M_{r,u}(t)$ is invertible, define
\[
\omega_t^{r,u}(x)
:=
\psi_t^{r,u}(x)^\top
M_{r,u}(t)^{-1}b_{r,u}(t),
\qquad
\mathcal W_n(r,u)
:=
\omega_{t^\circ(r)}^{r,u}.
\]
For the fitted functions, abbreviate
$t^\circ:=t^\circ(\hat\gamma)$,
$M(t):=M_{\hat\gamma,\tilde w}(t)$,
$b(t):=b_{\hat\gamma,\tilde w}(t)$, and
$\omega_t:=\omega_t^{\hat\gamma,\tilde w}$.
Also, write $\psi_t:=\psi_t^{\hat\gamma,\tilde w}$ and define
\[
\hat t
:=
\sup\Big\{
t\in\RR:
\frac1n\sum_{i=1}^n
h_t^{\hat\gamma}(X_i)\leq\alpha
\Big\},
\qquad
\hat M_n(t)
:=
\frac1{m_n}\sum_{i\in\cI_{\calib}}
\psi_t(X_i)\psi_t(X_i)^\top,
\]
and $\bar\psi_n(t):=n^{-1}\sum_{i=1}^n\psi_t(X_i)$. Let 
$B_n(t)
:=
\left\{
b\in\RR^3:
b_1=1,\ 
|b_k-\bar\psi_{n,k}(t)|\leq\delta_n,\ k\in\{2,3\}
\right\}$. 
Whenever $\hat M_n(t)$ is invertible, define, for any $x\in \cX$ and $t\in \mathbb R$, 
\[
\hat b_n(t)
\in
\argmin_{b\in B_n(t)}
b^\top\hat M_n(t)^{-1}b,
\qquad
\hat a_n(t):=\hat M_n(t)^{-1}\hat b_n(t),
\qquad
\hat\omega_t(x):=\psi_t(x)^\top\hat a_n(t).
\]
\begin{assumption}[Regularity of the balancing program]
\label{assump:bal_reg} 
Recall that $t^\circ=t^\circ(\hat\fna)$,
$M(t)=M_{\hat\fna,\tilde w}(t)$, and
$\omega_t=\omega_t^{\hat\fna,\tilde w}$. There exist deterministic
constants $c_0,c_1,c_2,C,\eta>0$ such that, with probability tending
to one over the training process, the following conditions hold.

\begin{enumerate}
\item[(i)] For every $x\in\cX$,
$q(x)\geq c_0$, $0\leq\hat\fna(x)\leq1$, and
$0<\tilde w(x)\leq C$.

\item[(ii)] Let $\cN:=\{t\in\RR:|t-t^\circ|\leq\eta\}$. The population
balancing problem is uniformly nondegenerate and its solution is
uniformly interior on $\cN$:
$\inf_{t\in\cN}\lambda_{\min}\{M(t)\} \geq c_1$, and 
$\inf_{ t\in\cN, x\in\cX}\omega_t(x)\geq c_1$. 

\item[(iii)] The population cutoff is locally regular: for every
$0<u\leq\eta$, it holds that 
$\EE[h_{t^\circ-u}(X)\mid\cT_n]
\leq \alpha-c_2u$, and 
$\EE[h_{t^\circ+u}(X)\mid\cT_n]
\geq \alpha+c_2u$.

\item[(iv)] The conditional distribution of $\hat s(X)$ has a uniformly
bounded density in the sense that, for every $a<b$,
$\PP\{a<\hat s(X)\leq b\given\cT_n\}\leq C(b-a)$.
\end{enumerate}
\end{assumption}

Under the above regularity conditions, the following lemma shows that
the learned weights $\hat w_i$ are close to the population balancing
rule $\mathcal W_n(\hat\gamma,\tilde w)(X_i)$. Moreover, if the
preliminary weight function consistently estimates the oracle weight
$w$, then the population balancing rule consistently estimates the
normalized oracle weight $w^\circ$; under an $O_P(n^{-1/4})$ uniform
rate, it inherits the same rate. 
The proof is in Appendix~\ref{app:subsec_lem_bal_weight_approx}.

\begin{lemma}[Approximation properties of the balancing weights]
\label{lem:bal_weight_approx}
Suppose $\delta_n=O(n^{-1/2})$ and
Assumption~\ref{assump:bal_reg} holds. Then, with probability tending
to one, the balancing program has the unique solution
$\hat w_i=\hat\omega_{\hat t}(X_i)$ for
$i\in\cI_{\calib}$. Define
$\hat w_{n+1}:=\hat\omega_{\hat t}(X_{n+1})$. Then
\[
|\hat t-t^\circ|=O_P(n^{-1/2}),
\quad
\frac1{m_n}\sum_{i\in\cI_{\calib}}
\left\{
\hat w_i-\mathcal W_n(\hat\gamma,\tilde w)(X_i)
\right\}^2
=O_P(n^{-1/2}),
\quad 
\frac{\hat w_{n+1}\vee
\max_{i\in\cI_{\calib}}\hat w_i}{
\sum_{i\in\cI_{\calib}}\hat w_i
}=O_P(n^{-1}).
\]

Moreover, if $\|\tilde w-w\|_{L_2(\PP_X)}=o_P(1)$, then
\[
\left\|
\mathcal W_n(\hat\gamma,\tilde w)-w^\circ
\right\|_{L_2(\PP_X)}
=o_P(1),
\]
where
$w^\circ(x)=p_Gw(x)$.
Finally, if
$\|\tilde w-w\|_{L_\infty(\PP_X)}
=O_P(n^{-1/4})$, then
\[
\frac1{m_n}\sum_{i\in\cI_{\calib}}
\left\{
\hat w_i-\mathcal W_n(\hat\gamma,\tilde w)(X_i)
\right\}^2
=O_P(n^{-1}),
\qquad
\left\|
\mathcal W_n(\hat\gamma,\tilde w)-w^\circ
\right\|_{L_\infty(\PP_X)}
=O_P(n^{-1/4}).
\]
\end{lemma}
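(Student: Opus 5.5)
The plan is to treat the balancing program~\eqref{eq:def_sbw_main} as a quadratic program whose solution lies in the span of the balancing features, and then compare its empirical version with the population rule $\omega_t$. First, I will write the KKT conditions. Minimizing $\|w\|^2$ subject to affine box constraints on $\frac1{m_n}\sum_i w_i\psi_t(X_i)$ gives, when the nonnegativity constraint is inactive, $w_i=\psi_{\hat t}(X_i)^\top a$ for some $a\in\RR^3$. For such $w$, the constraint vector is $b=\hat M_n(\hat t)a$, and the objective equals $m_n\, b^\top\hat M_n(\hat t)^{-1}b$. So the program reduces to choosing $b\in B_n(\hat t)$ that minimizes this quadratic form, which gives exactly $\hat w_i=\hat\omega_{\hat t}(X_i)$. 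Uniqueness follows from strict convexity once $\hat M_n(\hat t)$ is invertible. The work is to check invertibility, and that $\hat\omega_{\hat t}\geq 0$ so the constraint $w\succeq0$ is inactive, both with probability tending to one.

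Second, I will show $|\hat t-t^\circ|=O_P(n^{-1/2})$. Conditional on $\cT_n$, the class $\{h_t^{\hat\gamma}\}_t$ consists of monotone indicators times a fixed bounded function, so it is VC-type. Hence $\sup_t|\frac1n\sum_i h_t(X_i)-\EE[h_t\mid\cT_n]|=O_P(n^{-1/2})$. Combined with the two-sided local slope in Assumption~\ref{assump:bal_reg}(iii), inverting the cutoff gives the rate. Next, uniformly over $t\in\cN$, $\hat M_n(t)\to M(t)$ and $\bar\psi_n(t)\to b(t)$ at rate $O_P(n^{-1/2})$, again by empirical-process bounds for this VC class. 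For $\hat M_n$ I condition on $G$: calibration points are i.i.d. from the law given $G=1$, and $m_n\asymp n p_G$ with $p_G\geq c_0$. Assumption (ii) gives $\lambda_{\min}(M)\geq c_1$, so $\hat M_n(t)^{-1}$ is uniformly bounded. Every $b\in B_n(t)$ is within $O(\delta_n)+O_P(n^{-1/2})$ of $b(t)$, so $\hat a_n(\hat t)-a(\hat t)=O_P(n^{-1/2})$, where $a(t)=M(t)^{-1}b(t)$. Then I pass from $a(\hat t)$ to $a(t^\circ)$.

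Third, and this is the main obstacle, the features $\psi_t$ are discontinuous in $t$ through the indicator, so $\omega_{\hat t}$ and $\omega_{t^\circ}$ differ by $O(1)$ on the strip $\{\hat s\in(\hat t\wedge t^\circ,\hat t\vee t^\circ]\}$. By (iv), this strip has probability $O(|\hat t-t^\circ|)=O_P(n^{-1/2})$. Also $M(t)$ and $b(t)$ are Lipschitz in $t$ by (iv), so $a(\hat t)-a(t^\circ)=O_P(n^{-1/2})$. Squaring and averaging over calibration points, the strip contributes $O_P(n^{-1/2})$ and the coefficient error contributes $O_P(n^{-1})$. This yields the stated mean-squared bound $O_P(n^{-1/2})$ against $\mathcal W_n=\omega_{t^\circ}$. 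Since $\psi$ is bounded and $\hat a_n$ is close to $a(t^\circ)$, the sup of $\hat\omega_{\hat t}$ is uniformly bounded. Interiority, $\omega_t\geq c_1$ on $\cN$, plus uniform closeness give $\hat\omega_{\hat t}\geq c_1/2>0$; this justifies dropping the constraint $w\succeq0$. Boundedness and positivity then give the ratio bound $O_P(n^{-1})$. For the refined $O_P(n^{-1})$ rate, I expect the strip term to be killed because $\hat\gamma$ times the strip indicator enters only through the coefficient on $h_t$. Under the $L_\infty$ rate on $\tilde w$, the population solution should put vanishing weight on that feature, since $w$ balances all functions. I would try to show this coefficient is $O_P(n^{-1/4})$, so the strip contributes $O_P(n^{-1/2}\cdot n^{-1/2})$. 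I am least sure of this step.

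Finally, for consistency, use $w^\circ=p_G w$. Under the law given $G=1$, $w^\circ$ satisfies the exact population balance $\EE[w^\circ f\mid G=1]=\EE f$ for every $f$, since $w^\circ$ is the density ratio. If $\tilde w\to w$ in $L_2$, then $w^\circ$ is approximately in the span of $\psi_{t^\circ}$ (proportional to $\tilde w$ up to small error). The population balancing weight $\omega_{t^\circ}$ is the $L_2(P\mid G=1)$ projection onto that span satisfying the balance constraints, and $w^\circ$ satisfies them. So $\omega_{t^\circ}-w^\circ$ is bounded by the distance from $w^\circ$ to the span, namely $O(\|\tilde w-w\|)$. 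This gives $o_P(1)$ in $L_2$, and $O_P(n^{-1/4})$ in $L_\infty$ under the uniform rate, using bounded $M^{-1}$ and bounded features.
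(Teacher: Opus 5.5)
Your proposal is correct and follows essentially the same route as the paper: the span characterization of the min-norm balancing program, uniform $O_P(n^{-1/2})$ concentration of $\bar\psi_n$ and $\hat M_n$ together with the local slope to get $|\hat t-t^\circ|=O_P(n^{-1/2})$, interiority to drop $w\succeq 0$, and the split of $\omega_{\hat t}-\omega_{t^\circ}$ into a Lipschitz coefficient term and a strip term weighted by the $h$-coefficient. The refined step you were unsure about is exactly how the paper finishes. Since $b(t)-M(t)a^\circ=\EE[q(X)\psi_t(X)\{w(X)-\tilde w(X)\}\mid\cT_n]$ with $a^\circ=(0,0,p_G)^\top$ (equivalently, $\omega_t$ is the $L_2(P_G)$-projection of $w^\circ$ onto $\mathrm{span}(\psi_t)$, as you observed), one gets $\sup_{t\in\cN}|a_h(t)|\leq C\|\tilde w-w\|_{L_\infty(\PP_X)}=O_P(n^{-1/4})$, so the strip contributes $O_P(n^{-1/2})\cdot O_P(n^{-1/2})=O_P(n^{-1})$.
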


\section{Technical proofs}

\subsection{Proof of Theorem~\ref{thm:validity_rct}}
\label{app:subsec_validity_rct}

\begin{proof}[Proof of Theorem~\ref{thm:validity_rct}]
We aim to show that 
\@\label{eq:target_prf}
\PP(p_{n+1}\leq \alpha, Y_{n+1}^* =0 \given \{G_i\}_{i=1}^n ) \leq \alpha. 
\@
Note that by definition, $Y_i^\dagger = T_iY_i+(1-T_i)(1-Y_i) \leq \max\{Y_i(1),1-Y_i(0)\}=Y_i^*$. Thus, by the monotonicity of $V(x,y)$ in $y$, on the event $\{Y_{n+1}^*=0\}$, it holds deterministically that 
\$
p_{n+1} \geq p_{n+1}^* :=  \frac{1+ \sum_{i=1}^n G_i   \cdot \ind\{V(X_i,Y_i^*) \leq V(X_{n+1},Y_{n+1}^*)\} }{1+\sum_{i=1}^n G_i}.
\$
We thus have 
\@\label{eq:monotone_bd}
\PP(p_{n+1}\leq \alpha, Y_{n+1}^* =0 \given \{G_i\}_{i=1}^n )
\leq \PP(p_{n+1}^*\leq \alpha, Y_{n+1}^* =0 \given \{G_i\}_{i=1}^n ) 
\leq \PP(p_{n+1}^*\leq \alpha  \given \{G_i\}_{i=1}^n ).
\@
Now, conditional on $\{G_i\}_{i=1}^n$, we denote $\cI:=\{i\in [n]\colon G_i=1\}$ as the index set of the selected calibration data. 
Let $P^{\sup}_{X,Y^*}$ be the (unknown) joint distribution of $(X,Y^*)$ induced by the unknown super-population $P^{\sup}$ of $(X_i,Y_i(1),Y_i(0))$. We then have $(X_{n+1},Y_{n+1}^*)\sim P^{\sup}_{X,Y^*}$. 
For any measurable subset $A$ of $\cX\times \{0,1\}$, by the Bayes' rule, we have  
\$
\frac{\PP((X_i,Y_i^*)\in A \given G_i=1)}{\PP((X_{n+1},Y_{n+1}^*)\in A)}
= \frac{\PP(G_i=1 \given (X_i,Y_i^*)\in A) }{  \PP(G_i=1)}.
\$
Furthermore, we know 
\$
\PP(G_i=1 \given (X_i,Y_i^*)\in A )
&= \PP(G_i=1, T_i=1 \given (X_i,Y_i^*)\in A ) + \PP(G_i=1,T_i=0\given (X_i,Y_i^*)\in A)) \\ 
&= \PP(G_i=1 \given   T_i=1, (X_i,Y_i^*)\in A ) \cdot \PP(T_i = 1\given (X_i,Y_i^*)\in A ) \\ 
&\qquad + \PP(G_i=1 \given   T_i=0, (X_i,Y_i^*)\in A ) \cdot \PP(T_i = 0\given (X_i,Y_i^*)\in A ) \\ 
&= 1/2\cdot \EE[\hat{g}(X_i,1)\given (X_i,Y_i^*)\in A] + 1/2\cdot \EE[\hat{g}(X_i,0)\given (X_i,Y_i^*)\in A].
\$
Since $\hat{g}(x,1)+\hat{g}(x,0)\equiv a$ for a constant $a>0$, we have 
\$
\PP(G_i=1 \given (X_i,Y_i^*)\in A ) = a/2, 
\$
and by the tower property, 
\$
\PP(G_i=1) = \EE\big[\PP(G_i=1\given X_i)\big] = \EE\big[\PP(G_i=1,T_i=1\given X_i) +\PP(G_i=1,T_i=0\given X_i) \big] =a/2.
\$
Putting things together, we know that 
\$
(X_i,Y_i^*)\given G_i=1 ~~ \stackrel{d}{=} ~~(X_{n+1},Y_{n+1}^*).
\$
Thus, conditional on $\{G_i\}_{i=1}^n$, the samples $\{(X_i,Y_i)\}_{i\in \cI_0\cup\{n+1\}}$ are exchangeable. 
Noting that 
\$
p_{n+1}^* = \frac{1+\sum_{i\in \cI_0}\ind\{V(X_i,Y_i^*)\leq V(X_{n+1},Y_{n+1}^*)\}}{1+|\cI_0|},
\$
we know that $\PP(p_{n+1}^*\leq \alpha \given \{G_i\}_{i=1}^n )\leq \alpha$ in~\eqref{eq:monotone_bd}; see, e.g.,~\cite{jin2023selection} or~\cite{vovk2005algorithmic}. 
Marginalizing over $\{G_i\}_{i=1}^n$, we complete the proof of~\eqref{eq:target_prf}. Finally, taking $T_{n+1}=\ind\{p_{n+1}\leq \alpha\}$, we have 
\$
\PP(Y_{n+1}(T_{n+1})<Y_{n+1}(0)) = \PP(T_{n+1}=1,~Y_{n+1}^*=0) = \PP(p_{n+1}\leq \alpha, ~Y_{n+1}^*=0)\leq \alpha,
\$
thereby concluding the proof of Theorem~\ref{thm:validity_rct}.
\end{proof}

\subsection{Proof of Theorem~\ref{thm:asymp_opt_welfare}}
\label{app:subsec_opt_welfare_ours}

\begin{proof}[Proof of Theorem~\ref{thm:asymp_opt_welfare}]

Write  $\tau(x):=\mu_1(x)-\mu_0(x)$, 
$\widehat\tau(x):=\widehat\mu_1(x)-\widehat\mu_0(x)$, and recall that $\gamma(x)=\min\{1-\mu_1(x),\mu_0(x)\}$ and $\widehat\gamma(x) =\min\{1-\widehat\mu_1(x),\widehat\mu_0(x)\}$. We simplify the notation to
\[ 
s^*(x):=-\frac{\tau(x)}{\gamma(x)} = s_{\mathrm{welfare}}(x), \qquad \widehat s(x):= -\frac{\widehat\tau(x)}{\widehat\gamma(x)}. 
\] Let $\mathcal T_n$ denote the sigma-field generated by the independent training process for $\widehat\mu_0$ and $\widehat\mu_1$. Conditional on $\mathcal T_n$, the estimated functions $\widehat\mu_t$, $\widehat\gamma$, $\widehat s$, and the inclusion rule in (3.10) are fixed. We first show that selective calibration consistently recovers the sharp worst-case harm-rate function. Define \[ q_1(x):=1-\mu_1(x),\qquad q_0(x):=\mu_0(x), \] and analogously $\widehat q_1(x):=1-\widehat\mu_1(x)$ and $\widehat q_0(x):=\widehat\mu_0(x)$. Let $\widehat a(x)\in\{0,1\}$ denote the arm selected by (3.10), so that $\widehat a(x)=1$ when $\widehat q_1(x)\leq\widehat q_0(x)$ and $\widehat a(x)=0$ otherwise. 
Define the true proxy-label risk of the selected arm by 
\[ 
\overline\gamma_n(x) := q_1(x) \ind\{\widehat a(x)=1\} + q_0(x) \ind\{\widehat a(x)=0\}. 
\] 
Since $\gamma(x)=\min\{q_0(x),q_1(x)\}$, we have $\overline\gamma_n(x)\geq\gamma(x)$. Moreover, if $a^*(x)\in\arg\min_{a\in\{0,1\}}q_a(x)$, the optimality of $\widehat a(x)$ for the estimated risks gives 
\$ 
0 \leq \overline\gamma_n(x)-\gamma(x) &= q_{\widehat a(x)}(x)-q_{a^*(x)}(x)\\ &\leq \left| q_{\widehat a(x)}(x) -\widehat q_{\widehat a(x)}(x) \right| + \left| \widehat q_{a^*(x)}(x)-q_{a^*(x)}(x) \right|\\ &\leq 2\left\{ |\widehat\mu_1(x)-\mu_1(x)| + |\widehat\mu_0(x)-\mu_0(x)| \right\}. 
\$
The assumed $L_2(P_X)$ convergence and the Cauchy--Schwarz inequality therefore imply \[ \|\overline\gamma_n-\gamma\|_{L_1(P_X)} = o_P(1). \] 

We next establish convergence of the estimated ranking. The outcome-model consistency implies $\|\widehat\tau-\tau\|_{L_2(P_X)}=o_P(1)$ and $\|\widehat\gamma-\gamma\|_{L_2(P_X)} =o_P(1)$, where the latter follows from the Lipschitz property of the minimum function. Under the ratio conventions in~\eqref{eq:s_welfare_optimal}, if $\gamma(X)=0$, then $s^*(X)$ must take one of the three values $-\infty$, $0$, or $+\infty$. By the no-point-mass assumption for $s^*(X)$, we have $\mathbb{P}\{\gamma(X)=0\}=0$. 
The continuous mapping theorem then gives $\widehat s(X)-s^*(X) = o_P(1)$,  where $X\sim P_X$ is independent of the training process. 
The preceding convergence also holds uniformly for threshold indicators. Indeed, for every $\varepsilon>0$, 
\$
&\sup_{t\in\mathbb R} \mathbb{P}\!\left(  \ind\{\widehat s(X)\leq t\} \neq  \ind\{s^*(X)\leq t\} \,\middle|\,\mathcal T_n \right)\\ &\qquad\leq \mathbb{P}\!\left( |\widehat s(X)-s^*(X)|>\varepsilon \,\middle|\,\mathcal T_n \right) + \sup_{t\in\mathbb R} \mathbb{P}\{|s^*(X)-t|\leq\varepsilon\}. 
\$
The first term is $o_P(1)$. The second term converges to zero as $\varepsilon\downarrow0$ because $s^*(X)$ has no point mass. It follows that 
\[ 
\sup_{t\in\mathbb R} \mathbb{P}\!\left(  \ind\{\widehat s(X)\leq t\} \neq  \ind\{s^*(X)\leq t\} \,\middle|\,\mathcal T_n \right) = o_P(1).
\] 
Now define the empirical  curve 
\[ 
F_n(t) := \frac{ 1+\sum_{i=1}^n G_i \ind\{Y_i^\dagger=0\}  \ind\{\widehat s(X_i)\leq t\} }{ 1+\sum_{i=1}^nG_i }, \qquad t\in\mathbb R, 
\] 
so that $p^{\mathrm{welfare}}_{n+1} =F_n\{\widehat s(X_{n+1})\}$. 
Conditional on $\mathcal T_n$, the summands are i.i.d. and bounded, thus Lemma~\ref{lem:wcdf_conv} gives
\[ 
\sup_{t\in \RR}\bigg| F_n(t) - \frac{ \mathbb{E}\!\left[ G \ind\{Y^\dagger=0\}  \ind\{\widehat s(X)\leq t\} \,\middle|\,\mathcal T_n \right] }{ \mathbb{E}[G\mid\mathcal T_n] } \bigg| = o_P(1). 
\] 
Under balanced randomization and the given inclusion rule, we know $\mathbb{E}[G\mid X,\mathcal T_n]=1/2$. Furthermore, conditional on $X=x$, the probability that the selected proxy label is zero is $q_1(x)$ when $\widehat a(x)=1$ and $q_0(x)$ when $\widehat a(x)=0$. Consequently, 
\[ 
\mathbb{E}\!\left[ G \ind\{Y^\dagger=0\}  \ind\{\widehat s(X)\leq t\} \,\middle|\,\mathcal T_n \right] = \frac12 \mathbb{E}\!\left[ \overline\gamma_n(X)  \ind\{\widehat s(X)\leq t\} \,\middle|\,\mathcal T_n \right]. 
\] 
Therefore, 
\[ 
\sup_{t\in\mathbb R} \left| F_n(t)-\overline H_n(t) \right| = o_{\mathbb P}(1), \qquad \overline H_n(t) := \mathbb{E}\!\left[ \overline\gamma_n(X)  \ind\{\widehat s(X)\leq t\} \,\middle|\,\mathcal T_n \right]. 
\] 
Define the oracle harm-cost curve \[ H(t) := \mathbb{E}\!\left[ \gamma(X) \ind\{s^*(X)\leq t\} \right]. \] Using $0\leq\gamma\leq1$, we have
\$ 
\sup_{t\in\mathbb R} |\overline H_n(t)-H(t)| &\leq \|\overline\gamma_n-\gamma\|_{L_1(P_X)} + \sup_{t\in\mathbb R} \mathbb{E}\!\left[ \gamma(X) \left|  \ind\{\widehat s(X)\leq t\} -  \ind\{s^*(X)\leq t\} \right| \,\middle|\,\mathcal T_n \right] = o_P(1). 
\$
Combining the preceding displays gives  
$\sup_{t\in\mathbb R}|F_n(t)-H(t)| = o_P(1)$.  
The function $H$ is continuous because its jump at any $t\in\mathbb R$ is $\mathbb{E}[\gamma(X) \ind\{s^*(X)=t\}]=0$. Thus, for the independent test point, 
\$ 
\left| p^{\mathrm{welfare}}_{n+1} - H\{s^*(X_{n+1})\} \right| &\leq \sup_{t\in\mathbb R}|F_n(t)-H(t)| + \left| H\{\widehat s(X_{n+1})\} - H\{s^*(X_{n+1})\} \right| = 0_P(1). 
\$ 
We next translate this convergence into convergence of treatment decisions. 
Since $\tau(X)=0$ implies $s^*(X)=0$, the no-point-mass condition gives $\mathbb{P}\{\tau(X)=0\}=0$. Hence,  $ \ind\{\widehat\tau(X_{n+1})>0\} -  \ind\{\tau(X_{n+1})>0\}\to 0$ in probability. 
We also claim that \[ \mathbb{P}\!\left( H\{s^*(X)\}=\alpha,\, \tau(X)>0 \right)=0. \] 
To see this, let $J_\alpha := \{t<0:H(t)=\alpha\}$.  Since $H$ is nondecreasing, $J_\alpha$ is an interval. 
If $a<b$ lie in the interior of this interval, then $0 = H(b)-H(a) = \mathbb{E}\!\left[ \gamma(X)  \ind\{a<s^*(X)\leq b\} \right]$. 
Since $\gamma(X)>0$ almost surely, it follows that $\mathbb{P}\{a<s^*(X)\leq b\}=0$. 
The endpoints of $J_\alpha$ also have probability zero because $s^*(X)$ has no point mass. This proves the claim. 
Recall that  $\widehat\pi_{\mathrm{welfare}}(X_{n+1}) = \ind\{p^{\mathrm{welfare}}_{n+1}\leq\alpha\}  \ind\{\widehat\tau(X_{n+1})>0\}$. 
The convergence $p^{\mathrm{welfare}}_{n+1} - H\{s^*(X_{n+1})\} = o_{\mathbb P}(1)$, the convergence of the treatment-effect indicator, and the preceding zero-probability claim imply  $\mathbb{P}\!\left\{ \widehat\pi_{\mathrm{welfare}}(X_{n+1}) \neq \pi_0(X_{n+1}) \right\}\to 0$, where \[ \pi_0(x) := \ind\{H(s^*(x))\leq\alpha\} \ind\{\tau(x)>0\}. \] 

It remains to identify $\pi_0$ with the oracle policy in Theorem~\ref{thm:welfare_optimal}. Let $r^*\leq0$ be the cutoff in that theorem. Suppose first that the safety constraint is nonbinding. Then $r^*=0$ and, since $\mathbb{P}\{\tau(X)=0\}=0$, \[ H(0) = \mathbb{E}\!\left[ \gamma(X)  \ind\{\tau(X)>0\} \right] \leq\alpha. \] For every $x$ such that $\tau(x)>0$, we have $s^*(x)<0$, and therefore $ H\{s^*(x)\} \leq H(0) \leq\alpha$. 
It follows that \[ \pi_0(X) =  \ind\{\tau(X)>0\} =  \ind\{s^*(X)\leq0\} = \pi^*_{\mathrm{welfare}}(X) \quad\text{almost surely}. \] Suppose instead that the safety constraint is binding. Then $r^*<0$. By the continuity of $H$ and the definition of $r^*$, we have $H(r^*)=\alpha$.  By the monotonicity of $H$ and the preceding zero-probability result for the level set $\{t<0:H(t)=\alpha\}$, we know 
\[  
\ind\{H(s^*(X))\leq\alpha\} =  \ind\{s^*(X)\leq r^*\} \quad\text{almost surely}. 
\] 
Moreover, the indicator $ \ind\{\tau(X)>0\}$ is redundant on the event $\{s^*(X)\leq r^*\}$ because $r^*<0$. Consequently, \[ \pi_0(X) =  \ind\{s^*(X)\leq r^*\} =  \ind\{s_{\mathrm{welfare}}(X)\leq r^*\} = \pi^*_{\mathrm{welfare}}(X) \quad\text{almost surely}. \] 
We have therefore shown that 
$\mathbb{P}\!\left\{ \widehat\pi_{\mathrm{welfare}}(X_{n+1}) \neq \pi^*_{\mathrm{welfare}}(X_{n+1}) \right\}\to 0$, which implies  
\$
& \left| \mathbb{E}\!\left[ Y\{\widehat\pi_{\mathrm{welfare}}(X_{n+1})\} \right] - \operatorname{Welfare} (\pi^*_{\mathrm{welfare}};P) \right| \leq \mathbb{P}\!\left\{ \widehat\pi_{\mathrm{welfare}}(X_{n+1}) \neq \pi^*_{\mathrm{welfare}}(X_{n+1}) \right\} \to 0. 
\$
This completes the proof.
\end{proof}

\subsection{Proof of Theorem~\ref{thm:valid_weighted}}
\label{app:subsec_weighted_validity}

\begin{proof}[Proof of Theorem~\ref{thm:valid_weighted}]
% For any monotone score, it holds that $V(X_i,Y_i^\dagger)\leq V(X_i,Y_i^*)$ since $Y_i^\dagger\leq Y_i^*$ holds by definition. 
The p-value~\eqref{eq:def_pval_weight} can be  equivalently written as 
\@\label{eq:V_pval_weight}
p_{n+1} = \frac{w(X_{n+1}) +\sum_{i\in \cI_{\calib}} w(X_i)  \ind\{V(X_i,Y_i^\dagger) \leq V(X_{n+1},0)\}}{\sum_{i\in \cI_\calib} w(X_i) + w(X_{n+1})}.
\@
By definition, we have $Y_i^\dagger = T_iY_i+(1-T_i)(1-Y_i) \leq \max\{Y_i(1),1-Y_i(0)\}=Y_i^*$. Thus, by the monotonicity of $V(x,y)$ in $y$, on the event $\{Y_{n+1}^*=0\}$, it holds deterministically that 
\$
p_{n+1} \geq p_{n+1}^* :=  \frac{w(X_{n+1}) +\sum_{i\in \cI_{\calib}} w(X_i)  \ind\{V(X_i,Y_i^*) \leq V(X_{n+1},Y_{n+1}^*)\}}{\sum_{i\in \cI_\calib} w(X_i) + w(X_{n+1})}.
\$
We thus have 
\@\label{eq:monotone_bd_weight}
\PP(p_{n+1}\leq \alpha, Y_{n+1}^* =0 \given \{G_i\}_{i=1}^n )
\leq \PP(p_{n+1}^*\leq \alpha, Y_{n+1}^* =0 \given \{G_i\}_{i=1}^n ) \leq \PP(p_{n+1}^*\leq \alpha  \given \{G_i\}_{i=1}^n ).
\@
The goal is then to prove the RHS of~\eqref{eq:monotone_bd_weight} is upper bounded by $\alpha$.

Conditional on $\{G_i\}_{i=1}^n$, the   data $(X_i,Y_i^*)$ for  $i\in\cI_{\calib}$ are i.i.d.~and follow the distribution 
\$
P^{\calib} :\stackrel{d}{=} P_{(X_i,Y_i^*)\given G_i=1}, 
\$
whereas the test point $(X_i,Y_i^*)$ follows the distribution 
\$
P^{\test} := P_{X_i,Y_i^*},
\$
and we recall that $P$ denotes the true underlying super-population distribution. Let $p(x,y^*)$  be the density function of $P_{X,Y^*}$  with respect to some base measure. The density ratio between the two distributions is 
\$
\frac{\ud P^{\test} }{\ud P^{\calib}} (x,y) = \frac{p(x,y^*)}{p(x,y^*\given G=1)} = \frac{p(x) p(y^*\given x)}{p(x\given G=1)p(y^*\given x,G=1)},
\$
where we define the random variable $G=\hat{g}(X,T)$, and $p(y^*\given x,G=1)$ is the conditional density of $Y_i^*$ given $X_i=x$ and $G_i=1$, etc. 
By unconfoundedness, $Y^*$ is independent of $T$ conditional on $X$. This leads to 
\$
\frac{\ud P^{\test} }{\ud P^{\calib}} (x,y)   = \frac{p(x) p(y^*\given x)}{p(x\given G=1)p(y^*\given x)} = \frac{p(x)  }{p(x\given G=1) }  = \frac{\PP(G=1)}{\PP(G=1\given X=x)}.
\$
Now, by the definition of the $G_i$'s, we know 
\$
\PP(G=1\given X=x) % &= \PP(\hat{g}(x,T)=1\given X=x) \\ 
& = 
\PP(G=1,T=1\given X=x) + \PP(G=1,T=0\given X=x) \\ 
&= \PP(G=1\given T=1, X=x) \PP(T=1\given X=x) + \PP(G=1\given T=0,X=x)\PP(T=0\given X=x) \\  
&= e(x) \hat{g}(x,1) + (1-e(x))\hat{g}(x,0).
\$
% \nec{As for $\hat{g}(x,1)$ $\hat{g}(x,0)$, can we keep them as G? I was not sure what \hat{g}(x,1) = 1 meant in the third line, but it seems that the proof goes through with just using G?} \yingcomment{yes!}
This implies the density ratio between $P^\test$ and $P^\calib$ is given by 
\$
\frac{\ud P^{\test} }{\ud P^{\calib}} (x,y) = \PP(G=1)\cdot w(x).
\$
In other words, $\{(X_i,Y_i^*\}_{i\in \cI_{\calib}\cup\{n+1\}}$ are weighted exchangeable~\citep{TibshiraniBCR19}, and thus following~\cite{TibshiraniBCR19}, we know that 
\$
\PP\Bigg(\frac{w(X_{n+1}) +\sum_{i\in \cI_{\calib}} w(X_i)  \ind\{V(X_i,Y_i^*) \leq V(X_{n+1},Y_{n+1}^*)\}}{\sum_{i\in \cI_\calib} w(X_i) + w(X_{n+1})} \leq \alpha \Bigggiven \{G_i\}_{i=1}^n\Bigg) \leq \alpha.
\$
This proves the desired upper bound on the RHS of~\eqref{eq:monotone_bd_weight} hence Theorem~\ref{thm:valid_weighted}.
\end{proof}

\subsection{Proof of Theorem~\ref{thm:opt_stratified_experiments}}
\label{app:subsec_opt_stratified}

\begin{proof}[Proof of Theorem~\ref{thm:opt_stratified_experiments}]
    The optimality follows from the asymptotic convergence of the power/welfare due to the law of large numbers, as well as the optimal results in Theorems~\ref{thm:global_opt} and~\ref{thm:welfare_optimal}. 

    Write $\tau(x):=\mu_1(x)-\mu_0(x)$ and $\widehat\tau(x):=\widehat\mu_1(x)-\widehat\mu_0(x)$, and recall that $\gamma(x)=\min\{1-\mu_1(x),\mu_0(x)\}$ and $\widehat\gamma(x) =\min\{1-\widehat\mu_1(x),\widehat\mu_0(x)\}$. Define \[ s^*(x) := -\frac{\tau(x)}{\gamma(x)} = s_{\mathrm{welfare}}(x), \qquad \widehat s(x) := -\frac{\widehat\tau(x)}{\widehat\gamma(x)}. \]
    Let $\mathcal T_n$ denote the sigma-field generated by the independent training process. Conditional on $\mathcal T_n$, all estimated functions are fixed. As in the proof of Theorem~\ref{thm:asymp_opt_welfare},  let $q_1(x):=1-\mu_1(x)$ and $q_0(x):=\mu_0(x)$, and let $\widehat a(x)\in\{0,1\}$ denote the arm selected by~\eqref{eq:def_G_welfare_optimal}. Thus, $\widehat a(x)=1$ when $1-\widehat\mu_1(x)\leq\widehat\mu_0(x)$ and $\widehat a(x)=0$ otherwise. Define 
    \[ \gamma_n(x) := q_1(x) \ind\{\widehat a(x)=1\} + q_0(x) \ind\{\widehat a(x)=0\}. 
    \] 
    The optimality of $\widehat a(x)$ for the estimated risks gives \[ 0 \leq \gamma_n(x)-\gamma(x) \leq 2\left\{ |\widehat\mu_1(x)-\mu_1(x)| + |\widehat\mu_0(x)-\mu_0(x)| \right\}, \] and hence $\|\gamma_n-\gamma\|_{L_1(P_X)}=o_{\mathbb P}(1)$. 
    The same argument as in the proof of Theorem~\ref{thm:asymp_opt_welfare} also gives $\widehat s(X)-s^*(X)=o_{\mathbb P}(1)$ and \[ \sup_{t\in\mathbb R} \mathbb{P}\!\left(  \ind\{\widehat s(X)\leq t\} \neq  \ind\{s^*(X)\leq t\} \,\middle|\,\mathcal T_n \right) = o_{\mathbb P}(1). \] We now examine the effect of weighting. Define the conditional inclusion probability \[ \rho_n(x) := e(x) \ind\{\widehat a(x)=1\} + \{1-e(x)\} \ind\{\widehat a(x)=0\}, \] so that the weight in~\eqref{eq:def_weight} can be written as $w_n(x)=\rho_n(x)^{-1}$. 
    For covariates $X$, the inclusion indicator is $G= \ind\{T=\widehat a(X)\}$. 
    Conditional on $(X,\mathcal T_n)$, note the two identities \[ \mathbb{E}[G w_n(X)\mid X,\mathcal T_n]=1, \qquad \mathbb{E}\!\left[ G w_n(X) \ind\{Y^\dagger=0\} \mid X,\mathcal T_n \right] = \gamma_n(X). \] 
    Indeed, if $\widehat a(X)=1$, then $G=T$, $w_n(X)=1/e(X)$, and the second conditional expectation is $1-\mu_1(X)$. 
    If $\widehat a(X)=0$, then $G=1-T$, $w_n(X)=1/\{1-e(X)\}$, and the second conditional expectation is $\mu_0(X)$. Define 
    \[ 
    F_n(t) := \frac{ w_n(X_{n+1}) + \sum_{i=1}^n G_iw_n(X_i)  \ind\{Y_i^\dagger=0\}  \ind\{\widehat s(X_i)\leq t\} }{ w_n(X_{n+1}) + \sum_{i=1}^nG_iw_n(X_i) }, \qquad t\in\mathbb R, 
    \] 
    so that $p^{\mathrm{str-opt}}_{n+1} =F_n\{\widehat s(X_{n+1})\}$. 
    The preceding identities imply that the training-conditional population counterpart of $F_n$ is \[ \overline H_n(t) := \mathbb{E}\!\left[ \gamma_n(X)  \ind\{\widehat s(X)\leq t\} \,\middle|\,\mathcal T_n \right]. \] 
    We now briefly verify the uniform law of large numbers. Let $Z=Gw_n(X)$. 
    Conditional on $\mathcal T_n$, $\mathbb{E}[Z\mid\mathcal T_n]=1$. 
    Moreover, for every $M>0$, we know 
    \$
    \mathbb{E}[(Z-M)_+\mid\mathcal T_n] &= \mathbb{E}[(1-M\rho_n(X))_+\mid\mathcal T_n] \leq \mathbb{E}[(1-Me(X))_+] + \mathbb{E}[(1-M\{1-e(X)\})_+], 
    \$ 
    which converges to zero as $M\to\infty$ because $e(X)\in(0,1)$ almost surely. Applying Lemma~\ref{lem:wcdf_conv} to the bounded truncations and then letting $M\to\infty$ therefore gives a uniform law of large numbers for the numerator of $F_n$; the denominator follows by the same argument. In addition, $w_n(X_{n+1})/n=o_{\mathbb P}(1)$ because $w_n(X)\leq\max\{e(X)^{-1},\{1-e(X)\}^{-1}\}<\infty$ almost surely. Consequently, \[ \sup_{t\in\mathbb R} |F_n(t)-\overline H_n(t)| = o_{\mathbb P}(1). \] 
    
    Define the oracle harm-cost curve $H(t) := \mathbb{E}\!\left[ \gamma(X) \ind\{s^*(X)\leq t\} \right]$. 
    Using $0\leq\gamma\leq1$, we have 
    \$ 
    \sup_{t\in\mathbb R} |\overline H_n(t)-H(t)| &\leq \|\gamma_n-\gamma\|_{L_1(P_X)} + \sup_{t\in\mathbb R} \mathbb{E}\!\left[ \gamma(X) \left|  \ind\{\widehat s(X)\leq t\} -  \ind\{s^*(X)\leq t\} \right| \,\middle|\,\mathcal T_n \right] = o_{\mathbb P}(1).
    \$ 
    It follows that \[ \sup_{t\in\mathbb R}|F_n(t)-H(t)| = o_{\mathbb P}(1). \] The function $H$ is continuous because its jump at any $t\in\mathbb R$ is $\mathbb{E}[\gamma(X) \ind\{s^*(X)=t\}]=0$. 
    Hence, for the independent test point, $p^{\text{str-opt}}_{n+1} - H\{s^*(X_{n+1})\} = o_{\mathbb P}(1)$. 
    Finally, the outcome-model consistency and $\mathbb{P}\{\tau(X)=0\}=0$ imply  $\ind\{\widehat\tau(X_{n+1})>0\} -  \ind\{\tau(X_{n+1})>0\} = o_{\mathbb P}(1)$. As shown in the proof of Theorem~\ref{thm:asymp_opt_welfare}, the no-point-mass condition also implies  $\mathbb{P}\!\left( H\{s^*(X)\}=\alpha,\, \tau(X)>0 \right)=0$. 
    Therefore, we have  
    $\mathbb{P}\!\left\{ \widehat\pi_{\text{str-opt}}(X_{n+1}) \neq \pi_0(X_{n+1}) \right\} \to 0$,  
    where $\pi_0(x) :=  \ind\{H(s^*(x))\leq\alpha\}  \ind\{\tau(x)>0\}$. The binding and nonbinding arguments in the proof of Theorem~\ref{thm:asymp_opt_welfare} show that $\pi_0(X)=\pi^*_{\mathrm{welfare}}(X)$ almost surely. Thus, \[ \mathbb{P}\!\left\{ \widehat\pi_{\mathrm{str-opt}}(X_{n+1}) \neq \pi^*_{\mathrm{welfare}}(X_{n+1}) \right\} \to 0. \] 
    With similar arguments as the end of the proof of Theorem~\ref{thm:asymp_opt_welfare}, we complete the proof.
    \end{proof}

\subsection{Proof of Theorem~\ref{thm:dr_obs}}
\label{app:subsec_dr_obs}

\begin{proof}[Proof of Theorem~\ref{thm:dr_obs}]
By Lemma~\ref{lem:bal_weight_approx}, the learn-then-balance weights
satisfy Assumption~\ref{assump:obs_bal} with $r_n=n^{-1/4}$ and
reference function 
$\hat\omega(\cdot)
=
\mathcal W_n(\hat\gamma,\tilde w)(\cdot)$. 

Under condition~(i), Lemma~\ref{lem:bal_weight_approx}  gives
$\|\hat\omega-w^\circ\|_{L_2(\PP_X)}=o_P(1)$, 
where $w^\circ=p_Gw\propto w$. Hence condition~(i) of
Theorem~\ref{thm:dr_obs_general} holds.

Under condition~(ii),  
Assumption~\ref{assump:bal_reg} verifies the overlap, interiority,
local-slope, and bounded-density conditions in condition~(ii) of
Theorem~\ref{thm:dr_obs_general}. The conclusion therefore follows
from Theorem~\ref{thm:dr_obs_general} in either case.
\end{proof}

\subsection{Proof of Theorem~\ref{thm:dr_obs_rate}}
\label{app:subsec_dr_obs_rate}

\begin{proof}[Proof of Theorem~\ref{thm:dr_obs_rate}]
Set $\hat\omega:=\mathcal W_n(\hat\gamma,\tilde w)$.
By Lemma~\ref{lem:bal_weight_approx}, the learn-then-balance weights
satisfy Assumption~\ref{assump:obs_bal} with $r_n=n^{-1/2}$ and 
$\|\hat\omega-w^\circ\|_{L_\infty(\PP_X)}
=
O_P(n^{-1/4})$, where 
$w^\circ(x):=p_Gw(x)$. 
Together with the assumed rate for $\hat\gamma$, this verifies
Assumption~\ref{assump:obs_slow}. Assumption~\ref{assump:bal_reg}
verifies the remaining regularity conditions in
Theorem~\ref{thm:dr_obs_rate_general}, which gives the result.
\end{proof}

\subsection{Proof of Theorem~\ref{thm:obs_welfare_opt}}
\label{app:subsec_optimality_obs}

\begin{proof}[Proof of Theorem~\ref{thm:obs_welfare_opt}]

Let
\[
H_n(t)
:=
\EE\!\left[
\hat\gamma(X)\ind\{\hat s(X)\leq t\}
\,\middle|\,\cT_n
\right],
\qquad
t_n^\circ
:=
\sup\{t\in\RR:H_n(t)\leq\alpha\},
\]
and define
\[
H_0(t)
:=
\EE\!\left[
\gamma(X)\ind\{s_{\welfare}(X)\leq t\}
\right].
\]
By the assumed convergence of $\hat\gamma$ and $\hat s$, the
no-point-mass condition on $s_{\welfare}(X)$, and
Lemma~\ref{lem:wecdf_hat},  
\[
\sup_{t\in\RR}|H_n(t)-H_0(t)|=o_P(1).
\]
Together with the local-crossing condition in
Assumption~\ref{assump:bal_reg}(iii), this implies 
$t_n^\circ-t_{\text{raw}}^*=o_P(1)$, 
where
$t_{\text{raw}}^*=\sup\{t:H_0(t)\leq\alpha\}$.
Indeed, for every fixed $0<\varepsilon\leq\eta$, with probability
tending to one, it holds that  
$H_0(t_n^\circ-\varepsilon)<\alpha
<
H_0(t_n^\circ+\varepsilon)$, 
and hence
$t_n^\circ-\varepsilon\leq t_{\text{raw}}^*
\leq t_n^\circ+\varepsilon$.

Set
$\hat\omega:=\mathcal W_n(\hat\gamma,\tilde w)$ and, writing
$I_t(x):=\ind\{\hat s(x)\leq t\}$, define
\[
F_n^{\hat\gamma}(t)
:=
\frac{
\EE[
\hat\omega(X)\hat g(X,T)\hat\gamma(X)I_t(X)
\mid\cT_n]
}{
\EE[\hat\omega(X)\hat g(X,T)\mid\cT_n]
}.
\]
By the defining balance equations for $\hat\omega$, we know  
$F_n^{\hat\gamma}(t_n^\circ)
=
H_n(t_n^\circ)
=
\alpha$. 
Moreover, Assumption~\ref{assump:bal_reg}(i)--(iii) implies that,
for some deterministic $\kappa>0$, with probability tending to one, 
$F_n^{\hat\gamma}(t_n^\circ-u)
\leq\alpha-\kappa u$ and 
$F_n^{\hat\gamma}(t_n^\circ+u)
\geq\alpha+\kappa u$ hold 
for every $0<u\leq\eta$.
Define
\[
\hat F_n^{\obs}(t)
:=
\frac{
\hat w_{n+1}
+
\sum_{i=1}^n
G_i\hat w_i
\ind\{Y_i^\dagger=0\}I_t(X_i)
}{
\hat w_{n+1}
+
\sum_{i=1}^nG_i\hat w_i
},
\]
so that
$p_{n+1}^{\obs}
=\hat F_n^{\obs}\{\hat s(X_{n+1})\}$. 
Similar to the arguments in the proof of Theorem~\ref{thm:dr_obs_general}, 
Lemma~\ref{lem:bal_weight_approx},
Lemmas~\ref{lem:w_to_wx} and~\ref{lem:wecdf_hat}, and
$\hat w_{n+1}/\sum_{i\in\cI_{\calib}}\hat w_i=O_P(n^{-1})$
give
\$
\sup_{t\in \RR} \big|\hat{F}_n^{\obs}(t) - F_n^\dagger(t) \big| = o_P(1), \quad 
F_n^\dagger(t):= \frac{\EE[\hat\omega(X)\hat{g}(X,T)m(X,T)\ind\{\hat{s}(X)\leq t\}\given \cT_n]}{\EE[\hat\omega(X)\hat{g}(X,T)\given \cT_n]},
\$  
where we define 
$m(x,t) = \PP(Y^\dagger=0\given X=x,T=t)=t(1-\mu_1(x))+(1-t)\mu_0(x)$. 
Furthermore,
by the definition of $g^*$, we know 
$\EE[
g^*(X,T)\ind\{Y^\dagger=0\}
\given X]=\EE[
g^*(X,T)m(X,T)
\given X]
=
\EE[g^*(X,T)\gamma(X)\given X]$. 
Thus, 
\[
\sup_{t\in\RR}
|\hat F_n^{\obs}(t)-F_n^{\hat\gamma}(t)|
\leq
o_P(1)
+
C\|\hat g-g^*\|_{L_1(\PP_{X,T})}
+
C\|\hat\gamma-\gamma\|_{L_1(\PP_X)}
=
o_P(1).
\]

Let \[ \widehat\pi_{\mathrm{obs,raw}}(X_{n+1}) :=  \ind\{p^{\mathrm{obs}}_{n+1}\leq\alpha\}. \]
Fix $0<\varepsilon\leq\eta$. The preceding  results imply that, with probability tending to
one,
\$ 
\widehat s(X_{n+1}) \leq t_n^\circ-\varepsilon \quad\Rightarrow\quad p^{\mathrm{obs}}_{n+1}\leq\alpha, \\
\widehat s(X_{n+1}) \geq t_n^\circ+\varepsilon \quad\Rightarrow\quad p^{\mathrm{obs}}_{n+1}>\alpha.  
\$
Therefore, we have 
\[
\left| \widehat\pi_{\mathrm{obs,raw}}(X_{n+1}) -  \ind\{\widehat s(X_{n+1})\leq t_n^\circ\} \right| \leq  \ind\{ |\widehat s(X_{n+1})-t_n^\circ| \leq\varepsilon \} + o_{\mathbb P}(1).
\]
Assumption~\ref{assump:bal_reg}(iv) and the arbitrariness of $\varepsilon>0$ then imply 
$\mathbb{E}\!\left[ \left| \widehat\pi_{\mathrm{obs,raw}}(X_{n+1}) -  \ind\{\widehat s(X_{n+1})\leq t_n^\circ\} \right| \right] \to 0$. Since multiplication by an indicator cannot increase the absolute difference, the preceding display gives
\$
\mathbb{E}\!\left[ \left| \widehat\pi_{\mathrm{obs-welfare}}(X_{n+1}) -  \ind\{\widehat s(X_{n+1})\leq t_n^\circ\}  \ind\{\widehat s(X_{n+1})<0\} \right| \right] \to 0.
\$
We next pass to the population limit. The assumptions $\|\widehat s-s_{\mathrm{welfare}}\|_{L_2(P_X)} =o_{\mathbb P}(1)$ and $t_n^\circ-t_{\mathrm{raw}}^*=o_{\mathbb P}(1)$, together with the no-point-mass condition on $s_{\mathrm{welfare}}(X)$, imply 
\$ 
\mathbb{E}\Big[ \big| &  \ind\{\widehat s(X_{n+1})\leq t_n^\circ\}  \ind\{\widehat s(X_{n+1})<0\}  -  \ind\{s_{\mathrm{welfare}}(X_{n+1}) \leq t_{\mathrm{raw}}^*\}  \ind\{s_{\mathrm{welfare}}(X_{n+1})<0\} \big| \Big] \to 0.
\$ 
Indeed, for any $\varepsilon>0$, disagreement between the first threshold indicators is contained in the union of $\{|\widehat s-s_{\mathrm{welfare}}|>\varepsilon\}$, $\{|t_n^\circ-t_{\mathrm{raw}}^*|>\varepsilon\}$, and $\{|s_{\mathrm{welfare}}-t_{\mathrm{raw}}^*|\leq2\varepsilon\}$. The corresponding argument at the threshold zero establishes convergence of the second indicators. 

It remains to identify the limiting rule with the oracle policy in Theorem~\ref{thm:welfare_optimal}. Let $r^*\leq0$ denote the cutoff therein. We consider two cases:
\begin{itemize} 
\item If the safety constraint is binding, then $H_0(0)>\alpha$, so $t_{\mathrm{raw}}^*<0$ and $t_{\mathrm{raw}}^*=r^*$. In this case, $ \ind\{s_{\mathrm{welfare}}<0\}$ is redundant on $\{s_{\mathrm{welfare}}\leq t_{\mathrm{raw}}^*\}$, and hence $\ind\{s_{\mathrm{welfare}}(X) \leq t_{\mathrm{raw}}^*\}  \ind\{s_{\mathrm{welfare}}(X)<0\} =  \ind\{s_{\mathrm{welfare}}(X)\leq r^*\} = \pi^*_{\mathrm{welfare}}(X)$  almost surely. 
\item 
If the safety constraint is nonbinding, then $H_0(0)\leq\alpha$, so $r^*=0$ and $t_{\mathrm{raw}}^*\geq0$. Therefore,  $\ind\{s_{\mathrm{welfare}}(X) \leq t_{\mathrm{raw}}^*\}  \ind\{s_{\mathrm{welfare}}(X)<0\} =  \ind\{s_{\mathrm{welfare}}(X)<0\}$. Since $s_{\mathrm{welfare}}(X)$ has no point mass at zero, this equals $ \ind\{s_{\mathrm{welfare}}(X)\leq0\} = \pi^*_{\mathrm{welfare}}(X)$ almost surely. 
\end{itemize}
Combining the preceding results gives  $\mathbb{E} [\left| \widehat\pi_{\mathrm{obs-welfare}}(X_{n+1}) - \pi^*_{\mathrm{welfare}}(X_{n+1}) \right|] \to 0 $. Finally, since the outcomes are binary, we know 
\$ 
\left| \mathbb{E}\!\left[ Y\{\widehat\pi_{\mathrm{obs-welfare}}(X_{n+1})\} \right] - \operatorname{Welfare}(\pi^*_{\mathrm{welfare}};P) \right| &\leq \mathbb{E}\!\left[ \left| \widehat\pi_{\mathrm{obs-welfare}}(X_{n+1}) - \pi^*_{\mathrm{welfare}}(X_{n+1}) \right| \right] \to 0,
\$
which completes the proof.
\end{proof}

\section{Proof of general balancing theory}
\subsection{Proof of Theorem~\ref{thm:dr_obs_general}}
\label{app:subsec_dr_obs_general}

\begin{proof}[Proof of Theorem~\ref{thm:dr_obs_general}] 

    Throughout this proof, we condition on the training process of the functions, so $\hat{s}$ and $\hat{g}$ are viewed as fixed, as well as the function $\hat{w}(\cdot)$ in the first condition of Assumption~\ref{assump:obs_bal}. 
    Once we prove $\PP(Y_{n+1}(\pi_{\textnormal{obs}}(X_{n+1})<Y_{n+1}(0)\given \hat{s},\hat{g},\hat{w})\leq \alpha + o_P(1)$ conditional on these randomness, since $R$ is uniformly bounded we also have the marginal harm rate $\PP(Y_{n+1}(\pi_{\textnormal{obs}}(X_{n+1})<Y_{n+1}(0)) \leq  \alpha + o_P(1)$. For notational simplicity, we shall omit the conditioning in the probability/expectation throughout the proof.     
    Define 
    \$
    \hat\tau:= \sup\{t\in \RR\colon \hat{F}_n(t)\leq \alpha\},\quad \hat{F}_n(t):=  \frac{ \hat{w}_{n+1} +\sum_{i\in \cI_{\calib}} \hat{w}_i \ind\{Y_i^\dagger = 0\}\ind\{\hat{s}(X_i)\leq t\}}{\sum_{i\in \cI_\calib} \hat{w}_i + \hat{w}_{n+1}},
    \$
    so that $p_{n+1} = \hat{F}_n(\hat{s}(X_{n+1}))$. 
    Thus, $p_{n+1}\leq \alpha$ is equivalent to $\hat{s}(X_{n+1})\leq \hat\tau$. 
    
    We now construct a random variable $Y_i^{**}$ whose conditional expectation is the worst-case harm rate function $\fna(X_i)$ and upper bounds $Y_i^\dagger$. 
    For $t\in\{0,1\}$, define
\[
D_i(t):=tY_i(1)+(1-t)\{1-Y_i(0)\},\qquad
q_t(x):=\PP(D_i(t)=0\mid X_i=x),
\]
so that $D_i(T_i)=Y_i^\dagger$, $q_1(x)=1-\mu_1(x)$, $q_0(x)=\mu_0(x)$, and our previously defined oracle label satisfies 
\[
Y_i^*=\max\{D_i(0),D_i(1)\}.
\]
Let $\nu(x):=\PP(Y_i^*=0\mid X_i=x)$. Since
$\{Y_i^*=0\}\subseteq\{D_i(t)=0\}$ for each $t\in\{0,1\}$, we have 
\[
\nu(x)\leq \gamma(x)=\min\{q_0(x),q_1(x)\}\leq q_t(x).
\]
Define
\[
\rho_t(x):=
\begin{cases}
\dfrac{\gamma(x)-\nu(x)}{q_t(x)-\nu(x)}, & q_t(x)>\nu(x),\\[6pt]
0, & q_t(x)=\nu(x).
\end{cases}
\]
Then $\rho_t(x)\in[0,1]$; moreover, if $q_t(x)=\nu(x)$, then necessarily
$\gamma(x)=\nu(x)$. On an enlarged probability space, let
$U_i\stackrel{\mathrm{i.i.d.}}{\sim}\operatorname{Unif}(0,1)$ be independent
of all existing random variables, and define
\[
Y_i^{**}
:=
Y_i^*
-
\ind\left\{
Y_i^*=1,\,
D_i(T_i)=0,\,
U_i\leq \rho_{T_i}(X_i)
\right\}.
\]
By construction, $Y_i^{**}\leq Y_i^*$. Furthermore, if $Y_i^\dagger=1$,
then $D_i(T_i)=1$, so the indicator in the preceding display vanishes and
$Y_i^{**}=Y_i^*=1$. Hence
\[
Y_i^\dagger\leq Y_i^{**}\leq Y_i^*
\qquad\text{almost surely}.
\]
By unconfoundedness, for each $t\in\{0,1\}$,
\[
\begin{aligned}
\PP(Y_i^{**}=0\mid X_i=x,T_i=t)
&=
\nu(x)
+
\rho_t(x)
\PP\bigl(Y_i^*=1,D_i(t)=0\mid X_i=x\bigr)\\
&=
\nu(x)+\rho_t(x)\{q_t(x)-\nu(x)\}
=
\gamma(x).
\end{aligned}
\]
Thus, in particular, $\PP(Y_i^{**}=0\mid X_i)=\gamma(X_i)$.

% Accordingly, we define the (unobservable) oracle counterpart 
%     \$
%     \hat\tau^* = \sup\{t\in \RR\colon \hat{F}_n^*(t)\leq \alpha\},\quad \hat{F}^*_n(t):=  \frac{ \hat{w}_{n+1} +\sum_{i\in \cI_{\calib}} \hat{w}_i \ind\{Y_i^{**} = 0\}\ind\{\hat{s}(X_i)\leq t\}}{\sum_{i\in \cI_\calib} \hat{w}_i + \hat{w}_{n+1}}.
%     \$
%     Since $Y_i^{**}\geq Y_i^\dagger$ almost surely, we have $\hat{F}_n^*(t) \leq \hat{F}_n(t)$ for every $t\in \RR$ almost surely, hence 
%     \$
%     \hat\tau \leq \hat\tau^*\quad \text{almost surely}.
%     \$
%     The worst-case bound on the harm rate (e.g.,~\cite{kallus2022s}) then implies 
%     \@\label{eq:risk_eq1}
%     \hspace{-2em}
%     \PP(Y_{n+1}(T_{n+1})<Y_{n+1}(0)) &= \PP( Y_{n+1}(1)=0, Y_{n+1}(0)=1, \hat{s}(X_{n+1})\leq \hat\tau) \notag \\
%     & 
%     \leq \EE[\fna(X_{n+1})\ind\{\hat{s}(X_{n+1})\leq \hat\tau\}]
%     \leq \EE[\fna(X_{n+1})\ind\{\hat{s}(X_{n+1})\leq \hat\tau^*\}],
%     \@
%     where the first inequality invokes the tower property of expectations, and the last inequality uses $\hat\tau\leq \hat\tau^*$. 
%     Above, $\hat\tau$ and $\hat\tau^*$ are random cutoffs whose randomness depends on the labeled data and the test point. 

    To avoid introducing a cutoff that depends on the test point, define
$ 
S_n:=\sum_{i=1}^n G_i\hat w_i,
$ 
and, on the event $\{S_n>0\}$, define the calibration-only functions
\@\label{eq:def_hatF}
\hat F_{n,0}(t):=
\frac{\sum_{i=1}^n G_i\hat w_i\ind\{Y_i^\dagger=0\}
\ind\{\hat s(X_i)\leq t\}}{S_n},\quad
\hat F_{n,0}^*(t):=
\frac{\sum_{i=1}^n G_i\hat w_i\ind\{Y_i^{**}=0\}
\ind\{\hat s(X_i)\leq t\}}{S_n}.
\@
Under the assumptions of the theorem, $S_n>0$ with probability tending to one. Moreover, assuming that the estimated weights are nonnegative, both $\hat F_{n,0}$ and $\hat F_{n,0}^*$ are nondecreasing functions taking values in $[0,1]$.

The observed p-value thus obeys 
\[
p_{n+1}^{\mathrm{obs}}
=
\frac{\hat w_{n+1}
+S_n\hat F_{n,0}\bigl(\hat s(X_{n+1})\bigr)}
{\hat w_{n+1}+S_n} \geq \hat{F}_{n,0}(\hat{s}(X_{n+1}))
\] 
since $\hat{F}_{n,0}(t)\in [0,1]$ for any $t\in \RR$. 
Furthermore, since $Y_i^\dagger\leq Y_i^{**}$ almost surely, we have 
\[
\hat F_{n,0}(t)\geq\hat F_{n,0}^*(t)
\qquad\text{for every }t\in\RR.
\]
It follows that
\[
\{p_{n+1}^{\mathrm{obs}}\leq\alpha\}
\subseteq
\big\{
\hat F_{n,0} (\hat s(X_{n+1}) )\leq\alpha
\big\}
\subseteq
\big\{
\hat F_{n,0}^* (\hat s(X_{n+1}) )\leq\alpha
\big\}.
\]

For later use, define the calibration-only cutoffs
\[
\hat\tau_0:=
\sup\{t\in\RR:\hat F_{n,0}(t)\leq\alpha\},\qquad
\hat\tau_0^*:=
\sup\{t\in\RR:\hat F_{n,0}^*(t)\leq\alpha\}.
\]
The preceding pointwise ordering implies $\hat\tau_0\leq\hat\tau_0^*$. Moreover,
\[
\{p_{n+1}^{\mathrm{obs}}\leq\alpha\}
\subseteq
\{\hat s(X_{n+1})\leq\hat\tau_0\}
\subseteq
\{\hat s(X_{n+1})\leq\hat\tau_0^*\}.
\]
Importantly, both $\hat\tau_0$ and $\hat\tau_0^*$ depend only on the calibration data, together with the auxiliary randomness used to construct $Y_i^{**}$, and do not depend on the test point. Let $\mathcal A_n$ denote the $\sigma$-field generated by these quantities. Conditional on $\mathcal A_n$, and writing
\[
R_n:=
\PP\left(
Y_{n+1}(1)=0,\,
Y_{n+1}(0)=1,\,
p_{n+1}^{\mathrm{obs}}\leq\alpha
\,\middle|\,\mathcal A_n
\right),
\]
the sharp conditional upper bound
$\PP(Y(1)=0,Y(0)=1\mid X)\leq\gamma(X)$ gives
\[
\begin{aligned}
R_n
&\leq
\EE\left[
\gamma(X_{n+1})
\ind\big\{
\hat F_{n,0}^* (\hat s(X_{n+1}) )\leq\alpha
\big\}
\,\middle|\,\mathcal A_n
\right] \leq
\EE\left[
\gamma(X_{n+1})
\ind\{\hat s(X_{n+1})\leq\hat\tau_0^*\}
\,\middle|\,\mathcal A_n
\right].
\end{aligned}
\]

    Since $\hat{F}^*_{n,0}(t)$ is a right-continuous and non-decreasing step function, letting $\hat{s}(X_{[1]})\leq \hat{s}(X_{[2]})\leq \cdots \leq \hat{s}(X_{[n]})$ be the order statistics and $[1],[2],\dots,[n]$ is a permutation of $(1,\dots,n)$, we consider two cases:
    \begin{itemize}
        \item When there exists some $k\in [n]$ such that $\hat{F}_{n,0}^*(\hat{s}(X_{[k]})) = \alpha$, we know $\hat\tau_0^* = \hat{s}(X_{[k]})$, thus $\hat{F}_{n,0}^*(\hat\tau_0^*) = \alpha$. 
        \item When there exists some $k\in [n]$ such that $\hat{F}^*_{n,0}(\hat{s}(X_{[k-1]}))<\alpha< \hat{F}_{n,0}^*(\hat{s}(X_{[k]}))$, we know $\hat\tau_0^* = \hat{s}(X_{[k]})$, and $\alpha < \hat{F}_{n,0}^*(\hat\tau^*) \leq \alpha + \hat{F}_{n,0}^*(X_{[k]}) - \hat{F}_{n,0}^*(X_{[k-1]}) 
        = \alpha + \frac{\hat{w}_{[k-1]}}{\sum_{i\in \cI_{\calib}}\hat{w}_i+\hat{w}_{n+1}} = \alpha + O_P(1/n)$ by Lemma~\ref{lem:bal_weight_approx}. 
    \end{itemize}
    Combining the two cases yields $\alpha \leq \hat{F}^*_{n,0}(\hat\tau_0^*)\leq \alpha+O_P(1/n)$.

    The above arguments yield
    \@
    R_n &\leq \EE[\fna(X_{n+1})\ind\{\hat{s}(X_{n+1})\leq \hat\tau_0^*\}\given \cA_n] + \alpha - \hat{F}_{n,0}^*(\hat\tau_0^*) + O_P(1/n)  \label{eq:risk_bd20} \\ 
    % & \leq \alpha + \EE[\fna(X_{n+1})\ind\{\hat{s}(X_{n+1})\leq \hat\tau_0^*\}\given \cA_n] - \frac{  \sum_{i=1}^n G_i \hat{w}_i \ind\{Y_i^{**} = 0\}\ind\{\hat{s}(X_i)\leq \hat\tau_0^*\}}{\sum_{i=1}^n G_i \hat{w}_i }\notag  \\ 
    & \leq \alpha + \EE[\fna(X_{n+1})\ind\{\hat{s}(X_{n+1})\leq \hat\tau_0^*\}\given \cA_n] - \frac{  \sum_{i=1}^n G_i \hat{w}_i \ind\{Y_i^{**} = 0\}\ind\{\hat{s}(X_i)\leq \hat\tau_0^*\}}{\sum_{i=1}^n G_i \hat{w}_i  } + O_P(1/n) , \quad \label{eq:risk_bd2}
    \@ 
    We shall repeatedly invoke the following lemma, whose proof is in Appendix~\ref{subsec:lemma_w_to_wx}. 
    
    \begin{lemma}\label{lem:w_to_wx}
        Under Assumption~\ref{assump:obs_bal}, for any random variable $\{Z_i\}$ such that $(X_i,Y_i,T_i,Z_i)$ are i.i.d.~across $i\in [n]$ and $\EE[Z^2]<\infty$, and any (possibly random) function $f\colon \cX\to \RR$, it holds that 
        \$
    \sup_{t\in \RR}\bigg|\frac{1}{n}\sum_{i=1}^n \hat{w}_iZ_i\ind\{f(X_i)\leq t\} - \frac{1}{n}\sum_{i=1}^n \hat{\omega}(X_i) Z_i\ind\{f(X_i)\leq t\} \bigg| = O_P(r_n).
        \$
    \end{lemma}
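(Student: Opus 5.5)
The plan is to split the difference into two pieces and bound the worse one uniformly in $t$. Write
\[
\frac{1}{n}\sum_{i=1}^n \hat{w}_iZ_i\ind\{f(X_i)\leq t\} - \frac{1}{n}\sum_{i=1}^n \hat{\omega}(X_i) Z_i\ind\{f(X_i)\leq t\}
= \frac{1}{n}\sum_{i=1}^n (\hat{w}_i-\hat\omega(X_i)) Z_i\ind\{f(X_i)\leq t\}.
\]
Here the sum effectively runs over $i\in\cI_{\calib}$: both $\hat w_i$ and the reference weight enter only through $G_i$, and we take $\hat w_i=\hat\omega(X_i)=0$ off the calibration set, or equivalently carry the factor $G_i$ in $Z_i$.

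The key observation is that the indicator lies in $[0,1]$ and does not involve the weights. So for every $t$ the absolute value is at most
\[
\frac{1}{n}\sum_{i\in\cI_{\calib}} |\hat w_i-\hat\omega(X_i)|\,|Z_i|,
\]
which is free of $t$. The supremum over $t$ and the arbitrary (possibly random) $f$ therefore cost nothing, and no empirical-process or VC argument is needed.

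Next I apply Cauchy--Schwarz:
\[
\frac{1}{n}\sum_{i\in\cI_{\calib}} |\hat w_i-\hat\omega(X_i)||Z_i| \leq \frac{m_n}{n}\Big(\frac{1}{m_n}\sum_{i\in\cI_{\calib}}(\hat w_i-\hat\omega(X_i))^2\Big)^{1/2}\Big(\frac{1}{m_n}\sum_{i\in\cI_{\calib}}Z_i^2\Big)^{1/2}.
\]
Each factor is then controlled as follows.
\begin{itemize}
\item The first factor satisfies $m_n/n\leq 1$.
\item The second factor is $O_P(r_n)$ by the first condition of Assumption~\ref{assump:obs_bal}.
\item For the third factor, $\frac{1}{n}\sum_{i=1}^n Z_i^2=O_P(1)$ by Markov's inequality and $\EE[Z^2]<\infty$, using the i.i.d.\ structure. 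Hence $\frac{m_n}{n}\cdot\frac{1}{m_n}\sum_{i\in\cI_\calib} Z_i^2\leq \frac1n\sum_i Z_i^2=O_P(1)$.
\end{itemize}
It is cleaner to apply Cauchy--Schwarz with the normalizations split as $\big(\frac{m_n}{n}\big)^{1/2}\big(\frac{1}{m_n}\sum(\hat w_i-\hat\omega)^2\big)^{1/2}\big(\frac1n\sum Z_i^2\big)^{1/2}$, which avoids needing $m_n\asymp n$.

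Multiplying the three factors gives $O_P(r_n)$ uniformly in $t$.

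The main subtlety is bookkeeping rather than analysis: confirming that the lemma's sums implicitly carry $G_i$, so that the rate condition of Assumption~\ref{assump:obs_bal}, which is stated only over $\cI_{\calib}$, covers every nonzero term. A second point is that the $Z_i$ may be dependent on the training-independent weights. This is not an issue here, because Cauchy--Schwarz requires no independence and the moment bound on $\frac1n\sum Z_i^2$ holds marginally.
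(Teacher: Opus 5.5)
Your proof is correct and follows the paper's argument. The paper likewise removes the indicator, applies Cauchy--Schwarz, bounds $\frac{1}{n}\sum_{i=1}^n Z_i^2$ via Markov's inequality, and invokes the weight-approximation condition of Assumption~\ref{assump:obs_bal}. Your explicit treatment of the implicit factor $G_i$ and the $m_n/n$ normalization is a bit more careful than the paper, which averages the squared weight discrepancy over all $n$ units and cites the assumption (stated over $\cI_{\calib}$ with $1/|\cI_{\calib}|$ normalization) without comment.
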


    Lemma~\ref{lem:w_to_wx} with $r_n = o(1)$ implies $\frac{1}{n}\sum_{i=1}^n G_i\hat{w}_i = \frac{1}{n}\sum_{i=1}^n G_i \hat{\omega}(X_i) + O_P(r_n)$, and together with $Z_i = G_i\ind\{Y_i^{**}=0\}$  and $f=\hat{s}$ it implies 
    \@\label{eq:bd_w_wx_Y**}
    \sup_{t\in \RR}\bigg| \frac{  \sum_{i=1}^n G_i \hat{w}_i \ind\{Y_i^{**} = 0\}\ind\{\hat{s}(X_i)\leq t\}}{\sum_{i=1}^n G_i \hat{w}_i } - \frac{  \sum_{i=1}^n G_i \hat{\omega}(X_i) \ind\{Y_i^{**} = 0\}\ind\{\hat{s}(X_i)\leq t\}}{\sum_{i=1}^n G_i \hat{\omega}(X_i) } \bigg| = O_P(r_n).
    \@
    Taking $t=\hat\tau_0^*$ in the above display, and following~\eqref{eq:risk_bd2}, we have 
    \@\label{eq:risk_bd22}
    R_n\leq  \alpha + \EE[\fna(X_{n+1})\ind\{\hat{s}(X_{n+1})\leq \hat\tau_0^*\}\given \cA_n] - \frac{  \sum_{i=1}^n G_i \hat{\omega}(X_i) \ind\{Y_i^{**} = 0,\hat{s}(X_i)\leq \hat\tau_0^*\}}{\sum_{i=1}^n G_i \hat{\omega}(X_i)  } +O_P(r_n+1/n).\quad~ ~
    \@ 

    Consider
\( 
Z_i
:=
G_i\hat \omega(X_i)
\{\ind(Y_i^{**}=0)-\gamma(X_i)\}.
\)
Since $G_i$ is generated independently of $(Y_i(0),Y_i(1),U_i)$
conditional on $(X_i,T_i)$, and
$\PP(Y_i^{**}=0\mid X_i,T_i)=\gamma(X_i)$, we have
\[
\begin{aligned}
\EE[Z_i\mid X_i]
&=
\hat \omega(X_i)
\EE\left[
\EE\left[
G_i\{\ind(Y_i^{**}=0)-\gamma(X_i)\}
\mid X_i,T_i
\right]
\middle|X_i
\right]\\
&=
\hat \omega(X_i)
\EE\left[
\hat g(X_i,T_i)
\{\PP(Y_i^{**}=0\mid X_i,T_i)-\gamma(X_i)\}
\middle|X_i
\right]
=0.
\end{aligned}
\]  
    % We now define  
    % \$
    % \hat{q}(x):= &\EE[ \ind\{Y_i^\dagger=0\}\given X_i=x,G_i=1]
    % = \frac{\PP(Y_i^\dagger =0, G_i=1\given X_i=x)}{\PP(G_i=1\given X=x)} \\ 
    % &= \frac{\PP(Y_i^\dagger =0, G_i=1, T_i=1\given X_i=x)+\PP(Y_i^\dagger =0, G_i=1, T_i=0\given X_i=x)}{\PP(T_i=1, G_i=1\given X=x) + \PP(T_i=0, G_i=1\given X=x)} \\ 
    % &= (1-\mu_1(x)) \cdot \frac{ e(x) \hat{g}(x,1)}{e(x)\hat{g}(x,1)+(1-e(x))\hat{g}(x,0)} +  \mu_0(x) \cdot \frac{ (1-e(x)) \hat{g}(x,0)}{e(x)\hat{g}(x,1)+(1-e(x))\hat{g}(x,0)},
    % \$
    % We now write $G_i=\hat{g}(X_i,T_i)$ 
    % Recall that $\hat{g}(x,t)$ is the sampling probability of $G_i=1$ given $(X_i,T_i)=(x,t)$. 
    % This function is independent of the $n+1$ data points (only depending on the training process of $\hat{g}$).  
    % Consider $Z_i=G_i\hat{w}(X_i)[\ind\{Y_i^{**}=0\}-\fna(X_i)]$ and $s(\cdot)=\hat{s}(\cdot)$ in Lemma~\ref{lem:wcdf_conv}. By definition, because of the unconfoundedness, we know  $\EE[Z_i\given X_i] = \EE[\hat{g}(X_i,T_i)\given X_i] \hat{w}(X_i) \cdot \{\PP(Y_i^{**}=0\given X_i) - \fna(X_i)\} =0$ almost surely, and thus $\EE[Z\ind\{s(X)\leq t\}]=0$. 
    Thus, invoking Lemma~\ref{lem:wcdf_conv} with this $Z_i$ and $s(\cdot)=\hat{s}(\cdot)$  yields
    \@\label{eq:bd_Y_to_q}
    \sup_{t\in \RR}\bigg| \frac{  \sum_{i=1}^n G_i \hat{\omega}(X_i)  [\ind\{Y_i^{**}=0\} - \fna(X_i)] \ind\{\hat{s}(X_i)\leq t\}}{\sum_{i=1}^n G_i \hat{\omega}(X_i)  }  \bigg| = O_P(1/\sqrt{n}).
    \@
    % Since $\fna(x) = \min \{ 1-\mu_1(x), \mu_0(x)\}$, by the definition of $\hat{q}(x)$ we know 
    % \$
    % \hat{q}(x) \geq \fna(x),\quad \text{for all }x\in \cX,
    % \$
    % no matter the choice of $e(x)$ and $\hat{g}(x,t)$. 
    Continuing with~\eqref{eq:risk_bd22}, this implies 
    \@\label{eq:risk_bd23}
    R_n\leq  \alpha + \EE[\fna(X_{n+1})\ind\{\hat{s}(X_{n+1})\leq \hat\tau_0^*\}\given\cA_n] - \frac{  \sum_{i=1}^n G_i \hat{\omega}(X_i)\fna(X_i)\ind\{\hat{s}(X_i)\leq \hat\tau_0^*\}}{\sum_{i=1}^n G_i \hat{\omega}(X_i)  } +O_P(1/\sqrt{n}+r_n).\qquad
    \@ 

    \paragraph{Case 1: $\|\hat\omega-w^\circ\|_{L_2(\PP_X)}=o_P(1)$.} 
    Let $I_t(x):=\ind\{\hat s(x)\leq t\}$. Conditional on $\cT_n$, uniform empirical convergence gives
\[
\sup_{t\in\RR}
\left|
\frac{
\sum_{i=1}^nG_i\hat\omega(X_i)\gamma(X_i)I_t(X_i)
}{
\sum_{i=1}^nG_i\hat\omega(X_i)
}
-
\frac{
\EE[q(X)\hat\omega(X)\gamma(X)I_t(X)\mid\cT_n]
}{
\EE[q(X)\hat\omega(X)\mid\cT_n]
}
\right|
=o_P(1).
\]
Since $0\leq q,\gamma,I_t\leq1$, we have
\[
\sup_{t\in\RR}
\left|
\EE[q(X)\{\hat\omega(X)-w^\circ(X)\}
\gamma(X)I_t(X)\mid\cT_n]
\right|
=o_P(1),
\]
and
$\EE[q(X)\{\hat\omega(X)-w^\circ(X)\}\mid\cT_n]=o_P(1)$.
Moreover, $q(x)w^\circ(x)=p_G$, and therefore
\[
\frac{
\EE[q(X)w^\circ(X)\gamma(X)I_t(X)\mid\cT_n]
}{
\EE[q(X)w^\circ(X)\mid\cT_n]
}
=
\EE[\gamma(X)I_t(X)\mid\cT_n].
\]
Since $p_G$ is bounded away from zero, it follows that
\[
\sup_{t\in\RR}
\left|
\frac{
\sum_{i=1}^nG_i\hat\omega(X_i)\gamma(X_i)I_t(X_i)
}{
\sum_{i=1}^nG_i\hat\omega(X_i)
}
-
\EE[\gamma(X)I_t(X)\mid\cT_n]
\right|
=o_P(1).
\]
Evaluating this display at $t=\hat\tau_0^*$ in~\eqref{eq:risk_bd23}
gives $R_n\leq\alpha+o_P(1)$.

    % Since our p-value is invariant to constant scaling of the weights, without loss of generality we assume $\bar{w} (\cdot)=w(\cdot)$. 
    % Recall that $G_i\sim \text{Bern}(\hat{g}(X_i,T_i))$ conditional on the data for the pre-trained function $\hat{g}$ which is viewed as fixed throughout the proof. Invoking Lemma~\ref{lem:wecdf_hat} with $Z_i=T_i$, $\hat{f}_n(X_i,Z_i) = G_i\hat{w}(X_i) \fna(X_i)$ and $\hat{s}_n(X_i)=\hat{s}(X_i)$, we have 
    % \$
    % \sup_{t\in \RR} \bigg| \frac{  \sum_{i=1}^n G_i \hat{w}(X_i)\fna(X_i)\ind\{\hat{s}(X_i)\leq t\}}{\sum_{i=1}^n G_i \hat{w}(X_i)  } - \frac{  \EE[G_i  \bar{w}(X_i)\fna(X_i)\ind\{\hat{s}(X_i)\leq t\}]}{\EE[G_i \bar{w}(X_i)]  } \bigg| = o_P(1),
    % \$
    % and since $\bar{w}=w$ is the correct density ratio, we know 
    % \$
    % &\EE[G_i\bar{w}(X_i) ] = \EE[\hat{g}(X,T)\bar{w}(X)]=1,  \\
    % &\EE[G_i\bar{w}(X_i)\fna(X_i)\ind\{\hat{s}(X_i)\leq t\}] = \EE[\gamma(X)\ind\{\hat{s}(X)\leq t\}],\quad \forall ~ t\in \RR. 
    % \$
    % Taking $t=\hat\tau_0^*$ yields the desired bound $R\leq \alpha + o_P(1)$ with $r_n=o(1)$.  

    \paragraph{Case 2: $\|\hat\gamma-\gamma\|_{L_2(\PP_X)}=o_P(1)$.} 
    For $\xi\in\{\hat\gamma,\gamma\}$, define the training-conditional
population curves
\[
H_n^\xi(t)
:=
\EE[\xi(X)\ind\{\hat s(X)\leq t\}\mid\cT_n],
\qquad
F_n^\xi(t)
:=
\frac{
\EE[\hat\omega(X)\hat g(X,T)\xi(X)
\ind\{\hat s(X)\leq t\}\mid\cT_n]
}{
\EE[\hat\omega(X)\hat g(X,T)\mid\cT_n]
}.
\]
Let $t_n^\circ:=\sup\{t\in\RR:H_n^{\hat\gamma}(t)\leq\alpha\}$ and
recall
\[
\hat Q_n(t)
:=
\frac{
\sum_{i=1}^nG_i\hat w_i\hat\gamma(X_i)
\ind\{\hat s(X_i)\leq t\}
}{
\sum_{i=1}^nG_i\hat w_i
}.
\]
Lemmas~\ref{lem:w_to_wx} and~\ref{lem:wecdf_hat}, applied to the
numerators and denominators, give
\@
\sup_{t\in\RR}|\hat Q_n(t)-F_n^{\hat\gamma}(t)|
&=o_P(1), \quad 
\sup_{t\in\RR}|\hat F_{n,0}^*(t)-F_n^\gamma(t)|
 =o_P(1).
\@
Here the second display also uses
$\PP(Y_i^{**}=0\mid X_i,T_i)=\gamma(X_i)$. Moreover, since the
denominator of $F_n^\xi$ is bounded away from zero,
\[
\sup_{t\in\RR}|F_n^\gamma(t)-F_n^{\hat\gamma}(t)|
+
\sup_{t\in\RR}|H_n^\gamma(t)-H_n^{\hat\gamma}(t)|
\leq
C\|\hat\gamma-\gamma\|_{L_1(\PP_X)}
=o_P(1).
\]

By Lemma~\ref{lem:wecdf_hat} and the assumed local slope,
$\hat t-t_n^\circ=O_P(n^{-1/2})$. The balancing condition in
Assumption~\ref{assump:obs_bal}, together with the definition of
$\hat t$ and the fact that the maximal jump of its defining empirical
curve is at most $n^{-1}$, gives
$\hat Q_n(\hat t)=\alpha+o_P(1)$. Hence
$F_n^{\hat\gamma}(\hat t)=\alpha+o_P(1)$ and, by the bounded-density
condition,
\[
F_n^{\hat\gamma}(t_n^\circ)=\alpha+o_P(1).
\]

The same curve has a positive local slope at $t_n^\circ$. Indeed, for
$t_1<t_2$ in a sufficiently small neighborhood of $t_n^\circ$, the
lower bounds on $q$ and $\hat\omega$, together with their boundedness,
imply
\[
F_n^{\hat\gamma}(t_2)-F_n^{\hat\gamma}(t_1)
\geq
c\{H_n^{\hat\gamma}(t_2)-H_n^{\hat\gamma}(t_1)\}
\]
for some deterministic $c>0$. Therefore, for every fixed
$\varepsilon>0$, with probability tending to one,
\[
F_n^\gamma(t_n^\circ-\varepsilon)
<
\alpha
<
F_n^\gamma(t_n^\circ+\varepsilon).
\]
Combining this strict separation with
$\sup_t|\hat F_{n,0}^*(t)-F_n^\gamma(t)|=o_P(1)$ and the
monotonicity of $\hat F_{n,0}^*$ yields
\[
\hat\tau_0^*-t_n^\circ=o_P(1).
\]

Finally, conditional on $\cA_n$, the test point is independent of the
training and calibration data. Thus, by~\eqref{eq:risk_bd23} and
the preceding uniform convergence,
\[
R_n
\leq
\alpha
+
H_n^\gamma(\hat\tau_0^*)
-
F_n^\gamma(\hat\tau_0^*)
+
o_P(1).
\]
The bounded-density condition, $\hat\tau_0^*-t_n^\circ=o_P(1)$, and
the uniform differences between the $\gamma$- and
$\hat\gamma$-curves imply 
$H_n^\gamma(\hat\tau_0^*)=\alpha+o_P(1)$ and 
$F_n^\gamma(\hat\tau_0^*)=\alpha+o_P(1)$. 
Consequently, $R_n\leq\alpha+o_P(1)$. Marginalizing over $\cA_n$
completes the proof in this case.
\end{proof}

\subsection{Proof of Theorem~\ref{thm:dr_obs_rate_general}}
\label{app:subsec_obs_dr_rate_general}
    % \paragraph{Proof of rate-double robustness.} 
\begin{proof}[Proof of Theorem~\ref{thm:dr_obs_rate_general}]
    
We reuse the proof of Theorem~\ref{thm:dr_obs} up to~\eqref{eq:risk_bd23}  to prove the rate double robustness result. Note that the results up to~\eqref{eq:risk_bd23} did not use any assumptions on the fitted models.  Define
\[
\hat\omega^\dagger(x):=\frac{\hat\omega(x)}{p_G}.
\]
Since $q(x)\geq c$, we have $p_G=\EE[q(X)\mid\cT_n]\geq c$.
Therefore, Assumption~\ref{assump:obs_slow} implies
\[
\|\hat{\omega}^\dagger-w\|_{L_\infty(\PP_X)}
= 
\|\hat\omega-w^\circ\|_{L_\infty(\PP_X)} / p_G
=
O_P(n^{-1/4}).
\]
Moreover, replacing $\hat\omega$ by $\hat{\omega}^\dagger$ does not change
any weighted empirical or population ratio.
Recall that $\cA_n$ is the $\sigma$-field that includes the randomness in the labeled data, the training process, and generating $\{G_i\}$, and we denote the $\cA_n$-conditional harm rate  
$R_n := \PP(Y_{n+1}(\pi_{\text{obs}}(X_{n+1}))<Y_{n+1}(0)\given \cA_n)$. 
The balancing condition in Assumption~\ref{assump:obs_bal} with $r_n=O(n^{-1/2})$, together with~\eqref{eq:risk_bd23}, implies 
\$
R_n &\leq \alpha + \EE[\fna(X_{n+1})\ind\{\hat{s}(X_{n+1})\leq \hat\tau_0^*\}\given \cA_n] - \frac{  \sum_{i=1}^n G_i \hat\omega^\dagger(X_i)\fna(X_i)\ind\{\hat{s}(X_i)\leq \hat\tau_0^*\}}{\sum_{i=1}^n G_i \hat\omega^\dagger(X_i)  }  \\ 
&\qquad + \frac{\sum_{i=1}^n G_i \hat\omega^\dagger(X_i) \hat\fna(X_i) \ind\{\hat{s}(X_i)\leq \hat{t}\}}{\sum_{i=1}^n G_i\hat\omega^\dagger(X_i)} - \frac{1}{n}\sum_{i=1}^n \hat\fna(X_i) \ind\{\hat{s}(X_i)\leq \hat{t}\} +O_P(1/\sqrt{n}),
\$
where the second and the third terms invoked Lemma~\ref{lem:w_to_wx} twice. 
By Assumption~\ref{assump:obs_bal}, we have
\[
\begin{aligned}
\frac1n\sum_{i=1}^nG_i\hat\omega(X_i)
&=
\frac1n\sum_{i=1}^nG_i\hat w_i
+
O_P(1/\sqrt{n}) =
\frac{m_n}{n}\{1+O_P(1/\sqrt{n})\}
+
O_P(1/\sqrt{n}) =
p_G+O_P(1/\sqrt{n}).
\end{aligned}
\]
Consequently, 
$\frac1n\sum_{i=1}^nG_i\hat\omega^\dagger(X_i)
=
1+O_P(1/\sqrt{n})$, which further yields
\$
R_n &\leq \alpha + \EE[\fna(X_{n+1})\ind\{\hat{s}(X_{n+1})\leq \hat\tau_0^*\}\given \cA_n] - \frac{1}{n}\sum_{i=1}^n \hat\fna(X_i) \ind\{\hat{s}(X_i)\leq \hat{t}\}  \\ 
&\quad + \frac{1}{n}\sum_{i=1}^n G_i \hat\omega^\dagger(X_i) \hat\fna(X_i) \ind\{\hat{s}(X_i)\leq \hat{t}\}  - \frac{1}{n} \sum_{i=1}^n G_i \hat\omega^\dagger(X_i)\fna(X_i)\ind\{\hat{s}(X_i)\leq \hat\tau_0^*\} +O_P(1/\sqrt{n}) \\
 &\leq \alpha + \EE[\fna(X)\ind\{\hat{s}(X)\leq \hat\tau_0^*\}\given \cA_n] - \EE[\hat\fna(X ) \ind\{\hat{s}(X )\leq \hat{t}\}\given \cA_n ] \\ 
&\quad + \frac{1}{n}\sum_{i=1}^n \hat{g}(X_i,T_i) \hat\omega^\dagger(X_i) \hat\fna(X_i) \ind\{\hat{s}(X_i)\leq \hat{t}\}  - \frac{1}{n} \sum_{i=1}^n \hat{g}(X_i,T_i)  \hat\omega^\dagger(X_i)\fna(X_i)\ind\{\hat{s}(X_i)\leq \hat\tau_0^*\} +O_P(1/\sqrt{n}),
\$
where $X\sim \PP_X$ is an independent copy. 
Here, the second inequality uses Lemma~\ref{lem:wcdf_conv} for $Z_i=(\hat{G}_i-\hat{g}(X_i,T_i))\hat{\omega}^\dagger(X_i)\fna(X_i)$ or $Z_i =(\hat{G}_i-\hat{g}(X_i,T_i))\hat{\omega}^\dagger(X_i)\hat\fna(X_i)$, as well as Lemma~\ref{lem:wcdf_conv} for $Z_i=\hat\fna(X_i)$ and $H_n(t)=\EE[\hat\gamma(X)\ind\{\hat{s}(X)\leq t\}\given \cT_n]$ evaluated at $t=\hat{t}$ which is adapted to the $\sigma$-field $\cA_n$. 

For the oracle weight $w(X)$, we know $\EE[G_iw(X_i)]=\EE[\hat{g}(X,T)w(X)]=1$, and due to the covariate shift between calibration data conditional on $G_i=1$ and the test data, 
\$
& \EE[\fna(X_{n+1})\ind\{\hat{s}(X_{n+1})\leq \hat\tau_0^*\}\given \cA_n] =  \EE[\hat{g}(X,T) w(X)\fna(X)\ind\{\hat{s}(X )\leq \hat\tau_0^*\}\given \cA_n],\\
 & \EE[\hat\fna(X_{n+1})\ind\{\hat{s}(X_{n+1})\leq \hat t\}\given \cA_n] =  \EE[\hat{g}(X,T) w(X)\hat\fna(X)\ind\{\hat{s}(X )\leq \hat t\}\given \cA_n],
\$
where the expectations are over an independent copy $(X,T)\sim \PP_{X,T}$. Thus, denoting 
\$
\hat{r}(x,t) = \hat{g}(x,t) \cdot \big\{ \hat\gamma(x)\ind\{\hat{s}(x)\leq \hat{t}\} - \gamma(x) \ind\{\hat{s}(x)\leq \hat\tau_0^*\}\big\}
\$
which is a function adapted to $\cA_n$ and obeys $\hat{r}(x,t)\in [-1,1]$ for any value of $(x,t)$, since $\hat{g}(x,t)\in [0,1]$ and $\hat\fna(x),\fna(x)\in [0,1]$. 
Then, for an independent copy $(X,T)\sim \PP_{X,T}$, 
\$
R_n & \leq \alpha + \frac{1}{n}\sum_{i=1}^n  \hat{\omega}^\dagger(X_i) \hat{r}(X_i,T_i) - \EE[w(X)\hat{r}(X,T)\given \cA_n] + O_P(1/\sqrt{n}) \\
&\leq \alpha + \underbrace{\frac{1}{n}\sum_{i=1}^n \big\{\hat{\omega}^\dagger(X_i)-w(X_i)\big\} \hat{r}(X_i,T_i)}_{\text{(a)}} + \underbrace{\frac{1}{n}\sum_{i=1}^n \Big\{ w(X_i)\hat{r}(X_i,T_i) - \EE[w(X)\hat{r}(X,T)\given \cA_n] \Big\}}_{\text{(b)}} + O_P(1/\sqrt{n}).
\$
We now proceed to control the two terms separately via uniform convergence. 
Let $\cT_n$ denote the $\sigma$-field generated by the
training processes, conditional on which $\hat g$, $\hat\gamma$, and
$\hat s$ are fixed. Writing $P_n$ for the empirical measure of
$(X_i,T_i)_{i=1}^n$ and $P$ for the expectation over an independent copy,
define
\[
f_{1,u}(x,t)
:=
w(x)\hat g(x,t)\hat\gamma(x)\ind\{\hat s(x)\leq u\},
\qquad
f_{2,v}(x,t)
:=
w(x)\hat g(x,t)\gamma(x)\ind\{\hat s(x)\leq v\}.
\]
By definition, as the cutoffs $\hat\tau_0^*$ and $\hat{t}$ in the definition of $\hat{r}(\cdot,\cdot)$ is fixed given $\cA_n$, we  know 
\[
\begin{aligned}
|\text{(b)}|
&=
\left|
(P_n-P)f_{1,\hat t}
-
(P_n-P)f_{2,\hat\tau_0^*}
\right| \leq
\sup_{u\in\RR}|(P_n-P)f_{1,u}|
+
\sup_{v\in\RR}|(P_n-P)f_{2,v}| = O_P(1/\sqrt{n}),
\end{aligned}
\]
where the last equality follows from applying Lemma~\ref{lem:wcdf_conv} twice,  
provided that
\( \|w\hat g\hat\fna\|_{L_2(\PP_{X,T})}
+
\|w\hat g\fna\|_{L_2(\PP_{X,T})}
=
O_P(1)
\)
based on the boundedness of $w(\cdot)$, $\hat{g}$, $\fna$, and $\hat\fna$. 
For term $(a)$, we define for any $u,v\in\RR$ the function 
\[
r_{u,v}(x,t)
:=
\hat g(x,t)\left\{
\hat\gamma(x)\ind\{\hat s(x)\leq u\}
-
\gamma(x)\ind\{\hat s(x)\leq v\}
\right\},
\]
so that $\hat r=r_{\hat t,\hat\tau_0^*}$. We then have 
\$
|\text{(a)}|
&=
\big|P_n\{(\hat{\omega}^\dagger - w)r_{\hat t,\hat\tau_0^*}\}\big|
\leq
\big|P\{(\hat{\omega}^\dagger-w)r_{\hat t,\hat\tau_0^*}\}\big|
+
\sup_{u,v\in\RR}
\big|
(P_n-P)\{(\hat{\omega}^\dagger-w)r_{u,v}\}
\big|.
\$
By the H\"older's inequality,
\[
\left|P\{(\hat{\omega}^\dagger-w)r_{\hat t,\hat\tau_0^*}\}\right|
\leq
\|\hat{\omega}^\dagger-w\|_{L_\infty(\PP_X)}
\|\hat r\|_{L_1(\PP_{X,T})}.
\]
Moreover,
\$
&\sup_{u,v\in\RR}
\big|
(P_n-P)\{(\hat{\omega}^\dagger-w)r_{u,v}\}
\big| \\ 
&\leq
\sup_{u\in\RR}
\big|
(P_n-P)
\left[
(\hat{\omega}^\dagger-w)\hat g\hat\gamma
\ind\{\hat s\leq u\}
\right]
\big|+
\sup_{v\in\RR}
\big|
(P_n-P)
\left[
(\hat{\omega}^\dagger-w)\hat g\gamma
\ind\{\hat s\leq v\}
\right]
\big|.
\$
Conditional on $\cT_n$, both terms are indexed by nested threshold
classes. Since $\hat g,\hat\gamma,\gamma\in[0,1]$ and
$\|\hat{\omega}^\dagger-w\|_{L_2(\PP_X)}=O_P(1)$, applying
Lemma~\ref{lem:wcdf_conv} twice gives
$\sup_{u,v\in\RR}
\left|
(P_n-P)\{(\hat{\omega}^\dagger-w)r_{u,v}\}
\right|
=
O_P(n^{-1/2})$. 
Thus,
\[
|(a)|
\leq
\|\hat{\omega}^\dagger-w\|_{L_\infty(\PP_X)}
\|\hat r\|_{L_1(\PP_{X,T})}
+
O_P(n^{-1/2}).
\]
Combining the above two bounds gives 
\@\label{eq:R_bd_5}
R_n \leq \alpha + \|\hat{\omega}^\dagger-w\|_{L_\infty(\PP_X)}
\|\hat r\|_{L_1(\PP_{X,T})}
+
O_P(n^{-1/2}).
\@

We now proceed to show $\|\hat{r}\|_{L_1(\PP_{X,T})}=O_P(n^{-1/4})$. By definition, as $\hat{g}\in [0,1]$ and $\fna\in [0,1]$, 
\$
\|\hat{r}\|_{L_1(\PP_{X,T})} &\leq \big\| \hat\fna(\cdot)\ind\{\hat{s}(\cdot)\leq \hat{t}\} - \fna(\cdot) \ind\{\hat{s}(\cdot)\leq \hat\tau_0^*\}\big\|_{L_1(\PP_{X,T})} \\ 
&\leq \big\| \hat\fna(\cdot)\ind\{\hat{s}(\cdot)\leq \hat{t}\} - \fna(\cdot) \ind\{\hat{s}(\cdot)\leq \hat t\}\big\|_{L_1(\PP_{X,T})} + \big\| \fna(\cdot)\ind\{\hat{s}(\cdot)\leq \hat{t}\} - \fna(\cdot) \ind\{\hat{s}(\cdot)\leq \hat\tau_0^*\}\big\|_{L_1(\PP_{X,T})} \\ 
&\leq \|\hat\fna-\fna\|_{L_1(\PP_{X,T})} +\big\|  \ind\{\hat{s}(\cdot)\leq \hat{t}\} -  \ind\{\hat{s}(\cdot)\leq \hat\tau_0^*\}\big\|_{L_1(\PP_{X,T})} 
\leq O_P(n^{-1/4}) + O(|\hat\tau_0^*-\hat{t}|),
\$
where the last inequality uses the bounded-density condition on $\hat{s}(\cdot)$. 

In the following, we show that $|\hat\tau_0^*-\hat{t}|=O_P(n^{-1/4})$ under the given conditions. 
Recall that 
$H_n(t)
:=
\EE[
\hat\gamma(X)\ind\{\hat s(X)\leq t\}
\mid\cT_n
]$ and  
$t_n^\circ
:=
\sup\{t\in\RR:H_n(t)\leq\alpha\}$. 
By Lemma~\ref{lem:wecdf_hat},
\[
\sup_{t\in\RR}
\left|
\frac1n\sum_{i=1}^n
\hat\gamma(X_i)\ind\{\hat s(X_i)\leq t\}
-
H_n(t)
\right|
=
O_P(n^{-1/2}).
\]
Together with condition~(ii) of
Theorem~\ref{thm:dr_obs_rate_general}, the cutoff conclusion of
Lemma~\ref{lem:wecdf_hat} gives 
$|\hat t-t_n^\circ|=O_P(n^{-1/2})$. 
 
Since Assumption~\ref{assump:obs_bal} holds with
$r_n=O(n^{-1/2})$, Lemmas~\ref{lem:w_to_wx}
and~\ref{lem:wecdf_hat}, together with
$\PP(Y_i^{**}=0\mid X_i,T_i)=\gamma(X_i)$, give
\[
\sup_{t\in\RR}
|\hat F_{n,0}^*(t)-F_n^\circ(t)|
=
O_P(n^{-1/2}),\quad \text{where}\quad F_n^\circ(t)
:=
\frac{
\EE\!\left[
\hat\omega^\dagger(X)\hat g(X,T)\gamma(X)
\ind\{\hat s(X)\leq t\}
\,\middle|\,\cT_n
\right]
}{
\EE\!\left[
\hat\omega^\dagger(X)\hat g(X,T)
\,\middle|\,\cT_n
\right]
}.
\]
Write 
$D_n:=\EE\!\left[
\hat\omega^\dagger(X)\hat g(X,T)
\given\cT_n
\right]$. 
For every integrable function $h$, we have
$\EE\!\left[
w(X)\hat g(X,T)h(X)
\given \cT_n
\right]
=
\EE[h(X)\given\cT_n]$. 
It follows that 
$|D_n-1|
\leq
\|\hat\omega^\dagger-w\|_{L_\infty(\PP_X)}
=
O_P(n^{-1/4})$, 
and
\[
\begin{aligned}
\sup_{t\in\RR}
\bigg|
&\EE\!\left[
\hat\omega^\dagger(X)\hat g(X,T)\gamma(X)
\ind\{\hat s(X)\leq t\}
\,\middle|\,\cT_n
\right]
-
H_n(t)
\bigg| \leq
\|\hat\omega^\dagger-w\|_{L_\infty(\PP_X)}
+
\|\hat\gamma-\gamma\|_{L_1(\PP_X)}
=
O_P(n^{-1/4}).
\end{aligned}
\]
Since $D_n=1+o_P(1)$ and $0\leq H_n(t)\leq1$, on the event
$D_n\geq1/2$, it holds that  
$\sup_{t\in\RR}|F_n^\circ(t)-H_n(t)|
=
O_P(n^{-1/4})$. 
Therefore, by the triangle inequality,
\[
\begin{aligned}
\varepsilon_n
&:=
\sup_{t\in\RR}
|\hat F_{n,0}^*(t)-H_n(t)|  \leq
\sup_{t\in\RR}
|\hat F_{n,0}^*(t)-F_n^\circ(t)|
+
\sup_{t\in\RR}
|F_n^\circ(t)-H_n(t)| =
O_P(n^{-1/4}).
\end{aligned}
\]

Let $\mathcal E_n$ denote the event on which 
$H_n(t_n^\circ-u)\leq\alpha-cu$ and 
$H_n(t_n^\circ+u)\geq\alpha+cu$ 
for every $0<u\leq\eta$. By condition~(ii) of
Theorem~\ref{thm:dr_obs_rate_general}, we know 
$\PP(\mathcal E_n)\to1$.
Set 
$\bar\varepsilon_n:=\varepsilon_n+n^{-1}$ and 
$u_n:=\frac{2\bar\varepsilon_n}{c}$. 
Since $\varepsilon_n=O_P(n^{-1/4})$, we have
$u_n=O_P(n^{-1/4})$ and 
$\PP\{\mathcal E_n\cap\{u_n\leq\eta\}\}\to1$. 
On this event,
\[
\begin{aligned}
\hat F_{n,0}^*(t_n^\circ-u_n)
&\leq
H_n(t_n^\circ-u_n)+\varepsilon_n \leq
\alpha-cu_n+\varepsilon_n
<
\alpha, \\
\hat F_{n,0}^*(t_n^\circ+u_n)
&\geq
H_n(t_n^\circ+u_n)-\varepsilon_n \geq
\alpha+cu_n-\varepsilon_n
>
\alpha.
\end{aligned}
\]
Because $\hat F_{n,0}^*$ is non-decreasing, it follows that
$t_n^\circ-u_n
\leq
\hat\tau_0^*
\leq
t_n^\circ+u_n$. 
Therefore, we have  
$|\hat\tau_0^*-t_n^\circ|
=
O_P(n^{-1/4})$, 
and hence
\[
|\hat\tau_0^*-\hat t|
\leq
|\hat\tau_0^*-t_n^\circ|
+
|\hat t-t_n^\circ|
=
O_P(n^{-1/4}).
\]
Putting it back to~\eqref{eq:R_bd_5}, we conclude the proof of Theorem~\ref{thm:dr_obs_rate_general}. 
\end{proof}

\subsection{Proof of Lemma~\ref{lem:bal_weight_approx}}
\label{app:subsec_lem_bal_weight_approx}

\begin{proof}[Proof of Lemma~\ref{lem:bal_weight_approx}]
% For completeness, define
% $\bar\psi_n(t):=n^{-1}\sum_{i=1}^n\psi_t^{\hat\gamma,\tilde w}(X_i)$ and
% \[
% \hat M_n(t)
% :=
% \frac1{m_n}\sum_{i\in\cI_{\calib}}
% \psi_t^{\hat\gamma,\tilde w}(X_i)
% \psi_t^{\hat\gamma,\tilde w}(X_i)^\top .
% \]
% Let
% $B_n(t)
% :=
% \left\{
% b\in\RR^3:
% b_1=1,\ 
% |b_k-\bar\psi_{n,k}(t)|\leq\delta_n,\ k\in\{2,3\}
% \right\}$, 
% and, whenever $\hat M_n(t)$ is invertible, define
% \$
% \hat b_n(t)\in\argmin_{b\in B_n(t)}
% b^\top\hat M_n(t)^{-1}b, 
% \quad \hat a_n(t):=\hat M_n(t)^{-1}\hat b_n(t), \quad \hat\omega_t(x):=
% \psi_t^{\hat\gamma,\tilde w}(x)^\top\hat a_n(t).
% \$

Let $\cN:=\{t\in\RR:|t-t^\circ|\leq\eta\}$ and write
$P_{n,G}f:=m_n^{-1}\sum_{i\in\cI_{\calib}}f(X_i)$ and
$P_Gf:=\EE[f(X)\mid G=1,\cT_n]$. We work conditional on $\cT_n$
and on the event in Assumption~\ref{assump:bal_reg}, whose probability
tends to one. Conditional on $\cT_n$, the pairs $(X_i,G_i)$ are
i.i.d., with
$\PP(G_i=1\mid X_i=x,\cT_n)=q(x)$ and
$\PP(G_i=1\mid\cT_n)=p_G$. All the conditional stochastic bounds below
therefore also hold marginally.

For brevity, throughout the first part of the proof write
$\psi_t:=\psi_t^{\hat\gamma,\tilde w}$,
$M(t):=M_{\hat\gamma,\tilde w}(t)$,
$b(t):=b_{\hat\gamma,\tilde w}(t)$, and
$\omega_t:=\omega_t^{\hat\gamma,\tilde w}$.
The coordinates of $\psi_t$ and their pairwise products are uniformly
bounded and indexed by the nested sets $\{\hat s(X)\leq t\}$.
Applying Lemma~\ref{lem:wecdf_hat} coordinatewise gives
\@\label{eq:bal_unif_concentration}
\sup_{t\in\cN}\|\bar\psi_n(t)-b(t)\|
&=O_P(n^{-1/2}),\qquad
\frac{m_n}{n}=p_G+O_P(n^{-1/2}),\\
\sup_{t\in\cN}\|\hat M_n(t)-M(t)\|_{\text{op}}
&=O_P(n^{-1/2}), \qquad 
\sup_{u\in\RR}
\left|
P_{n,G}\ind\{\hat s(X)\leq u\}
-
P_G\ind\{\hat s(X)\leq u\}
\right|
 =O_P(n^{-1/2}).
\@
Since $p_G$ and
$\inf_{t\in\cN}\lambda_{\min}\{M(t)\}$ are bounded away from zero,
$\hat M_n(t)$ is invertible uniformly over $t\in\cN$ with probability
tending to one.

Define $H_n(t):=n^{-1}\sum_{i=1}^n
\hat\gamma(X_i)\ind\{\hat s(X_i)\leq t\}$ and
$H(t):=\EE[\hat\gamma(X)\ind\{\hat s(X)\leq t\}\mid\cT_n]$.
Lemma~\ref{lem:wecdf_hat} gives
$\sup_t|H_n(t)-H(t)|=O_P(n^{-1/2})$. The local-slope condition in
Assumption~\ref{assump:bal_reg} therefore yields
\@\label{eq:bal_cutoff_rate}
|\hat t-t^\circ|=O_P(n^{-1/2}).
\@
In particular, $\hat t\in\cN$ with probability tending to one.

We next characterize the optimizer. Fix $t\in\cN$ and let $\Psi_t$
be the $m_n\times3$ matrix whose row indexed by
$i\in\cI_{\calib}$ is $\psi_t(X_i)^\top$. For any $b\in B_n(t)$,
the unique vector $u\in\RR^{m_n}$ minimizing $\|u\|^2$ subject to
$m_n^{-1}\Psi_t^\top u=b$ is
$u=\Psi_t\hat M_n(t)^{-1}b$, with objective value
$m_nb^\top\hat M_n(t)^{-1}b$. Hence the unconstrained solution to the
balancing program is
$\hat w_i(t)=\hat\omega_t(X_i)$.
By the definition of $B_n(t)$, equation~\eqref{eq:bal_unif_concentration},
and $\delta_n=O(n^{-1/2})$, we know 
$\sup_{t\in\cN}\|\hat b_n(t)-b(t)\|=O_P(n^{-1/2})$. The identity
$\hat M_n^{-1}-M^{-1}
=\hat M_n^{-1}(M-\hat M_n)M^{-1}$ then gives
\@\label{eq:bal_coefficient_rate}
\sup_{t\in\cN}\|\hat a_n(t)-M(t)^{-1}b(t)\|
=O_P(n^{-1/2}),
\qquad
\sup_{ t\in\cN, x\in\cX}
|\hat\omega_t(x)-\omega_t(x)|
=O_P(n^{-1/2}).
\@
By Assumption~\ref{assump:bal_reg}(ii), we have
$\inf_{t\in\cN,x\in\cX}\omega_t(x)\geq c_1$.
Thus, with probability tending to one,
$\hat\omega_t(x)\geq0$ uniformly over $t\in\cN$ and $x\in\cX$.
The unconstrained solution is therefore feasible for the nonnegative
balancing program and, by strict convexity, is its unique solution.
Taking $t=\hat t$ proves
$\hat w_i=\hat\omega_{\hat t}(X_i)$.

We now compare the empirical weights with
$\mathcal W_n(\hat\gamma,\tilde w)=\omega_{t^\circ}$.
Since $q(x)\leq1$ and $p_G$ is bounded away from zero,
Assumption~\ref{assump:bal_reg}(iv) implies
$P_G\{a<\hat s(X)\leq b\}\leq C(b-a)$ for all $a<b$.
Consequently, uniformly over $t,t'\in\cN$,
\[
\|b(t)-b(t')\|+\|M(t)-M(t')\|_{\text{op}}\leq C|t-t'|,
\qquad
\|M(t)^{-1}b(t)-M(t')^{-1}b(t')\|
\leq C|t-t'|.
\]
Write $M(t)^{-1}b(t)=(a_0(t),a_h(t),a_w(t))^\top$. Then
\[
\omega_t(x)-\omega_{t'}(x)
=
\psi_t(x)^\top
\{M(t)^{-1}b(t)-M(t')^{-1}b(t')\}
+
a_h(t')\{h_t^{\hat\gamma}(x)-h_{t'}^{\hat\gamma}(x)\},
\]
and hence
\@\label{eq:bal_weight_cutoff_lipschitz}
|\omega_t(x)-\omega_{t'}(x)|^2
\leq
C|t-t'|^2+
C\ind\{\min(t,t')<\hat s(x)\leq\max(t,t')\}.
\@
Equations~\eqref{eq:bal_unif_concentration} and
\eqref{eq:bal_cutoff_rate} imply
\[
P_{n,G}
\ind\{\min(\hat t,t^\circ)<\hat s(X)
\leq\max(\hat t,t^\circ)\}
=O_P(n^{-1/2}).
\]
It follows from \eqref{eq:bal_weight_cutoff_lipschitz} that
\@\label{eq:bal_population_cutoff_error}
P_{n,G}
\{\omega_{\hat t}(X)-\omega_{t^\circ}(X)\}^2
=O_P(n^{-1/2}).
\@
On the other hand, equation~\eqref{eq:bal_coefficient_rate} gives
\@\label{eq:bal_empirical_coefficient_error}
P_{n,G}
\{\hat\omega_{\hat t}(X)-\omega_{\hat t}(X)\}^2
=
\{\hat a_n(\hat t)-M(\hat t)^{-1}b(\hat t)\}^\top
\hat M_n(\hat t)
\{\hat a_n(\hat t)-M(\hat t)^{-1}b(\hat t)\}
=O_P(n^{-1}).
\@
Combining
\eqref{eq:bal_population_cutoff_error}--\eqref{eq:bal_empirical_coefficient_error}
proves
\[
\frac1{m_n}\sum_{i\in\cI_{\calib}}
\left\{
\hat w_i-\mathcal W_n(\hat\gamma,\tilde w)(X_i)
\right\}^2
=O_P(n^{-1/2}).
\]

The boundedness of $\psi_t$, together with
Assumption~\ref{assump:bal_reg}(ii), implies
\[
\sup_{ t\in\cN,x\in\cX}
|\omega_t(x)|
\leq
\sup_{ t\in\cN,x\in\cX}
\|\psi_t(x)\|
\sup_{t\in\cN}\|M(t)^{-1}\|_{\text{op}}
\sup_{t\in\cN}\|b(t)\|
\leq C.
\]
Together with~\eqref{eq:bal_coefficient_rate} and
$\PP(\hat t\in\cN)\to1$, this gives 
$\sup_{x\in\cX}
|\hat\omega_{\hat t}(x)|
=
O_P(1)$. 
Since
$\hat w_i=\hat\omega_{\hat t}(X_i)$ for
$i\in\cI_{\calib}$ and
$\hat w_{n+1}=\hat\omega_{\hat t}(X_{n+1})$, it follows that
$\hat w_{n+1}
\vee
\max_{i\in\cI_{\calib}}\hat w_i
=
O_P(1)$. 
The normalization constraint gives
$\sum_{i\in\cI_{\calib}}\hat w_i=m_n$, while
$m_n/n=p_G+O_P(n^{-1/2})$ and $p_G$ is bounded away from zero.
Therefore,
\[
\frac{
\hat w_{n+1}
\vee
\max_{i\in\cI_{\calib}}\hat w_i
}{
\sum_{i\in\cI_{\calib}}\hat w_i
}
=
O_P(n^{-1}).
\]

Suppose now that
$\|\tilde w-w\|_{L_2(\PP_X)}=o_P(1)$.
Let $a^\circ:=(0,0,p_G)^\top$. Since $q(x)w(x)=1$, we have,
uniformly over $t\in\cN$, that
$b(t)
=
\EE[q(X)\psi_t(X)w(X)\given \cT_n]$. 
On the other hand, since
$\psi_t(x)^\top a^\circ=p_G\tilde w(x)$, we have
\[
M(t)a^\circ
=
\EE[q(X)\psi_t(X)\tilde w(X)\given \cT_n].
\]
Consequently, we have 
$b(t)-M(t)a^\circ
=
\EE\!\left[
q(X)\psi_t(X)\{w(X)-\tilde w(X)\}
\given \cT_n
\right]$. 
By Assumption~\ref{assump:bal_reg}, $\psi_t$ is uniformly bounded and
$M(t)^{-1}$ is uniformly bounded over $t\in\cN$. Hence,
by the Cauchy--Schwarz inequality,
\[
\sup_{t\in\cN}
\|M(t)^{-1}b(t)-a^\circ\|
\leq
C\|\tilde w-w\|_{L_2(\PP_X)}
=o_P(1).
\]
Moreover, since $w^\circ(x)=p_Gw(x)$, we have 
\@
\omega_t(x)-w^\circ(x)
&=
\psi_t(x)^\top
\{M(t)^{-1}b(t)-a^\circ\}
+
p_G\{\tilde w(x)-w(x)\}.
\@
It follows that 
$\sup_{t\in\cN}
\|\omega_t-w^\circ\|_{L_2(\PP_X)}
=o_P(1)$. 
Taking $t=t^\circ$ and recalling that
$\mathcal W_n(\hat\fna,\tilde w)=\omega_{t^\circ}$ gives
\[
\|\mathcal W_n(\hat\fna,\tilde w)
-w^\circ\|_{L_2(\PP_X)}
=o_P(1).
\]

Finally, suppose $\varepsilon_n
:=
\|\tilde w-w\|_{L_\infty(\PP_X)}
=
O_P(n^{-1/4})$. 
The preceding argument now yields
\@
\sup_{t\in\cN}
\|M(t)^{-1}b(t)-a^\circ\|
&\leq C\varepsilon_n, \quad 
\sup_{t\in\cN}|a_h(t)|
\leq C\varepsilon_n,
\@
where $a_h(t)$ denotes the coefficient on
$h_t^{\widehat\gamma}$ in $M(t)^{-1}b(t)$. Therefore, 
$\sup_{t\in\cN}
\|\omega_t-w^\circ\|_{L_\infty(\PP_X)}
=O_P(n^{-1/4})$, 
and in particular,
\[
\|\mathcal W_n(\hat\gamma,\tilde w)
-w^\circ\|_{L_\infty(\PP_X)}
=O_P(n^{-1/4}).
\]

To sharpen the empirical approximation rate, recall that, uniformly
over $t,t'\in\cN$,
\[
\begin{aligned}
|\omega_t(x)-\omega_{t'}(x)|^2
\leq{}&
C|t-t'|^2 +
C\left\{\sup_{u\in\cN}|a_h(u)|^2\right\}
\ind\{\min(t,t')<\widehat s(x)\leq\max(t,t')\}.
\end{aligned}
\]
The bounds already established above imply
$|\widehat t-t^\circ|=O_P(n^{-1/2})$ and
\[
P_{n,G}
\ind\{\min(\widehat t,t^\circ)<\widehat s(X)
\leq\max(\widehat t,t^\circ)\}
=O_P(n^{-1/2}).
\]
Since
$\sup_{u\in\cN}|a_h(u)|^2=O_P(n^{-1/2})$, it follows that
$P_{n,G}
\{\omega_{\widehat t}(X)-\omega_{t^\circ}(X)\}^2
=O_P(n^{-1})$. 
Combining this with
$P_{n,G}
\{\widehat\omega_{\widehat t}(X)
-\omega_{\widehat t}(X)\}^2
=O_P(n^{-1})$ 
gives
\[
\frac1{m_n}\sum_{i\in\cI_{\calib}}
\left\{
\widehat w_i-
\mathcal W_n(\widehat\gamma,\tilde w)(X_i)
\right\}^2
=O_P(n^{-1}),
\]
which completes the proof.
\end{proof}

\section{Technical proof for results in appendix}

\subsection{Proof of Theorem~\ref{thm:global_opt_general}}
\label{app:subsec_proof_global_opt_general}

\begin{proof}[Proof of Theorem~\ref{thm:global_opt_general}]
    Following~\cite{kallus2022s} and~\cite{li2023trustworthy},  for any decision rule $\phi\colon \cX\to \{0,1\}$, 
    \$
\max_{P\in \cP} ~\text{Err}(\phi;P)  = \EE[\phi(X)\fna(X)],
    \$
    where $\fna(x)=\min\{\PP(Y(1)=0\given X=x), \PP(Y(0)=1\given X=x)\}$ solely relies on the observed distribution, and $\EE[\cdot]$ is with respect to the observed distribution. On the other hand, the power is the expectation of $\phi(X)$ under the observed covariate distribution. The optimization problem (with randomized policy) is 
    \$
    \maximize_{\phi\colon \cX\to [0,1]} ~~&\EE[\phi(X)]\\ 
    \text{subject to} ~~& \EE[\phi(X)\fna(X)] \leq \alpha.
    \$
    If $\EE[\fna(X)]\leq \alpha$, then $\phi_{\text{power}}^*=1$ maximizes the power since its power is one. 
    
    Otherwise, suppose $\EE[\fna(X)]>\alpha$. 
    Consider any decision rule $\psi(\cdot)$ obeying $\EE[\psi(X)\fna(X)]\leq \alpha$, and write $\phi^*(\cdot)=\phi_{\text{power}}^*(\cdot)$ for notational convenience. Note that 
    \$
     &\fna^*\cdot \big(\EE[\psi(X)]-\EE[\phi^*(X)]\big)\\  
    &= \EE\big[ \fna^*\cdot (\psi(X)-\phi^*(X))\ind\{\fna(X)<\fna^*\} \big]\\ 
    &\qquad \quad + \EE\big[ \fna^*\cdot (\psi(X)-\phi^*(X))\ind\{\fna(X)>\fna^*\} \big] + \EE\big[ \fna^*\cdot (\psi(X)-\phi^*(X))\ind\{\fna(X)=\fna^*\} \big] .
    \$
    The first term, as $\phi^*(X)=1$ when $\fna(X)<\fna^*$ and hence $\psi(X)-\phi^*(X)\leq 0$, obeys 
    \$
    \EE\big[ \fna^*\cdot (\psi(X)-\phi^*(X))\ind\{\fna(X)<\fna^*\} \big] \leq \EE\big[ \fna(X)(\psi(X)-\phi^*(X))\ind\{\fna(X)<\fna^*\} \big].
    \$
    Similar arguments yield 
    \$
    &\EE\big[ \fna^*\cdot (\psi(X)-\phi^*(X))\ind\{\fna(X)>\fna^*\} \big] \leq \EE\big[ \fna(X)(\psi(X)-\phi^*(X))\ind\{\fna(X)>\fna^*\} \big], \\ 
    &\EE\big[ \fna^*\cdot (\psi(X)-\phi^*(X))\ind\{\fna(X)=\fna^*\} \big] = \EE\big[ \fna(X)(\psi(X)-\phi^*(X))\ind\{\fna(X)=\fna^*\} \big].
    \$
    Adding the three terms up, we know 
    \$
\fna^*\cdot \big(\EE[\psi(X)]-\EE[\phi^*(X)]\big) 
\leq \EE[\fna(X)\psi(X)] - \EE[\fna(X)\phi^*(X)] \leq \alpha-\alpha=0
    \$
    since $\psi(X)$ is a feasible solution and the satefy constraint is binding for $\phi^*$. Finally, as $\fna^*\geq 0$, this implies $\EE[\psi(X)]\leq \EE[\phi^*(X)]$. The arbitrariness of $\psi$ implies the optimality of $\phi^*$. 
\end{proof}

\subsection{Proof of Theorem~\ref{thm:power_opt_str}}
\label{app:subsec_opt_power_ours}

\begin{proof}[Proof of Theorem~\ref{thm:power_opt_str}]
Let $\widehat a(x)\in\{0,1\}$ denote the arm selected by the
inclusion rule, and define
\[
\overline\fna_n(x)
:=
\{1-\mu_1(x)\}\mathbf{1}\{\widehat a(x)=1\}
+
\mu_0(x)\mathbf{1}\{\widehat a(x)=0\}.
\]
As in the proof of Theorem~\ref{thm:opt_stratified_experiments}, the optimality of
$\widehat a(x)$ for the estimated proxy-label risks gives
\[
0
\leq
\overline\fna_n(x)-\fna(x)
\leq 2\left\{
|\widehat\mu_1(x)-\mu_1(x)|
+ |\widehat\mu_0(x)-\mu_0(x)| \right\}.
\]
Hence
$\|\overline\fna_n-\fna\|_{L_1(P_X)}
=o_{\mathbb P}(1)$. The assumed outcome-model consistency also
implies
$\|\widehat\fna-\fna\|_{L_1(P_X)}
=o_{\mathbb P}(1)$.

Conditional on the training process, let
\[
\rho_n(x)
:=
e(x)\mathbf{1}\{\widehat a(x)=1\}
+
\{1-e(x)\}\mathbf{1}\{\widehat a(x)=0\},
\qquad
w_n(x):=\rho_n(x)^{-1}.
\]
The weighted identities in the proof of Theorem~\ref{thm:opt_stratified_experiments} give 
$\mathbb{E}[G w_n(X)\mid X]=1$  and 
$\mathbb{E}\!\left[
G w_n(X)\mathbf{1}\{Y^\dagger=0\}
\,\middle|\,X
\right]=\overline\fna_n(X)$. 
It follows by the same uniform-law-of-large-numbers argument that
the weighted empirical conformal curve converges uniformly to 
$H(t)
:=
\mathbb{E} [
\gamma(X)\mathbf{1}\{\gamma(X)\leq t\}]$. 
Consequently,
\[
p^{\mathrm{str-power}}_{n+1}
-
H\{\gamma(X_{n+1})\}
=
o_{\mathbb P}(1).
\]

Since $\gamma(X)$ has no point mass, the same threshold argument
as in the proof of Theorem~\ref{thm:opt_stratified_experiments} gives
\[
\mathbb{P}\!\left\{
\widehat\pi_{\mathrm{str-power}}(X_{n+1})
\neq
\pi^*_{\mathrm{power}}(X_{n+1})
\right\} \to 0,
\]
where the binding and nonbinding cases are characterized by
Theorem~\ref{thm:global_opt_general}. Therefore, 
$| 
\mathbb{E} [
\widehat\pi_{\mathrm{str-power}}(X_{n+1})
 ]
-
\text{Power}(\pi^*_{\mathrm{power}};P)|\to 0$, which completes the proof.
\end{proof}

\subsection{Proof of Theorem~\ref{thm:obs_power_opt}}
\label{app:subsec_obs_power_opt}

\begin{proof}[Proof of Theorem~\ref{thm:obs_power_opt}]
Define
\[
H_n(t)
:=
\EE\!\left[
\hat\gamma(X)\ind\{\hat\gamma(X)\leq t\}
\,\middle|\,\cT_n
\right],
\qquad
t_n^\circ
:=
\sup\{t\in\RR:H_n(t)\leq\alpha\},
\]
and
\[
H_0(t)
:=
\EE[
\gamma(X)\ind\{\gamma(X)\leq t\}
].
\]
By $\|\hat\gamma-\gamma\|_{L_2(\PP_X)}=o_P(1)$, the
no-point-mass condition on $\gamma(X)$, and
Lemma~\ref{lem:wecdf_hat}, we have 
\[
\sup_{t\in\RR}|H_n(t)-H_0(t)|=o_P(1).
\]
Together with the local-crossing condition in
Assumption~\ref{assump:bal_reg}(iii), this implies
$t_n^\circ-t_{\power}^*=o_P(1)$, 
where 
$t_{\power}^*
:=
\sup\{t\in\RR:H_0(t)\leq\alpha\}$. 
Set
$\hat\omega:=\mathcal W_n(\hat\gamma,\tilde w)$ and define
\[
F_n^{\hat\gamma}(t)
:=
\frac{
\EE[
\hat\omega(X)\hat g(X,T)\hat\gamma(X)
\ind\{\hat\gamma(X)\leq t\}
\mid\cT_n]
}{
\EE[\hat\omega(X)\hat g(X,T)\mid\cT_n]
}.
\]
By the defining population balance equations, we know  
$F_n^{\hat\gamma}(t_n^\circ)
=
H_n(t_n^\circ)
=
\alpha$. 
Moreover, Assumption~\ref{assump:bal_reg}(i)--(iii) implies that
$F_n^{\hat\gamma}$ crosses $\alpha$ at $t_n^\circ$ with a slope
bounded away from zero, with probability tending to one.
Define
\[
\hat F_n^{\obs}(t)
:=
\frac{
\hat w_{n+1}
+
\sum_{i=1}^n
G_i\hat w_i\ind\{Y_i^\dagger=0\}
\ind\{\hat\gamma(X_i)\leq t\}
}{
\hat w_{n+1}
+
\sum_{i=1}^nG_i\hat w_i
},
\]
so that
$p_{n+1}^{\obs}
=
\hat F_n^{\obs}\{\hat\gamma(X_{n+1})\}$.
By Lemma~\ref{lem:bal_weight_approx},
Lemmas~\ref{lem:w_to_wx} and~\ref{lem:wecdf_hat}, and 
$\frac{\hat w_{n+1}}
{\sum_{i\in\cI_{\calib}}\hat w_i}
=
O_P(n^{-1})$, we obtain the uniform convergence of the empirical curve $\hat{F}_n^{\obs}(t)$ to its  
training-conditional population counterpart.
Furthermore, by the definition of $g^*$, we know  
\$
&\EE[
g^*(X,T)\ind\{Y^\dagger=0\}
\given X] \\
&= \EE\big[\EE[\ind\{Y^\dagger=0\}\given X]T\ind\{1-\mu_1(X)\leq \mu_0(X)\} \biggiven X] + \EE\big[\EE[\ind\{Y^\dagger=0\}\given X](1-T)\ind\{1-\mu_1(X)> \mu_0(X)\}\biggiven X]  \\
&= \EE\big[(1-\mu_1(X))T\ind\{1-\mu_1(X)\leq \mu_0(X)\} \biggiven X] + \EE\big[\mu_0(X)(1-T)\ind\{1-\mu_1(X)> \mu_0(X)\}\biggiven X] \\
&=\EE[g^*(X,T)\gamma(X)\given X]
\$
Therefore,
\[
\sup_{t\in\RR}
|\hat F_n^{\obs}(t)-F_n^{\hat\gamma}(t)|
\leq
o_P(1)
+
C\|\hat g-g^*\|_{L_1(\PP_{X,T})}
+
C\|\hat\gamma-\gamma\|_{L_1(\PP_X)}
=
o_P(1).
\]

Fix $0<\varepsilon\leq\eta$. The local-crossing and uniform
convergence results imply that, with probability tending to one,
\[
\hat\gamma(X_{n+1})\leq t_n^\circ-\varepsilon
\quad\Rightarrow\quad 
p_{n+1}^{\obs}\leq\alpha, 
\qquad \text{whereas}  \qquad 
\hat\gamma(X_{n+1})\geq t_n^\circ+\varepsilon
\quad\Rightarrow\quad
p_{n+1}^{\obs}>\alpha.
\]
Consequently,
\[
\left|
\hat\pi_{\obs}(X_{n+1})
-
\ind\{\hat\gamma(X_{n+1})\leq t_n^\circ\}
\right|
\leq
\ind\{
|\hat\gamma(X_{n+1})-t_n^\circ|\leq\varepsilon
\}
+o_P(1).
\]
Assumption~\ref{assump:bal_reg}(iv) and the arbitrariness of
$\varepsilon>0$ yield
\[
\EE\left[
\left|
\hat\pi_{\obs}(X_{n+1})
-
\ind\{\hat\gamma(X_{n+1})\leq t_n^\circ\}
\right|
\right]
\to0.
\]

Finally, since
$\hat\gamma\to\gamma$ in $L_2(\PP_X)$,
$t_n^\circ\to t_{\power}^*$, and
$\gamma(X)$ has no point mass,
\[
\EE\left[
\left|
\hat\pi_{\obs}(X_{n+1})
-
\ind\{\gamma(X_{n+1})\leq t_{\power}^*\}
\right|
\right]
\to0.
\]
By Theorem~\ref{thm:global_opt}, we know  
$\pi_{\power}^*(x)
=
\ind\{\gamma(x)\leq t_{\power}^*\}$. 
Hence
\[
\left|
\EE[\hat\pi_{\obs}(X_{n+1})]
-
\EE[\pi_{\power}^*(X_{n+1})]
\right|
\leq
\EE\left[
\left|
\hat\pi_{\obs}(X_{n+1})
-
\pi_{\power}^*(X_{n+1})
\right|
\right]
\to0,
\]
which proves 
$\EE[\hat\pi_{\obs}(X_{n+1})]
\to
\text{Power}(\pi_{\power}^*;\PP)$. 
\end{proof}

\subsection{Proof of Theorem~\ref{thm:welfare_optimal}}
\label{app:subsec_proof_welfare_opt}

\begin{proof}[Proof of Theorem~\ref{thm:welfare_optimal}]

The result follows directly from Theorem~\ref{thm:welfare_optimal_general}. In particular, since
$s_{\mathrm{welfare}}(X)$ has no point mass,
$\mathbb{P}\{s_{\mathrm{welfare}}(X)=r^*\}=0$, so the randomization
at the cutoff in Theorem~\ref{thm:welfare_optimal_general} is immaterial. Therefore, the optimal
randomized rule in Theorem~\ref{thm:welfare_optimal_general} reduces almost surely to
$\pi^*_{\mathrm{welfare}}(x)
= \ind\{s_{\mathrm{welfare}}(x)\leq r^*\}$.
The binding/nonbinding characterizations follow from the two cases in Theorem~\ref{thm:welfare_optimal_general}.
\end{proof}

\begin{proof}[Proof of Theorem~\ref{thm:welfare_optimal_general}]

We first characterize the worst-case safety constraint. For any $P'\in\mathcal P$, define \[ f_{P'}(x) := P'\{Y(1)=0,Y(0)=1\mid X=x\}. \] For a randomized policy $\phi$, whose randomization is independent of the potential outcomes conditional on $X$, its harm rate under $P'$ is $\operatorname{Err}(\phi;P') =\mathbb{E}[\phi(X)f_{P'}(X)]$. Since $f_{P'}(x)\leq P'\{Y(1)=0\mid X=x\}=1-\mu_1(x)$ and $f_{P'}(x)\leq P'\{Y(0)=1\mid X=x\}=\mu_0(x)$, we have $f_{P'}(x)\leq\gamma(x)$ for every $P'\in\mathcal P$. This upper bound is sharp. Indeed, for every $x$, consider the following conditional distribution of the potential outcomes: \[ \begin{aligned} P'\{Y(1)=0,Y(0)=1\mid X=x\} &=\gamma(x),\\ P'\{Y(1)=0,Y(0)=0\mid X=x\} &=1-\mu_1(x)-\gamma(x),\\ P'\{Y(1)=1,Y(0)=1\mid X=x\} &=\mu_0(x)-\gamma(x),\\ P'\{Y(1)=1,Y(0)=0\mid X=x\} &=\mu_1(x)-\mu_0(x)+\gamma(x). \end{aligned} \] All four probabilities are nonnegative. 
The first three nonnegativity claims follow directly from $\gamma(x)=\min\{1-\mu_1(x),\mu_0(x)\}$, while the last follows from $\gamma(x)\geq\mu_0(x)-\mu_1(x)$. They sum to one and yield the conditional marginals $P'\{Y(1)=1\mid X=x\}=\mu_1(x)$ and $P'\{Y(0)=1\mid X=x\}=\mu_0(x)$. 
Combining this conditional coupling with the original distribution of $X$ and the same treatment-assignment mechanism therefore defines a distribution in $\mathcal P$. Consequently, for every randomized policy $\phi$, \[ \max_{P'\in\mathcal P}\operatorname{Err}(\phi;P') = \mathbb{E}[\gamma(X)\phi(X)]. \] Write $\tau(x):=\mu_1(x)-\mu_0(x)$. 
The welfare of a randomized policy is 
\$ 
\operatorname{Welfare}(\phi;P) &= \mathbb{E}\!\left[ \mu_1(X)\phi(X) + \mu_0(X)\{1-\phi(X)\} \right]\\ &= \mathbb{E}[\mu_0(X)] + \mathbb{E}[\tau(X)\phi(X)]. 
\$
Thus, up to the constant $\mathbb{E}[\mu_0(X)]$, the randomized extension of~\eqref{eq:def_opt_welfare_phi} is equivalent to 
\[ 
\underset{\phi:\mathcal X\to[0,1]}{\operatorname{maximize}} \quad \mathbb{E}[\tau(X)\phi(X)] \qquad \text{subject to} \qquad \mathbb{E}[\gamma(X)\phi(X)]\leq\alpha. 
\] 
We next verify that the cutoff and randomization probability in the theorem are well defined. Recall that  $s_{\mathrm{welfare}}(x) = -\frac{\tau(x)}{\gamma(x)}$  under the stated ratio conventions, and define, for $r\leq0$, 
\[ 
H(r) := \mathbb{E}\!\left[ \gamma(X)  \ind\{s_{\mathrm{welfare}}(X)<r\} \right]. 
\] 
The function $H$ is nondecreasing and left-continuous on $(-\infty,0]$. Indeed, if $r_k\uparrow r$, then $ \ind\{s_{\mathrm{welfare}}(X)<r_k\}$ increases pointwise to $ \ind\{s_{\mathrm{welfare}}(X)<r\}$, so the conclusion follows from the dominated convergence theorem. Moreover, $H(r)\to0$ as $r\to-\infty$. To see this, if $s_{\mathrm{welfare}}(X)$ is finite, then $ \ind\{s_{\mathrm{welfare}}(X)<r\}\to0$. If $s_{\mathrm{welfare}}(X)=-\infty$, then necessarily $\gamma(X)=0$, so $\gamma(X) \ind\{s_{\mathrm{welfare}}(X)<r\}=0$ for every $r$. The dominated convergence theorem therefore gives the claim, and hence the set defining $r^*$ is nonempty. Suppose first that $H(0)>\alpha$. Since $H(r)\uparrow H(0)$ as $r\uparrow0$, there exists some $r_0<0$ such that $H(r_0)>\alpha$. It follows that $r^*<0$. By the definition of $r^*$, there exists a sequence $r_k\uparrow r^*$ such that $H(r_k)\leq\alpha$. The left-continuity of $H$ therefore implies $H(r^*)\leq\alpha$. 
On the other hand, the right limit of $H$ at $r^*$ is \[ \lim_{r\downarrow r^*}H(r) = \mathbb{E}\!\left[ \gamma(X)  \ind\{s_{\mathrm{welfare}}(X)\leq r^*\} \right] = H(r^*) + \mathbb{E}\!\left[ \gamma(X)  \ind\{s_{\mathrm{welfare}}(X)=r^*\} \right]. \] This limit must be at least $\alpha$. 
Otherwise, there would exist some $r>r^*$ sufficiently close to $r^*$ such that $H(r)<\alpha$, contradicting the definition of $r^*$ as the supremum. Therefore, \[ H(r^*) \leq \alpha \leq H(r^*) + \mathbb{E}\!\left[ \gamma(X)  \ind\{s_{\mathrm{welfare}}(X)=r^*\} \right]. \] This establishes the existence of $\eta^*\in[0,1]$ satisfying the equality in the theorem. If the expectation multiplying $\eta^*$ is zero, the preceding inequalities imply $H(r^*)=\alpha$, and we may take $\eta^*=0$. If instead $H(0)\leq\alpha$, the theorem sets $r^*=0$ and $\eta^*=0$. Under the stated ratio conventions, $s_{\mathrm{welfare}}(x)<0$ if and only if $\tau(x)>0$. Hence, in this case, \[ \phi^*_{\mathrm{welfare}}(x) =  \ind\{\tau(x)>0\}, \qquad \mathbb{E}\!\left[ \gamma(X)\phi^*_{\mathrm{welfare}}(X) \right] = H(0) \leq\alpha. \] It follows in both cases that $\phi^*_{\mathrm{welfare}}$ is feasible and satisfies  $(-r^*) \left\{ \mathbb{E}\!\left[ \gamma(X)\phi^*_{\mathrm{welfare}}(X) \right] -\alpha \right\} =0$ by complementary slackness. 

It remains to establish optimality. Let $\psi:\mathcal X\to[0,1]$ be any feasible randomized policy. By the definition of $\phi^*_{\mathrm{welfare}}$, it holds pointwise that \[ \{\tau(x)+r^*\gamma(x)\} \{\psi(x)-\phi^*_{\mathrm{welfare}}(x)\} \leq0. \] Indeed, if $s_{\mathrm{welfare}}(x)<r^*$, then $\tau(x)+r^*\gamma(x)\geq0$ and $\phi^*_{\mathrm{welfare}}(x)=1$, so the second factor is nonpositive. If $s_{\mathrm{welfare}}(x)>r^*$, then $\tau(x)+r^*\gamma(x)\leq0$ and $\phi^*_{\mathrm{welfare}}(x)=0$, so the second factor is nonnegative. If $s_{\mathrm{welfare}}(x)=r^*$, the first factor is zero. The same conclusions continue to hold when $\gamma(x)=0$ under the stated ratio conventions. Consequently, 
\$ 
& \operatorname{Welfare}(\psi;P) - \operatorname{Welfare}(\phi^*_{\mathrm{welfare}};P)\\ &\quad= \mathbb{E}\!\left[ \tau(X) \{\psi(X)-\phi^*_{\mathrm{welfare}}(X)\} \right]\\ &\quad= \mathbb{E}\!\left[ \{\tau(X)+r^*\gamma(X)\} \{\psi(X)-\phi^*_{\mathrm{welfare}}(X)\} \right]  -r^* \mathbb{E}\!\left[ \gamma(X) \{\psi(X)-\phi^*_{\mathrm{welfare}}(X)\} \right]\\ &\quad\leq -r^* \left\{ \mathbb{E}[\gamma(X)\psi(X)] - \mathbb{E}[\gamma(X)\phi^*_{\mathrm{welfare}}(X)] \right\}\\ &\quad\leq -r^* \left\{ \alpha - \mathbb{E}[\gamma(X)\phi^*_{\mathrm{welfare}}(X)] \right\} =0. 
\$ 
The first inequality follows from the preceding pointwise inequality. The second follows from the feasibility of $\psi$ and the fact that $-r^*\geq0$. The final equality follows from complementary slackness. Since $\psi$ was arbitrary, this proves the optimality of $\phi^*_{\mathrm{welfare}}$.
\end{proof}

\section{Simulation details}

\subsection{Details for experiments in Figure~\ref{fig:rct_fna}}
\label{app:subsec_simu_rct_details}

This section includes the omitted details that produce Figure~\ref{fig:rct_fna}.

\vspace{-0.5em}
\paragraph{Data generating process.}  
Across all the four settings, we set the feature dimension as $X\in \RR^{20}$. In Settings 1-2,   the linear coefficients are obtained by a random i.i.d.~draw from Unif$[(1,2)]$ and fixed before running all the experiments. 

In Setting 1, we draw $X\iid N(0,\mathbf{I})$ and  set $\mu_1(x)=\mathrm{logit}^{-1}(0.0835x_1+0.097x_2+0.0885x_3+0.0575x_4+0.0675x_5+1)$ and  $\mu_0(x)=1-0.8\mu_1(x)-0.2\,\mathrm{logit}^{-1}(-0.87  x_4-0.58 x_5-0.675 x_6-0.54  x_7-0.57  x_8)$. Both functions are then truncated to $[0.05,0.95]$. 

In Setting 2, we draw $X\iid \cN(0, \Sigma)$ with  $\Sigma=0.5 \ind \ind^\top+0.5\mathbf{I}$ and set $\mu_1(x)=\mathrm{logit}^{-1}(x^\top\theta_1+0.8)$ where $[\theta_1]_{1:10}=(1.66,1.71,1.93,1.72,1.99,1.98,1.52,1.57,1.37,1.53)\times 0.02$, and $\mu_0(x)=1-0.9\mu_1(x)-0.4\,\mathrm{logit}^{-1}(x^\top\theta_0)$ where   $[\theta_0]_{4:13}=-(1.73,1.92,1.97,1.70,1.78,1.37,1.78,1.76,1.59,1.32)\times 0.2 $; both functions are then truncated to $[0.05,0.95]$.  Finally,  we take $X_{1:7}$ as the observed features.  

In the nonlinear Setting 3, we draw entries of $X$ i.i.d.~from $\mathrm{Unif}(0,1)$ and set $\mu_1(x)=\mathrm{logit}^{-1}(f_1(x))$, where $f_1(x)=\mathrm{signal}\cdot\{\sin(3\pi x_1)+\cos(2\pi x_2)+x_3^2- \ind(x_4>0.5)x_4^2+0.5\, \ind(x_5>0.3)+e^{-x_4}\}+1$; we then truncate $\mu_1(x)$ to $[0.05,0.95]$. Next, we define $\mu_0(x)=1-0.2\mu_1(x)-0.8\mu_1(x)^2$. 

In Setting 4, we draw $X$ from the same correlated Gaussian design as in Setting 2, and set $\mu_1(x)=\mathrm{logit}^{-1}(f_1(x))$ with $f_1(x)=0.2 \cdot\{\sin(3\pi x_1)+\cos(2\pi x_8)+0.5x_3^2- \ind(x_4>0.5)x_5^2+0.5\, \ind(x_6>0.3)+e^{-x_7}\}+0.8+0.1x_8$, truncated to $[0.05,0.95]$. We then define $\mu_0(x)=1-\mu_1(x)+0.1\, \ind(x_6>0.5)-0.2x_8^2$, truncated to $[0.08,0.95]$. Finally, after generating the outcomes and treatments, we take $X_{1:5}$ as the observed features.

\vspace{-0.5em}
\paragraph{Additional implementation details.} For \texttt{Li et al.}, we implement the doubly-robust estimator $\hat{u}_{\text{FNA}}(\pi)$ in~\citet[Lemma 6.1]{li2023trustworthy} with uniform cost $c(X)=1$, where the $\mu_t(\cdot)$ are fitted using the same random forest classifier in \texttt{scikit-learn} Python library with 2-fold cross-fitting. The policy class $\pi\in\Pi$ is depth-2 policy trees from the \texttt{econml} Python library. Given these estimators we fit the optimization problem (2) in \cite{li2023trustworthy} via the dual form and a bi-search to find the dual variable. 
The \texttt{Policy-Tree} baseline follows the standard use of \texttt{econml} library, where we use $\cD_{\text{label}}$ to learn the policy and apply to $\cD_{\test}$.  

\subsection{Details for additional experiments in Section~\ref{subsec:simu_rct}}
\label{app:subsec_rct_aux}

% \paragraph{Data generating process.}
For all the additional experiments in Section~\ref{subsec:simu_rct}, we sample $X  \in [0,1]^{20}\iid\mathrm{Unif}([0,1]^{20})$ and assign treatment independently as $T \sim \mathrm{Bernoulli}(1/2)$.  We state the three sets of DGPs below. 

\vspace{-0.5em}
\paragraph{Arm informativeness and selective calibration.} In this set of experiments, we vary the scale of $\mu_1(x)$ and $\mu_0(x)$ to vary whether the harm rate relies on the treated or control outcome, i.e., whether $\rho(x)=1-\mu_1(x)$ or $\rho(x) = \mu_0(x)$. 
Let $q \in \{0,0.25,0.5,0.75,1\}$ be a parameter that controls the comparison of the two outcome models. We define the latent score $r_{\text{arm}}(x)
=
\sin(\pi x_1x_2)
+x_3
-0.5x_4
+0.3\,\ind\{x_5>0.4\}$, and  $q_r\in \RR$ be the $q$-th population quantile of $r_{\text{arm}}(X)$. For $r_{\text{arm}}(x)\leq q_r$, we set
\[
\mu_1(x)=1-\rho(x),
\qquad
\mu_0(x)=\rho(x)+\Delta(x),
\]
whereas for $r_{\text{arm}}(x)> q_r$, we set
\[
\mu_0(x)=\rho(x),
\qquad
\mu_1(x)=1 - \rho(x)-\Delta(x).
\]
Thus, the parameter $q$ controls the fraction of units for which the treated arm is the informative arm. Here we use a nonlinear harm rate function $\rho(x)=0.08+0.22\,\text{expit}\{\sin(2\pi x_1)
+0.7\cos(2\pi x_2)
+0.5(x_3-0.5)^2
-0.6\,\ind\{x_4>0.6\}
+0.5e^{-x_5}\},$ and the treatment effect function $\Delta(x) = \{0.35 + 0.35+0.25\,\text{expit}(r_\delta(x))\}\cdot \max\{1-2\rho(x)-0.02,\;0.02\}$, where $r_\delta(x) = \cos(2\pi x_3)
-0.6\sin(2\pi x_5)
+0.8x_6
-0.4\,\ind\{x_7>0.5\}$, and $\text{expit}(u)=1/(1+e^{-u})$.

\paragraph{Subgroup heterogeneity.}
For this experiment we use a three-group construction. Let $\tilde x_1=x_1-0.5$ and $\tilde x_2=x_2-0.5$, and define
\[
v(x)=\sigma\!\left(\sin(2\pi x_3)-0.8\cos(2\pi x_4)+0.6x_5\right).
\]
We partition the covariate space into three groups: the first group $G_1$ consists of samples with $\tilde{x}_1>0$; the second group $G_2$ consists of samples with $\tilde{x}_1\leq 0$ and $\tilde{x}_2\leq 0$; the third group $G_3$ consists of samples with $\tilde{x}_1\leq 0$ and $\tilde{x}_2>0$. 
For $x \in G_1$, we set
\[
\mu_0(x)=0.35,\qquad \mu_1(x)=0.65.
\]
For $x \in G_2 \cup G_3$, we define a small heterogeneous treatment effect 
$ 
\tau (x)
=
\max\{0.004, \min\{ 0.03, 
0.008+0.01125\bigl(0.3+0.7 v(x) \bigr)\}. 
$ 
Then for $x \in G_2$, we set
\[
\mu_1(x)=\max\{0.94, ~\min\{ 0.995, ~0.95+0.02 v(x)\}\},
\qquad
\mu_0(x)=\max\{0.9, ~\min\{ 0.99,~ \mu_1(x)-\tau (x)\}\},
\]
so both outcomes are near one and the treatment effect is small and positive. Note that in $G_2$, the harm rate is small, and one recognizes this only when using the treated outcomes for calibration. For $x \in G_3$, we set
\[
\mu_0(x)=\max\{0.005, ~\min\{ 0.08,~0.02+0.03 v(x)\}\}
\qquad
\mu_1(x)=\max\{0.001, ~\min\{ 0.06,~\mu_0(x)-\tau (x)\}\},
\]
so both arms are near zero and the treatment effect is small and negative. In $G_3$, the harm rate is close to zero, and the method recognizes this only when using the control outcomes for calibration. 

% \paragraph{Power--welfare tradeoff.} 
% In this set of experiments, we design artificial settings where the ranking of scores (and hence p-values) using the fitted harm rate function and the fitted reward-harm rate ratio differ.
% Let $s(x)=\textrm{expit}(\cos(2\pi x_1)
% -\sin(2\pi x_3)
% +0.7x_6
% -0.5\,\ind\{x_8>0.5\}.)$ and let $\gamma \in \{0.1,0.2,0.3,0.4,0.5\}$. We set
% \[
% \mu_0(x)=0.05+0.25\,s(x),
% \qquad
% \tau(x)=0.5\,s(x)^{1+\gamma}+0.05\,s(x)^\gamma,
% \qquad
% \mu_1(x)=\min\{\mu_0(x)+\tau(x),\,1\}.
% \]
% This construction creates units with different values of the worst-case harm rate score $\rho(x)=\min\{1-\mu_1(x),\mu_0(x)\}$ and the welfare-to-risk trade-off. Increasing $\gamma$ amplifies heterogeneity in the ranking induced by the power-oriented and welfare-oriented selection rules.

\subsection{Details for experiments in Section~\ref{subsec:simu_obs}}
\label{app:subsec_simu_obs}

\paragraph{Data generating process for Figure~\ref{fig:obs_fna}.} In the four settings, the process that generates $(X, Y(1),Y(0))$ is the same as those in Figure~\ref{fig:rct_fna} for RCT experiments. 
Given the features $\{X_i\}$, we sample the treatments $\{T_i\}$ independently from Bernoulli$(e(X_i))$  according to a propensity score $e(x)=\mathbb{P}(T=1\mid X=x)=\mathrm{logit}^{-1}(\eta_e(x))$. In the linear settings 1-2, we set $\eta_e(x)=0.25 \cdot (1.1x_1+0.9x_2+0.6x_3-0.4x_4+0.5x_5+0.3x_6)$. In the nonlinears 3-4, we set $\eta_e(x)=0.25\cdot (\sin(3\pi x_1)+\cos(2\pi x_8)+0.1x_3^2-0.3x_4)$.

\vspace{-0.5em}
\paragraph{Data generating process for double robustness results.}
We generate covariates $X\in \RR^{20}$ with entries i.i.d.~from $\mathrm{Unif}(0,1)$. We define three latent threshold indicators by $q_1(x)= \ind\{x_1>0.5\}$, $q_2(x)= \ind\{x_3>0.5\}$, and $q_3(x)= \ind\{x_5>0.5\}$. These induce three hidden subgroups: $G_1(x)= \ind\{q_1(x)=1,q_2(x)=1\}$, $G_2(x)= \ind\{q_1(x)=1,q_2(x)=0\}$, and $G_3(x)= \ind\{q_1(x)=0,q_3(x)=1\}$. The potential outcomes satisfy $Y(t)\mid X=x \sim \mathrm{Bernoulli}(\mu_t(x))$ for $t\in\{0,1\}$. We set $\mu_0(x)=0.07+0.02x_7+0.01(x_8-0.5)+0.38\, G_1(x)+0.03\,G_3(x)$ and $\tau(x)=\mu_1(x)-\mu_0(x)=0.03+0.01(x_8-x_7)-0.48\,G_1(x)+0.42\,G_2(x)+0.14\,G_3(x)$, and then truncate $\mu_0(x)$ to $[0.03,0.82]$ and set $\mu_1(x)=\mu_0(x)+\tau(x)$ truncated to $[\mu_0(x)+0.01,0.95]$.  

Treatment is assigned observationally according to the propensity score $e(x)=\mathrm{logit}^{-1}(-1.2+\beta\,G_1(x)+1.2\,G_2(x)+0.6\,G_3(x))$, truncated to $[0.1,0.9]$, and we sample $T\mid X=x \sim \mathrm{Bernoulli}(e(x))$. Here the parameter $\beta\in \{2.8, 3.4, 4.0, 4.4 \}$ controls the strength of confounding  in four levels shown in the $x$-axis of Figure~\ref{fig:obs_aux}. 
The observed outcome is $Y=TY(1)+(1-T)Y(0)$. We generate $(Y(1),Y(0))$ under the negative coupling scheme. 

\vspace{-0.5em}
\paragraph{Methods for double robustness results.}
To study double robustness, we compare five nuisance-model regimes. In \texttt{both\_correct}, both the propensity and outcome models are fit by logistic regression on the transformed features $(q_1,q_2,q_3,G_1 ,G_2,G_3,x_7,x_8)$. In \texttt{both\_wrong}, both are fit by logistic regression on the raw covariates $X$. In \texttt{ps\_correct\_only}, the propensity model uses the transformed features while the outcome models use the raw covariates, and in \texttt{outcome\_correct\_only} the reverse is used. In the \texttt{rf} regime, both the propensity and outcome models are fit by random forests on the raw covariates. The other data splitting and model fitting details are the same as other experiments.

\subsection{Real data analysis details}
\label{app:econml-welfare}

For the real-data analysis, we define the binary outcome as $Y_i = \mathbb{I}(\texttt{Post\_Belief\_Specific}_i < 50)$ and the treatment indicator as $T_i = \mathbb{I}(\texttt{messageType}_i = \texttt{Active})$. Because the study is a randomized experiment, treatment assignment is known by design, so we do not estimate a propensity score model. 

We split the sample in two stages. First, we reserve 20\% of observations as the test set. We then split the remaining 80\% evenly into a 40\% training set and a 40\% calibration set. All prediction models are fit on the training set only, and predictions are generated for the calibration and test sets.

The feature set includes demographic covariates (\texttt{Education\_Cat}, \texttt{AgeYears}, \texttt{Race\_*}, \texttt{Gender\_*}, \texttt{religion}), political and psychological covariates (\texttt{Extremism}, \texttt{AOT}, \texttt{IH}, \texttt{Party\_*}), AI-related covariates (\texttt{genai\_fam\_1}, \texttt{genai\_use\_1}, \texttt{genai\_trust}, \texttt{Sureness\_1}), and baseline state variables (\texttt{Pre\_Belief\_Specific}, \texttt{LowConfidence}). Missing values in the structured covariates   are imputed using mean imputation. These variables are then standardized using scaling parameters estimated on the training set.
To incorporate the text information in the stated conspiracy, we use a pre-computed embedding for the (pre-treatment) conspiracy for each observation. We apply principal components analysis (PCA) to the training-sample embeddings and retain the first 20 principal components. The final feature vector is the concatenation of the standardized structured covariates, the standardized dialogue-length variables, and 20 PCA components.

We estimate the conditional mean outcomes under treatment and control separately using the regression forest in \texttt{econml} Python package.  Specifically, one regression forest is fit on the treated training subsample to estimate $\mu_1(x) = \mathbb{E}[Y \mid X=x, T=1]$, and a second regression forest is fit on the control training subsample to estimate $\mu_0(x) = \mathbb{E}[Y \mid X=x, T=0]$. These two forests produce predictions $\hat\mu_1(x)$ and $\hat\mu_0(x)$ on the calibration and test sets.
To obtain a direct estimate of treatment heterogeneity, we additionally fit a causal forest via the \texttt{econml} package on the full training sample using the same features. This model produces a direct estimate of the conditional average treatment effect, $\hat\tau_{\mathrm{CF}}(x)$, where $\tau(x) = \mu_1(x) - \mu_0(x)$.

Our welfare-based selection rule combines the direct causal-forest estimate of treatment benefit with a conservative proxy for treatment risk. Define $\hat\rho(x) = \min\{1-\hat\mu_1(x), \hat\mu_0(x)\}$.  We then define the welfare score as $S_{\mathrm{welfare}}(x) = -\hat\tau_{\mathrm{CF}}(x) / \max\{\hat\rho(x), c\}$, where $c > 0$ is a small numerical floor to avoid instability when $\hat\rho(x)$ is close to zero. In our implementation, we set $c = 0.025$. 

Our power-based selection rule directly uses the $\hat\rho(x)$ above in the score.

\section{Auxiliary lemmas}

    \begin{lemma}\label{lem:conv_cdf_diff}
        Suppose the distribution of a random variable $X\in \RR$ has no point mass. Then $\sup_t \PP(t-\delta<X\leq t)$ as a function of $\delta$ converges to zero as $\delta\to 0$. 
    \end{lemma}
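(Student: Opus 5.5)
The plan is to use the distribution function $F(t):=\PP(X\leq t)$. The quantity in question is then
\[
\PP(t-\delta<X\leq t)=F(t)-F(t-\delta).
\]
So the claim is exactly that the modulus of continuity of $F$ tends to zero, i.e., that $F$ is uniformly continuous on $\RR$. Because $\delta\mapsto\sup_t\{F(t)-F(t-\delta)\}$ is nondecreasing in $\delta$, it suffices to show that for every $\varepsilon>0$ there is some $\delta_0>0$ with $\sup_t\{F(t)-F(t-\delta_0)\}\leq\varepsilon$.

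The first step is continuity. The absence of point masses gives $F(t)-F(t^-)=\PP(X=t)=0$ for every $t$, and $F$ is right-continuous by definition, so $F$ is continuous everywhere. The second step is to pass from continuity to uniform continuity; this is the main obstacle, since $\RR$ is not compact. Fix $\varepsilon>0$. Because $F(t)\to0$ as $t\to-\infty$ and $F(t)\to1$ as $t\to+\infty$, choose $a<b$ with $F(a)\leq\varepsilon/2$ and $1-F(b)\leq\varepsilon/2$. On the compact interval $[a-1,b+1]$ the function $F$ is uniformly continuous by the Heine--Cantor theorem, which yields some $\delta_0\in(0,1)$ such that $F(t)-F(t-\delta)\leq\varepsilon$ whenever $t,t-\delta\in[a-1,b+1]$ and $\delta\leq\delta_0$.

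The third step handles the two tails, with $\delta\leq\delta_0$ fixed.
\begin{itemize}
\item If $t\leq a$, then $F(t)-F(t-\delta)\leq F(t)\leq F(a)\leq\varepsilon/2$.
\item If $t-\delta\geq b$, then $F(t)-F(t-\delta)\leq1-F(b)\leq\varepsilon/2$.
\item In the remaining case, $t>a$ and $t-\delta<b$, and since $\delta<1$ we get $t,t-\delta\in[a-1,b+1]$, so the compact-interval bound from the second step gives $F(t)-F(t-\delta)\leq\varepsilon$.
\end{itemize}

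Taking the supremum over $t$ and using monotonicity in $\delta$ gives $\sup_t\PP(t-\delta<X\leq t)\leq\varepsilon$ for all $\delta\leq\delta_0$. This proves the claimed convergence to zero.
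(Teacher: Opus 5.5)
Your proof is correct and follows essentially the same route as the paper: continuity of $F$ from the absence of atoms, truncation of both tails, and Heine--Cantor uniform continuity on a compact interval. Enlarging the interval to $[a-1,b+1]$ and splitting into three cases handles windows that straddle the endpoints more cleanly than the paper's grid-of-mesh-$\delta$ argument, and it avoids the loose ends in the paper's treatment of the tail cases.
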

    \begin{proof}[Proof of Lemma~\ref{lem:conv_cdf_diff}]
     Since $X$ has no point mass, the c.d.f.~$F(t):= \PP(X\leq t)$ is continuous and non-decreasing. 
     Consider any constant $\epsilon>0$. There exists constants $a,b\in \RR$ such that $F(a)\leq \epsilon$ and $1-F(b)\leq \epsilon$. 
     Second, since $F$ is continuous on the compact set $[a,b]$, the Heine--Cantor theorem implies $F$ is uniformly continuous on $[a,b]$, which means there exists some $\delta>0$ such that $\sup_{x,y\in [a,b], |x-y|\leq \delta} |F(x)-F(y)|\leq \epsilon$. 
     Now we take any sequence 
     \$
     a = t_1<t_2<\cdots <t_M=b,
     \$
     such that $t_{i+1}-t_i\leq \delta$. By the arguments above, we know  the following holds:
    \$
    \PP(t_i < X \leq t_{i+1})\leq \epsilon,\quad \PP(X\leq t_1)\leq \epsilon,\quad \PP(X> t_{M})\leq \epsilon.
    \$
    Therefore, for any  $t\in [t_1+\delta,t_M+\delta]$, there exists some $i$ such that $\delta_{i}\leq t-\delta < t \leq \delta_{i+2}$ and thus $\PP(t-\delta<X\leq t)\leq 2\epsilon$. For any $t \leq t_1+\delta$, we know   $\PP(t-\delta<X\leq t)\leq \PP(X\leq t_1+\delta) $  Therefore, for any $\epsilon>0$, we know $\sup_t  \PP( X\leq t <  \hat{s}(X))  \leq o_P(1)+ \epsilon$. The arbitrariness of $\epsilon>0$ the implies the desired result. 
    \end{proof}

\begin{lemma}\label{lem:bd_cdf}
    Suppose two fixed functions $s_1,s_2\colon\cX\to \RR$ obeys $\|s_1(X)-s_2(X)\|_{L_2} = o(1)$ and one of them has no point mass. Then for any fixed $t\in \RR$, we have $\PP(s_1(X)\leq t) - \PP(s_2(X)\leq t) = o(1)$. 
\end{lemma}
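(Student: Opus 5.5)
The plan is to bound the difference of the two distribution functions by the probability of the event on which the indicators $\ind\{s_1(X)\leq t\}$ and $\ind\{s_2(X)\leq t\}$ disagree. Without loss of generality, assume $s_2(X)$ has no point mass; the other case is symmetric. Fix $t\in\RR$ and any $\delta>0$. If $s_1(X)\leq t<s_2(X)$, then either $|s_1(X)-s_2(X)|>\delta$, or $t<s_2(X)\leq t+\delta$. Symmetrically, if $s_2(X)\leq t<s_1(X)$, then either $|s_1(X)-s_2(X)|>\delta$, or $t-\delta<s_2(X)\leq t$. Hence
\[
\bigl|\PP(s_1(X)\leq t)-\PP(s_2(X)\leq t)\bigr|
\leq \PP\bigl(|s_1(X)-s_2(X)|>\delta\bigr) + \PP\bigl(t-\delta<s_2(X)\leq t+\delta\bigr).
\]

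Next I control the two terms separately. For the first, Markov's (Chebyshev's) inequality gives
\[
\PP\bigl(|s_1(X)-s_2(X)|>\delta\bigr)\leq \delta^{-2}\|s_1(X)-s_2(X)\|_{L_2}^2 ,
\]
which is $o(1)$ for every fixed $\delta$. For the second, I use that the c.d.f.\ $F$ of $s_2(X)$ is continuous, since $s_2(X)$ has no point mass. Therefore $F(t+\delta)-F(t-\delta)\to 0$ as $\delta\downarrow 0$. Alternatively, Lemma~\ref{lem:conv_cdf_diff} gives this uniformly in $t$, which would even yield the uniform-in-$t$ version of the statement.

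To conclude, given $\epsilon>0$, first choose $\delta$ so that the second term is at most $\epsilon$. Then take the index (along which the $L_2$ distance vanishes) large enough that the first term is at most $\epsilon$. This gives a difference of at most $2\epsilon$, and since $\epsilon$ is arbitrary the difference is $o(1)$.

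The only delicate point is the order of quantifiers: $\delta$ must be fixed before letting the $L_2$ error vanish. I should also note that the "fixed functions" form a sequence (e.g.\ $s_1=s_{1,n}$), so $o(1)$ refers to that index, and the no-point-mass assumption must be on the limit function, i.e.\ the non-varying one. If instead only $s_1$ has no point mass while it is the one varying, the bound would require uniform control of its c.d.f.\ modulus, so I would state the assumption on the fixed member.
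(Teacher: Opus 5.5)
Your proof is correct and follows essentially the same route as the paper: a $\delta$-window around $t$, convergence in probability from the $L_2$ bound, and continuity of the c.d.f.\ of the no-point-mass function, with $\delta$ fixed before $n\to\infty$. Your remark that the no-point-mass assumption must be placed on the non-varying function is a valid sharpening, since the paper's ``without loss of generality'' symmetry step implicitly assumes it.
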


\begin{proof}[Proof of Lemma~\ref{lem:bd_cdf}]
    Without loss of generality we assume $s_1(X)$ has no point mass, so the mapping $t\mapsto \PP(s_1(X)\leq t)$ is continuous on $t\in \RR$.
    The $L_2$ convergence implies the convergence in probability. That is, $\PP(|s_1(X)-s_2(X)|>\epsilon)\to 0$ for any fixed $\epsilon>0$. Due to the continuity of $\PP(s_1(X)\leq t)$ in $t\in \RR$ and the above convergence,  for any $\delta>0$,    we can find a sufficiently small $\epsilon>0$ such that $\PP(s_1(X)\leq t-\epsilon) \geq \PP(s_1(X)\leq t) -\delta$, $\PP(s_1(X)\leq t+ \epsilon)\leq \PP(s_1(X)\leq t)+\delta$,  and $ \PP(  |s_1(X)-s_2(X)|>\epsilon)\leq \delta$.  Therefore,
    \$
     \PP(s_2(X)\leq t) & \leq  \PP(s_2(X)\leq t, |s_1(X)-s_2(X)|>\epsilon) + \PP(s_2(X)\leq t, |s_1(X)-s_2(X)|\leq \epsilon) \\ 
     &\leq  \PP(  |s_1(X)-s_2(X)|>\epsilon) + \PP(s_1(X)\leq t+\epsilon) \leq \PP(s_1(X)\leq t) + 2\delta. 
    \$
    By the same arguments, 
    \$
    \PP(s_1(X)\leq t-\epsilon) \leq  \PP(  |s_1(X)-s_2(X)|>\epsilon) + \PP(s_2(X)\leq t ),
    \$
    which further implies
    \$
    \PP(s_2(X)\leq t)\geq \PP(s_1(X)\leq t) - 2\delta. 
    \$
    The arbitrariness of $\delta>0$ thus implies the desired result.
\end{proof}

\begin{lemma}\label{lem:wcdf_conv}
Let $\cT_n$ be a $\sigma$-field representing a possibly random training process. Conditional on $\cT_n$, let $\{(X_i,Z_i)\}_{i=1}^n$ be i.i.d.\ copies of $(X,Z)$, where $Z\in\RR$, and let $s\colon\cX\to\RR$ be fixed. Suppose
$ 
 \EE[Z^2\mid\cT_n]=O_P(1).
$ 
Define
\[
H_n(t):=\frac{1}{n}\sum_{i=1}^n Z_i\ind\{s(X_i)\leq t\},\qquad
H(t):=\EE[Z\ind\{s(X)\leq t\}\mid\cT_n],\qquad
\Delta_n:=\sup_{t\in\RR}|H_n(t)-H(t)|.
\]
Then $\Delta_n=O_P(n^{-1/2})$, where the stochastic order is with respect to both the training process and the evaluation sample.
In addition, suppose that $Z\geq0$ almost surely conditional on $\cT_n$, and define
\[
\hat\tau:=\sup\bigg\{t\in\RR:\frac{1}{n}\sum_{i=1}^nZ_i\ind\{s(X_i)\leq t\}\leq\alpha\bigg\},\qquad
\tau^*:=\sup\left\{t\in\RR:\EE[Z\ind\{s(X)\leq t\}\mid\cT_n]\leq\alpha\right\}.
\]
Assume that there exist constants $c>0$ and $\delta>0$ such that, with probability tending to one over the training process,
\[
H(\tau^*-u)\leq\alpha-cu,\qquad
H(\tau^*+u)\geq\alpha+cu
\]
for every $0<u\leq\delta$. Then $\hat\tau-\tau^*=O_P(n^{-1/2})$.
\end{lemma}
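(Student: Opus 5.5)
The first claim is a uniform law for a one-dimensional threshold class, and I would prove it conditionally on $\cT_n$, where $s$ and the law of $(X,Z)$ are fixed. Write $Z=Z^+-Z^-$ and treat each part separately, so it suffices to take $Z\geq 0$. Then $t\mapsto H_n(t)$ and $t\mapsto H(t)$ are nondecreasing, and $H_n(t)-H(t)$ is a weighted empirical process indexed by the nested sets $\{s(X)\leq t\}$. I would set $W_i:=Z_i\ind\{s(X_i)\leq t\}$, so that for each fixed $t$, Chebyshev gives $|H_n(t)-H(t)|=O_P(n^{-1/2})$, since $\mathrm{Var}(W_i\mid\cT_n)\leq \EE[Z^2\mid\cT_n]=O_P(1)$.

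To make the bound uniform in $t$, I would use the standard maximal inequality for VC-subgraph classes with a square-integrable envelope. The class $\{z\ind\{s(x)\leq t\}: t\in\RR\}$ is a VC-subgraph class of bounded index, its envelope $|Z|$ has conditional $L_2$ norm $O_P(1)$, and the uniform entropy integral is finite. These give $\EE[\sqrt n\,\Delta_n\mid\cT_n]\leq C\,\EE[Z^2\mid\cT_n]^{1/2}$ for a universal constant $C$. Markov's inequality conditional on $\cT_n$, combined with $\EE[Z^2\mid\cT_n]=O_P(1)$, then yields $\Delta_n=O_P(n^{-1/2})$ jointly over training and evaluation randomness. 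As a more elementary alternative, I could use a DKW-type argument on the measure $Z\,dP$: reparametrize $t$ by $H(t)$ and apply a Kolmogorov-type maximal bound to the monotone process. That route has to handle the weights, however, so I would cite the empirical-process bound.

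For the cutoff statement, fix $M$ and let $u_n:=M n^{-1/2}$. On the event where the local-slope condition holds, $u_n\leq\delta$, and $\Delta_n<cM n^{-1/2}$, I would argue as follows:
\[
H_n(\tau^*+u_n)\geq H(\tau^*+u_n)-\Delta_n\geq \alpha+cu_n-\Delta_n>\alpha .
\]
Because $H_n$ is nondecreasing (here $Z\geq 0$ is used), every $t\geq \tau^*+u_n$ has $H_n(t)>\alpha$, and hence $\hat\tau\leq\tau^*+u_n$. Symmetrically, $H_n(\tau^*-u_n)\leq\alpha-cu_n+\Delta_n<\alpha$, which gives $\hat\tau\geq\tau^*-u_n$. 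The probability of this event is at least $1-\PP(\sqrt n\Delta_n\geq cM)-o(1)$, and this can be made arbitrarily close to one by taking $M$ large. So $\hat\tau-\tau^*=O_P(n^{-1/2})$.

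The only delicate step is the uniform bound with an unbounded envelope and a training-dependent distribution. The key is to carry out every bound conditionally on $\cT_n$ with constants that depend only on $\EE[Z^2\mid\cT_n]$, and to pass to the unconditional stochastic order only at the end.
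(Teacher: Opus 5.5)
Your argument is correct. The cutoff part is essentially the paper's. The paper works on the event $\{\Delta_n\le c\delta/4\}$ with the random radius $u_n=4\Delta_n/c$ and gets the deterministic bound $|\hat\tau-\tau^*|\le 4\Delta_n/c$ there. You instead fix $u_n=Mn^{-1/2}$ and let $M$ grow, and the two are interchangeable.

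The genuine difference is in the uniform bound. You cite the general maximal inequality for VC-subgraph classes with a square-integrable envelope, $\EE^*\|\mathbb G_n\|_{\mathcal F}\lesssim J(1,\mathcal F)\|F\|_{P,2}$. It does apply here, for two reasons. First, $\{(x,z)\mapsto z\ind\{s(x)\le t\}:t\in\RR\}$ is a fixed function times a VC class of sets, hence VC-subgraph with bounded index, so the constant is universal. Second, the class is pointwise measurable by right-continuity in $t$; you should state this to verify the cited theorem's hypotheses.

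The paper instead proves the bound from scratch:
\begin{itemize}
\item it applies conditional symmetrization and then sorts the sample by $s(X_i)$;
\item it observes that the sets $\{i:s(X_i)\le t\}$ are prefixes of this ordering, so the supremum over $t$ becomes a maximum of Rademacher partial sums;
\item Doob's $L_2$ maximal inequality then gives $\EE[\Delta_n\mid\cT_n]\le 4n^{-1/2}\{\EE[Z^2\mid\cT_n]\}^{1/2}$.
\end{itemize}
The paper's route exploits the one-dimensional nested structure and is fully self-contained, with an explicit constant and no entropy or measurability machinery. The same argument reappears in the proof of Lemma~\ref{lem:wecdf_hat}. Your route is shorter on the page and would extend to richer index classes, such as several thresholds or other VC families. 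The cost is dependence on an external theorem with an unspecified constant.

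Your split $Z=Z^+-Z^-$ and the pointwise Chebyshev step are harmless but unnecessary, since the envelope $|Z|$ already handles signed $Z$.
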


\begin{proof}[Proof of Lemma~\ref{lem:wcdf_conv}]
For $t\in\RR$, set $g_t(x,z):=z\ind\{s(x)\leq t\}$. Conditional on $\cT_n$, the functions $g_t$ are fixed and $(X_i,Z_i)_{i=1}^n$ are i.i.d. By the conditional symmetrization inequality,
\[
\EE[\Delta_n\mid\cT_n]
\leq\frac{2}{n}\EE\left[
\sup_{t\in\RR}\left|\sum_{i=1}^n\varepsilon_i
Z_i\ind\{s(X_i)\leq t\}\right|
\,\middle|\,\cT_n\right],
\]
where $\varepsilon_1,\ldots,\varepsilon_n$ are i.i.d.\ Rademacher random variables independent of everything else.

Condition further on $(X_i,Z_i)_{i=1}^n$, reorder the observations so that
$s(X_{(1)})\leq\cdots\leq s(X_{(n)})$, and let $a_j:=Z_{(j)}$. Since the sets
$\{i:s(X_i)\leq t\}$ are nested,
\[
\sup_{t\in\RR}\left|\sum_{i=1}^n\varepsilon_i
Z_i\ind\{s(X_i)\leq t\}\right|
\leq
\max_{0\leq k\leq n}\left|\sum_{j=1}^k\widetilde\varepsilon_j a_j\right|,
\]
where $(\widetilde\varepsilon_j)_{j=1}^n$ is again a Rademacher sequence. Setting
$M_k:=\sum_{j=1}^k\widetilde\varepsilon_j a_j$, Doob's $L_2$ maximal inequality gives
\[
\EE_\varepsilon\left[
\max_{0\leq k\leq n}|M_k|^2
\,\middle|\,(X_i,Z_i)_{i=1}^n,\cT_n\right]
\leq4\sum_{j=1}^n a_j^2.
\]
Consequently,
\[
\EE[\Delta_n\mid\cT_n]
\leq\frac{4}{n}\EE\left[
\left(\sum_{i=1}^n Z_i^2\right)^{1/2}
\,\middle|\,\cT_n\right]
\leq\frac{4}{\sqrt n}
\left\{\EE[Z^2\mid\cT_n]\right\}^{1/2}.
\]
Let $V_n:=\{\EE[Z^2\mid\cT_n]\}^{1/2}$. For any $K,M>0$, conditional Markov's inequality yields
\[
\PP(\sqrt n\,\Delta_n>M)
\leq\PP(V_n>K)+\frac{4K}{M}.
\]
Since $V_n=O_P(1)$, this proves $\Delta_n=O_P(n^{-1/2})$.

For the second claim, let $\mathcal E_n$ denote the event on which the local-slope condition holds, and define
\[
A_n:=\mathcal E_n\cap\left\{\Delta_n\leq\frac{c\delta}{4}\right\}.
\]
Then $\PP(A_n)\to1$. On $A_n$, let $u_n:=4\Delta_n/c$, so that $u_n\leq\delta$. If $\Delta_n=0$, then $H_n\equiv H$ and hence $\hat\tau=\tau^*$. Otherwise,
\[
H_n(\tau^*+u_n)
\geq H(\tau^*+u_n)-\Delta_n
\geq\alpha+cu_n-\Delta_n
=\alpha+3\Delta_n>\alpha,
\]
whereas
\[
H_n(\tau^*-u_n)
\leq H(\tau^*-u_n)+\Delta_n
\leq\alpha-cu_n+\Delta_n
=\alpha-3\Delta_n<\alpha.
\]
Because $Z_i\geq0$, the function $H_n$ is nondecreasing. It follows that
\[
\tau^*-u_n\leq\hat\tau\leq\tau^*+u_n,
\]
and therefore $|\hat\tau-\tau^*|\leq4\Delta_n/c$ on $A_n$. Since $\Delta_n=O_P(n^{-1/2})$ and $\PP(A_n)\to1$, we conclude that $\hat\tau-\tau^*=O_P(n^{-1/2})$.
\end{proof}

\begin{lemma}\label{lem:wecdf_hat}
    Consider a sequence of random functions
    $\hat{f}_n\colon \cX\times\cZ\to \RR^+$
    and
    $\hat{s}_n\colon \cX\to \RR$
    obeying
    \[
    \|\hat{f}_n-f\|_{L_2(\PP_{X,Z})}=o_P(1),
    \qquad
    \|\hat{s}_n-s\|_{L_2(\PP_{X,Z})}=o_P(1),
    \]
    where $f\in L_2(\PP_{X,Z})$ is nonnegative and the distribution
    of $s(X)$ has no point masses. Let
    $\{(X_i,Z_i)\}_{i=1}^n$ be i.i.d.\ samples from $\PP_{X,Z}$ and
    independent of the training processes of $\hat{f}_n$ and
    $\hat{s}_n$. Define
    \[
    \begin{aligned}
    \hat{H}_n(t)
    &=
    \frac{1}{n}\sum_{i=1}^n
    \hat{f}_n(X_i,Z_i)
    \ind\{\hat{s}_n(X_i)\leq t\}, \quad 
    H_n(t)
    =
    \EE\left[
    \hat{f}_n(X,Z)
    \ind\{\hat{s}_n(X)\leq t\}
    \right],\\
    \widetilde{H}_n(t)
    &=
    \EE\left[
    f(X,Z)
    \ind\{\hat{s}_n(X)\leq t\}
    \right],\quad 
    H(t)
    =
    \EE\left[
    f(X,Z)
    \ind\{s(X)\leq t\}
    \right],
    \end{aligned}
    \]
    and
    \[
    \hat{\Delta}_n
    =
    \sup_{t\in\RR}
    |\hat{H}_n(t)-H(t)|,
    \qquad
    \Delta_n
    =
    \sup_{t\in\RR}
    |H_n(t)-H(t)|.
    \]
    Here, the expectations defining $H_n$, $\widetilde{H}_n$, and
    $H$ are taken over an independent copy
    $(X,Z)\sim\PP_{X,Z}$.
    Then
    \[
    \sup_{t\in\RR}
    |H_n(t)-\widetilde{H}_n(t)|
    =
    o_P(1),
    \qquad
    \sup_{t\in\RR}
    |\hat{H}_n(t)-\widetilde{H}_n(t)|
    =
    o_P(1),
    \]
    and
    $ 
    \Delta_n=o_P(1)$m $
    \hat{\Delta}_n=o_P(1).
    $
    In addition, suppose that
    \( 
    H(t^*-\varepsilon)
    <
    \alpha
    <
    H(t^*+\varepsilon)\) holds for every $\varepsilon>0$,  
    where
    \( 
    \hat{t}
    =
    \sup\{t\in\RR:\hat{H}_n(t)\leq\alpha\}\) and 
    \(t^*
    =
    \sup\{t\in\RR:H(t)\leq\alpha\}.
    \)
    Then
    \[
    \hat{t}-t^*=o_P(1).
    \]
\end{lemma}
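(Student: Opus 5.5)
The plan has three stages. First, replace the random score $\hat s_n$ by a fixed one. Second, replace $\hat f_n$ by $f$. Third, use a uniform law of large numbers at the fixed limit $f$. Finally, transfer uniform convergence into convergence of the cutoff using the crossing condition. Throughout we condition on the training processes, so $\hat f_n$ and $\hat s_n$ are fixed.

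\emph{Step 1 (population curves).} We show $\sup_t|H_n(t)-\widetilde H_n(t)|\le \EE|\hat f_n-f|\le\|\hat f_n-f\|_{L_2}=o_P(1)$, since the indicator is bounded by one. Next we bound $\sup_t|\widetilde H_n(t)-H(t)|$. For any $\varepsilon>0$, the indicators $\ind\{\hat s_n(X)\le t\}$ and $\ind\{s(X)\le t\}$ disagree only on $\{|\hat s_n-s|>\varepsilon\}\cup\{|s(X)-t|\le\varepsilon\}$. Hence
\[
\sup_t|\widetilde H_n(t)-H(t)|\le \EE[f\ind\{|\hat s_n-s|>\varepsilon\}]+\sup_t\EE[f\ind\{|s(X)-t|\le\varepsilon\}].
\]
The first term tends to zero by uniform integrability of $f\in L_2$ and $\PP(|\hat s_n-s|>\varepsilon)=o_P(1)$. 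For the second term, Cauchy--Schwarz gives the bound $\|f\|_{L_2}\sup_t\PP(|s(X)-t|\le\varepsilon)^{1/2}$, which vanishes as $\varepsilon\to0$ by Lemma~\ref{lem:conv_cdf_diff}, because $s(X)$ has no point mass. Combining these yields $\Delta_n=o_P(1)$.

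\emph{Step 2 (empirical curves).} We write $\hat H_n(t)-H_n(t)$ as a weighted empirical process indexed by the nested threshold class, with conditionally i.i.d.\ summands $Z_i=\hat f_n(X_i,Z_i)$. By Lemma~\ref{lem:wcdf_conv}, with $s=\hat s_n$, this difference is $O_P(n^{-1/2})$ in sup norm. That lemma requires $\EE[\hat f_n^2\mid\cT_n]=O_P(1)$, which follows from $\|\hat f_n\|_{L_2}\le\|f\|_{L_2}+o_P(1)$. Together with Step 1, this gives $\sup_t|\hat H_n-\widetilde H_n|=o_P(1)$ and $\hat\Delta_n=o_P(1)$ by the triangle inequality.

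\emph{Step 3 (cutoff).} Fix $\varepsilon>0$ and set $\kappa:=\min\{\alpha-H(t^*-\varepsilon),\,H(t^*+\varepsilon)-\alpha\}>0$. On the event $\{\hat\Delta_n<\kappa\}$, whose probability tends to one, we have $\hat H_n(t^*-\varepsilon)<\alpha<\hat H_n(t^*+\varepsilon)$. Since $\hat f_n\ge0$, $\hat H_n$ is nondecreasing, so $t^*-\varepsilon\le\hat t\le t^*+\varepsilon$. Letting $\varepsilon$ be arbitrary gives $\hat t-t^*=o_P(1)$.

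The main obstacle is the uniform-in-$t$ control in Step 1 when $\hat s_n$ is random and $f$ is merely $L_2$ rather than bounded. The Cauchy--Schwarz bound paired with the uniform continuity of the cdf of $s(X)$ (Lemma~\ref{lem:conv_cdf_diff}) is the key device there.
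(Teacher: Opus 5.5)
Your proposal is correct and follows the paper's proof essentially step for step. It uses the same decomposition: $\hat f_n\to f$ via $L_1\le L_2$; $\hat s_n\to s$ via the disagreement event, Cauchy--Schwarz, and uniform continuity of the cdf of $s(X)$; and an $O_P(n^{-1/2})$ bound on the empirical process over nested thresholds, followed by the same monotone crossing argument for $\hat t$. The only difference is cosmetic: you cite Lemma~\ref{lem:wcdf_conv} for the empirical-process term, whereas the paper repeats its symmetrization--Doob argument inline.
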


\begin{proof}[Proof of Lemma~\ref{lem:wecdf_hat}]
Write $P_ng:=n^{-1}\sum_{i=1}^n g(X_i,Z_i)$ and $Pg:=\EE[g(X,Z)]$, and let $\cT_n$ be the $\sigma$-field generated by the training processes of $\hat f_n$ and $\hat s_n$. Define
\[
A_n:=\sup_{t\in\RR}\left|(P_n-P)\bigl(\hat f_n\ind\{\hat s_n\leq t\}\bigr)\right|,\quad
C_n:=\sup_{t\in\RR}|H_n(t)-\widetilde H_n(t)|,\quad
D_n:=\sup_{t\in\RR}|\widetilde H_n(t)-H(t)|.
\]
Then $\Delta_n\leq C_n+D_n$, $\hat\Delta_n\leq A_n+C_n+D_n$, and
\[
\sup_{t\in\RR}|\hat H_n(t)-\widetilde H_n(t)|\leq A_n+C_n.
\]

We first control $A_n$. Conditional on $\cT_n$, the functions $\hat f_n$ and $\hat s_n$ are fixed and the evaluation observations are i.i.d. By conditional symmetrization,
\[
\EE[A_n\mid\cT_n]\leq \frac{2}{n}\EE\left[\sup_{t\in\RR}\left|\sum_{i=1}^n\varepsilon_i\hat f_n(X_i,Z_i)\ind\{\hat s_n(X_i)\leq t\}\right|\;\middle|\;\cT_n\right],
\]
where $\varepsilon_1,\ldots,\varepsilon_n$ are i.i.d. Rademacher random variables independent of everything else. Conditional further on the evaluation sample, reorder the observations so that $\hat s_n(X_{(1)})\leq\cdots\leq\hat s_n(X_{(n)})$ and set $a_j:=\hat f_n(X_{(j)},Z_{(j)})$. Since the sets $\{i:\hat s_n(X_i)\leq t\}$ are nested, they are prefixes of this ordering. Thus, by Doob's $L_2$ maximal inequality and Jensen's inequality,
\[
\EE_\varepsilon\left[\sup_{t\in\RR}\left|\sum_{i=1}^n\varepsilon_i\hat f_n(X_i,Z_i)\ind\{\hat s_n(X_i)\leq t\}\right|\;\middle|\;(X_i,Z_i)_{i=1}^n,\cT_n\right]
\leq 2\left(\sum_{j=1}^n a_j^2\right)^{1/2}.
\]
Consequently,
\[
\EE[A_n\mid\cT_n]\leq \frac{4}{n}\EE\left[\left(\sum_{i=1}^n\hat f_n(X_i,Z_i)^2\right)^{1/2}\;\middle|\;\cT_n\right]
\leq \frac{4}{\sqrt n}\|\hat f_n\|_{L_2(\PP_{X,Z})}.
\]
Since
\[
\|\hat f_n\|_{L_2(\PP_{X,Z})}\leq \|f\|_{L_2(\PP_{X,Z})}+\|\hat f_n-f\|_{L_2(\PP_{X,Z})}=O_P(1),
\]
conditional Markov's inequality gives $A_n=O_P(n^{-1/2})=o_P(1)$.

Next,
\[
C_n=\sup_{t\in\RR}\left|P\bigl((\hat f_n-f)\ind\{\hat s_n\leq t\}\bigr)\right|
\leq P|\hat f_n-f|
\leq \|\hat f_n-f\|_{L_2(\PP_{X,Z})}
=o_P(1).
\]

It remains to control $D_n$. For each $t\in\RR$, let
\[
E_{n,t}:=\left\{\ind\{\hat s_n(X)\leq t\}\neq\ind\{s(X)\leq t\}\right\}.
\]
By the Cauchy--Schwarz inequality,
\[
D_n\leq \|f\|_{L_2(\PP_{X,Z})}\left(\sup_{t\in\RR}P(E_{n,t})\right)^{1/2}.
\]
For any $\eta>0$,
\[
E_{n,t}\subseteq \{|s(X)-t|\leq\eta\}\cup\{|\hat s_n(X)-s(X)|>\eta\},
\]
and therefore
\[
\sup_{t\in\RR}P(E_{n,t})
\leq \omega(\eta)+P(|\hat s_n(X)-s(X)|>\eta)
\leq \omega(\eta)+\eta^{-2}\|\hat s_n-s\|_{L_2(\PP_{X,Z})}^2,
\]
where $\omega(\eta):=\sup_{t\in\RR}P(|s(X)-t|\leq\eta)$. Since the distribution of $s(X)$ has no point masses, its distribution function is continuous and hence uniformly continuous, which implies $\omega(\eta)\to0$ as $\eta\downarrow0$. For every fixed $\eta>0$, the second term is $o_P(1)$. Letting first $n\to\infty$ and then $\eta\downarrow0$ yields $\sup_tP(E_{n,t})=o_P(1)$, and hence $D_n=o_P(1)$.

We have thus shown
\[
\sup_{t\in\RR}|H_n(t)-\widetilde H_n(t)|=C_n=o_P(1)
\]
and
\[
\sup_{t\in\RR}|\hat H_n(t)-\widetilde H_n(t)|
\leq A_n+C_n=o_P(1).
\]
Moreover,
\[
\Delta_n\leq C_n+D_n=o_P(1),\qquad
\hat\Delta_n\leq A_n+C_n+D_n=o_P(1).
\]

Finally, fix $\varepsilon>0$ and define
\[
\gamma_\varepsilon:=\frac12\min\left\{\alpha-H\left(t^*-\frac{\varepsilon}{2}\right),
H\left(t^*+\frac{\varepsilon}{2}\right)-\alpha\right\}>0.
\]
On the event $\{\hat\Delta_n<\gamma_\varepsilon\}$,
\[
\hat H_n\left(t^*-\frac{\varepsilon}{2}\right)
\leq H\left(t^*-\frac{\varepsilon}{2}\right)+\hat\Delta_n<\alpha
\]
and
\[
\hat H_n\left(t^*+\frac{\varepsilon}{2}\right)
\geq H\left(t^*+\frac{\varepsilon}{2}\right)-\hat\Delta_n>\alpha.
\]
Since $\hat f_n\geq0$, the function $\hat H_n$ is nondecreasing. Hence
\[
t^*-\frac{\varepsilon}{2}\leq\hat t\leq t^*+\frac{\varepsilon}{2},
\]
so
\[
\PP(|\hat t-t^*|\geq\varepsilon)
\leq \PP(\hat\Delta_n\geq\gamma_\varepsilon)\to0.
\]
Therefore, $\hat t-t^*=o_P(1)$.
\end{proof}

\subsection{Proof of Lemma~\ref{lem:w_to_wx}}
\label{subsec:lemma_w_to_wx}

    \begin{proof}[Proof of Lemma~\ref{lem:w_to_wx}]
    By the Cauchy-Schwarz inequality, it holds deterministically for any $t\in \RR$ that  
    \@\label{eq:replace_w_with_wX}
     &\bigg|\frac{1}{n}\sum_{i=1}^n \hat{w}_iZ_i\ind\{f(X_i)\leq t\} - \frac{1}{n}\sum_{i=1}^n \hat{w}(X_i) Z_i\ind\{f(X_i)\leq t\} \bigg| \notag \\ 
     &= \bigg|\frac{1}{n}\sum_{i=1}^n Z_i (\hat{w}_i-\hat{w}(X_i))  \ind\{f(X_i)\leq t\}  \bigg| \notag \\ 
     & \leq \sqrt{\frac{1}{n}\sum_{i=1}^n Z_i^2 \ind\{f(X_i)\leq t\}}\sqrt{\frac{1}{n}\sum_{i=1}^n (\hat{w}_i-\hat{w}(X_i))^2} \leq  \sqrt{\frac{1}{n}\sum_{i=1}^n Z_i^2  }\sqrt{\frac{1}{n}\sum_{i=1}^n (\hat{w}_i-\hat{w}(X_i))^2} .  
    \@
    Since $\EE[Z_i^2]<\infty$, we know $\frac{1}{n}\sum_{i=1}^n Z_i^2 = O_P(1)$ by the Markov's inequality, and therefore 
    \$
     \sup_{t\in \RR}\bigg|\frac{1}{n}\sum_{i=1}^n \hat{w}_iZ_i\ind\{f(X_i)\leq t\} - \frac{1}{n}\sum_{i=1}^n \hat{w}(X_i) Z_i\ind\{f(X_i)\leq t\} \bigg|  \sqrt{\frac{1}{n}\sum_{i=1}^n Z_i^2  }\sqrt{\frac{1}{n}\sum_{i=1}^n (\hat{w}_i-\hat{w}(X_i))^2} =O_P(r_n)
    \$
    by Assumption~\ref{assump:obs_bal}.
    \end{proof}

\end{document}